\newtheorem{myclaim}{Claim}
\theoremstyle{definition}
\theoremstyle{definition}
\newcolumntype{P}[1]{>{\centering\arraybackslash}p{#1}}
\newcolumntype{M}[1]{>{\centering\arraybackslash}m{#1}}
\crefname{section}{\S}{\S\S}
\Crefname{section}{\S}{\S\S}
\crefname{appendix}{\S}{\S\S}
\Crefname{appendix}{\S}{\S\S}
\Crefname{assumption}{Assumption}{Assumptions}
\Crefname{invariant}{Invariant}{Invariants}
\Crefname{observation}{Observation}{Observations}
\Crefname{figure}{Fig.}{Figs.}
\Crefname{table}{Tab.}{Tabs.}
\Crefname{algorithm}{Alg.}{Algs.}
\Crefname{equation}{Eq.}{Eqs.}
\Crefname{definition}{Def.}{Defs.}
\Crefname{lemma}{Lem.}{Lems.}
\Crefname{theorem}{Thm.}{Thms.}
\crefname{module}{module}{modules}
\Crefname{module}{Module}{Modules}
\algnewcommand{\algorithmicswitch}{\textbf{switch}}
\algnewcommand{\algorithmiccase}{\textbf{case}}
\algnewcommand{\algorithmicupon}{\textbf{upon}}
\algnewcommand{\algorithmicforeach}{\textbf{for each}}
\algrenewcommand{\algorithmicdo}{\kern-0.3em:}
\algrenewcommand{\algorithmicthen}{\kern-0.3em:}
\algnewcommand{\algorithmicgoto}{\textbf{goto}}%
\algnewcommand{\Goto}[1]{\algorithmicgoto~\ref{#1}}%
\algnewcommand{\algorithmictrigger}{\textbf{trigger}}%
\algnewcommand{\Trigger}[1]{\algorithmictrigger~{#1}}%
\algnewcommand{\algorithmicinvoke}{\textbf{invoke}}%
\algnewcommand{\Invoke}[1]{\algorithmicinvoke~{#1}}%
\algnewcommand{\algorithmicassert}{\textbf{assert}}%
\algnewcommand{\Assert}[1]{\algorithmicassert~{#1}}%
\algnewcommand{\algorithmicbreak}{\textbf{break}}%
\algnewcommand{\Break}[0]{\algorithmicbreak}%
\algnewcommand{\algorithmicwaitfor}{\textbf{wait for}}%
\algnewcommand{\WaitFor}[1]{\algorithmicwaitfor~{#1}}%
\algnewcommand{\InlineRequire}[1]{\textbf{require} {#1}}
\algrenewcommand\alglinenumber[1]{\tiny\textcolor{gray}{#1}}
\newcommand{\comm}{\textsf{Reducer}\xspace}
\newcommand{\commplus}{\textsf{Reducer++}\xspace}
\newcommand{\first}{p_{\mathrm{first}}}
\newcommand{\dfirst}{\mathcal{D}_{\mathrm{first}}}
\definecolor{lightgray}{gray}{0.90}
\renewenvironment{leftbar}[1][\hsize]
{%
\MakeFramed{\hsize#1\advance\hsize-\width\FrameRestore}%
}
{\endMakeFramed}
\algnewcommand{\BlueComment}[1]{\textcolor{jnSUDigitalRedLight}{\hfill\(\triangleright\) #1}}
\algnewcommand{\LineComment}[1]{\textcolor{jnSUDigitalRedLight}{\(\triangleright\) #1}}
\newcommand{\valuemvba}{\mathsf{Value}_{\mathsf{MVBA}}\xspace}
\newcommand{\valuemba}{\mathsf{Value}_{\mathsf{MBA}}\xspace}
\newcommand{\botmba}{\bot_{\mathsf{MBA}}\xspace}
\newcommand{\valuedigest}{\mathsf{Digest}\xspace}
\newcommand{\smba}{$\textsf{SMBA}_\kappa$\xspace}
\newcommand{\mbasimple}{\textsf{MBA}_\ell\xspace}
\newcommand{\botcrb}{\bot_{\mathsf{CRB}}\xspace}
\newcommand{\crb}{$\textsf{CRB}_\kappa$\xspace}
\crefname{section}{\S}{\S\S}
\Crefname{section}{\S}{\S\S}
\Crefname{line}{Line}{line}
\crefname{line}{Line}{line}
\Crefname{assumption}{Assumption}{assumption}
\Crefname{@theorem}{Theorem}{theorem}
\crefname{@theorem}{Theorem}{theorem}
\crefname{module}{Module}{module}
\Crefname{module}{Module}{module}
\crefname{myclaim}{Claim}{claim}
\Crefname{myclaim}{Claim}{claim}
\newcommand{\remove}[1]{}
\crefname{lstlisting}{listing}{listings}
\Crefname{lstlisting}{Listing}{Listings}
\crefname{code}{line}{lines}
\Crefname{code}{Line}{Lines}
\definecolor{mygreen}{rgb}{0.254,0.572,0.294}
\definecolor{mygray}{rgb}{0.5,0.5,0.5}
\definecolor{myorange}{rgb}{1,0.35,0}
\definecolor{mymauve}{rgb}{0.58,0,0.82}
\definecolor{myblue}{rgb}{0.2,0.4,0.6}
\definecolor{rakos4orange}{RGB}{255,165,0}
\definecolor{rakos4blue}{RGB}{14,48,173}
\definecolor{rakos4lblue}{RGB}{92,172,238}
\definecolor{rakos4dgray}{RGB}{77,77,77}
\definecolor{plainred}{RGB}{211,63,63}
\definecolor{plainorange}{RGB}{221,105,41}
\lstdefinelanguage{Golang}%
  {morekeywords=[1]{package,import,struct,defer,panic,%
     recover,select,var,const,iota, class},%
   morekeywords=[2]{string,uint,uint8,uint16,uint32,uint64,int,int8,int16,%
     int32,int64,bool,float32,float64,complex64,complex128,byte,rune,uintptr,%
     error,interface,node},%
   morekeywords=[3]{map,slice,make,new,nil,len,cap,copy,close,%
     delete,append,real,imag,complex,chan,},%
   morekeywords=[4]{break,continue,goto,switch,case,fallthrough,%
    default,},%
   morekeywords=[5]{Println,Printf,Error,Send},%
   sensitive=true,%
   morecomment=[l]{//},%
   morecomment=[s]{/*}{*/},%
   morestring=[b]",%
   morestring=[s]{`}{`},%
   }
\small\color{mygray}\textnormal,
\scriptsize\color{mygray}, %
\newcommand{\cmark}{\ding{52}}%
\newcommand{\xmark}{\ding{56}}%
\definecolor{myParula01Blue}{RGB}{0,114,189}
\definecolor{myParula02Orange}{RGB}{217,83,25}
\definecolor{myParula03Yellow}{RGB}{237,177,32}
\definecolor{myParula04Purple}{RGB}{126,47,142}
\definecolor{myParula05Green}{RGB}{119,172,48}
\definecolor{myParula06LightBlue}{RGB}{77,190,238}
\definecolor{myParula07Red}{RGB}{162,20,47}
\definecolor{jnSUDigitalRed}{HTML}{B1040E}
\definecolor{jnSUDigitalRedLight}{HTML}{E50808}
\definecolor{jnSUDigitalRedDark}{HTML}{820000}
\definecolor{jnSUDigitalBlue}{HTML}{006CB8}
\definecolor{jnSUDigitalBlueLight}{HTML}{6FC3FF}
\definecolor{jnSUDigitalBlueDark}{HTML}{00548f}
\definecolor{jnSUDigitalGreen}{HTML}{008566}
\definecolor{jnSUDigitalGreenLight}{HTML}{1AECBA}
\definecolor{jnSUDigitalGreenDark}{HTML}{006F54}
\definecolor{jnSUAccentIlluminating}{HTML}{FEDD5C}
\definecolor{jnSUAccentIlluminatingLight}{HTML}{FFE781}
\definecolor{jnSUAccentIlluminatingDark}{HTML}{FEC51D}
\definecolor{jnSUAccentPoppy}{HTML}{E98300}
\definecolor{jnSUAccentPoppyLight}{HTML}{F9A44A}
\definecolor{jnSUAccentPoppyDark}{HTML}{D1660F}
\newcommand{\tablefontsize}{\scriptsize}
\algrenewcommand{\algorithmicindent}{1em}
\pgfplotsset{compat=1.17}
\titlerunning{Towards Optimal Hash-Based Asynchronous MVBA}
\title{Toward Optimal-Complexity Hash-Based Asynchronous MVBA with Optimal Resilience}
\author{Jovan Komatovic\inst{1} \and
Joachim Neu\inst{2} \and
Tim Roughgarden\inst{2,3}}
\authorrunning{J. Komatovic, J. Neu, T. Roughgarden}
\institute{%
Category Labs\\%
\email{jkomatovic@category.xyz}%
\and%
a16z Crypto Research\\%
\email{\{jneu,troughgarden\}@a16z.com}%
\and%
Columbia University\\%
\email{tim.roughgarden@gmail.com}%
}%
\let\oldkappa\kappa
\let\oldlambda\lambda
\renewcommand{\kappa}{\oldlambda}
\renewcommand{\lambda}{\oldkappa}
\date{}
\begin{document}
\maketitle
\begin{abstract}
    Multi-valued validated Byzantine agreement (MVBA),
a fundamental primitive of distributed computing,
allows $n$ processes to agree on a valid $\ell$-bit value, despite $t$ faulty processes behaving maliciously.
Among hash-based solutions for the asynchronous setting with adaptive faults, the state-of-the-art HMVBA protocol~(\emph{DSN'25}) achieves optimal $O(n^2)$ message complexity, (near-)optimal $O(n \ell + n^2 \kappa \log n)$ bit complexity, and optimal $O(1)$ time complexity.
However, it only tolerates 
$t < \frac15 n$ 
failures.
In contrast, the best-known optimally-resilient protocol, SQ~(\emph{S\&P'25}), 
incurs a higher bit complexity of $O(n^2 \ell + n^3 \kappa)$.
This 
poses
a fundamental question: 
Can a hash-based protocol be designed for the asynchronous setting with adaptive faults that 
simultaneously 
achieves 
optimal complexity and optimal resilience?

This paper takes a significant step toward answering this question.
Namely, we introduce \comm, an MVBA protocol that retains HMVBA's optimal complexity while improving its resilience to $t < \frac14 n$.
Like HMVBA and SQ, \comm relies exclusively on collision-resistant hash functions.
A key innovation in \comm's design is its internal use of strong multi-valued Byzantine agreement (SMBA), a new variant of Byzantine agreement we introduce and construct, which ensures that the decided value was proposed by a correct process.
To further advance resilience toward the optimal one-third bound, we then propose \commplus,
an MVBA protocol that tolerates up to $t < (\frac13 - \epsilon)n$ adaptive failures, for any fixed constant $\epsilon > 0$.
Unlike \comm, \commplus does not rely on SMBA.
Instead, it employs a novel approach involving hash functions modeled as random oracles to ensure termination.
\commplus maintains constant time complexity, quadratic message complexity, and quasi-quadratic bit complexity, with constants dependent on $\epsilon$.

\end{abstract}
\section{Introduction}\label{section:introduction}

Multi-valued validated Byzantine agreement (MVBA), first introduced in~\cite{CKPS01}, has become a fundamental building block for secure distributed systems, such as fault-tolerant replicated state-machines~\cite{abd2005fault,adya2002farsite,amir2006scaling,CL02,Kotla2009,kotla2004high}.
It also plays a central role in distributed key generation~\cite{AbrahamJMMST21,DasYXMK022,Kokoris-KogiasM20} and secure multiparty computation~\cite{DBLP:conf/tcc/DeligiosHL21,DBLP:conf/eurocrypt/FitziGMR02,DBLP:conf/crypto/GennaroIKR02}.
In essence, MVBA protocols enable $n$ processes in a message-passing model
to agree on an $\ell$-bit value that satisfies a fixed external validity predicate,
despite $t$ faulty processes deviating from the protocol in any arbitrary, possibly coordinated, but computationally bounded manner.
Of particular interest, due to superior robustness to network conditions, is MVBA under asynchrony, where messages are guaranteed eventual delivery, but no assumptions are made about the timing.
The design of \emph{asynchronous MVBA protocols} is the subject of this paper.

The seminal FLP impossibility result~\cite{fischer1985impossibility} implies that no deterministic algorithm can solve asynchronous MVBA.
In other words, any asynchronous MVBA protocol must employ randomness.
Since then, it has become standard practice~\cite{hmvba,finmvba,sq,MostefaouiMR15,DBLP:journals/acta/MostefaouiR17,chen2024ociormvbanearoptimalerrorfreeasynchronous,pace} to assume an \emph{idealized common coin} (i.e., Rabin's dealer~\cite{Rabin83}) as a given primitive and build the MVBA protocol around it.
The common coin
encapsulates the randomness,
and upon invocation by sufficiently many processes provides the same unpredictable and unbiasable random sequence to all processes.
The rest of the protocol is the actual distributed-computing ``core mechanism''.
We adopt this blueprint as well: we assume an idealized common-coin object, whose cost is omitted when computing complexity metrics, and design our algorithms around this abstraction.

Moreover, we pursue a \emph{hash-based} MVBA protocol that is,
in particular, setup-free and signature-free.
Hash-based protocols rely on relatively cheap ``unstructured'' operations like hashes,
instead of the relatively expensive ``highly-structured'' algebraic operations that underlie, for instance, public-key or threshold cryptography.
As a result, hash-based protocols are plausibly post-quantum secure, and, ceteris paribus, tend to be more performant.
They also avoid the complexity overhead and trust issues that come with trusted or private setups.

Finally, \cite{CKPS01} has hinted and \cite[Sec.~1.2]{hmvba} has shown that the design of ``good'' hash-based asynchronous MVBA protocols is easy if security is required only against a static adversary, who determines which processes to corrupt at the beginning of the execution before any randomness of the protocol's common coin is sampled.
The gold standard, however, is security against \emph{adaptive} adversaries,
who are at liberty to decide which processes to corrupt during the protocol execution as randomness is revealed.
We thus pursue \emph{adaptive security}.

\begin{table}[tb]
    \caption{%
        State-of-the-art
        adaptively-secure 
        asynchronous MVBA algorithms.
    }%
    \setlength{\tabcolsep}{3pt}%
    \tablefontsize%
    \begin{center}
    \vspace{-0.5em}
    \begin{tabular}{lcccccc}
        \toprule
        Algorithm
        & Hash-based
        & Messages
        & Bits
        & Time
        & Resilience
        \\
        \midrule
        CKPS01-MVBA~\cite{CKPS01} \textsuperscript{TS}
        & \textcolor{jnSUDigitalRed}{\xmark{}}
        & \textcolor{jnSUDigitalGreen}{$O(n^2)$}
        & \textcolor{jnSUDigitalRed}{$O(n^2 \ell + n^2 \kappa + n^3)$}
        & \textcolor{jnSUDigitalGreen}{$O(1)$}
        & \textcolor{jnSUDigitalGreen}{$t < \frac13 n$}
        \\[2pt]
        CKPS01-MVBA/HS~\cite{CKPS01} \textsuperscript{H-CR}
        & \textcolor{jnSUDigitalGreen}{\cmark{}}
        & \textcolor{jnSUDigitalGreen}{$O(n^2)$}
        & \textcolor{jnSUDigitalRed}{$O(n^2 \ell + n^3 \kappa)$}
        & \textcolor{jnSUDigitalGreen}{$O(1)$}
        & \textcolor{jnSUDigitalGreen}{$t < \frac13 n$}
        \\[2pt]
        
        VABA~\cite{vaba} \textsuperscript{TS}
        & \textcolor{jnSUDigitalRed}{\xmark{}}
        & \textcolor{jnSUDigitalGreen}{$O(n^2)$}
        & \textcolor{jnSUDigitalRed}{$O(n^2 \ell + n^2 \kappa)$}
        & \textcolor{jnSUDigitalGreen}{$O(1)$}
        & \textcolor{jnSUDigitalGreen}{$t < \frac13 n$}
        \\[2pt]
        VABA/HS~\cite{vaba} \textsuperscript{H-CR}
        & \textcolor{jnSUDigitalGreen}{\cmark{}}
        & \textcolor{jnSUDigitalGreen}{$O(n^2)$}
        & \textcolor{jnSUDigitalRed}{$O(n^2 \ell + n^3 \kappa)$}
        & \textcolor{jnSUDigitalGreen}{$O(1)$}
        & \textcolor{jnSUDigitalGreen}{$t < \frac13 n$}
        \\[2pt]
        
        Dumbo-MVBA~\cite{dumbomvba} \textsuperscript{TS}
        & \textcolor{jnSUDigitalRed}{\xmark{}}
        & \textcolor{jnSUDigitalGreen}{$O(n^2)$}
        & \textcolor{jnSUDigitalGreen}{$O(n \ell + n^2 \kappa)$}
        & \textcolor{jnSUDigitalGreen}{$O(1)$}
        & \textcolor{jnSUDigitalGreen}{$t < \frac13 n$}
        \\[2pt]
        Dumbo-MVBA/HS~\cite{dumbomvba} \textsuperscript{H-CR}
        & \textcolor{jnSUDigitalGreen}{\cmark{}}
        & \textcolor{jnSUDigitalGreen}{$O(n^2)$}
        & \textcolor{jnSUDigitalRed}{$O(n \ell + n^3 \kappa)$}
        & \textcolor{jnSUDigitalGreen}{$O(1)$}
        & \textcolor{jnSUDigitalGreen}{$t < \frac13 n$}
        \\[2pt]
        
        sMVBA~\cite{smvba} \textsuperscript{TS}
        & \textcolor{jnSUDigitalRed}{\xmark{}}
        & \textcolor{jnSUDigitalGreen}{$O(n^2)$}
        & \textcolor{jnSUDigitalRed}{$O(n^2 \ell + n^2 \kappa)$}
        & \textcolor{jnSUDigitalGreen}{$O(1)$}
        & \textcolor{jnSUDigitalGreen}{$t < \frac13 n$}
        \\[2pt]
        sMVBA/HS~\cite{smvba} \textsuperscript{H-CR}
        & \textcolor{jnSUDigitalGreen}{\cmark{}}
        & \textcolor{jnSUDigitalGreen}{$O(n^2)$}
        & \textcolor{jnSUDigitalRed}{$O(n^2 \ell + n^3 \kappa)$}
        & \textcolor{jnSUDigitalGreen}{$O(1)$}
        & \textcolor{jnSUDigitalGreen}{$t < \frac13 n$}
        \\[2pt]

        FIN-MVBA~\cite{finmvba} \textsuperscript{H-CR}
        & \textcolor{jnSUDigitalGreen}{\cmark{}}
        & \textcolor{jnSUDigitalRed}{$O(n^3)$}
        & \textcolor{jnSUDigitalRed}{$O(n^2 \ell + n^3 \kappa)$}
        & \textcolor{jnSUDigitalGreen}{$O(1)$}
        & \textcolor{jnSUDigitalGreen}{$t < \frac13 n$}
        \\[2pt]
        FIN-MVBA~\cite{finmvba} \textsuperscript{NO}
        & \textcolor{jnSUDigitalGreen}{\cmark{}}
        & \textcolor{jnSUDigitalRed}{$O(n^3)$}
        & \textcolor{jnSUDigitalRed}{$O(n^2 \ell + n^2 \kappa + n^3 \log n)$}
        & \textcolor{jnSUDigitalGreen}{$O(1)$}
        & \textcolor{jnSUDigitalGreen}{$t < \frac13 n$}
        \\[2pt]
        HMVBA~\cite{hmvba} \textsuperscript{H-CR}
        & \textcolor{jnSUDigitalGreen}{\cmark{}}
        & \textcolor{jnSUDigitalGreen}{$O(n^2)$}
        & \textcolor{jnSUDigitalGreen}{$O(n \ell + n^2 \kappa \log n)$}
        & \textcolor{jnSUDigitalGreen}{$O(1)$}
        & \textcolor{jnSUDigitalRed}{$t < \frac15 n$}
        \\[2pt]

        SQ~\cite{sq} \textsuperscript{NO}
        & \textcolor{jnSUDigitalGreen}{\cmark{}}
        & \textcolor{jnSUDigitalGreen}{$O(n^2)$}
        & \textcolor{jnSUDigitalRed}{$O(n^3 \ell)$}
        & \textcolor{jnSUDigitalGreen}{$O(1)$}
        & \textcolor{jnSUDigitalGreen}{$t < \frac13 n$}
        \\[2pt]

        SQ~\cite{sq} \textsuperscript{H-CR}
        & \textcolor{jnSUDigitalGreen}{\cmark{}}
        & \textcolor{jnSUDigitalGreen}{$O(n^2)$}
        & \textcolor{jnSUDigitalRed}{$O(n^2 \ell + n^3 \kappa)$}
        & \textcolor{jnSUDigitalGreen}{$O(1)$}
        & \textcolor{jnSUDigitalGreen}{$t < \frac13 n$}
        \\[2pt]
        OciorMVBArr~\cite{chen2024ociormvbanearoptimalerrorfreeasynchronous} \textsuperscript{NO} \textsuperscript{\textdagger}
        & \textcolor{jnSUDigitalGreen}{\cmark{}}
        & \textcolor{jnSUDigitalGreen}{$O(n^2)$}
        & \textcolor{jnSUDigitalGreen}{$O(n \ell + n^2 \log n)$}
        & \textcolor{jnSUDigitalGreen}{$O(1)$}
        & \textcolor{jnSUDigitalRed}{$t < \frac15 n$}
        \\[2pt]
        OciorMVBA~\cite{chen2024ociormvbanearoptimalerrorfreeasynchronous} \textsuperscript{NO} \textsuperscript{\textdagger}
        & \textcolor{jnSUDigitalGreen}{\cmark{}}
        & \textcolor{jnSUDigitalGreen}{$O(n^2)$}
        & \textcolor{jnSUAccentPoppyLight}{$O(n \ell \log n + n^2 \log n)$}
        & \textcolor{jnSUDigitalRed}{$O(\log n)$}
        & \textcolor{jnSUDigitalGreen}{$t < \frac13 n$}
        \\[2pt]
        OciorMVBAh~\cite{chen2024ociormvbanearoptimalerrorfreeasynchronous} \textsuperscript{H-CR} \textsuperscript{\textdagger}
        & \textcolor{jnSUDigitalGreen}{\cmark{}}
        & \textcolor{jnSUDigitalRed}{$O(n^3)$}
        & \textcolor{jnSUDigitalRed}{$O(n \ell + n^3 \kappa)$}
        & \textcolor{jnSUDigitalGreen}{$O(1)$}
        & \textcolor{jnSUDigitalGreen}{$t < \frac13 n$}
        \\[2pt]
        FLT24-MVBA~\cite{flt24mvba} \textsuperscript{H-CR} \textsuperscript{\textdagger}
        & \textcolor{jnSUDigitalGreen}{\cmark{}}
        & \textcolor{jnSUAccentPoppyLight}{$O(n^2 \lambda)$}
        & \textcolor{jnSUAccentPoppyLight}{$O(n \ell + n^2 \kappa \log n + n^2 \lambda \kappa)$}
        & \textcolor{jnSUAccentPoppyLight}{$O(\log \lambda)$}
        & \textcolor{jnSUDigitalGreen}{$t < \frac13 n$}
        \\[1pt]\midrule
        This work: \comm \textsuperscript{H-CR}
        & \textcolor{jnSUDigitalGreen}{\cmark{}}
        & \textcolor{jnSUDigitalGreen}{$O(n^2)$}
        & \textcolor{jnSUDigitalGreen}{$O(n \ell + n^2 \kappa \log n )$}
        & \textcolor{jnSUDigitalGreen}{$O(1)$}
        & \textcolor{jnSUAccentPoppyLight}{$t < \frac14 n$}
        \\[2pt]
        This work: \commplus \textsuperscript{H-RO}
        & \textcolor{jnSUDigitalGreen}{\cmark{}}
        & \textcolor{jnSUDigitalGreen}{$O(n^2)$}
        & \textcolor{jnSUDigitalGreen}{$O (n \ell + n^2 \kappa \log n)$}
        & \textcolor{jnSUDigitalGreen}{$O(1)$}
        & \textcolor{jnSUDigitalGreen}{$t < (\frac13 - \epsilon)n$}
        \\
        \bottomrule
    \end{tabular}
    \end{center}

    \smallskip
    Here, $\ell$ denotes the bit-length of the considered values, while $\kappa$ represents the length of a hash value or a signature.
    Notations ``\textsuperscript{H-CR}'', ``\textsuperscript{H-RO}'', and ``\textsuperscript{TS}'' indicate the use of collision-resistant hashes, hashes modeled as a random oracle, and threshold signatures, respectively, each of length $\kappa$;
    ``\textsuperscript{NO}'' indicates use of no cryptography (assuming authenticated channels).
    Notation ``/HS''
    refers to a hash-based variant in which the threshold signatures of the original protocol are substituted with lists of hash-based signatures from $n-t$ processes; here, we ignore the fact that some protocols crucially rely on an unforgeable unique threshold signature to generate a common coin.
    In theory, utilizing STARKs for threshold signatures would allow each entry with ``/HS'' to go from a factor $n^3\kappa$ to $n^2\log(n)^2\kappa$ in bits; in practice, however, STARKs would introduce significant performance penalties (see, for example,~\cite[Table~5]{preonpdf}).
    The $\lambda$ parameter in the complexity of FLT24-MVBA is a statistical security parameter.
    \commplus's complexity exhibits a multiplicative constant factor of $C^2$, where $C  = \lceil 1 + \frac{2}{\epsilon} \rceil^2$.
    (We discuss in \Cref{subsection:reducing_constants} how to reduce the multiplicative constant factor to $O(1 / \epsilon^3)$.)
    As per convention in the asynchronous-agreement literature, the reported message, bit, and time complexities pertain to the protocol core, assuming the availability of common coins and consistently omitting their associated costs.
    Other asynchronous agreement protocols (including those with a weak common coin) are covered in \Cref{section:related_work}.
    \textsuperscript{\textdagger} Preprints announced shortly after preprint of this work.
    \label{tab:mvba-constructions}
\end{table}

\smallskip
\noindent\textbf{Scope \& state-of-the-art.}
In summary, the broad subject of this paper is \emph{adaptively-secure hash-based asynchronous MVBA}.
Within this scope, as is usual,
we want to minimize a protocol's time, message, and bit complexity,
and to maximize its resilience (the number $t$ of faulty processes it can tolerate relative to the number $n$ of all processes).
The best known MVBA protocols 
for this standard setting
are HMVBA~\cite{hmvba} (\emph{DSN'25}) and SQ~\cite{sq} (\emph{S\&P'25}) (see \Cref{tab:mvba-constructions}).
HMVBA 
relies solely on collision-resistant hash functions 
and achieves optimal $O(n^2)$ expected message complexity,
(near-)optimal $O(n\ell + n^2 \kappa \log n)$ expected bit complexity (where $\kappa$ denotes the size of a hash value) 
and optimal $O(1)$ expected time complexity.
A key drawback of HMVBA is its sub-optimal $t < n/5$ resilience.
SQ, on the other hand, tolerates up to $t < n/3$ faults, exchanges $O(n^2)$ messages and terminates in $O(1)$ time,
while also relying exclusively on collision-resistant hash functions.
However, this comes at the cost of increased communication. 
Specifically, SQ incurs a bit complexity of $O(n^2 \ell + n^3 \kappa)$.

\smallskip
\noindent\textbf{Contributions.}
It is therefore natural to ask for a protocol
that combines the strengths of HMVBA and SQ.
To this end, we first present \comm, an adaptively-secure hash-based asynchronous MVBA protocol that matches the complexity of the HMVBA protocol while improving its resilience to $t < n/4$ (see \Cref{tab:mvba-constructions}).
Like HMVBA and SQ, \comm relies solely on collision-resistant hash functions.
A novel aspect of \comm's design is its reliance on \emph{strong multi-valued Byzantine agreement} (SMBA), a new variant of the Byzantine agreement primitive, which we introduce and construct, and that ensures that correct processes reach agreement on the proposal of a correct process.
In particular, during ``good'' iterations---where the common coin produces favorable outputs---\comm utilizes the SMBA primitive as a crucial mechanism to guarantee the protocol's termination.
To achieve \comm's optimal complexity, an optimal SMBA algorithm was required.
To this end, we introduce a cryptography-free SMBA algorithm (assuming a common coin) with optimal complexity.
Both SMBA as a primitive and our SMBA algorithm may be of independent interest.

To approach the optimal one-third resilience found in SQ, we then propose \commplus, a hash-based MVBA protocol
with further improved $t < (1/3 - \epsilon)n$ resilience, for any fixed constant $\epsilon > 0$ independent of $n$.
Expected time, message, and bit complexity of \commplus remain constant, quadratic, and quasi-quadratic, respectively, with the multiplicative constant factor of $C^2$, where $C  = \lceil 1 + \frac{2}{\epsilon} \rceil^2$.
(As discussed in \Cref{subsection:reducing_constants}, the constant multiplicative factor can be reduced to $O(1 / \epsilon^3)$.)
\commplus, however, necessitates an additional (but standard) assumption: the underlying hash function must be modeled as a random oracle.
This assumption plays a crucial role in ensuring the protocol's termination.

\smallskip
\noindent \textbf{Instantiations of common coins.}
As noted earlier, we assume an idealized common-coin object, whose implementation typically requires either a trusted dealer~\cite{Rabin83} or threshold pseudorandom functions~\cite{CachinKS05}.
Crucially, however, an \emph{idealized} common coin is assumed 
\emph{purely} to simplify the exposition: our algorithms (can be easily modified to) retain their asymptotic complexity 
guarantees 
even when the coin is \emph{weak}, allowing a constant probability of disagreement.
(We discuss these modifications in \Cref{section:preliminaries}.)
This distinction matters, as weak coins can be implemented efficiently using only hash functions.
The best known adaptively-secure hash-based implementation of a weak common coin is HashRand~\cite{DBLP:journals/iacr/BandarupalliBBKR23}, which incurs a communication cost of $O(\kappa n^3 \log n)$ bits and time complexity of $O(1)$.
Therefore, to be fully precise, our results---like those of~\cite{hmvba,flt24mvba,chen2024ociormvbanearoptimalerrorfreeasynchronous}---should be viewed as a hash-only reduction from efficient MVBA to an efficient common coin, while preserving adaptive security.
Importantly, however, HashRand has amortized bit complexity of $O(\kappa n^2 \log n)$, which means it does \emph{not} increase the overall complexity of our protocols, at least in the amortized sense.\footnote{HashRand relies on both authenticated and private channels, while our algorithms require only authenticated channels.
In practice, both assumptions can be met using a post-quantum symmetric encryption scheme such as AES-GCM.
Designing weak common coins---and agreement primitives more broadly---using \emph{only} authenticated channels remains a significant challenge, as evidenced by the fact that the best known solution incurs a communication cost of $\Omega(n^6)$ bits while terminating in $\Tilde{O}(n^{12})$ time~\cite{huang2023byzantine}.
See \Cref{section:related_work} for more details.}
From a practical perspective, our protocols' randomness (i.e., our weak common coins) could be derived from natural shared entropy sources (e.g., proof-of-work blockchains). 
Since only the liveness---not the safety---of our protocols depends on correct processes receiving the \emph{same} random value from the common-coin abstraction, expensive adversarial manipulation of the randomness source, leading to disagreement over a random value, can at most affect liveness of our solutions, and only while the manipulation is ongoing.
This approach may be sufficient for real-world applications.
Having reviewed possible implementations, 
we now abstract away from these details, and, going forward, assume an ideal common coin.

\smallskip
\noindent\textbf{Concurrent work.}
In a preprint announced shortly after the first version of this work, FLT24-MVBA~\cite{flt24mvba} studies the same problem as we do, and 
achieves results that are incomparable and complementary to ours.
On the positive side, FLT24-MVBA attains optimal resilience, tolerating up to one-third faulty processes, which improves upon the resilience of
\comm and \commplus.
However, our algorithms offer two significant advantages along two other dimensions:
(1)~\comm and \commplus achieve better complexity, as FLT24-MVBA exchanges $O(n^2 \lambda$) messages, $O(n \ell + n^2 \kappa \log n + n^2 \kappa \lambda)$ bits, and terminates in $O(\log \lambda)$ time, where $\lambda$ denotes a statistical security parameter.
Note that $\lambda$ is a security parameter~\cite[Sec.\ 3.1]{DBLP:books/crc/KatzLindell2007}, not a constant (see also the detailed discussion in \Cref{subsection:techniques_comparison}):
for constant $\lambda$, any FLT24-MVBA instance stalls indefinitely with constant probability.
This would prohibit the composition of any more than a constant number of FLT24-MVBA instances and would thus severely limit its applicability,
for instance
in the standard and widespread ``repeated MVBA'' structure, which 
constructs
atomic broadcast or state-machine replication from MVBA (e.g., in Tendermint~\cite{BKM19}).
(2)~Our protocols satisfy the \emph{quality} property~\cite{vaba}, which ensures that, with constant probability, the decided value was proposed by a correct process. In contrast, FLT24-MVBA does not satisfy this property. 
This, too, for instance, renders FLT24-MVBA 
ill-suited
for use 
in ``repeated MVBA'',
since the absence of the quality property allows the adversary to indefinitely censor transactions from correct participants.
We highlight that FLT24-MVBA, like other performant MVBA protocols (including ours), is sufficient to efficiently solve the asynchronous common subset (ACS) problem, although this requires the use of (hash-)signatures.
A detailed discussion of FLT24-MVBA, its techniques, and how they relate to ours is in \Cref{subsection:techniques_comparison}.

\smallskip
\noindent\textbf{Roadmap.}
In \Cref{section:technical_overview}, we provide a technical overview of our algorithms.
We introduce the system model and formally define the MVBA problem in \Cref{section:model_problem_definition}.
We outline the building blocks of our algorithms in \Cref{section:preliminaries}.
In \Cref{section:reducer}, we present the \comm algorithm and provide an informal proof of its correctness and complexity.
Then, \Cref{section:reducer_plus} introduces \commplus, including a sketch of its correctness and complexity, as well as a method to reduce the constant multiplicative factor in its complexity from $O(1 / \epsilon^4)$ to $O(1 / \epsilon^3)$.
A discussion of related work is relegated to \Cref{section:related_work}.
Full definitions of the preliminaries are provided in \Cref{section:preliminaries_full}, while the resilience limits of \comm and \commplus are discussed in \Cref{subsection:discussion}.
The rest of the optional appendix contains all omitted algorithms and proofs.

\section{Technical Overview} \label{section:technical_overview}

Given that both \comm and \commplus are inspired by the HMVBA algorithm~\cite{hmvba}, which achieves an expected message complexity of $O(n^2)$, an expected bit complexity of $O(n\ell + n^2 \kappa \log n)$ and terminates in $O(1)$ expected time (see \Cref{tab:mvba-constructions}), we start by revisiting HMVBA (\Cref{subsection:revisitin_hmvba}).
This review provides valuable insights into the design principles underlying our algorithms.
We then proceed to outline the key mechanisms of \comm (\Cref{subsection:reducer_overview}) and \commplus (\Cref{subsection:commplus_overview}).
We defer the discussion of why \comm's resilience is limited to $t < \frac{1}{4}n$ and \commplus's to $t < (\frac{1}{3} - \epsilon)n$ to \Cref{subsection:discussion}.

\subsection{Revisiting HMVBA} \label{subsection:revisitin_hmvba}

At a high level, the HMVBA algorithm follows the ``Disseminate-Elect-Agree'' paradigm, which we describe below.
\Cref{fig:hmvba-recap} depicts the structure of HMVBA.
For simplicity, let $n = 5t + 1$.

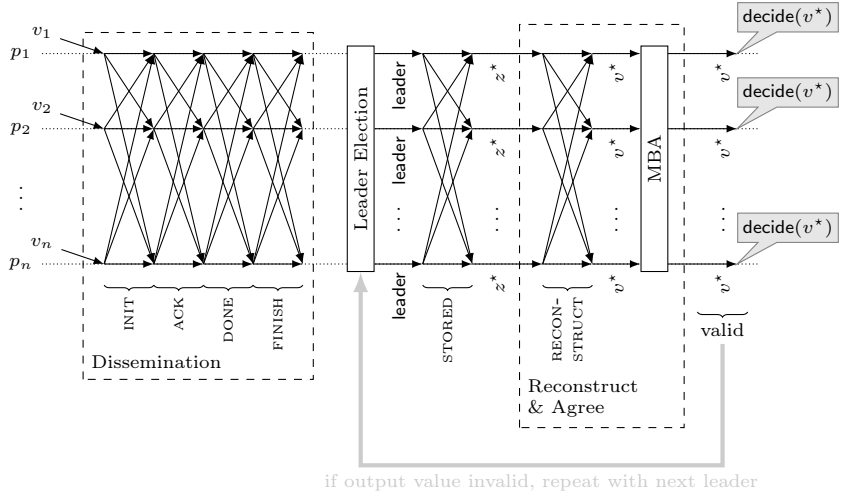
\begin{figure}[tbp]
    \centering
    \begin{tikzpicture}[
        x=1.1cm,
        y=0.9cm,
        subprotocol/.style = {
            rotate=90,
            minimum width=3cm,
            draw,
            fill=white,
        },
        transparentsubprotocol/.style = {
            draw,
            dashed,
        },
        msg/.style = {
            -latex,
        },
        msglabel/.style = {
            below,
            rotate=90,
            anchor=east,
        },
        col/.style = {
            draw,
            inner sep=0,
            minimum width=1.2em,
            minimum height=1.2em,
            fill=white,
        },
        colfirst/.style = {
            col,
            anchor=south west,
            yshift=2pt,
            xshift=1pt,
        },
        colrest/.style = {
            col,
            anchor=west,
            xshift=-0.4pt,   %
        },
        colhighlight/.style = {
            fill=NavyBlue!20,
        },
    ]
        \scriptsize

        \coordinate (p1) at (0.5,1.5);
        \coordinate (p2) at (0.5,0.5);
        \coordinate (pdots) at (0.5,-0.6);
        \coordinate (pn) at (0.5,-1.3);
        
        \coordinate (center) at (0,0.1);

        \coordinate (s1Input) at (0.5,0.0 |- center);
        \coordinate (s2InputOut) at ($(s1Input) + (0.75,0)$);
        \coordinate (s3InitOut) at ($(s2InputOut) + (0.6,0)$);
        \coordinate (s4AckOut) at ($(s3InitOut) + (0.6,0)$);
        \coordinate (s5DoneOut) at ($(s4AckOut) + (0.6,0)$);
        \coordinate (s6FinishOut) at ($(s5DoneOut) + (0.6,0)$);
        \coordinate (s7LePos) at ($(s6FinishOut) + (0.7,0)$);
        \coordinate (s8LeOut) at ($(s7LePos) + (0.75,0)$);
        \coordinate (s9StoredOut) at ($(s8LeOut) + (0.6,0)$);
        \coordinate (s11SmbaOneOut) at ($(s9StoredOut) + (0.85,0)$);
        \coordinate (s12ReconstructOut) at ($(s11SmbaOneOut) + (0.6,0)$);
        \coordinate (s13SmbaTwoPos) at ($(s12ReconstructOut) + (0.75,0)$);
        \coordinate (s14SmbaTwoOut) at ($(s13SmbaTwoPos) + (1,0)$);

        \node at ([xshift=-1em] p1) {$p_1$};
        \node at ([xshift=-1em] p2) {$p_2$};
        \node at ([xshift=-1em] pn) {$p_n$};
        
        \draw [densely dotted] (p1) -- ([xshift=1em] s14SmbaTwoOut |- p1);
        \draw [densely dotted] (p2) -- ([xshift=1em] s14SmbaTwoOut |- p2);
        \draw [densely dotted] (pn) -- ([xshift=1em] s14SmbaTwoOut |- pn);

        \node (inp1) at ([yshift=1em] s1Input |- p1) {$v_{1}$};
        \node (inp2) at ([yshift=1em] s1Input |- p2) {$v_{2}$};
        \node (inpn) at ([yshift=1em] s1Input |- pn) {$v_{n}$};

        \node at ([yshift=1em,xshift=-1em] s1Input |- pdots) {$\vdots$};

        \draw [msg] (inp1) -- (s2InputOut |- p1);
        \draw [msg] (inp2) -- (s2InputOut |- p2);
        \draw [msg] (inpn) -- (s2InputOut |- pn);

        \draw [msg] (s2InputOut |- p1) -- (s3InitOut |- p1);
        \draw [msg] (s2InputOut |- p1) -- (s3InitOut |- p2);
        \draw [msg] (s2InputOut |- p1) -- (s3InitOut |- pn);
        \draw [msg] (s2InputOut |- p2) -- (s3InitOut |- p1);
        \draw [msg] (s2InputOut |- p2) -- (s3InitOut |- p2);
        \draw [msg] (s2InputOut |- p2) -- (s3InitOut |- pn);
        \draw [msg] (s2InputOut |- pn) -- (s3InitOut |- p1);
        \draw [msg] (s2InputOut |- pn) -- (s3InitOut |- p2);
        \draw [msg] (s2InputOut |- pn) -- (s3InitOut |- pn);

        \draw [decorate,decoration={brace,amplitude=2pt,mirror,raise=1em}] (s2InputOut |- pn) -- (s3InitOut |- pn) node [midway,yshift=-1.2em,rotate=90,anchor=east] {$\textsc{init}$};

        \draw [msg] (s3InitOut |- p1) -- (s4AckOut |- p1);
        \draw [msg] (s3InitOut |- p1) -- (s4AckOut |- p2);
        \draw [msg] (s3InitOut |- p1) -- (s4AckOut |- pn);
        \draw [msg] (s3InitOut |- p2) -- (s4AckOut |- p1);
        \draw [msg] (s3InitOut |- p2) -- (s4AckOut |- p2);
        \draw [msg] (s3InitOut |- p2) -- (s4AckOut |- pn);
        \draw [msg] (s3InitOut |- pn) -- (s4AckOut |- p1);
        \draw [msg] (s3InitOut |- pn) -- (s4AckOut |- p2);
        \draw [msg] (s3InitOut |- pn) -- (s4AckOut |- pn);

        \draw [decorate,decoration={brace,amplitude=2pt,mirror,raise=1em}] (s3InitOut |- pn) -- (s4AckOut |- pn) node [midway,yshift=-1.2em,rotate=90,anchor=east] {$\textsc{ack}$};

        \draw [msg] (s4AckOut |- p1) -- (s5DoneOut |- p1);
        \draw [msg] (s4AckOut |- p1) -- (s5DoneOut |- p2);
        \draw [msg] (s4AckOut |- p1) -- (s5DoneOut |- pn);
        \draw [msg] (s4AckOut |- p2) -- (s5DoneOut |- p1);
        \draw [msg] (s4AckOut |- p2) -- (s5DoneOut |- p2);
        \draw [msg] (s4AckOut |- p2) -- (s5DoneOut |- pn);
        \draw [msg] (s4AckOut |- pn) -- (s5DoneOut |- p1);
        \draw [msg] (s4AckOut |- pn) -- (s5DoneOut |- p2);
        \draw [msg] (s4AckOut |- pn) -- (s5DoneOut |- pn);

        \draw [decorate,decoration={brace,amplitude=2pt,mirror,raise=1em}] (s4AckOut |- pn) -- (s5DoneOut |- pn) node [midway,yshift=-1.2em,rotate=90,anchor=east] {$\textsc{done}$};

        \draw [msg] (s5DoneOut |- p1) -- (s6FinishOut |- p1);
        \draw [msg] (s5DoneOut |- p1) -- (s6FinishOut |- p2);
        \draw [msg] (s5DoneOut |- p1) -- (s6FinishOut |- pn);
        \draw [msg] (s5DoneOut |- p2) -- (s6FinishOut |- p1);
        \draw [msg] (s5DoneOut |- p2) -- (s6FinishOut |- p2);
        \draw [msg] (s5DoneOut |- p2) -- (s6FinishOut |- pn);
        \draw [msg] (s5DoneOut |- pn) -- (s6FinishOut |- p1);
        \draw [msg] (s5DoneOut |- pn) -- (s6FinishOut |- p2);
        \draw [msg] (s5DoneOut |- pn) -- (s6FinishOut |- pn);

        \draw [decorate,decoration={brace,amplitude=2pt,mirror,raise=1em}] (s5DoneOut |- pn) -- (s6FinishOut |- pn) node [midway,yshift=-1.2em,rotate=90,anchor=east] (tmpFinishLabel) {$\textsc{finish}$};

        \draw [transparentsubprotocol] ([xshift=-1em,yshift=-1em] s2InputOut |- pn |- tmpFinishLabel.north west) rectangle ([xshift=0.5em,yshift=1em] s6FinishOut |- p1) node [pos=0,anchor=south west,inner sep=0,xshift=3pt,yshift=4pt] {Dissemination};

        \node [subprotocol] (s7Le) at (s7LePos) {Leader Election};

        \draw [msg] (s7Le.south |- p1) -- (s8LeOut |- p1) node [midway,msglabel] {$\mathsf{leader}$};
        \draw [msg] (s7Le.south |- p2) -- (s8LeOut |- p2) node [midway,msglabel] {$\mathsf{leader}$};
        \draw [msg] (s7Le.south |- pn) -- (s8LeOut |- pn) node [midway,msglabel] {$\mathsf{leader}$};
        \node at ($(s7Le.south |- pdots)!0.5!(s8LeOut |- pdots)$) {$\vdots$};

        \draw [msg] (s8LeOut |- p1) -- (s9StoredOut |- p1);
        \draw [msg] (s8LeOut |- p1) -- (s9StoredOut |- p2);
        \draw [msg] (s8LeOut |- p1) -- (s9StoredOut |- pn);
        \draw [msg] (s8LeOut |- p2) -- (s9StoredOut |- p1);
        \draw [msg] (s8LeOut |- p2) -- (s9StoredOut |- p2);
        \draw [msg] (s8LeOut |- p2) -- (s9StoredOut |- pn);
        \draw [msg] (s8LeOut |- pn) -- (s9StoredOut |- p1);
        \draw [msg] (s8LeOut |- pn) -- (s9StoredOut |- p2);
        \draw [msg] (s8LeOut |- pn) -- (s9StoredOut |- pn);

        \draw [decorate,decoration={brace,amplitude=2pt,mirror,raise=1em}] (s8LeOut |- pn) -- (s9StoredOut |- pn) node [midway,yshift=-1.2em,rotate=90,anchor=east] {$\textsc{stored}$};

        \draw [msg] (s9StoredOut |- p1) -- (s11SmbaOneOut |- p1) node [pos=0.15,msglabel,anchor=north east] {$z^\star$};
        \draw [msg] (s9StoredOut |- p2) -- (s11SmbaOneOut |- p2) node [pos=0.15,msglabel,anchor=north east] {$z^\star$};
        \draw [msg] (s9StoredOut |- pn) -- (s11SmbaOneOut |- pn) node [pos=0.15,msglabel,anchor=north east] {$z^\star$};
        \node at ($(s9StoredOut |- pdots)!0.35!(s11SmbaOneOut |- pdots)$) {$\vdots$};

        \draw [msg] (s11SmbaOneOut |- p1) -- (s12ReconstructOut |- p1);
        \draw [msg] (s11SmbaOneOut |- p1) -- (s12ReconstructOut |- p2);
        \draw [msg] (s11SmbaOneOut |- p1) -- (s12ReconstructOut |- pn);
        \draw [msg] (s11SmbaOneOut |- p2) -- (s12ReconstructOut |- p1);
        \draw [msg] (s11SmbaOneOut |- p2) -- (s12ReconstructOut |- p2);
        \draw [msg] (s11SmbaOneOut |- p2) -- (s12ReconstructOut |- pn);
        \draw [msg] (s11SmbaOneOut |- pn) -- (s12ReconstructOut |- p1);
        \draw [msg] (s11SmbaOneOut |- pn) -- (s12ReconstructOut |- p2);
        \draw [msg] (s11SmbaOneOut |- pn) -- (s12ReconstructOut |- pn);

        \draw [decorate,decoration={brace,amplitude=2pt,mirror,raise=1em}] (s11SmbaOneOut |- pn) -- (s12ReconstructOut |- pn) node (tmpReconstructLabel) [midway,yshift=-1.2em,rotate=90,anchor=east,align=right] {$\textsc{recon-}$\\$\textsc{struct}$};

        \node [subprotocol] (s13SmbaTwo) at (s13SmbaTwoPos) {MBA};
        \draw [msg] (s12ReconstructOut |- p1) -- (s13SmbaTwo.north |- p1) node [midway,msglabel] {$v^\star$};
        \draw [msg] (s12ReconstructOut |- p2) -- (s13SmbaTwo.north |- p2) node [midway,msglabel] {$v^\star$};
        \draw [msg] (s12ReconstructOut |- pn) -- (s13SmbaTwo.north |- pn) node [midway,msglabel] {$v^\star$};
        \node at ($(s12ReconstructOut |- pdots)!0.5!(s13SmbaTwo.north |- pdots)$) {$\vdots$};

        \draw [transparentsubprotocol] ([xshift=-1.1em,yshift=-2.75em] s11SmbaOneOut |- pn |- tmpReconstructLabel.north west) rectangle ([xshift=0.75em,yshift=1em] s13SmbaTwo.south east) node [pos=0,anchor=south west,inner sep=0,xshift=3pt,yshift=4pt,align=left] {Reconstruct\\\& Agree};

        \draw [msg] (s13SmbaTwo.south |- p1) -- (s14SmbaTwoOut |- p1) node [pos=1,msglabel,anchor=south east] {$v^\star$};
        \draw [msg] (s13SmbaTwo.south |- p2) -- (s14SmbaTwoOut |- p2) node [pos=1,msglabel,anchor=south east] {$v^\star$};
        \draw [msg] (s13SmbaTwo.south |- pn) -- (s14SmbaTwoOut |- pn) node (tmpLabelValidityValue) [pos=1,msglabel,anchor=south east] {$v^\star$};
        \node at ($(s13SmbaTwo.south |- pdots)!0.8!(s14SmbaTwoOut |- pdots)$) {$\vdots$};
        
        \draw [decorate,decoration={brace,amplitude=2pt,mirror,raise=0.5em}] ([xshift=-0.5em] tmpLabelValidityValue.north west) -- ([xshift=0.5em] tmpLabelValidityValue.south west) node (tmpLabelValidity) [midway,yshift=-0.7em,anchor=north,align=center] {valid};
        
        \node [rectangle callout, callout absolute pointer={(s14SmbaTwoOut |- p1)}, draw=black!50, anchor=south west, inner sep=2pt, fill=black!10] at ([yshift=1em,xshift=0em] s14SmbaTwoOut |- p1) {$\mathsf{decide}(v^\star)$};
        \node [rectangle callout, callout absolute pointer={(s14SmbaTwoOut |- p2)}, draw=black!50, anchor=south west, inner sep=2pt, fill=black!10] at ([yshift=1em,xshift=0em] s14SmbaTwoOut |- p2) {$\mathsf{decide}(v^\star)$};
        \node [rectangle callout, callout absolute pointer={(s14SmbaTwoOut |- pn)}, draw=black!50, anchor=south west, inner sep=2pt, fill=black!10] at ([yshift=1em,xshift=0em] s14SmbaTwoOut |- pn) {$\mathsf{decide}(v^\star)$};

        \draw [ultra thick,black!20,-latex] (tmpLabelValidity) -- ++(0,-1.8) -| (s7Le.west) node [pos=0.25,below] {if output value invalid, repeat with next leader};
        
    \end{tikzpicture}
    \caption{%
        Depiction of HMVBA's structure.
        The depiction focuses on a good iteration $k$, where  $\mathsf{leader}(k)$ has disseminated its valid proposal $v^{\star}(k)$ and the corresponding digest $z^{\star}(k)$.
        We abridge
        $\mathsf{leader} \triangleq \mathsf{leader}(k)$,
        $z^\star \triangleq z^\star(k)$,
        $v^\star \triangleq v^\star(k)$.%
    }
    \label{fig:hmvba-recap}
\end{figure}

\smallskip
\noindent\textbf{Dissemination phase.}
Each process first disseminates its proposal.
Specifically, when a correct process $p_i$ proposes a valid value $v_i$, process $p_i$ computes $n$ Reed-Solomon (RS) symbols $[ m_1, m_2, ..., m_n ]$ of value $v_i$, where $v_i$ is treated as a polynomial of degree $t$.
Process $p_i$ then utilizes Merkle-tree-based~\cite{merkle-tree-crypto87} cryptographic accumulators (see \Cref{section:preliminaries}) in the following manner:
\begin{compactenum}
    \item Process $p_i$ computes the accumulation value (i.e., the Merkle root) $z_i$ for the $[ m_1, m_2, ..., m_n ]$ set.
    We refer to this accumulation value $z_i$ as the digest of $p_i$'s proposal $v_i$.

    \item For each RS symbol $m_j$, process $p_i$ computes the witness (i.e., the Merkle proof of inclusion) $w_j$ proving that $m_j$ belongs to the $[ m_1, m_2, ..., m_n ]$ set.
\end{compactenum}
Subsequently, process $p_i$ sends each RS symbol $m_j$ along with the digest $z_i$ and witness $w_j$ via an \textsc{init} message to process $p_j$.
Once a process $p_j$ receives a valid RS symbol $m_j$ from process $p_i$, i.e., an RS symbol corresponding to the received digest $z_i$ and the received witness $w_j$, process $p_j$ replies back via an \textsc{ack} message confirming that it has received and stored $[m_j, z_i, w_j]$.
When process $p_i$ receives $n - t = 4t + 1$ \textsc{ack} messages, process $p_i$ knows that at least $(n - t) - t = 3t + 1$ valid RS symbols are stored at as many correct processes.
Then, process $p_i$ broadcasts a \textsc{done} message.\footnote{In the paper, when a process ``broadcasts'' a message, it
unicasts it to all processes.}
Once process $p_i$ receives $n - t = 4t + 1$ \textsc{done} messages, which implies that at least $(n - t) - t = 3t + 1$ correct processes successfully disseminated their values, process $p_i$ broadcasts a \textsc{finish} message.
Finally, upon receiving $n - t = 4t + 1$ \textsc{finish} messages, process $p_i$ completes the dissemination phase.

\smallskip
\noindent\emph{Key takeaways from the dissemination phase.}
First, if a process $p_i$ successfully disseminates its valid proposal $v_i$,
it is guaranteed that at least $(n - t) - t = 3t + 1$ correct processes have stored (1) the digest $z_i$ of value $v_i$, and (2) RS symbols of the value $v_i$ (one per process).
Hence, even if process $p_i$ later gets corrupted, the original value $v_i$ can be reconstructed using the material held only by those $3t + 1$ correct processes.

Second, if a correct process completes the dissemination phase, at least $(n - t) - t = 3t + 1$ processes have already successfully disseminated their valid proposals (as at most $t$ processes can be corrupted).
Thus, there are $3t + 1$ processes whose proposals can be reconstructed, even if the adversary corrupts (some of) them after dissemination.
The HMVBA algorithm leverages this insight in the subsequent phases.

\smallskip
\noindent \textbf{Election \& agreement phases.}
After the dissemination phase is concluded, processes start executing HMVBA through \emph{iterations}.
Each iteration utilizes the multi-valued Byzantine agreement (MBA) primitive~\cite{book-cachin-guerraoui-rodrigues,BE03,DBLP:journals/acta/MostefaouiR17} ensuring strong unanimity: if all correct processes propose the same value $v$, then $v$ is decided.
(We formally define the MBA primitive in \Cref{section:preliminaries}.)

At the beginning of each iteration $k$, processes go through the election phase: by utilizing an idealized common coin, processes elect the leader of iteration $k$, denoted by $\mathsf{leader}(k)$.
Then, the agreement phase starts.
The goal of this phase is for processes to agree on the $\mathsf{leader}(k)$'s proposal.
Specifically, once the leader is elected, each process $p_i$ broadcasts via a \textsc{stored} message the digest received from $\mathsf{leader}(k)$ during the dissemination phase.
If no such digest was received, process $p_i$ broadcasts a \textsc{stored} message with $\bot$.
Once process $p_i$ receives $n - t = 4t + 1$ \textsc{stored} messages, it executes the following logic:
\begin{compactitem}
    \item If there exists a digest $z$ received in a majority ($\geq 2t + 1$) of \textsc{stored} messages, process $p_i$ adopts $z$.
    If such digest $z$ exists, it is guaranteed that at least $(2t + 1) - t = t + 1$ correct processes have valid RS symbols that correspond to $z$.
    (Recall that a correct process accepts an \textsc{init} message with the digest $z$ during
    dissemination
    only if the received RS symbol and witness match $z$.)

    \item Otherwise, process $p_i$ adopts $\bot$.
\end{compactitem}
Then, process $p_i$ disseminates via a \textsc{reconstruct} message the RS symbol received from the leader during the dissemination phase.
If process $p_i$ adopted a digest $z$ ($\neq \bot$), $p_i$ waits until it receives via the aforementioned \textsc{reconstruct} messages $t + 1$ RS symbols corresponding to digest $z$, uses the received symbols to rebuild some value $r_i$, and proposes $r_i$ to the MBA primitive.\footnote{HMVBA~\cite{hmvba} actually combines \textsc{stored} and \textsc{reconstruct} messages: each \textsc{stored} message contains both the digest and the RS symbol received from the leader. For clarity, we separate them as they serve distinct functions.}
Otherwise, process $p_i$ proposes its proposal $v_i$ to the MBA primitive.
We refer to the combination of the reconstruction step and the MBA primitive as the Reconstruct \& Agree (R\&A) mechanism.
If value $v$ decided from the R\&A mechanism (i.e., from the underlying MBA primitive) is valid, processes decide $v$ from HMVBA and terminate.
If not, processes continue to the next iteration.

\smallskip
\noindent \textbf{Correctness analysis.}
We
briefly explain how HMVBA's design guarantees
correctness.
Recall that if a correct process completes the dissemination phase, it is guaranteed that at least $(n - t) - t = 3t + 1$ processes have already successfully disseminated their valid proposals.
HMVBA ensures that all correct processes decide (and terminate) in an iteration $k$ whose leader is one of the aforementioned $3t + 1$ processes.
In the rest of the section, we refer to such iterations as \emph{good}.
Let us analyze how any such good iteration $k$ of HMVBA unfolds.

The dissemination phase ensures that at least $(n - t) - t = 3t + 1$ correct processes have stored (1) the digest $z^{\star}(k)$ of the $\mathsf{leader}(k)$'s valid proposal $v^{\star}(k)$, and (2) RS symbols of value $v^{\star}(k)$.
Hence, when each correct process $p_i$ receives $n - t = 4t + 1$ \textsc{stored} messages, it is guaranteed (due to quorum intersection) that $p_i$ receives the ``good'' digest $z^{\star}(k)$ from at least $(n - t) + (n - 2t) - n = 2t + 1$ processes.
Therefore, it is ensured that \emph{all} correct processes adopt digest $z^{\star}(k)$.
We emphasize that $t < \frac{1}{5} n$ is critical in this step.
Namely, the HMVBA algorithm cannot guarantee that all correct processes adopt $z^{\star}(k)$ if $t \geq \frac{1}{5}n$.
As we argue in \Cref{subsection:reducer_overview}, this ``adoption issue'' is the primary challenge one must overcome to achieve better resilience while preserving HMVBA's complexity.
After all correct processes have adopted digest $z^{\star}(k)$, making them agree on the $\mathsf{leader}(k)$'s valid value $v^{\star}(k)$ does not represent a significant challenge.
Using the fact that at least $t + 1$ correct processes have valid RS symbols corresponding to the digest $z^{\star}(k)$, all correct processes manage to (1) reconstruct $v^{\star}(k)$ after receiving $t + 1$ such RS symbols via \textsc{reconstruct} messages, and (2) propose $v^{\star}(k)$ to the MBA primitive.
Then, all correct processes decide $v^{\star}(k)$ from the MBA primitive due to its strong unanimity property.
In other words, all correct processes output valid value $v^{\star}(k)$ from the R\&A mechanism, which ensures that all correct processes decide $v^{\star}(k)$ from HMVBA in iteration $k$ and terminate.

In summary, ensuring that all correct processes adopt the digest $z^{\star}(k)$ of the $\mathsf{leader}(k)$'s successfully disseminated valid proposal $v^{\star}(k)$ is a critical step in efficiently solving the MVBA problem against an adaptive adversary.
Our algorithms adhere to this principle as well.

\subsection{Overview of \comm} \label{subsection:reducer_overview}

To improve the resilience of HMVBA while maintaining its message, bit and time complexity, we propose \comm that tolerates up to $t < \frac{1}{4} n$ failures.
For simplicity, let $n = 4t + 1$.
(We remind the reader that the discussion of why \comm's resilience is limited to $t < \frac{1}{4} n$ can be found in \Cref{subsection:discussion}.)

\begin{figure}[tbp]
    \centering
    \begin{tikzpicture}[
        x=1.2cm,
        y=1.25cm,
        subprotocol/.style = {
            rotate=90,
            minimum width=4cm,
            draw,
            fill=white,
        },
        msg/.style = {
            -latex,
        },
        msglabel/.style = {
            below,
            rotate=90,
            anchor=east,
        },
        col/.style = {
            draw,
            inner sep=0,
            minimum width=1.2em,
            minimum height=1.2em,
            fill=white,
        },
        colfirst/.style = {
            col,
            anchor=south west,
            yshift=2pt,
            xshift=3pt,
        },
        colrest/.style = {
            col,
            anchor=west,
            xshift=-0.4pt,   %
        },
        colhighlight/.style = {
            fill=jnSUDigitalBlue!30,
        },
    ]
        \scriptsize

        \coordinate (p1) at (-0,1.5);
        \coordinate (p2) at (-0,0.5);
        \coordinate (pdots) at (-0,-0.3);
        \coordinate (pn) at (-0,-1.3);
        
        \coordinate (center) at (0,0.1);

        \coordinate (s1LePos) at (1,0.1);
        \coordinate (s2LeOutPos) at ($(s1LePos) + (0.6,0)$);
        \coordinate (s3StoredOutPos) at ($(s2LeOutPos) + (0.4,0)$);
        \coordinate (s4SuggestOutPos) at ($(s3StoredOutPos) + (0.4,0)$);
        \coordinate (s5SmbaOnePos) at ($(s4SuggestOutPos) + (1.2,0)$);
        \coordinate (s7ArOnePos) at ($(s5SmbaOnePos) + (0.8,0)$);
        \coordinate (s8ArOneOutPos) at ($(s7ArOnePos) + (0.75,0)$);
        \coordinate (s9SmbaTwoPos) at ($(s8ArOneOutPos) + (1.2,0)$);
        \coordinate (s11ArTwoPos) at ($(s9SmbaTwoPos) + (0.8,0)$);
        \coordinate (s12ArTwoOutPos) at ($(s11ArTwoPos) + (0.75,0)$);
        \coordinate (s13SmbaThreePos) at ($(s12ArTwoOutPos) + (1.2,0)$);
        \coordinate (s15ArThreePos) at ($(s13SmbaThreePos) + (0.8,0)$);
        \coordinate (s16ArThreeOutPos) at ($(s15ArThreePos) + (0.75,0)$);

        \node at ([xshift=-1em] p1) {$p_1$};
        \node at ([xshift=-1em] p2) {$p_2$};
        \node at ([xshift=-1em] pdots) {$\vdots$};
        \node at ([xshift=-1em] pn) {$p_n$};
        
        \draw [densely dotted] (p1) -- ([xshift=1em] s12ArTwoOutPos |- p1);
        \draw [densely dotted] (p2) -- ([xshift=1em] s12ArTwoOutPos |- p2);
        \draw [densely dotted] (pn) -- ([xshift=1em] s12ArTwoOutPos |- pn);
        
        \coordinate (s0DisseminationPos) at ($(s1LePos) - (0.5,0)$);
        \coordinate (sm1Input) at ($(s0DisseminationPos) - (0.65,0)$);
        \node [subprotocol] (s0Dissemination) at (s0DisseminationPos) {Dissemination};
        \node (inp1) at ([yshift=1em] sm1Input |- p1) {$v_{1}$};
        \node (inp2) at ([yshift=1em] sm1Input |- p2) {$v_{2}$};
        \node (inpn) at ([yshift=1em] sm1Input |- pn) {$v_{n}$};
        \draw [msg] (inp1) -- (s0Dissemination.north |- p1);
        \draw [msg] (inp2) -- (s0Dissemination.north |- p2);
        \draw [msg] (inpn) -- (s0Dissemination.north |- pn);

        \node [subprotocol] (s1Le) at (s1LePos) {Leader Election};

        \draw [msg] (s1Le.south |- p1) -- (s2LeOutPos |- p1) node [midway,msglabel] {$\mathsf{leader}$};
        \draw [msg] (s1Le.south |- p2) -- (s2LeOutPos |- p2) node [midway,msglabel] {$\mathsf{leader}$};
        \draw [msg] (s1Le.south |- pn) -- (s2LeOutPos |- pn) node [midway,msglabel] {$\mathsf{leader}$};
        \node at ($(s1Le.south |- pdots)!0.5!(s2LeOutPos |- pdots)$) {$\vdots$};

        \draw [msg] (s2LeOutPos |- p1) -- (s3StoredOutPos |- p1);
        \draw [msg] (s2LeOutPos |- p1) -- (s3StoredOutPos |- p2);
        \draw [msg] (s2LeOutPos |- p1) -- (s3StoredOutPos |- pn);
        \draw [msg] (s2LeOutPos |- p2) -- (s3StoredOutPos |- p1);
        \draw [msg] (s2LeOutPos |- p2) -- (s3StoredOutPos |- p2);
        \draw [msg] (s2LeOutPos |- p2) -- (s3StoredOutPos |- pn);
        \draw [msg] (s2LeOutPos |- pn) -- (s3StoredOutPos |- p1);
        \draw [msg] (s2LeOutPos |- pn) -- (s3StoredOutPos |- p2);
        \draw [msg] (s2LeOutPos |- pn) -- (s3StoredOutPos |- pn);

        \draw [decorate,decoration={brace,amplitude=2pt,mirror,raise=1em}] (s2LeOutPos |- pn) -- (s3StoredOutPos |- pn) node [midway,yshift=-1.2em,rotate=90,anchor=east] {$\textsc{stored}$};
        \draw [decorate,decoration={brace,amplitude=2pt,mirror,raise=1em}] (s3StoredOutPos |- pn) -- (s4SuggestOutPos |- pn) node [midway,yshift=-1.2em,rotate=90,anchor=east] {$\textsc{suggest}$};

        \draw [msg] (s3StoredOutPos |- p1) -- (s4SuggestOutPos |- p1);
        \draw [msg] (s3StoredOutPos |- p1) -- (s4SuggestOutPos |- p2);
        \draw [msg] (s3StoredOutPos |- p1) -- (s4SuggestOutPos |- pn);
        \draw [msg] (s3StoredOutPos |- p2) -- (s4SuggestOutPos |- p1);
        \draw [msg] (s3StoredOutPos |- p2) -- (s4SuggestOutPos |- p2);
        \draw [msg] (s3StoredOutPos |- p2) -- (s4SuggestOutPos |- pn);
        \draw [msg] (s3StoredOutPos |- pn) -- (s4SuggestOutPos |- p1);
        \draw [msg] (s3StoredOutPos |- pn) -- (s4SuggestOutPos |- p2);
        \draw [msg] (s3StoredOutPos |- pn) -- (s4SuggestOutPos |- pn);

        \node [colfirst,colhighlight] (colFirst) at (s4SuggestOutPos |- p1) {$z^{\phantom{\star}}_1$};
        \node [colrest] (colSecond) at (colFirst.east) {$z^\star_{\phantom{1}}$};
        \node [colfirst,colhighlight] (colFirst) at (s4SuggestOutPos |- p2) {$z^\star_{\phantom{1}}$};
        \node [colrest] (colSecond) at (colFirst.east) {$z^{\phantom{\star}}_2$};
        \node [colfirst,colhighlight] (colFirst) at (s4SuggestOutPos |- pn) {$z^{\phantom{\star}}_1$};
        \node [colrest] (colSecond) at (colFirst.east) {$z^\star_{\phantom{1}}$};

        \node [subprotocol] (s5SmbaOne) at (s5SmbaOnePos) {SMBA};
        \draw [msg] (s4SuggestOutPos |- p1) -- (s5SmbaOne.north |- p1) node [pos=0.7,msglabel] {$z^{\phantom{\star}}_1$};
        \draw [msg] (s4SuggestOutPos |- p2) -- (s5SmbaOne.north |- p2) node [pos=0.7,msglabel] {$z^\star_{\phantom{1}}$};
        \draw [msg] (s4SuggestOutPos |- pn) -- (s5SmbaOne.north |- pn) node [pos=0.7,msglabel] {$z^{\phantom{\star}}_1$};
        \node at ($(s4SuggestOutPos |- pdots)!0.5!(s5SmbaOne.north |- pdots)$) {$\vdots$};

        \node [subprotocol] (s7ArOne) at (s7ArOnePos) {Reconstruct \& Agree};
        \draw [msg] (s5SmbaOne.south |- p1) -- (s7ArOne.north |- p1) node [midway,msglabel] {$z^{\phantom{\star}}_1$};
        \draw [msg] (s5SmbaOne.south |- p2) -- (s7ArOne.north |- p2) node [midway,msglabel] {$z^{\phantom{\star}}_1$};
        \draw [msg] (s5SmbaOne.south |- pn) -- (s7ArOne.north |- pn) node [midway,msglabel] {$z^{\phantom{\star}}_1$};
        \node at ($(s5SmbaOne.south |- pdots)!0.5!(s7ArOne.north |- pdots)$) {$\vdots$};

        \draw [msg] (s7ArOne.south |- p1) -- (s8ArOneOutPos |- p1) node [midway,msglabel] {$v_1$};
        \draw [msg] (s7ArOne.south |- p2) -- (s8ArOneOutPos |- p2) node [midway,msglabel] {$v_1$};
        \draw [msg] (s7ArOne.south |- pn) -- (s8ArOneOutPos |- pn) node [midway,msglabel] {$v_1$};
        \node at ($(s7ArOne.south |- pdots)!0.5!(s8ArOneOutPos |- pdots)$) {$\vdots$};
        
        \draw [decorate,decoration={brace,amplitude=2pt,mirror,raise=2em}] (s7ArOne.south |- pn) -- (s8ArOneOutPos |- pn) node [midway,yshift=-2.2em,anchor=north] {$v_1$ invalid};

        \node [colfirst] (colFirst) at (s8ArOneOutPos |- p1) {$z^{\phantom{\star}}_1$};
        \node [colrest,colhighlight] (colSecond) at (colFirst.east) {$z^\star_{\phantom{1}}$};
        \node [colfirst] (colFirst) at (s8ArOneOutPos |- p2) {$z^\star_{\phantom{1}}$};
        \node [colrest,colhighlight] (colSecond) at (colFirst.east) {$z^{\phantom{\star}}_2$};
        \node [colfirst] (colFirst) at (s8ArOneOutPos |- pn) {$z^{\phantom{\star}}_1$};
        \node [colrest,colhighlight] (colSecond) at (colFirst.east) {$z^\star_{\phantom{1}}$};

        \node [subprotocol] (s9SmbaTwo) at (s9SmbaTwoPos) {SMBA};
        \draw [msg] (s8ArOneOutPos |- p1) -- (s9SmbaTwo.north |- p1) node [pos=0.7,msglabel] {$z^\star_{\phantom{1}}$};
        \draw [msg] (s8ArOneOutPos |- p2) -- (s9SmbaTwo.north |- p2) node [pos=0.7,msglabel] {$z^{\phantom{\star}}_2$};
        \draw [msg] (s8ArOneOutPos |- pn) -- (s9SmbaTwo.north |- pn) node [pos=0.7,msglabel] {$z^\star_{\phantom{1}}$};
        \node at ($(s8ArOneOutPos |- pdots)!0.5!(s9SmbaTwo.north |- pdots)$) {$\vdots$};

        \node [subprotocol] (s11ArTwo) at (s11ArTwoPos) {Reconstruct \& Agree};
        \draw [msg] (s9SmbaTwo.south |- p1) -- (s11ArTwo.north |- p1) node [midway,msglabel] {$z^{\phantom{\star}}_2$};
        \draw [msg] (s9SmbaTwo.south |- p2) -- (s11ArTwo.north |- p2) node [midway,msglabel] {$z^{\phantom{\star}}_2$};
        \draw [msg] (s9SmbaTwo.south |- pn) -- (s11ArTwo.north |- pn) node [midway,msglabel] {$z^{\phantom{\star}}_2$};
        \node at ($(s9SmbaTwo.south |- pdots)!0.5!(s11ArTwo.north |- pdots)$) {$\vdots$};

        \draw [msg] (s11ArTwo.south |- p1) -- (s12ArTwoOutPos |- p1) node [midway,msglabel] {$v_2$};
        \draw [msg] (s11ArTwo.south |- p2) -- (s12ArTwoOutPos |- p2) node [midway,msglabel] {$v_2$};
        \draw [msg] (s11ArTwo.south |- pn) -- (s12ArTwoOutPos |- pn) node [midway,msglabel] {$v_2$};
        \node at ($(s11ArTwo.south |- pdots)!0.5!(s12ArTwoOutPos |- pdots)$) {$\vdots$};
        
        \draw [decorate,decoration={brace,amplitude=2pt,mirror,raise=2em}] (s11ArTwo.south |- pn) -- (s12ArTwoOutPos |- pn) node [midway,yshift=-2.2em,anchor=north] {$v_2$ invalid};

        \node [rectangle callout, callout absolute pointer={(s12ArTwoOutPos |- p1)}, callout pointer width=2em, draw=jnSUDigitalRed!50, anchor=south west, inner sep=2pt, fill=jnSUDigitalRed!20] at ([yshift=1.7em,xshift=-1.6em] s12ArTwoOutPos |- p1) {Switch};
        \node [colfirst] (colFirst) at (s12ArTwoOutPos |- p1) {$z^{\phantom{\star}}_1$};
        \node [colrest,colhighlight] (colSecond) at (colFirst.east) {$z^\star_{\phantom{1}}$};
        \node [rectangle callout, callout absolute pointer={(s12ArTwoOutPos |- p2)}, callout pointer width=3em, draw=jnSUDigitalGreen!50, anchor=south west, inner sep=2pt, fill=jnSUDigitalGreen!20] at ([yshift=1.7em,xshift=-1.6em] s12ArTwoOutPos |- p2) {No switch};
        \node [colfirst,colhighlight] (colFirst) at (s12ArTwoOutPos |- p2) {$z^\star_{\phantom{1}}$};
        \node [colrest] (colSecond) at (colFirst.east) {$z^{\phantom{\star}}_2$};
        \node [rectangle callout, callout absolute pointer={(s12ArTwoOutPos |- pn)}, callout pointer width=2em, draw=jnSUDigitalRed!50, anchor=south west, inner sep=2pt, fill=jnSUDigitalRed!20] at ([yshift=1.7em,xshift=-1.6em] s12ArTwoOutPos |- pn) {Switch};
        \node [colfirst] (colFirst) at (s12ArTwoOutPos |- pn) {$z^{\phantom{\star}}_1$};
        \node [colrest,colhighlight] (colSecond) at (colFirst.east) {$z^\star_{\phantom{1}}$};

        \node [subprotocol] (s13SmbaThree) at (s13SmbaThreePos) {SMBA};
        \draw [msg] (s12ArTwoOutPos |- p1) -- (s13SmbaThree.north |- p1) node [pos=0.7,msglabel] {$z^\star_{\phantom{1}}$};
        \draw [msg] (s12ArTwoOutPos |- p2) -- (s13SmbaThree.north |- p2) node [pos=0.7,msglabel] {$z^\star_{\phantom{1}}$};
        \draw [msg] (s12ArTwoOutPos |- pn) -- (s13SmbaThree.north |- pn) node [pos=0.7,msglabel] {$z^\star_{\phantom{1}}$};
        \node at ($(s12ArTwoOutPos |- pdots)!0.5!(s13SmbaThree.north |- pdots)$) {$\vdots$};

        \node [subprotocol] (s15ArThree) at (s15ArThreePos) {Reconstruct \& Agree};
        \draw [msg] (s13SmbaThree.south |- p1) -- (s15ArThree.north |- p1) node [midway,msglabel] {$z^\star_{\phantom{1}}$};
        \draw [msg] (s13SmbaThree.south |- p2) -- (s15ArThree.north |- p2) node [midway,msglabel] {$z^\star_{\phantom{1}}$};
        \draw [msg] (s13SmbaThree.south |- pn) -- (s15ArThree.north |- pn) node [midway,msglabel] {$z^\star_{\phantom{1}}$};
        \node at ($(s13SmbaThree.south |- pdots)!0.5!(s15ArThree.north |- pdots)$) {$\vdots$};

        \draw [msg] (s15ArThree.south |- p1) -- (s16ArThreeOutPos |- p1) node [midway,msglabel] {$v^\star$};
        \draw [msg] (s15ArThree.south |- p2) -- (s16ArThreeOutPos |- p2) node [midway,msglabel] {$v^\star$};
        \draw [msg] (s15ArThree.south |- pn) -- (s16ArThreeOutPos |- pn) node [midway,msglabel] {$v^\star$};
        \node at ($(s15ArThree.south |- pdots)!0.5!(s16ArThreeOutPos |- pdots)$) {$\vdots$};
        
        \draw [decorate,decoration={brace,amplitude=2pt,mirror,raise=2em}] (s15ArThree.south |- pn) -- (s16ArThreeOutPos |- pn) node (tmpLabelValidity) [midway,yshift=-2.2em,anchor=north] {$v^\star$ valid};

        \draw [ultra thick,black!20,-latex] (tmpLabelValidity) -- ++(0,-0.7) -| (s1Le.west) node [pos=0.25,below] {if all output values invalid, repeat with next leader};
        
        \node [rectangle callout, callout absolute pointer={(s16ArThreeOutPos |- p1)}, draw=black!50, anchor=south west, inner sep=2pt, fill=black!10] at ([yshift=1em,xshift=-1.6em] s16ArThreeOutPos |- p1) {$\mathsf{decide}(v^\star)$};
        \node [rectangle callout, callout absolute pointer={(s16ArThreeOutPos |- p2)}, draw=black!50, anchor=south west, inner sep=2pt, fill=black!10] at ([yshift=1em,xshift=-1.6em] s16ArThreeOutPos |- p2) {$\mathsf{decide}(v^\star)$};
        \node [rectangle callout, callout absolute pointer={(s16ArThreeOutPos |- pn)}, draw=black!50, anchor=south west, inner sep=2pt, fill=black!10] at ([yshift=1em,xshift=-1.6em] s16ArThreeOutPos |- pn) {$\mathsf{decide}(v^\star)$};

    \end{tikzpicture}
    \caption{%
        Depiction of \comm's structure.
       It focuses on a good iteration $k$ in which the first two SMBA invocations decide adversarial digests $z_1$ and $z_2$, respectively.
        Finally, in the third invocation, processes that proposed $z_1$ during the first invocation (e.g., $p_1$ and $p_n$) switch their proposals to the SMBA primitive, while those that did not (e.g., $p_2$) do not switch their proposals from the first invocation.
This selective ``proposal switching'' enables the third SMBA invocation to decide on the ``good'' digest $z^{\star}(k)$ corresponding to $\mathsf{leader}(k)$'s valid proposal $v^{\star}(k)$.
        See \Cref{fig:hmvba-recap} for ``Dissemination'' and ``Reconstruct \& Agree'' sub-protocols.
        We abridge
        $\mathsf{leader} \triangleq \mathsf{leader}(k)$,
        $z^\star \triangleq z^\star(k)$,
        $v^\star \triangleq v^\star(k)$.
    }
    \label{fig:reducer-good-iteration}
\end{figure}

\smallskip
\noindent \textbf{Key concepts behind \comm.}
In \comm (\Cref{fig:reducer-good-iteration}, \Cref{algorithm:reducer}), processes first disseminate their proposals using a dissemination phase identical to that of HMVBA.
After
dissemination,
processes start executing \comm through iterations.
Each \comm's iteration $k$ starts in the same way as HMVBA's iterations: (1) correct processes elect the leader of the iteration $k$ using a common coin, (2) correct processes broadcast their \textsc{stored} messages containing the digest received from the leader during the dissemination phase, and (3) each correct process waits for $n - t = 3t + 1$ such \textsc{stored} messages.
To motivate our design choices, we explain how \comm ensures termination in a good iteration.
Hence, for the remainder of this subsection, we fix a good iteration $k$.

\smallskip
\noindent\emph{Resolving the adoption issue.}
As mentioned in \Cref{subsection:revisitin_hmvba}, \comm (and HMVBA with $n = 4t + 1$) cannot ensure that any correct process receives the ``good'' digest $z^{\star}(k)$ of the $\mathsf{leader}(k)$'s valid proposal $v^{\star}(k)$ in a majority ($> \frac{3t + 1}{2}$) of the received \textsc{stored} messages.
Instead, if the adversary corrupts the leader and makes it disseminate an additional adversarial digest, that adversarial digest might appear in a majority.
Crucially, however, it is guaranteed that every correct process $p_i$ receives $z^{\star}(k)$ in (at least) $t + 1$ received \textsc{stored} messages.
Indeed, as at least $(n - t) - t = 2t + 1$ correct processes have stored $z^{\star}(k)$, any set of $n - t = 3t + 1$ \textsc{stored} messages must include at least $(n - t) + (n - 2t) - n = t + 1$ messages for $z^{\star}(k)$.
Therefore, in \comm, once a correct process $p_i$ receives $n- t = 3t + 1$ \textsc{stored} messages, $p_i$ marks any digest received in at least $t + 1$ such messages as a \emph{candidate} digest.
Concretely, each correct process $p_i$ has its local list $\mathit{candidates}_i$ that initially contains all digests received in $t + 1$ \textsc{stored} messages.
Importantly, given that $k$ is a good iteration, $z^{\star}(k)$ belongs to the $\mathit{candidates}_i$ list at every correct process $p_i$.
Moreover, since $\frac{3t + 1}{t + 1} < 3$, each correct process has at most two candidates.
However, if an adaptive adversary corrupts $\mathsf{leader}(k)$ after its election and forces it to disseminate adversarial digests, the adversary can ensure the existence of linearly many different candidates \emph{across all} correct processes:
\begin{equation*}
    |\{ \text{candidate of \emph{some} correct process} \}| \in O(n).
\end{equation*}
The (first) core idea of our \comm algorithm is to \emph{reduce} the number of different candidates across all correct processes in a good iteration to a constant.\footnote{A ``reducing'' technique similar to ours was previously employed for asynchronous Byzantine agreement by Mostéfaoui \emph{et al.}~\cite{DBLP:journals/acta/MostefaouiR17}. However, their technique is insufficient for our algorithms, as detailed in \Cref{subsection:techniques_comparison}.}
Thus, the algorithm's name.

\smallskip
\noindent\emph{Reducing the number of different candidates.}
We achieve the reduction in a good iteration $k$ using an additional ``all-to-all'' communication step.
Specifically, when a correct process $p_i$ has determined its list of candidates, it disseminates them via a \textsc{suggest} message.
If process $p_i$ includes a digest $z$ in the aforementioned \textsc{suggest} message, we say that $p_i$ \emph{suggests} $z$ in iteration $k$.
Recall that each correct process suggests at most two digests out of which one is $z^{\star}(k)$.

Once process $p_i$ receives $n - t = 3t + 1$ \textsc{suggest} messages, process $p_i$ refines its $\mathit{candidates}_i$ list.
Concretely, for every digest $z$ suggested by $p_i$ in iteration $k$, process $p_i$ removes $z$ from the $\mathit{candidates}_i$ list unless it receives $z$ in at least $(n - t) - t  = 2t + 1$ \textsc{suggest} messages.
First, this design ensures that $z^{\star}(k)$ ``survives'' this communication step at $p_i$ as (1) every correct process suggests $z^{\star}(k)$, and (2) $p_i$ hears suggestions of at least $(n - t) - t = 2t + 1$ correct processes.
Second, this design ensures that at most $3 \in O(1)$ digests ``survive'' this communication step \emph{across all correct processes}.
Let us elaborate.
Given that each correct process suggests at most two digests out of which one is $z^{\star}(k)$, there are (at most) $n - t = 3t + 1$ suggestions coming from correct processes for adversarial non-$z^{\star}(k)$ digests (assuming there are $t$ faulty processes).
Moreover, each adversarial non-$z^{\star}(k)$ digest that ``survives'' this step receives (at least) $(2t + 1 )- t = t + 1$ suggestions coming from correct processes.
Hence, as $\frac{3t + 1}{t + 1} < 3$, at most two adversarial non-$z^{\star}(k)$ candidates get through the suggestion step.
Consequently, a maximum of three candidates, including $z^{\star}(k)$, remain.

\smallskip
 \noindent\emph{Establishing order in the chaos of candidates.}
At this point, each correct process has up to two candidate digests (one of which is $z^{\star}(k)$), and across all correct processes there are only up to three different candidate digests.
As we will see below, we isolate and construct a special agreement primitive---strong multi-valued Byzantine agreement (SMBA), defined in \Cref{section:preliminaries}---which ensures that if up to two different digests are proposed by correct processes, then the decided digest is among those proposed by a correct process.
The high-level idea now is to invoke this primitive multiple times, and in each invocation correct processes pick the digest to propose from their local candidates in a ``smart'' way so that: (1)~For each invocation, correct processes propose at most two different digests.
(2)~As a result, correct processes learn from the decided digests about the candidate digests of other correct processes, and can adjust their proposals so that (3)~in one of the invocations, all correct processes will inevitably propose and decide $z^{\star}(k)$.

Specifically, suppose each correct process $p_i$ proceeds by sorting its $\mathit{candidates}_i$ list lexicographically.
If $p_i$ has only one candidate, which must be $z^{\star}(k)$, $p_i$ duplicates $z^{\star}(k)$, resulting in $\mathit{candidates}_i = [z^{\star}(k), z^{\star}(k)]$.
We say that a correct process $p_i$ \emph{1-commits} (resp., \emph{2-commits}) a digest $z$ in iteration $k$ if $\mathit{candidates}_i[1] = z$ (resp., $\mathit{candidates}_i[2] = z$) after the sorting and (potentially) duplicating steps.
For any $c \in \{ 1, 2 \}$, let us define the $\mathsf{committed}(k, c)$ set:
\begin{equation*}
    \mathsf{committed}(k, c) = \{ z \,|\, \text{$z$ is $c$-committed by any correct process in iteration $k$} \}.
\end{equation*}
We also say that a correct process $p_i$ \emph{commits} a digest $z$ in iteration $k$ if $p_i$ $1$-commits or $2$-commits $z$ in iteration $k$.
Let us define the $\mathsf{committed}(k)$ set:
\begin{equation*}
    \mathsf{committed}(k) = \{ z \,|\, \text{$z$ is committed by any correct process in iteration $k$} \}.
\end{equation*}
Note that $\mathsf{committed}(k) = \mathsf{committed}(k, 1) \cup \mathsf{committed}(k, 2)$.

First, suppose $|\mathsf{committed}(k)|=2$.
Let $\mathsf{committed}(k) = \{ z, z^{\star}(k) \}$, and assume that, without loss of generality, $z^{\star}(k)$ is lexicographically smaller than $z$.
Each correct process $p_i$ has its $\mathit{candidates}_i$ list as either $[z^{\star}(k), z^{\star}(k)]$ or $[z^{\star}(k), z]$.
(Recall that each correct process necessarily commits $z^{\star}(k)$.)
In this case, if the correct processes invoke the SMBA primitive twice---first proposing their first committed candidate, and then proposing their second committed candidate---they agree on $z^{\star}(k)$ during the first invocation.
Agreement on $z^{\star}(k)$ would be sufficient for the processes to reconstruct $v^{\star}(k)$, agree on it, and thus terminate.

Now, suppose $|\mathsf{committed}(k)|=3$,
with $\mathsf{committed}(k) = \{ z_1, z_2, z^\star(k) \}$.
Assume that $z^\star(k)$ is lexicographically smaller than $z_1$ and $z_2$.
Then, each correct process $p_i$ has its $\mathit{candidates}_i$ list as either $[z^\star(k), z^\star(k)]$, $[z^\star(k), z_1]$, or $[z^\star(k), z_2]$.
(We again stress that every correct process is guaranteed to commit $z^{\star}(k)$.)
The same procedure as in the case above leads to termination: during the first invocation of the SMBA primitive, correct processes agree on $z^{\star}(k)$, then agree on $v^{\star}(k)$, and terminate.
The above argument naturally carries over to the case where $z^\star(k)$ is lexicographically greater than $z_1$ and $z_2$: the processes agree on $z^{\star}(k)$ during the second invocation of the SMBA primitive, ensuring termination.

Finally, consider the case where $\mathsf{committed}(k) = \{ z_1, z_2, z^{\star}(k) \}$ and $z^\star(k)$ is lexicographically between $z_1$ and $z_2$.
This is where the situation becomes more intricate.
Each correct process $p_i$ now has its $\mathit{candidates}_i$ list as
either $[z^\star(k), z^\star(k)]$, $[z_1, z^\star(k)]$, or $[z^\star(k), z_2]$.
We employ the same procedure as outlined above: correct processes invoke the SMBA primitive twice, initially proposing their first committed candidate, followed by proposing their second committed candidate.
However, in this case, correct processes are not guaranteed to agree on $z^{\star}(k)$ in any of the two invocations.
The only assurance for the correct processes is that the decided digest from the first invocation is either $z_1$ or $z^{\star}(k)$, and the decided digest from the second invocation is either $z^{\star}(k)$ or $z_2$.
If either of these two invocations decides $z^\star(k)$, we are in a favorable position, following the same reasoning as previously discussed.

It is left to deal with the case where the digests decided by the two invocations are $z_1$ and $z_2$, respectively.
Our approach is to ``retry'' the first invocation.
Let us elaborate on this.
After correct processes agree on $z_1$ and $z_2$ during the first and second invocation, respectively, they will initiate the third invocation in the following manner:
All correct processes that proposed the digest $z_1$ decided in the first invocation now propose their other committed candidate, which must be $z^{\star}(k)$.
All correct processes that proposed a digest other than $z_1$ to the first invocation stick with the same committed candidate for the third invocation---this digest must also be $z^{\star}(k)$.
As a result, all correct processes propose $z^{\star}(k)$ in the third invocation (which represents a ``repetition'' of the first invocation).
Therefore, agreement on $z^{\star}(k)$ is ensured in the third invocation, which, following the earlier arguments, implies \comm's termination.

\smallskip
\noindent \emph{Complexity analysis.}
As \comm is guaranteed to terminate in the first good iteration, and the probability that each iteration is good is $\geq \frac{n - 2t}{n} = \frac{2t + 1}{4t + 1} \geq \frac{1}{2}$, \comm terminates in $O(1)$ expected time.
As correct processes send $O(n^2)$ messages and $O(n \ell + n^2 \kappa \log n)$ bits during
dissemination
and during each iteration, with \comm terminating in constantly many iterations, the expected message complexity is $O(n^2$) and the expected bit complexity is $O(n \ell + n^2 \kappa \log n)$.

\subsection{Overview of \commplus}
\label{subsection:commplus_overview}

As already noted, \commplus improves \comm's resilience to $t < (\frac{1}{3} - \epsilon) n$ Byzantine failures, for any fixed constant $\epsilon > 0$, while maintaining its complexity.
Let $n = (3 + \epsilon)t + 1$ in this subsection.
(We remind the reader that the reason \commplus's resilience is bounded by $t < (\frac{1}{3} - \epsilon) n$ is discussed in \Cref{subsection:discussion}.)

\begin{figure}[tbp]
    \centering
    \begin{tikzpicture}[
        x=1cm,
        y=0.9cm,
        subprotocol/.style = {
            rotate=90,
            minimum width=3cm,
            draw,
            fill=white,
        },
        transparentsubprotocol/.style = {
            draw,
            dashed,
        },
        msg/.style = {
            -latex,
        },
        msglabel/.style = {
            below,
            rotate=90,
            anchor=east,
        },
        col/.style = {
            draw,
            inner sep=0,
            minimum width=0.6em,
            minimum height=1.2em,
            fill=white,
        },
        colfirst/.style = {
            col,
            anchor=south west,
            yshift=2pt,
            xshift=2pt,
        },
        colrest/.style = {
            col,
            anchor=west,
            xshift=-0.4pt,   %
        },
        colzstar/.style = {
            minimum width=1.2em,
        },
    ]
        \scriptsize

        \coordinate (p1) at (-0.4,1.5);
        \coordinate (p2) at (-0.4,0.5);
        \coordinate (pdots) at (-0.4,-0.6);
        \coordinate (pn) at (-0.4,-1.3);
        
        \coordinate (center) at (0,0.1);

        \coordinate (s1LePos) at (1,0.0 |- center);
        \coordinate (s2LeOutPos) at ($(s1LePos) + (0.8,0)$);
        \coordinate (s3StoredOutPos) at ($(s2LeOutPos) + (0.75,0)$);
        \coordinate (s4SuggestOutPos) at ($(s3StoredOutPos) + (0.75,0)$);
        \coordinate (s5GameOnePos) at ($(s4SuggestOutPos) + (1.5,0)$);
        \coordinate (s6GameDotsPos) at ($(s5GameOnePos) + (0.85,0)$);
        \coordinate (s7GameNPos) at ($(s6GameDotsPos) + (0.85,0)$);
        \coordinate (s8GameNOutPos) at ($(s7GameNPos) + (0.75,0)$);

        \node at ([xshift=-1em] p1) {$p_1$};
        \node at ([xshift=-1em] p2) {$p_2$};
        \node at ([xshift=-1em,yshift=1em] pdots) {$\vdots$};
        \node at ([xshift=-1em] pn) {$p_n$};
        
        \draw [densely dotted] (p1) -- ([xshift=1.5em] s8GameNOutPos |- p1);
        \draw [densely dotted] (p2) -- ([xshift=1.5em] s8GameNOutPos |- p2);
        \draw [densely dotted] (pn) -- ([xshift=1.5em] s8GameNOutPos |- pn);
        
        \coordinate (s0DisseminationPos) at ($(s1LePos) - (0.7,0)$);
        \coordinate (sm1Input) at ($(s0DisseminationPos) - (0.8,0)$);
        \node [subprotocol] (s0Dissemination) at (s0DisseminationPos) {Dissemination};
        \node (inp1) at ([yshift=1em] sm1Input |- p1) {$v_{1}$};
        \node (inp2) at ([yshift=1em] sm1Input |- p2) {$v_{2}$};
        \node (inpn) at ([yshift=1em] sm1Input |- pn) {$v_{n}$};
        \draw [msg] (inp1) -- (s0Dissemination.north |- p1);
        \draw [msg] (inp2) -- (s0Dissemination.north |- p2);
        \draw [msg] (inpn) -- (s0Dissemination.north |- pn);

        \node [subprotocol] (s1Le) at (s1LePos) {Leader Election};

        \draw [msg] (s1Le.south |- p1) -- (s2LeOutPos |- p1) node [midway,msglabel] {$\mathsf{leader}$};
        \draw [msg] (s1Le.south |- p2) -- (s2LeOutPos |- p2) node [midway,msglabel] {$\mathsf{leader}$};
        \draw [msg] (s1Le.south |- pn) -- (s2LeOutPos |- pn) node [midway,msglabel] {$\mathsf{leader}$};
        \node at ($(s1Le.south |- pdots)!0.5!(s2LeOutPos |- pdots)$) {$\vdots$};

        \draw [msg] (s2LeOutPos |- p1) -- (s3StoredOutPos |- p1);
        \draw [msg] (s2LeOutPos |- p1) -- (s3StoredOutPos |- p2);
        \draw [msg] (s2LeOutPos |- p1) -- (s3StoredOutPos |- pn);
        \draw [msg] (s2LeOutPos |- p2) -- (s3StoredOutPos |- p1);
        \draw [msg] (s2LeOutPos |- p2) -- (s3StoredOutPos |- p2);
        \draw [msg] (s2LeOutPos |- p2) -- (s3StoredOutPos |- pn);
        \draw [msg] (s2LeOutPos |- pn) -- (s3StoredOutPos |- p1);
        \draw [msg] (s2LeOutPos |- pn) -- (s3StoredOutPos |- p2);
        \draw [msg] (s2LeOutPos |- pn) -- (s3StoredOutPos |- pn);

        \draw [decorate,decoration={brace,amplitude=2pt,mirror,raise=1em}] (s2LeOutPos |- pn) -- (s3StoredOutPos |- pn) node [midway,yshift=-1.2em,rotate=90,anchor=east] {$\textsc{stored}$};

        \draw [msg] (s3StoredOutPos |- p1) -- (s4SuggestOutPos |- p1);
        \draw [msg] (s3StoredOutPos |- p1) -- (s4SuggestOutPos |- p2);
        \draw [msg] (s3StoredOutPos |- p1) -- (s4SuggestOutPos |- pn);
        \draw [msg] (s3StoredOutPos |- p2) -- (s4SuggestOutPos |- p1);
        \draw [msg] (s3StoredOutPos |- p2) -- (s4SuggestOutPos |- p2);
        \draw [msg] (s3StoredOutPos |- p2) -- (s4SuggestOutPos |- pn);
        \draw [msg] (s3StoredOutPos |- pn) -- (s4SuggestOutPos |- p1);
        \draw [msg] (s3StoredOutPos |- pn) -- (s4SuggestOutPos |- p2);
        \draw [msg] (s3StoredOutPos |- pn) -- (s4SuggestOutPos |- pn);
        
        \draw [decorate,decoration={brace,amplitude=2pt,mirror,raise=1em}] (s3StoredOutPos |- pn) -- (s4SuggestOutPos |- pn) node [midway,yshift=-1.2em,rotate=90,anchor=east] {$\textsc{suggest}$};

        \node [colfirst] (colOne11) at (s4SuggestOutPos |- p1) {};
        \node [colrest,colzstar] (colTwo11) at (colOne11.east) {$z^\star_{\phantom{1}}$};
        \node [colrest] (colThree11) at (colTwo11.east) {};
        \node [colrest] (colFour11) at (colThree11.east) {};
        \node [colrest] (colFive11) at (colFour11.east) {};
        \node [colfirst] (colOne12) at (s4SuggestOutPos |- p2) {};
        \node [colrest] (colTwo12) at (colOne12.east) {};
        \node [colrest] (colThree12) at (colTwo12.east) {};
        \node [colrest] (colFour12) at (colThree12.east) {};
        \node [colrest,colzstar] (colFive12) at (colFour12.east) {$z^\star_{\phantom{1}}$};
        \node [colfirst] (colOne13) at (s4SuggestOutPos |- pn) {};
        \node [colrest] (colTwo13) at (colOne13.east) {};
        \node [colrest] (colThree13) at (colTwo13.east) {};
        \node [colrest,colzstar] (colFour13) at (colThree13.east) {$z^\star_{\phantom{1}}$};
        \node [colrest] (colFive13) at (colFour13.east) {};

        \node [subprotocol] (s5GameOne) at (s5GameOnePos) {Trial $1$};
        \draw [msg] (s4SuggestOutPos |- p1) -- (s5GameOne.north |- p1); %
        \draw [msg] (s4SuggestOutPos |- p2) -- (s5GameOne.north |- p2); %
        \draw [msg] (s4SuggestOutPos |- pn) -- (s5GameOne.north |- pn); %
        \node at ([yshift=1.5em] $(s4SuggestOutPos |- pdots)!0.5!(s5GameOne.north |- pdots)$) {$\vdots$};

        \node [fill=white] (tmpDots1) at (s6GameDotsPos |- p1) {$...$};
        \node [fill=white] (tmpDots2) at (s6GameDotsPos |- p2) {$...$};
        \node [fill=white] (tmpDotsn) at (s6GameDotsPos |- pn) {$...$};
        \draw [msg] (s5GameOne.south |- p1) -- (tmpDots1);
        \draw [msg] (s5GameOne.south |- p2) -- (tmpDots2);
        \draw [msg] (s5GameOne.south |- pn) -- (tmpDotsn);

        \node [subprotocol] (s7GameN) at (s7GameNPos) {Trial $C$};
        \draw [msg] (tmpDots1) -- (s7GameN.north |- p1); %
        \draw [msg] (tmpDots2) -- (s7GameN.north |- p2); %
        \draw [msg] (tmpDotsn) -- (s7GameN.north |- pn); %
        \node at ([yshift=1.5em] $(s4SuggestOutPos |- pdots)!0.5!(s5GameOne.north |- pdots)$) {$\vdots$};

        \draw [msg] (s7GameN.south |- p1) -- (s8GameNOutPos |- p1) node [pos=1,msglabel,anchor=south east] {$v^\star$};
        \draw [msg] (s7GameN.south |- p2) -- (s8GameNOutPos |- p2) node [pos=1,msglabel,anchor=south east] {$v^\star$};
        \draw [msg] (s7GameN.south |- pn) -- (s8GameNOutPos |- pn) node (tmpLabelValidityValue) [pos=1,msglabel,anchor=south east] {$v^\star$};
        \node at ([yshift=1.5em] $(s7GameN.south |- pdots)!0.8!(s8GameNOutPos |- pdots)$) {$\vdots$};
        
        \draw [decorate,decoration={brace,amplitude=2pt,mirror,raise=0.5em}] ([xshift=-0.5em] tmpLabelValidityValue.north west) -- ([xshift=0.5em] tmpLabelValidityValue.south west) node (tmpLabelValidity) [midway,yshift=-0.7em,anchor=north,align=center] {some trial decided};

        \draw [ultra thick,black!20,-latex] (tmpLabelValidity) -- ++(0,-0.85) -| (s1Le.west) node [pos=0.25,below] {if no trial decides, repeat with next leader};
        
        \node [rectangle callout, callout absolute pointer={(s8GameNOutPos |- p1)}, draw=black!50, anchor=south west, inner sep=2pt, fill=black!10] at ([yshift=1em,xshift=0em] s8GameNOutPos |- p1) {$\mathsf{decide}(v^\star)$};%
        \node [rectangle callout, callout absolute pointer={(s8GameNOutPos |- p2)}, draw=black!50, anchor=south west, inner sep=2pt, fill=black!10] at ([yshift=1em,xshift=0em] s8GameNOutPos |- p2) {$\mathsf{decide}(v^\star)$};%
        \node [rectangle callout, callout absolute pointer={(s8GameNOutPos |- pn)}, draw=black!50, anchor=south west, inner sep=2pt, fill=black!10] at ([yshift=1em,xshift=0em] s8GameNOutPos |- pn) {$\mathsf{decide}(v^\star)$};%
        
    \end{tikzpicture}
    \caption{%
        Depiction of \commplus's structure.
        The depiction focuses on a good iteration $k$ where correct processes decide on the $\mathsf{leader}(k)$'s valid proposal $v^{\star}(k)$ whose digest is $z^{\star}(k)$.
        See \Cref{fig:hmvba-recap,fig:reducerpp-game} for ``Dissemination''
        and ``Trial'' sub-protocols, respectively.
        We abridge
        $\mathsf{leader} \triangleq \mathsf{leader}(k)$,
        $z^\star \triangleq z^\star(k)$,
        $v^\star \triangleq v^\star(k)$.
    }
    \label{fig:reducerpp-overview}
\end{figure}
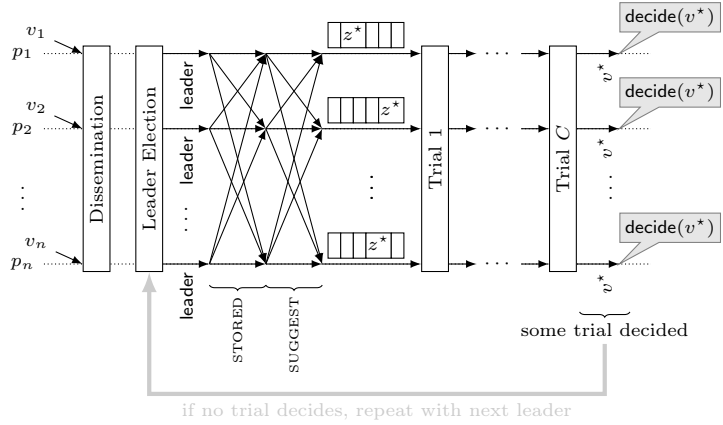

\smallskip
\noindent \textbf{Key concepts behind \commplus.}
\commplus (\Cref{fig:reducerpp-overview,fig:reducerpp-game}, \Cref{algorithm:reducer_plus}) starts in the same way as \comm.
First, processes engage in dissemination identical to that of \comm (and HMVBA).
The only difference is that, when encoding its proposal $v_i$ into $n$ RS symbols, each correct process $p_i$ treats $v_i$ as a polynomial of degree $\epsilon t$ (and not $t$, like in \comm and HMVBA).
Then, \commplus proceeds in iterations.
Each iteration $k$ of \commplus begins in the same manner as \comm's iterations:
(1) The leader of iteration $k$ is elected using a common coin.
(2) Each correct process determines its candidates upon receiving $n - t = (2 + \epsilon)t + 1$ \textsc{stored} messages.
A correct process marks a digest $z$ as its candidate if it receives $z$ in (at least) $n - 3t = \epsilon t + 1$ \textsc{stored} messages.
(3) Each correct process commits some of its candidates upon receiving $n - t = (2 + \epsilon)t + 1$ \textsc{suggest} messages.
Concretely, a correct process commits its candidate digest $z$ if it receives $z$ in at least $(n - t) - t = (1 + \epsilon)t + 1$ \textsc{suggest} messages.
This ensures that each correct process commits the ``good'' digest $z^{\star}(k)$ in a good iteration $k$.

\begin{figure}[tbp]
    \centering
    \begin{tikzpicture}[
        x=1cm,
        y=0.9cm,
        subprotocol/.style = {
            rotate=90,
            minimum width=3cm,
            draw,
            fill=white,
        },
        transparentsubprotocol/.style = {
            draw,
            dashed,
        },
        msg/.style = {
            -latex,
        },
        msglabel/.style = {
            below,
            rotate=90,
            anchor=east,
        },
        col/.style = {
            draw,
            inner sep=0,
            minimum width=0.6em,
            minimum height=1.2em,
            fill=white,
        },
        colfirst/.style = {
            col,
            anchor=south west,
            yshift=2pt,
            xshift=1pt,
        },
        colrest/.style = {
            col,
            anchor=west,
            xshift=-0.4pt,   %
        },
        colzstar/.style = {
            minimum width=1.2em,
        },
    ]
        \scriptsize

        \coordinate (p1) at (0.5,1.5);
        \coordinate (p2) at (0.5,0.5);
        \coordinate (pdots) at (0.5,-0.3);
        \coordinate (pn) at (0.5,-1.3);
        
        \coordinate (center) at (0,0.1);

        \coordinate (s1Input) at (0.75,0.0 |- center);
        \coordinate (s2NoisePos) at ($(s1Input) + (1.75,0)$);
        \coordinate (s3NoiseOut) at ($(s2NoisePos) + (0.75,0)$);
        \coordinate (s4RoOut) at ($(s3NoiseOut) + (2.5,0)$);
        \coordinate (s5RaPos) at ($(s4RoOut) + (1.8,0)$);
        \coordinate (s6RaOut) at ($(s5RaPos) + (1.2,0)$);

        \node at ([xshift=-1em] p1) {$p_1$};
        \node at ([xshift=-1em] p2) {$p_2$};
        \node at ([xshift=-1em] pdots) {$\vdots$};
        \node at ([xshift=-1em] pn) {$p_n$};
        
        \draw [densely dotted] (p1) -- ([xshift=3em] s6RaOut |- p1);
        \draw [densely dotted] (p2) -- ([xshift=3em] s6RaOut |- p2);
        \draw [densely dotted] (pn) -- ([xshift=3em] s6RaOut |- pn);

        \node [colfirst] (colOne) at (s1Input |- p1) {};
        \node [colrest,colzstar] (colTwo) at (colOne.east) {$z^\star_{\phantom{1}}$};
        \node [colrest] (colThree) at (colTwo.east) {};
        \node [colrest] (colFour) at (colThree.east) {};
        \node [colrest] (colFive) at (colFour.east) {};
        \node [colfirst] (colOne) at (s1Input |- p2) {};
        \node [colrest] (colTwo) at (colOne.east) {};
        \node [colrest] (colThree) at (colTwo.east) {};
        \node [colrest] (colFour) at (colThree.east) {};
        \node [colrest,colzstar] (colFive) at (colFour.east) {$z^\star_{\phantom{1}}$};
        \node [colfirst] (colOne) at (s1Input |- pn) {};
        \node [colrest] (colTwo) at (colOne.east) {};
        \node [colrest] (colThree) at (colTwo.east) {};
        \node [colrest,colzstar] (colFour) at (colThree.east) {$z^\star_{\phantom{1}}$};
        \node [colrest] (colFive) at (colFour.east) {};

        \node [subprotocol] (s2Noise) at (s2NoisePos) {Noise};

        \draw [msg] (s2Noise.south |- p1) -- (s3NoiseOut |- p1) node [midway,msglabel] {$\phi$};
        \draw [msg] (s2Noise.south |- p2) -- (s3NoiseOut |- p2) node [midway,msglabel] {$\phi$};
        \draw [msg] (s2Noise.south |- pn) -- (s3NoiseOut |- pn) node [midway,msglabel] {$\phi$};
        \node at ($(s2Noise.south |- pdots)!0.5!(s3NoiseOut |- pdots)$) {$\vdots$};

        \node [colfirst] (colOne11) at (s3NoiseOut |- p1) {};
        \node [colrest,colzstar] (colTwo11) at (colOne11.east) {$z^\star_{\phantom{1}}$};
        \node [colrest] (colThree11) at (colTwo11.east) {};
        \node [colrest] (colFour11) at (colThree11.east) {};
        \node [colrest] (colFive11) at (colFour11.east) {};
        \node [colfirst] (colOne12) at (s3NoiseOut |- p2) {};
        \node [colrest] (colTwo12) at (colOne12.east) {};
        \node [colrest] (colThree12) at (colTwo12.east) {};
        \node [colrest] (colFour12) at (colThree12.east) {};
        \node [colrest,colzstar] (colFive12) at (colFour12.east) {$z^\star_{\phantom{1}}$};
        \node [colfirst] (colOne13) at (s3NoiseOut |- pn) {};
        \node [colrest] (colTwo13) at (colOne13.east) {};
        \node [colrest] (colThree13) at (colTwo13.east) {};
        \node [colrest,colzstar] (colFour13) at (colThree13.east) {$z^\star_{\phantom{1}}$};
        \node [colrest] (colFive13) at (colFour13.east) {};

        \node [colfirst,colzstar] (colOne21) at (s4RoOut |- p1) {$z^\star_{\phantom{1}}$};
        \node [colrest] (colTwo21) at (colOne21.east) {};
        \node [colrest] (colThree21) at (colTwo21.east) {};
        \node [colrest] (colFour21) at (colThree21.east) {};
        \node [colrest] (colFive21) at (colFour21.east) {};
        \node [colfirst,colzstar] (colOne22) at (s4RoOut |- p2) {$z^\star_{\phantom{1}}$};
        \node [colrest] (colTwo22) at (colOne22.east) {};
        \node [colrest] (colThree22) at (colTwo22.east) {};
        \node [colrest] (colFour22) at (colThree22.east) {};
        \node [colrest] (colFive22) at (colFour22.east) {};
        \node [colfirst,colzstar] (colOne23) at (s4RoOut |- pn) {$z^\star_{\phantom{1}}$};
        \node [colrest] (colTwo23) at (colOne23.east) {};
        \node [colrest] (colThree23) at (colTwo23.east) {};
        \node [colrest] (colFour23) at (colThree23.east) {};
        \node [colrest] (colFive23) at (colFour23.east) {};

        \draw [-latex,dashed] (colFive11) -- (colOne21) node [midway,above] {$\mathsf{hash}(\cdot, \phi)$};
        \draw [-latex,dashed] (colFive12) -- (colOne22) node [midway,above] {$\mathsf{hash}(\cdot, \phi)$};
        \draw [-latex,dashed] (colFive13) -- (colOne23) node [midway,above] {$\mathsf{hash}(\cdot, \phi)$};

        \node [subprotocol] (s5Ra) at (s5RaPos) {Reconstruct \& Agree};
        \draw [msg] (s4RoOut |- p1) -- (s5Ra.north |- p1) node [pos=1,msglabel,anchor=south east] {$z^\star_{\phantom{1}}$};
        \draw [msg] (s4RoOut |- p2) -- (s5Ra.north |- p2) node [pos=1,msglabel,anchor=south east] {$z^\star_{\phantom{1}}$};
        \draw [msg] (s4RoOut |- pn) -- (s5Ra.north |- pn) node [pos=1,msglabel,anchor=south east] {$z^\star_{\phantom{1}}$};
        \node at ($(s4RoOut |- pdots)!0.8!(s5Ra.north |- pdots)$) {$\vdots$};

        \draw [transparentsubprotocol] ([xshift=-1.1em,yshift=-2.75em] s2Noise.north west |- pn) rectangle ([xshift=1em,yshift=2.3em] s5Ra.south east) node [pos=0,anchor=south west,inner sep=0,xshift=3pt,yshift=4pt,align=left] {Trial};

        \draw [msg] (s5Ra.south |- p1) -- (s6RaOut |- p1) node [pos=1,msglabel,anchor=south east] {$v^\star$};
        \draw [msg] (s5Ra.south |- p2) -- (s6RaOut |- p2) node [pos=1,msglabel,anchor=south east] {$v^\star$};
        \draw [msg] (s5Ra.south |- pn) -- (s6RaOut |- pn) node (tmpLabelValidityValue) [pos=1,msglabel,anchor=south east] {$v^\star$};
        \node at ($(s5Ra.south |- pdots)!0.8!(s6RaOut |- pdots)$) {$\vdots$};
        
        \draw [decorate,decoration={brace,amplitude=2pt,mirror,raise=0.5em}] ([xshift=-0.5em] tmpLabelValidityValue.north west) -- ([xshift=0.5em] tmpLabelValidityValue.south west) node (tmpLabelValidity) [midway,yshift=-0.7em,anchor=north,align=center] {valid};
        
        \node [rectangle callout, callout absolute pointer={(s6RaOut |- p1)}, draw=black!50, anchor=south west, inner sep=2pt, fill=black!10] at ([yshift=1em,xshift=0em] s6RaOut |- p1) {$\mathsf{decide}(v^\star)$};%
        \node [rectangle callout, callout absolute pointer={(s6RaOut |- p2)}, draw=black!50, anchor=south west, inner sep=2pt, fill=black!10] at ([yshift=1em,xshift=0em] s6RaOut |- p2) {$\mathsf{decide}(v^\star)$};%
        \node [rectangle callout, callout absolute pointer={(s6RaOut |- pn)}, draw=black!50, anchor=south west, inner sep=2pt, fill=black!10] at ([yshift=1em,xshift=0em] s6RaOut |- pn) {$\mathsf{decide}(v^\star)$};%

    \end{tikzpicture}
    \caption{%
        Depiction of \commplus's adoption procedure.
        The depiction focuses on a case where $\phi$ happens to be such that the ``good'' digest $z^{\star}(k)$
        is smallest according to $\mathsf{hash}(\cdot, \phi)$ and is thus adopted by all correct processes.
        See \Cref{fig:hmvba-recap} for ``Reconstruct \& Agree'' sub-protocol.
        We abridge
        $z^\star \triangleq z^\star(k)$,
        $v^\star \triangleq v^\star(k)$.
    }
    \label{fig:reducerpp-game}
\end{figure}
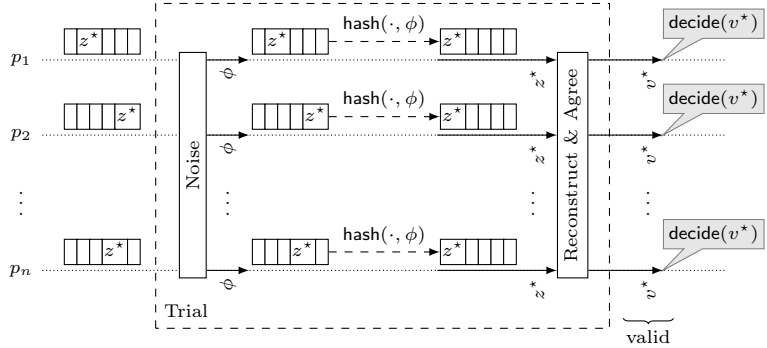

From this point forward, iterations of \commplus differ in design from iterations of \comm.
To justify our design choices, we now explain how \commplus guarantees termination with \emph{constant probability} in a good iteration.
Recall that, in contrast to \commplus, \comm deterministically terminates in a good iteration.
For the rest of the subsection, we focus on a fixed good iteration $k$.

\smallskip
\noindent\emph{Establishing only constantly many different candidates across correct processes.}
\commplus guarantees that $|\mathsf{committed}(k)| \leq C$, where $C = \lceil 1 + \frac{2}{\epsilon} \rceil^2$.
Let us explain.
First, each correct process has at most $\frac{(2 + \epsilon)t + 1}{\epsilon t + 1} \leq \lceil 1 + \frac{2}{\epsilon} \rceil$ candidates after receiving $n - t = (2 + \epsilon)t + 1$ \textsc{stored} messages.
Second, as each digest committed by a correct process is suggested by (at least) $(n - t) - t - t = \epsilon t + 1$ correct processes, $|\mathsf{committed}(k)| \leq C$.
(We prove this inequality in \Cref{section:reducer_plus_proof}.)

\smallskip
\noindent\emph{Unsuccessfully ensuring termination with constant probability.}
The remainder of iteration $k$ unfolds as follows.
Each correct process adopts (in some way) one of its committed digests.
If all correct processes adopt the ``good'' digest $z^{\star}(k)$, correct processes decide the $\mathsf{leader}(k)$'s valid proposal $v^{\star}(k)$ from \commplus by relying on the R\&A mechanism of \comm (and HMVBA) and terminate.
Therefore, for \commplus to terminate with constant probability in iteration $k$, it is crucial to ensure that all correct processes adopt $z^{\star}(k)$ with constant probability.
To accomplish this, each process follows the adoption procedure outlined below.

For the adoption procedure, \commplus employs a common coin, denoted by $\mathsf{Noise}$, that returns some random $\kappa$-bit value $\phi$.
Intuitively, each correct process $p_i$ (1) concatenates each committed digest $z$ with the obtained random value $\phi$, and (2) hashes the concatenation using the hash function $\mathsf{hash}(\cdot)$ modeled as a random oracle.
Concretely, once $\phi$ is obtained, each correct process $p_i$ constructs its local set $H_i$ in the following way:
\begin{equation*}
    H_i = \{ h \,|\, h = \mathsf{hash}(z, \phi) \land \text{$z$ is committed by $p_i$} \}.
\end{equation*}
Finally, process $p_i$ adopts the committed digest $z'$ that produced  the lexicographically smallest hash:
\begin{equation*}
    \forall h \in H_i: \mathsf{hash}(z', \phi) \leq h.
\end{equation*}

Let us analyze the probability that all correct processes adopt $z^{\star}(k)$.
Given that only polynomially many (in $\kappa$) random oracle queries can be made and the common coin outputs a $\kappa$-bit random value, the procedure described above emulates a random permutation of the committed digests.
Concretely, for every $z \in \mathsf{committed}(k)$, $\mathsf{hash}(z, \phi)$ is uniformly random with all but negligible probability.
Hence, the probability that all correct processes adopt $z^{\star}(k)$ is equal to the probability that, given the obtained random value $\phi$, $\mathsf{hash}(z^{\star}(k), \phi)$ is lexicographically smallest in the $\mathcal{H} = \{ h \,|\, h = \mathsf{hash}(z, \phi) \land z \in \mathsf{committed}(k) \}$ set.
As all members of $\mathcal{H}$ are uniformly random (except with negligible probability),
this probability is $\frac{1}{|\mathcal{H}|} = \frac{1}{|\mathsf{committed}(k)|} \geq \frac{1}{C}$ given that $|\mathcal{H}| = |\mathsf{committed}(k)| \leq C$.\footnote{One could obtain a random permutation of digests via a common-coin object, thus eliminating the need for the random oracle assumption. However, this would necessitate the coin to disseminate $2^{\kappa} \cdot \kappa \gg \kappa$ bits.}

\smallskip
\noindent\emph{The problem.}
Unfortunately, the approach above has a clear problem.
Namely, an adaptive adversary can rig the described probabilistic trial, thus ensuring that not \emph{all} correct processes adopt $z^{\star}(k)$.
To illustrate why this is the case, we now showcase a simple adversarial attack.

Note that, once the random value $\phi$ gets revealed, the adversary learns the value $\mathsf{hash}(z^{\star}(k), \phi)$ it needs to ``beat''.
At this point, the adversary can find an adversarial digest $z_A$ such that $\mathsf{hash}(z_A, \phi) < \mathsf{hash}(z^{\star}(k), \phi)$.
Then, by corrupting $\mathsf{leader}(k)$ and making it disseminate $z_A$, the adversary can introduce digest $z_A$ to correct processes.
Specifically, by delaying some correct processes and carefully controlling the scheduling of the \textsc{stored} and \textsc{suggest} messages, the adversary can force some slow correct processes to commit $z_A$.
As $\mathsf{hash}(z_A, \phi) < \mathsf{hash}(z^{\star}(k), \phi)$, these correct processes would adopt $z_A$, thus preventing termination in good iteration $k$.
In brief, the adversary is capable of rigging the trial as the set of all digests committed by correct processes is not fixed once the randomness is revealed: upon observing $\phi$, the adversary gains the ability to manipulate the trial to its advantage.

\smallskip
\noindent\emph{The solution.}
Luckily, we can prevent the adversary from manipulating trials ``too many'' times.
The key insight is this: for the adversary to rig a trial, it needs to force correct processes to commit an adversarial digest.
Hence, whenever the adversary ``cheats'', the number of different digests committed across correct processes increases.
However, recall that $|\mathsf{committed}(k)| \leq C$.
Therefore, given that $z^{\star}(k)$ is committed by each correct process, the adversary can inject only $C - 1$ adversarial digests.
Roughly speaking, by extending iteration $k$ to contain $C$ sequential and independent trials, we ensure the existence of (at least) one trial the adversary cannot rig.
More specifically, there exists a trial prior to which the adversary has already injected \emph{all} of its $C - 1$ adversarial digests.
Consequently, even though the adversary might be aware of the ``winning'' adversarial digest for this trial, it cannot inject it.
Thus, this one fair trial indeed provides constant $\frac{1}{C}$ probability that all correct processes adopt $z^{\star}(k)$, which further implies constant $\frac{1}{C}$ probability that correct processes decide and terminate in good iteration $k$.

\smallskip
\noindent \emph{Complexity analysis.}
\commplus terminates in a good iteration with constant $\frac{1}{C}$ probability.
Given that each iteration is good with probability $\geq \frac{n - 2t}{n} = \frac{(1 + \epsilon)t + 1}{(3 + \epsilon)t + 1} \approx \frac{1}{3}$, \commplus terminates in $O(C)$ iterations in expectation.
As each iteration takes $O(C)$ time (since there are $C$ trials), the expected time complexity is $O(C^2)$.
As for the exchanged information, correct processes send $O(n^2)$ messages and $O(n \ell + n^2 \kappa \log n)$ bits in the dissemination phase.
Additionally, each iteration exchanges $O(Cn^2)$ messages and $O\big( C (n \ell + n^2 \kappa \log n) \big)$ bits.
As \commplus terminates in expected $O(C)$ iterations, \commplus yields an expected message complexity of $O(C^2n^2)$ and an expected bit complexity of $O\big( C^2 (n \ell + n^2 \kappa \log n) \big)$.
Recall that in \Cref{subsection:reducing_constants}, we discuss how the constant multiplicative factor can be reduced to $O(1 / \epsilon^3)$.

\section{System Model \& Problem Definition} \label{section:model_problem_definition}

\noindent\textbf{System model.}
We consider 
$n$ processes
$p_1, p_2, ..., p_n$
connected through pairwise authenticated channels.
(A post-quantum symmetric encryption scheme such as AES can fulfill this assumption.)
This work considers a computationally bounded and \emph{adaptive} adversary capable of corrupting up to $t > 0$ processes throughout (and not only at the beginning of) the protocol execution.
For \comm, we consider $t < \frac{1}{4}n$, while for \commplus, we assume $t < (\frac{1}{3} - \epsilon)n$, where $\epsilon > 0$ is any fixed constant.
Processes not corrupted by the adversary at a certain stage are said to be \emph{so-far-uncorrupted}.
Once the adversary corrupts a process, the process falls under the adversary's control and may behave maliciously.
A process is 
so called
\emph{correct} if it is never corrupted; a non-correct process is 
\emph{faulty}.

We focus 
on an \emph{asynchronous} communication network where message delays are unbounded (but finite).
Thus, we assume 
the adversary controls the network
and
can delay messages arbitrarily, but every message exchanged between correct processes must be delivered eventually.
The adversary possesses 
the 
\emph{after-the-fact-removal} capabilities: if a so-far-uncorrupted process $p_i$ sent a message and then got corrupted 
before the message was delivered, the adversary is capable of retracting the message, thus preventing its delivery.

\smallskip
\noindent\textbf{Multi-valued validated Byzantine agreement (MVBA).}
In this paper, we aim to design an MVBA~\cite{vaba,dumbomvba,hmvba,CT05,CL02,Kotla2009} protocol that operates in the model described above.
Informally, MVBA requires correct processes to agree on a \emph{valid} $\ell$-bit value.
The formal specification is given in \Cref{mod:mvba}.

\begin{module}[tbp]
\caption{MVBA}
\label{mod:mvba}
\scriptsize
\begin{algorithmic}[1]

\Statex \textbf{Associated values:}
\begin{compactitem} [-]
    \item set $\valuemvba$ of $\ell$-bit values
\end{compactitem}

\smallskip
\Statex \textbf{Associated functions:}
\begin{compactitem} [-]
    \item function $\mathsf{valid}: \valuemvba \to \{ \mathit{true}, \mathit{false} \}$; a value $v \in \valuemvba$ is said to be valid if and only if $\mathsf{valid}(v) = \mathit{true}$
\end{compactitem}

\smallskip
\Statex \textbf{Events:}
\begin{compactitem}[-]
    \item \emph{input} $\mathsf{propose}(v \in \valuemvba)$: a process proposes value $v$.

    \item \emph{output} $\mathsf{decide}(v' \in \valuemvba)$: a process decides value $v'$.
\end{compactitem}

\smallskip 
\Statex \textbf{Assumed behavior:} 
\begin{compactitem}[-]
    \item Every correct process proposes exactly once and it does so with a valid value.
\end{compactitem}

\smallskip 
\Statex \textbf{Properties:} \BlueComment{ensured only if correct processes follow the behavior stated above}
\begin{compactitem}[-]
    \item \emph{External validity:} No correct process decides an invalid value.

    \item \emph{Weak validity:} If all processes are correct and a correct process decides a value $v \in \valuemvba$, then $v$ was proposed by a correct process.
    
    \item \emph{Agreement:} No two correct processes decide different values. 

    \item \emph{Integrity:} No correct process decides more than once.

    \item \emph{Termination:} All correct processes eventually decide.

   \item \emph{Quality:} If a correct process decides a value $v \in \valuemvba$, then the probability that $v$ is a value determined by the adversary is at most $q < 1$.
\end{compactitem}
\end{algorithmic}
\end{module}

\section{Preliminaries}\label{section:preliminaries}

This section recapitulates the building blocks employed in our algorithms.

\smallskip
\noindent \textbf{Broadcasting operation.}
When a process \emph{broadcasts} a message in our algorithms, it
simply unicasts the message to all processes individually.  
This broadcasting operation is thus \emph{unreliable}: if the sender is faulty, correct processes may receive different messages, or some may not receive any message at all.

\smallskip
\noindent\textbf{Reed--Solomon codes.}
Our algorithms rely on Reed--Solomon (RS) codes~\cite{reed1960}.
\comm and \commplus use RS codes as erasure codes; no (substitution-)error correction is required.
We use $\mathsf{encode}(\cdot)$ and $\mathsf{decode}(\cdot)$ to denote RS' encoding and decoding algorithms.
Namely, in a nutshell, $\mathsf{encode}(v)$ takes a value $v$, chunks it into the coefficients of a polynomial of degree $t$ (for \comm) or degree $\epsilon t$ (for \commplus), and outputs evaluations of the polynomial (i.e., the RS symbols) at $n$ (the total number of processes) distinct locations.
Similarly, $\mathsf{decode}(S)$ takes a set $S$ of $t + 1$ (for \comm) or $\epsilon t + 1$ (for \commplus) RS symbols (from distinct locations) and interpolates them into a polynomial of degree $t$ (for \comm) or degree $\epsilon t$ (for \commplus), whose coefficients are concatenated and output
as decoded value $v$.
The bit-size of an RS symbol obtained by the $\mathsf{encode}(v)$ algorithm is $O(\frac{|v|}{n} + \log n)$, where $|v|$ denotes the size of value $v$.

\smallskip
\noindent\textbf{Hash functions.}
For \comm, we assume a collision-resistant hash function $\mathsf{hash}(\cdot)$ guaranteeing that a computationally bounded adversary cannot find two different inputs resulting in the same hash value (except with negligible probability).
In contrast, \commplus requires hash functions modeled as a random oracle
with independent and uniformly distributed hash values.
Each hash value is of size $\kappa$ bits; we assume $\kappa > \log n$.\footnote{Else, $t \in O(n)$ faulty processes would have computational power exponential in $\kappa$.}

\smallskip
\noindent\textbf{Cryptographic accumulators.}
Our algorithms use cryptographic accumulators.
A full definition is provided in \Cref{section:preliminaries_full}.
We only give
a brief summary here.

In a nutshell, following~\cite{bhat2021randpiper,Nayak0SVX20}, a cryptographic accumulator scheme constructs an accumulation value for a set of values and produces a witness for each value in the set.
Given the accumulation value and a witness, any process can verify if a value is indeed in the set.
More formally, given a set $\mathcal{D}$ of $n$ values $d_1, ..., d_n$, an accumulator has the following syntax:
\begin{compactitem}
    \item $\mathsf{Eval}(\mathcal{D})$ computes the accumulation value $z$ for the set $\mathcal{D}$.

    \item $\mathsf{CreateWit}(z, d_i, \mathcal{D})$, where $z$ is the accumulation value for $\mathcal{D}$, computes a witness $w_i$ if $d_i \in \mathcal{D}$, and returns $\bot$ otherwise.

    \item $\mathsf{Verify}(z, w_i, d_i)$, where $z$ is the accumulation value for $\mathcal{D}$, returns $\mathit{true}$ if $w_i$ is the witness for $d_i \in \mathcal{D}$, and $\mathit{false}$ otherwise.
\end{compactitem}
In our algorithms, we use Merkle trees~\cite{merkle-tree-crypto87} as our 
accumulators given they are 
hash-based.
Elements of $\mathcal{D}$ form the leaves of a Merkle tree, the accumulator key is a specific hash function (see \Cref{section:preliminaries_full}), an accumulation value is the Merkle tree root, and a witness is a Merkle tree proof. 
Importantly, the size of an accumulation value is $O(\kappa)$ bits, and the size of a witness is $O(\kappa \log n)$ bits, where $\kappa$ denotes the size of a hash value.
Our algorithms instruct each process $p_i$ to construct a Merkle tree over the RS symbols of its proposal $v_i$, with the resulting Merkle root serving as a digest of $v_i$. 
Therefore, throughout the remainder of this paper, we use the terms ``accumulation value'' and ``digest'' interchangeably.
Since our algorithms require tracking the position of each RS symbol, the Merkle trees used in our algorithms effectively function as vector commitments~\cite{catalano2013vector}.

\smallskip
\noindent\textbf{Common coin.}
We follow the approach of prior works~\cite{BKR94,DBLP:journals/corr/abs-2002-08765,MostefaouiMR15,pace,finmvba,hmvba} and assume the existence of an idealized common coin, an object introduced by Rabin~\cite{Rabin83}, that delivers the same sequence of random coins to all processes.
To ensure that the adversary cannot anticipate the coin values in advance, we establish the condition that the value is disclosed only after $t + 1$ processes (thus, at least one correct process) have queried the coin.
Concretely, both of our algorithms use the common-coin objects for (1) obtaining a uniformly random $(\log n)$-bit integer, denoted by $\mathsf{Election}()$, and (2) obtaining a uniformly random integer in a specified constant range, denoted by $\mathsf{Index}()$.
\commplus utilizes an additional common-coin object, denoted by $\mathsf{Noise}()$, that generates a uniformly random $\kappa$-bit value.
All coins are independent.
Our protocols can be easily adapted to rely only on a weak common coin, which allows a constant probability of disagreement among correct processes.
This can be achieved by following the $\mathsf{Index}()$ common coin with an MBA instance (see \Cref{section:reducer,section:reducer_plus}) to ensure agreement on the coin's random output.\footnote{We underline that the SMBA and MBA algorithms utilized in our protocols require only a weak common coin.}

\smallskip
\noindent\textbf{Multi-valued Byzantine agreement (MBA).}
Our algorithms internally utilize the well-known MBA primitive~\cite{book-cachin-guerraoui-rodrigues,BE03,DBLP:journals/acta/MostefaouiR17}.
MBA is similar to the MVBA primitive: processes propose their values and decide on a common value.
The formal specification of the MBA primitive is given in \Cref{mod:async_mba}.
In contrast to the MVBA primitive, MBA ensures justification (sometimes also called ``non-intrusion''), but it does not guarantee external validity.
Note that, whenever correct processes propose different values, MBA might decide futile $\botmba$.
An MVBA algorithm must decide a valid (non-$\botmba$) value in this case, which represents a crucial difference between these two primitives.
We treat the special value $\botmba$ as an invalid value.  
Specifically, the function $\mathsf{valid}(\cdot)$ (see \Cref{mod:mvba}) is defined for $\botmba$, with $\mathsf{valid}(\botmba) = \mathit{false}$.  
We utilize this in our algorithms.

\begin{module}[tb]
\caption{MBA}
\label{mod:async_mba}
\scriptsize
\begin{algorithmic}[1]

\Statex \textbf{Associated values:}
\begin{compactitem} [-]
    \item set $\valuemba$

    \item special value $\botmba \notin \valuemba$; we assume $\mathsf{valid}(\botmba) = \mathit{false}$ \BlueComment{$\botmba$ is an invalid value}
\end{compactitem}

\smallskip
\Statex \textbf{Events:}
\begin{compactitem}[-]
    \item \emph{input} $\mathsf{propose}(v \in \valuemba)$: a process proposes value $v$.

    \item \emph{output} $\mathsf{decide}(v' \in \valuemba \cup \{ \botmba \})$: a process decides value $v'$.
\end{compactitem}

\smallskip 
\Statex \textbf{Assumed behavior:} 
\begin{compactitem}[-]
    \item Every correct process proposes exactly once.
\end{compactitem}

\smallskip 
\Statex \textbf{Properties:} \BlueComment{ensured only if correct processes follow the behavior stated above}
\begin{compactitem}[-]
    \item \emph{Strong unanimity:} If all correct processes propose the same value $v \in \valuemba$ and a correct process decides a value $v' \in \valuemba \cup \{ \botmba \}$, then $v' = v$.
    
    \item \emph{Agreement:} No two correct processes decide different values. 

    \item \emph{Justification:} If any correct process decides a value $v' \in \valuemba$ ($v' \neq \botmba$), then $v'$ is proposed by a correct process.

    \item \emph{Integrity:} No correct process decides more than once.

    \item \emph{Termination:} All correct processes eventually decide.
\end{compactitem}
\end{algorithmic}
\end{module}

Our algorithms utilize MBA to agree on (1) $\ell$-bit values (i.e., $\valuemba \equiv \valuemvba$), and (2) on $O(\kappa)$-bit digests (i.e., $\valuemba \equiv \valuedigest$, where $\valuedigest$ denotes the set of all digests).
For agreeing on $\ell$-bit values, our algorithms rely on our MBA implementation $\mbasimple$ (relegated to \Cref{section:regular_mba}) with $O(n^2)$ expected message complexity, $O(n\ell + n^2 \kappa \log n)$ expected bit complexity and $O(1)$ expected time complexity.
For agreeing on digests (utilized in our implementation of the SMBA primitive; see \Cref{section:strong_mba}), we rely on the cryptography-free MBA implementation proposed in \cite{DBLP:journals/acta/MostefaouiR17,Abraham2022crusader} with $O(n^2)$ expected message complexity, $O(n^2 \kappa)$ expected bit complexity and $O(1)$ expected time complexity.
Both utilized MBA algorithms tolerate up to $t < \frac{1}{3}n$ failures.

\smallskip
\noindent\textbf{Strong multi-valued Byzantine agreement (SMBA).}
Finally, the \comm algorithm relies on 
\emph{strong multi-valued Byzantine agreement} (SMBA),
a new variant of Byzantine agreement, which we isolate as a separate primitive and construct, and which may be of independent interest.
Concretely, \comm utilizes SMBA to enable correct processes to agree on a digest.
The formal specification of the SMBA primitive can be found in \Cref{mod:async_smba}.
We stress that the specification of the SMBA primitive assumes that correct processes propose only a constant number of different digests; this assumption is introduced solely for complexity reasons (see \Cref{section:strong_mba} for more details).
Crucially, our \comm algorithm---specifically its reducing technique---enforces this condition: only $O(1)$ different digests are proposed by correct processes in any instance of the SMBA primitive utilized in \comm.

Intuitively, the SMBA primitive ensures that all correct processes eventually agree on the same digest $z$.
Moreover, if no more than two different digests are proposed by correct processes, the primitive ensures that $z$ was proposed by a correct process.
If correct processes propose three (or more) different digests, a non-proposed digest can be decided.
In \comm, we utilize our cryptography-free SMBA algorithm \smba (relegated to \Cref{section:strong_mba}), which has an expected message complexity of $(n^2)$, an expected bit complexity of $O(n^2 \kappa)$ and an expected time complexity of $O(1)$; \smba tolerates up to $t < \frac{1}{4} n$ adaptive corruptions.
In \Cref{section:preliminaries_full}, we discuss the differences between the SMBA primitive and the well-known strong consensus primitive~\cite{fitzi2003efficient}, which also enables processes to agree on the proposal of a correct process.

\begin{module}[tb]
\caption{SMBA}
\label{mod:async_smba}
\scriptsize
\begin{algorithmic}[1]

\Statex \textbf{Associated values:}
\begin{compactitem} [-]
    \item set $\valuedigest$ of $O(\kappa)$-bit digests \BlueComment{SMBA is exclusively utilized for agreement on digests}
\end{compactitem}

\smallskip
\Statex \textbf{Events:}
\begin{compactitem}[-]
    \item \emph{input} $\mathsf{propose}(z \in \valuedigest)$: a process proposes digest $z$.

    \item \emph{output} $\mathsf{decide}(z' \in \valuedigest)$: a process decides digest $z'$.
\end{compactitem}

\smallskip 
\Statex \textbf{Assumed behavior:} 
\begin{compactitem}[-]
    \item Every correct process proposes exactly once.

    \item Only $O(1)$ different digests are proposed by correct processes.
\end{compactitem}

\smallskip 
\Statex \textbf{Properties:} \BlueComment{ensured only if correct processes follow the behavior stated above}
\begin{compactitem}[-]
    \item \emph{Strong validity:} If up to two different digests are proposed by correct processes and a correct process decides a digest $z' \in \valuedigest$, then $z'$ is proposed by a correct process.
    
    \item \emph{Agreement:} No two correct processes decide different values. 

    \item \emph{Integrity:} No correct process decides more than once.

    \item \emph{Termination:} All correct processes eventually decide.
\end{compactitem}
\end{algorithmic}
\end{module}

\section{\comm: Pseudocode \& Proof Sketch} \label{section:reducer}

This section presents our \comm algorithm.
Recall that \comm exchanges $O(n^2)$ messages and $O(n \ell + n^2 \kappa \log n)$ bits, and terminates in $O(1)$ time.  
\comm tolerates up to $t < \frac{1}{4}n$ failures.
We start by introducing the pseudocode of the \comm algorithm (\Cref{subsection:reducer_implementation}).
Then, we provide an informal analysis of \comm's correctness and complexity (\Cref{subsection:reducer_analysis}).
A formal proof can be found in \Cref{section:reducer_proof}.

\begin{algorithm}[tb]
\caption{\comm: Pseudocode (for process $p_i$) [part 1 of 2]}
\label{algorithm:reducer}
\begin{algorithmic}[1]
\scriptsize

\State \textbf{Uses:} \label{line:reducer_uses}
\State \hskip2em \textcolor{jnSUDigitalRedLight}{\(\triangleright\) \smba exchanges $O(n^2)$ messages and $O(n^2 \kappa)$ bits and terminates in $O(1)$ time}
\State \hskip2em SMBA algorithm \smba, \textbf{instances} $\mathcal{SMBA}[k][x]$, $\forall k \in \mathbb{N}, \forall x \in \{ 1, 2, 3 \}$ 

\smallskip
\State \hskip2em \textcolor{jnSUDigitalRedLight}{\(\triangleright\) $\mbasimple$ exchanges $O(n^2)$ messages and $O(n \ell + n^2 \kappa \log n)$ bits and terminates in $O(1)$ time}
\State \hskip2em MBA algorithm $\mbasimple$ with $\valuemba = \valuemvba$, \textbf{instances} $\mathcal{MBA}[k][x]$, $\forall k \in \mathbb{N}, \forall x \in \hphantom{|||||L}\{ 1, 2, 3 \}$ 

\medskip
\State \textbf{Rules:}
    \State \hskip2em - Any message with an invalid witness is ignored.
    \State \hskip2em - Only one \textsc{init} message is processed per process.

\medskip
\State \textbf{Constants:}
\State \hskip2em $\mathsf{Digest}$ $\mathit{default}$ \BlueComment{default digest} \label{line:reducer_fixed_default}

\medskip
\State \textbf{Local variables:}
\State \hskip2em $\valuemvba$ $v_i \gets p_i$'s proposal
\State \hskip2em $\mathsf{Boolean}$ $\mathit{dissemination\_completed}_i \gets \mathit{false}$
\State \hskip2em $\mathsf{Map}(\mathsf{Process} \to [\mathsf{RS}, \mathsf{Digest}, \mathsf{Witness}])$ $\mathit{symbols}_i \gets \text{empty map}$ 
\State \hskip2em $\mathsf{List}(\mathsf{Digest})$ $\mathit{candidates}_i \gets \text{empty list}$ \BlueComment{will be reset every iteration}
\State \hskip2em $\mathsf{Digest}$ $\mathit{adopted\_digest}_i \gets \bot$
\State \hskip2em $\mathsf{Digest}$ $\mathit{first\_digest}_i \gets \bot$
\State \hskip2em $\mathsf{List}(\valuemvba)$ $\mathit{quasi\_decisions}_i \gets $ empty list \label{line:reducer_last_line}

\medskip
\State \textbf{upon} $\mathsf{propose}(\valuemvba \text{ } v_i)$ \label{line:reducer_propose} \BlueComment{start of the algorithm}
\State \hskip2em \textcolor{jnSUDigitalRedLight}{\(\triangleright\) dissemination phase starts}
\State \hskip2em $\mathsf{List}(\mathsf{RS})$ $[m_1, m_2, ..., m_n] \gets \mathsf{encode}(v_i)$ \label{line:compute_rs_symbols} \BlueComment{$\mathsf{encode}(v_i)$ treats $v_i$ as a polynomial of degree $t$}
\State \hskip2em $\mathsf{Digest}$ $z_i \gets \mathsf{Eval}\Big( \big[ (1, m_1), (2, m_2), ..., (n, m_n) \big] \Big)$ \label{line:compute_accumulation_value} \BlueComment{compute the digest}
\State \hskip2em \textbf{for each} $\mathsf{Process} \text{ } p_j$:
\State \hskip4em $\mathsf{Witness}$ $w_j \gets \mathsf{CreateWit}\Big( z_i, (j, m_j), \big[ (1, m_1), (2, m_2), ..., (n, m_n) \big] \Big)$ \label{line:compute_witness} \BlueComment{obtain witness} 
\State \hskip4em \textbf{send} $\langle \textsc{init}, m_j, z_i, w_j \rangle$ to process $p_j$ \label{line:send_init}%

\medskip
\State \textbf{upon} receiving $\langle \textsc{init}, \mathsf{RS} \text{ } m_i, \mathsf{Digest} \text{ } z_j, \mathsf{Witness} \text{ } w_i  \rangle$ from a process $p_j$: \label{line:receive_init}
\State \hskip2em \textbf{if} $\mathit{dissemination\_completed}_i = \mathit{false}$:
\State \hskip4em $\mathit{symbols}_i[p_j] \gets [m_i, z_j, w_i]$ \label{line:store_received_symbol} \BlueComment{store the received RS symbol}%
\State \hskip4em \textbf{send} $\langle \textsc{ack} \rangle$ to process $p_j$ \label{line:send_ack}

\medskip
\State \textbf{upon} receiving $\langle \textsc{ack} \rangle$ from $n - t$ processes (for the first time): \label{line:receive_ack}
\State \hskip2em \textbf{broadcast} $\langle \textsc{done} \rangle$ \label{line:broadcast_done}

\medskip
\State \textbf{upon} receiving $\langle \textsc{done} \rangle$ from $n - t$ processes (for the first time): \label{line:receive_quorum_done}
\State \hskip2em \textbf{broadcast} $\langle \textsc{finish} \rangle$ if $p_i$ has not broadcast $\langle \textsc{finish} \rangle$ before \label{line:first_finish}

\medskip
\State \textbf{upon} receiving $\langle \textsc{finish} \rangle$ from $t + 1$ processes (for the first time): \label{line:receive_plurality_finish}
\State \hskip2em \textbf{broadcast} $\langle \textsc{finish} \rangle$ if $p_i$ has not broadcast $\langle \textsc{finish} \rangle$ before \label{line:second_finish}

\algstore{reducer}
\end{algorithmic}
\end{algorithm}

\setcounter{algorithm}{0}

\begin{algorithm}[tb]
\caption{\comm: Pseudocode (for process $p_i$) [part 2 of 2]}
\begin{algorithmic}[1]
\scriptsize
\algrestore{reducer}

\State \textbf{upon} receiving $\langle \textsc{finish} \rangle$ from $n - t$ processes (for the first time)\label{line:receive_quorum_finish}
\State \hskip2em $\mathit{dissemination\_completed}_i \gets \mathit{true}$\label{line:dissemination_complete} \BlueComment{dissemination phase completes}

\smallskip
\State \hskip2em \textbf{for each} $k = 1, 2, ...$: \label{line:start_iteration} \BlueComment{iteration $k$ starts}
\State \hskip4em $\mathit{candidates}_i \gets \text{empty list}$ \BlueComment{reset the list of candidates} \label{line:reset_candidates}

\State \hskip4em $\mathsf{Process}$ $\mathsf{leader}(k) \gets \mathsf{Election()}$\BlueComment{elect a random leader}\label{line:random_election}
\State \hskip4em \textbf{broadcast} $\langle \textsc{stored}, k, \mathit{symbols}_i[\mathsf{leader}(k)].\mathsf{digest()} \rangle$\label{line:broadcast_stored} \BlueComment{disseminate the leader's digest}
\State \hskip4em \textbf{wait for} $n - t = 3t + 1$ \textsc{stored} messages for iteration $k$\label{line:wait_for_stored_messages}
        
\State \hskip4em \textbf{for each} $\mathsf{Digest}$ $z$ included in $n - 3t = t + 1$ received \textsc{stored} messages:\label{line:check_stored}
\State \hskip6em $\mathit{candidates}_i.\mathsf{append}(z)$\label{line:reducer_candidates_append}
        
\State \hskip4em \textbf{broadcast} $\langle \textsc{suggest}, k, \mathit{candidates}_i \rangle$\label{line:broadcast_suggest} \BlueComment{disseminate $p_i$'s candidates}
\State \hskip4em \textbf{wait for} $n - t = 3t + 1$ \textsc{suggest} messages for iteration $k$:\label{line:wait_for_suggest_messages}
        
\State \hskip4em \textbf{for each} $\mathsf{Digest} \text{ } z \in \mathit{candidates}_i$:
\State \hskip6em \textbf{if} $z$ is not included in $n - 2t = 2t + 1$ received \textsc{suggest} messages:\label{line:rule_precommit}
\State \hskip8em $\mathit{candidates}_i.\mathsf{remove}(z)$ \label{line:candidates_remove}
        
\State \hskip4em \textbf{if} $\mathit{candidates}_i.\mathsf{size} = 0$:\label{line:reducer_if_size_0} \BlueComment{if no candidate ``survives'' the suggestion step}
\State \hskip6em $\mathit{candidates}_i[1] \gets \mathit{default}$; $\mathit{candidates}_i[2] \gets \mathit{default}$\label{line:reducer_set_default} \BlueComment{commit the default digest}
\State \hskip4em \textbf{else if} $\mathit{candidates}_i.\mathsf{size} = 1$: \BlueComment{if exactly one candidate ``survives'' the suggestion step}
\State \hskip6em $\mathit{candidates}_i[2] \gets \mathit{candidates}_i[1]$\label{line:reducer_set_copy} \BlueComment{then, copy the candidate}
\State \hskip4em Sort $\mathit{candidates}_i$ in the lexicographic order\label{line:reducer_sort} \BlueComment{these digests are committed}

\smallskip
\State \hskip4em \textbf{for each} $x = 1, 2, 3$:\label{line:agreeing_phase}
\State \hskip6em \textbf{if} $x \neq 3$: \label{line:sub-iteration_starts}
\State \hskip8em $\mathit{adopted\_digest}_i \gets \mathit{candidates}_i[x]$\label{line:acc_proposal_1} \BlueComment{adopt the $x$-th committed digest}
\State \hskip6em \textcolor{jnSUDigitalRedLight}{\(\triangleright\) is the first committed digest decided from $\mathcal{SMBA}[k][1]$?}
\State \hskip6em \textbf{else if} \label{} $\mathit{first\_digest}_i = \mathit{candidates}_i[1]$: \label{line:proposal_switching_start}
\State \hskip8em $\mathit{adopted\_digest}_i \gets \mathit{candidates}_i[2]$  \label{line:acc_proposal_2}\BlueComment{if so, adopt the other committed digest}
\State \hskip6em \textbf{else:}
\State \hskip8em $\mathit{adopted\_digest}_i \gets \mathit{candidates}_i[1]$ \label{line:acc_proposal_3} \BlueComment{if not, adopt the first digest} \label{line:proposal_switching_end}

\smallskip
\State \hskip6em $\mathsf{Digest}$ $z \gets \mathcal{SMBA}[k][x].\mathsf{propose}(\mathit{adopted\_digest}_i)$\label{line:smba}
\State \hskip6em \textcolor{jnSUDigitalRedLight}{\(\triangleright\) store the first decided digest as it is important for third sub-iteration}
\State \hskip6em \textbf{if} $x = 1$: $\mathit{first\_digest}_i \gets z$

\smallskip
\State \hskip6em \textcolor{jnSUDigitalRedLight}{\(\triangleright\) Reconstruct \& Agree}
\State \hskip6em \textbf{broadcast} $\langle \textsc{reconstruct}, k, x, \mathit{symbols}_i[\mathsf{leader}(k)] \rangle$ \label{line:reducer_reconstruct}

\State \hskip6em \textbf{wait for} $n - t = 3t + 1$ \textsc{reconstruct} messages for sub-iteration $(k, x)$\label{line:wait_for_reconstruct_messages}

\State \hskip6em $\mathsf{Set}(\mathsf{RS})$ $S_i \gets$ received RS symbols with valid witnesses for digest $z$

\State \hskip6em \textbf{if} $|S_i| \geq t + 1$: \BlueComment{check if a value can be decoded using $S_i$}
\State \hskip8em $\valuemvba$ $r_i \gets \mathsf{decode}(S_i)$ \BlueComment{if yes, set $r_i$ to the decoded value} \label{line:reducer_decode}
\State \hskip6em \textbf{else:}
\State \hskip8em $\valuemvba$ $r_i \gets v_i$ \BlueComment{if not, set $r_i$ to $p_i$'s proposal} \label{line:reducer_own_proposal}

\State \hskip6em $\valuemvba \cup \{ \botmba \}$ $v \gets \mathcal{MBA}[k][x].\mathsf{propose}(r_i)$\label{line:lmba} \BlueComment{propose $r_i$}
                                
\State \hskip6em \textbf{if} $\mathsf{valid}(v) = \mathit{true}$:\label{line:final_check}
\State \hskip8em $\mathit{quasi\_decisions}_i.\mathsf{append}(v)$ \BlueComment{quasi-decide $v$} \label{line:reducer_quasi_decide}

\smallskip
\State \hskip4em \textbf{if} $\mathit{quasi\_decisions}_i.\mathsf{size} > 0$ and $p_i$ has not previously decided: \label{line:check_quasi_decisions}
\State \hskip6em $\mathsf{Integer}$ $I \gets \mathsf{Index}()$ \label{line:random_index} \BlueComment{obtain a random integer $I$ in the $[1, 3]$ range}
\State \hskip6em \textcolor{jnSUDigitalRedLight}{\(\triangleright\) for quality}
\State \hskip6em \textbf{trigger} $\mathsf{decide}\big( \mathit{quasi\_decisions}_i[(I \text{ mod } \mathit{quasi\_decisions}_i.\mathsf{size}) + 1] \big)$ \label{line:reducer_decide}

\end{algorithmic}
\end{algorithm}

\subsection{Pseudocode} \label{subsection:reducer_implementation}

The pseudocode of \comm is given in \Cref{algorithm:reducer}.
Given that the correctness of our solution crucially depends on the exact number of ``reduced'' digests held by correct processes in an iteration, the presented solution assumes $n = 4t + 1$.
Note that it is trivial to adapt the solution for any $n > 4t + 1$ in the following way:
(1) First $4t + 1$ processes (i.e., the $4t + 1$ processes with smallest identifiers) execute the \comm algorithm among $4t + 1$ processes (as explained in the rest of the section). 
(2) The aforementioned $4t + 1$ processes then utilize the cryptography-free asynchronous data dissemination (ADD) primitive~\cite{das2021asynchronous} that efficiently disseminates the decided value to all $n > 4t + 1$ processes. 

\smallskip
\noindent \textbf{Pseudocode description.}
Lines~\ref{line:reducer_uses} to~\ref{line:reducer_last_line} define the employed primitives, the rules governing the behavior of correct processes, as well as the constants and local variables.
When processes start executing the \comm algorithm, they first disseminate their proposals in the \emph{dissemination phase} (lines~\ref{line:reducer_propose}-\ref{line:dissemination_complete}).
The dissemination phase of \comm is identical to that of HMVBA and has already been covered in \Cref{section:technical_overview}.
Briefly, processes disseminate their proposals via \textsc{init} messages: each \textsc{init} message contains an RS symbol, a digest, and a witness proving the validity of the RS symbol against the digest.
When a process receives a valid \textsc{init} message, the process stores the content of the message and acknowledges the reception by sending an \textsc{ack} message back.
Once a process receives $n - t$ \textsc{ack} messages, it informs all other processes that its proposal is disseminated by broadcasting a \textsc{done} message.
Upon receiving $n - t$ \textsc{done} messages, a process broadcasts a \textsc{finish} message; a process may also broadcast a \textsc{finish} message upon receiving $t + 1$ \textsc{finish} messages.
Lastly, when a process receives $n - t$ \textsc{finish} messages, the process completes the dissemination phase.

After completing the dissemination phase, processes start executing \comm through \emph{iterations}.
Each iteration $k \in \mathbb{N}$ unfolds as follows (lines~\ref{line:start_iteration}-\ref{line:reducer_decide}):
\begin{compactenum}
    \item Processes randomly elect the iteration's leader, denoted by $\mathsf{leader}(k)$ (line~\ref{line:random_election}).

    \item Processes establish their candidate digests through \textsc{stored} and \textsc{suggest} messages (lines~\ref{line:broadcast_stored}-\ref{line:reducer_sort}), as previously discussed in \Cref{subsection:reducer_overview}.
    Concretely, each process $p_i$ \emph{commits} up to two candidate digests.
    Formally, we say that a correct process $p_i$ \emph{$c$-commits} a digest $z$ in iteration $k$, for any $c \in \{ 1, 2 \}$, if and only if $\mathit{candidates}_i[c] = z$ when process $p_i$ reaches line~\ref{line:agreeing_phase}.
    Moreover, we define the $\mathsf{committed}(k, c)$ set, for any $c \in \{ 1, 2\}$, as
    \begin{equation*}
        \mathsf{committed}(k, c) = \{ z \,|\, z \text{ is $c$-committed by a correct process in iteration $k$} \}.
    \end{equation*}
    The default digest (line~\ref{line:reducer_fixed_default}) is introduced solely to ensure that processes have always two committed candidates (see line~\ref{line:reducer_set_default}), even if they do not commit any candidate through the standard \textsc{stored} and \textsc{suggest} steps.

    \item Processes aim to agree on a valid value (lines~\ref{line:agreeing_phase}-\ref{line:reducer_quasi_decide}).
    To achieve this, in every good iteration $k$---$\mathsf{leader}(k)$ has disseminated proposal $v^{\star}(k)$ with digest $z^{\star}(k)$---the following holds:
    \begin{equation}\label{equation:crucial}
        \big( z^{\star}(k) \in \mathsf{committed}(k, 1) \land |\mathsf{committed}(k, 1)| \leq 2 \big) \lor \big( \{ z^{\star}(k) \} = \mathsf{committed}(k, 2) \big).
        \tag{$\circledcirc$}
    \end{equation}
    As noted in \Cref{subsection:reducer_overview}, if (only) the first disjunct holds true, it may require two repetitions for processes to agree on $z^{\star}(k)$ when correct processes propose two different digests from the $\mathsf{committed}(k, 1)$ set to the SMBA primitive.
    In the first repetition, they may decide on a non-$z^{\star}(k)$ digest (the other digest from the $\mathsf{committed}(k, 1)$ set), and only in the second repetition do they succeed in agreeing on $z^{\star}(k)$.
    For this reason, each iteration $k$ is divided into three sub-iterations $(k, 1)$, $(k, 2)$,  and $(k, 3)$.
    In sub-iterations $(k, 1)$ and $(k, 2)$, processes aim to agree on their first and second committed digests, respectively.
    The sub-iteration $(k, 3)$ serves as a repetition of the first sub-iteration.

    Each sub-iteration $(k, x)$ unfolds as follows (lines~\ref{line:sub-iteration_starts}-\ref{line:reducer_quasi_decide}).
    Each process $p_i$ begins by adopting a digest.  
    If $x = 1$ or $x = 2$, $ p_i $ adopts its $x$-committed digest.  
    Otherwise, $p_i$ engages its ``proposal-switching'' logic: it checks whether the value it proposed to the first invocation of the SMBA primitive was decided.  
    If so, $p_i$ adopts its $2$-committed digest; otherwise, $p_i$ adopts its $1$-committed digest.  
    Process $p_i$ then proposes its adopted digest (i.e., $\mathit{adopted\_digest}_i$) to the SMBA primitive.
    Once processes agree on a digest $z$ via SMBA, processes start the R\&A mechanism.
    Specifically, processes disseminate the RS symbols received from $\mathsf{leader}(k)$ during the dissemination phase.
    Using the received RS symbols, each correct process $p_i$ decodes some value $r_i$ if possible; otherwise, $p_i$ sets $r_i$ to its proposal $v_i$.
    Then, each correct process $p_i$ proposes value $r_i$ to the MBA primitive.
    If the decided value $v$ is valid, each process $p_i$ quasi-decides $v$ by appending it to its $\mathit{quasi\_decision}_i$ list.

    \item The final phase (lines~\ref{line:check_quasi_decisions}-\ref{line:reducer_decide}) of the iteration is designed to ensure the quality property.
    After completing all three sub-iterations, processes check if any value was quasi-decided.
    If so, processes obtain a random integer $I \in [1, 3]$, which is then used to select one of the previously quasi-decided values for the decision.
\end{compactenum}

\subsection{Proof Sketch} \label{subsection:reducer_analysis}

This subsection provides a proof sketch of \comm's correctness and complexity.
Recall that a formal proof can be found in \Cref{section:reducer_proof}.

\smallskip
\noindent \textbf{Correctness.}  
We first give a proof sketch of the following theorem.

\begin{theorem} [\comm is correct] \label{theorem:reducer_correct}
Given $n = 4t + 1$ and the existence of a collision-resistant hash function, \comm (see \Cref{algorithm:reducer}) is a correct implementation of the MVBA primitive in the presence of a computationally bounded adversary.
\end{theorem}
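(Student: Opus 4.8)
I will check the six properties of the MVBA module (\Cref{mod:mvba}) one at a time. Throughout, I first reduce accumulator security to the assumed collision resistance, so that (except with negligible probability) the genuine Merkle leaves are the \emph{only} RS symbols that verify against a given digest; I condition on this ``no-collision'' event. Two structural facts are used repeatedly: all correct processes invoke exactly the same \smba and $\mbasimple$ instances, in the same order (the loops over $k$ and $x$ are deterministic, and by \Cref{mod:async_mba}'s agreement property all correct processes append identical entries to $\mathit{quasi\_decisions}$); and a value enters $\mathit{quasi\_decisions}_i$ only after passing the $\mathsf{valid}(\cdot)$ test on \cref{line:final_check}. \emph{External validity} then follows from \cref{line:final_check,line:reducer_quasi_decide,line:reducer_decide}, and \emph{integrity} from the ``$p_i$ has not previously decided'' guard on \cref{line:check_quasi_decisions}.

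For \emph{termination} I would proceed as follows. A standard $\textsc{ack}$/$\textsc{done}$/$\textsc{finish}$ quorum argument shows every correct process completes the dissemination phase, and that if any correct process does, then at least $n-2t=2t+1$ correct processes sent $\textsc{done}$; for each such disseminator $p$ (whose proposal is valid since $p$ is correct), at least $2t+1$ correct processes stored $p$'s RS symbol, digest, and a valid witness. Call iteration $k$ \emph{good} if $\mathsf{leader}(k)$ is one of these $\geq 2t+1$ correct disseminators; since $\mathsf{Election}()$ is fresh and uniform, each iteration is independently good with probability $\geq \frac{2t+1}{4t+1}\geq\frac12$. By induction on $k$ --- using termination of \smba and $\mbasimple$, whose preconditions hold (one proposal per process, and only $O(1)$ distinct digests proposed to each SMBA instance, because each correct process has at most $\lfloor\frac{3t+1}{t+1}\rfloor=2$ candidates and a surviving committed digest needs $\geq t+1$ correct suggestions, so $|\mathsf{committed}(k)|=O(1)$ always), and using that all correct processes broadcast $\textsc{stored}$/$\textsc{suggest}$/$\textsc{reconstruct}$ --- every correct process reaches and completes every iteration.

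It remains to show that a good iteration $k$ forces a decision. Quorum intersection at $n=4t+1$ gives that each correct process receives $z^\star(k)$ in $\geq t+1$ of its $3t+1$ $\textsc{stored}$ messages and in $\geq 2t+1$ of its $3t+1$ $\textsc{suggest}$ messages, so every correct process commits $z^\star(k)$; together with $|\mathsf{committed}(k)|\leq 3$ and a short case analysis on the lexicographic rank of $z^\star(k)$ among the (at most three) committed digests, this yields the invariant $(\circledcirc)$. Then I would show that some sub-iteration $(k,x)$ has \emph{all} correct processes propose $z^\star(k)$ to $\mathcal{SMBA}[k][x]$: if $\{z^\star(k)\}=\mathsf{committed}(k,2)$ take $x=2$; otherwise $z^\star(k)\in\mathsf{committed}(k,1)$ and $|\mathsf{committed}(k,1)|\leq 2$, so either $x=1$ works immediately, or $\mathcal{SMBA}[k][1]$ --- fed at most the two digests of $\mathsf{committed}(k,1)$ and hence, by SMBA strong validity, deciding one of them --- decides $z^\star(k)$ (done) or decides the other digest, in which case the proposal-switching rule on \cref{line:proposal_switching_start,line:acc_proposal_2,line:acc_proposal_3,line:proposal_switching_end} makes every correct process adopt its remaining committed digest, which must be $z^\star(k)$, in sub-iteration $x=3$. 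Once all correct processes propose $z^\star(k)$, SMBA decides $z^\star(k)$; every correct process then collects $\geq t+1$ genuine RS symbols for $z^\star(k)$ from $\textsc{reconstruct}$ messages and decodes $v^\star(k)$ (collision resistance makes the decoding unambiguous), proposes $v^\star(k)$ to the MBA instance, which decides $v^\star(k)$ by strong unanimity; since $\mathsf{valid}(v^\star(k))=\mathit{true}$, every correct process quasi-decides and thus decides at the end of iteration $k$. As a good iteration occurs within finitely many iterations almost surely, all correct processes decide.

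\emph{Agreement} holds because the first iteration $k^\ast$ at which $\mathit{quasi\_decisions}_i$ becomes non-empty, together with the list's contents, is identical at every correct process (each entry is an $\mbasimple$ decision), and all correct processes obtain the same value from $\mathsf{Index}()$, so they select the same value on \cref{line:reducer_decide}. For \emph{weak validity}, when all processes are correct every digest ever held comes from an honest encoding, so any $\mbasimple$ decision --- being, by justification, some correct process's $r_i$, i.e.\ either $v_i$ or $\mathsf{decode}$ of genuine symbols of an honest encoding --- is a correct process's proposal. For \emph{quality}: iteration $1$ is good with probability $\geq\frac12$, in which case $k^\ast=1$ and the correctly proposed $v^\star(1)$ is among the at most three entries of $\mathit{quasi\_decisions}$; as $\mathsf{Index}()$ is independent of, and revealed after, the event ``iteration $1$ is good'', each list position --- in particular the one holding $v^\star(1)$ --- is chosen with probability $\geq\frac13$, so the decided value is adversary-determined with probability at most $\frac12+\frac12\cdot\frac23=\frac56<1$, giving $q=\frac56$. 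The main obstacle is precisely the good-iteration termination argument: establishing $(\circledcirc)$ with the tight $n=4t+1$ margins and then chaining it through the three sub-iterations and the proposal-switching rule so that some $\mathcal{SMBA}[k][x]$ is provably fed $z^\star(k)$ by \emph{all} correct processes, while never feeding any SMBA instance more than two distinct correct proposals so that its conditional strong-validity guarantee can be invoked.
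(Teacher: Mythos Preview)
Your proposal is correct and follows essentially the same approach as the paper: a property-by-property verification, with termination hinging on the invariant $(\circledcirc)$ in a good iteration and the three-sub-iteration case analysis (including proposal switching) to force all correct processes to propose $z^\star(k)$ to some SMBA instance, and quality yielding $q=\tfrac{5}{6}$ via the $\tfrac12\cdot\tfrac13$ lower bound. The only cosmetic difference is that the paper defines ``good'' via the set $\dfirst$ of \emph{so-far-uncorrupted} (not necessarily forever-correct) senders of \textsc{done} to $\first$, which is what actually guarantees the $\geq 2t+1$ correct storers even under adaptive corruption of the leader; your ``correct disseminators'' phrasing is a slight elision of this, but the downstream argument is unaffected.
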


We now analyze each property separately.

\smallskip
\noindent \emph{Agreement.} The agreement property is ensured by (1) the agreement property of the MBA primitive employed in each iteration (line~\ref{line:lmba}), and (2) the fact that any $\mathsf{Index}()$ request (line~\ref{line:random_index}) returns the same integer to all correct processes.

\smallskip
\noindent \emph{Weak validity.} Suppose all processes are correct and a correct process $p_i$ decides some value $v$ in an iteration $k$.
Due to the justification property of the MBA primitive, some correct process $p_j$ proposed $v$ to the MBA primitive in iteration $k$.
If $p_j$ decoded $v$ at line~\ref{line:reducer_decode}, $v$ is the proposal of $\mathsf{leader}(k)$.
Otherwise, $v$ is $p_j$'s proposal (line~\ref{line:reducer_own_proposal}).
In any case, $v$ is proposed by a correct process.

\smallskip
\noindent \emph{External validity.}
The property is trivially satisfied due to the check at line~\ref{line:final_check}.

\smallskip
\noindent \emph{Integrity.}
The property is trivially satisfied due to the check at line~\ref{line:check_quasi_decisions}.

\smallskip
\noindent \emph{Termination.}
As discussed in \Cref{subsection:reducer_overview}, \comm is guaranteed to terminate in a good iteration.
We now formally define what constitutes a good iteration.
Let $\first$ denote the first correct process that broadcasts a \textsc{finish} message at line~\ref{line:first_finish}.
Note that $\first$ broadcasts the \textsc{finish} message at line~\ref{line:first_finish} upon receiving a \textsc{done} message from $n - t = 3t + 1$ processes.
Let $\dfirst$ denote the set of so-far-uncorrupted processes from which $\first$ receives a \textsc{done} message before broadcasting the aforementioned \textsc{finish} message; note that $|\dfirst| \geq (n - t) - t = 2t + 1$.

\begin{definition} [Good iterations] \label{definition:good_iteration}
An iteration $k \in \mathbb{N}$ is \emph{good} if and only if $\mathsf{leader}(k) \in \dfirst$.
\end{definition} 

We are now ready to show that \comm terminates in the first good iteration $k$.
Let $v^{\star}(k)$ denote the valid proposal of $\mathsf{leader}(k)$ and let $z^{\star}(k)$ denote the digest of $v^{\star}(k)$.
As $k$ is a good iteration, $\mathsf{leader}(k)$ has stored valid RS symbols of its proposal $v^{\star}(k)$ at $(n - t) - t = 2t + 1$ correct processes.
As discussed in \Cref{subsection:reducer_implementation} (and proven in \Cref{section:reducer_proof}), \comm ensures that \Cref{equation:crucial} holds:
\begin{equation*}
    \big( z^{\star}(k) \in \mathsf{committed}(k, 1) \land |\mathsf{committed}(k, 1)| \leq 2 \big) \lor \big( \{ z^{\star}(k) \} = \mathsf{committed}(k, 2) \big).
\end{equation*}
We now separate two cases:
\begin{compactitem}
    \item Let $\mathsf{committed}(k, 2) = \{ z^{\star}(k) \}$.
    Consider sub-iteration $(k, 2)$.
    All correct processes decide $z^{\star}(k)$ from the SMBA primitive (line~\ref{line:smba}).
    Then, as $\mathsf{leader}(k)$ has disseminated RS symbols of its valid proposal $v^{\star}(k)$ to (at least) $n - 2t = 2t + 1$ correct processes, each correct process receives $(n - t) + (n - 2t) - n = n - 3t = t + 1$ RS symbols that correspond to digest $z^{\star}(k)$ (line~\ref{line:wait_for_reconstruct_messages}). 
    Hence, all correct processes decode $v^{\star}(k)$ (line~\ref{line:reducer_decode}) and propose $v^{\star}(k)$ to the MBA primitive (line~\ref{line:lmba}). 
    The strong unanimity property of the MBA primitive ensures that all correct processes decide $v^{\star}(k)$ from it and, thus, quasi-decide $v^{\star}(k)$ (line~\ref{line:reducer_quasi_decide}).
    Hence, termination is ensured.

    \item Let $z^{\star}(k) \in \mathsf{committed}(k, 1)$ and $|\mathsf{committed}(k, 1)| \leq 2$.
    Consider sub-iteration $(k, 1)$.
    As $|\mathsf{committed}(k, 1)| \leq 2$, correct processes propose at most two different digests to the SMBA primitive (line~\ref{line:smba}).
    The strong validity property guarantees that the decided digest $z$ is proposed by a correct process.
    We further investigate two possibilities:
    \begin{compactitem}
        \item Let $z = z^{\star}(k)$.
        Following the same reasoning as in the previous scenario, all correct processes decode $v^{\star}(k)$ (line~\ref{line:reducer_decode}) and propose $v^{\star}(k)$ to the MBA primitive (line~\ref{line:lmba}). 
        The strong unanimity property of the MBA primitive ensures that all correct processes decide $v^{\star}(k)$ from it and, thus, quasi-decide $v^{\star}(k)$ (line~\ref{line:reducer_quasi_decide}).
        Hence, termination is guaranteed in this case.

        \item Let $z \neq z^{\star}(k)$.
        In this case, the ``proposal-switching'' logic (lines~\ref{line:proposal_switching_start}-\ref{line:proposal_switching_end}) ensures that all correct processes propose $z^{\star}(k)$ to the SMBA primitive in sub-iteration $(k, 3)$.
        (Recall that all correct processes commit $z^{\star}(k)$.)
        Therefore, $z^{\star}(k)$ is decided from the SMBA primitive (line~\ref{line:smba}) in sub-iteration $(k, 3)$.
        This implies that all correct processes decode $v^{\star}(k)$ (line~\ref{line:reducer_decode}) and propose $v^{\star}(k)$ to the MBA primitive (line~\ref{line:lmba}).
        Hence, all correct processes quasi-decide $v^{\star}(k)$ (line~\ref{line:reducer_quasi_decide}), showing that \comm terminates even in this case.
    \end{compactitem}
\end{compactitem}
Finally, it is important to note that while correct processes are not guaranteed to decide in a non-good iteration, each such iteration will complete (leading to a new iteration) because: (1) each correct process waits for messages from at most $n - t = 3t + 1$ processes, (2) there are at least $n - t = 3t + 1$ correct processes, and (3) the distributed primitives used internally are guaranteed to terminate.

\smallskip
\noindent \emph{Quality.}
Let $P_1$ denote the probability that the first iteration is good; note that $P_1 \geq \frac{n - 2t}{n} = \frac{2t + 1}{4t + 1} \geq \frac{1}{2}$.
As seen in the analysis of termination, correct processes are guaranteed to quasi-decide a non-adversarial value (i.e., the leader's valid proposal) in a good iteration $k$.
Let $P_2$ denote the probability that the $\mathsf{Index}()$ request (line~\ref{line:random_index}) invoked in a good iteration $k$ that quasi-decides a non-adversarial value indeed selects a non-adversarial quasi-decided value; $P_2 \geq \frac{1}{3}$ as at most three values (some of which might be the same) can be quasi-decided.
Therefore, the probability that the decided value is non-adversarial is (at least) the probability that (1) the first iteration is good, and (2) the $\mathsf{Index}()$ request chooses a non-adversarial value.
Therefore, the probability that an adversarial value is decided is $1 - P_1 \cdot P_2 \leq \frac{5}{6} < 1$.

We emphasize that a non-adversarial value is \emph{guaranteed} to be quasi-decided in a good iteration, although at most two additional adversarial values may also be quasi-decided. 
Consequently, the probability that a non-adversarial value is quasi-decided (potentially alongside up to two adversarial values) equals the probability that the first iteration is good, which is $\frac{1}{2}$. 
To clarify, if processes are restricted to deciding a \emph{single} value from \comm, an adversarial value may indeed be decided with probability up to $\frac{5}{6}$. 
However, if we allow processes to decide a vector containing at most three distinct values, the probability that this vector includes a non-adversarial value is at least $\frac{1}{2}$.

\smallskip
\noindent \textbf{Complexity.}
Next, we provide a proof sketch of the following theorem.

\begin{theorem} [\comm's expected complexity] \label{theorem:reducer_complexity}
Given $n = 4t + 1$ and the existence of a collision-resistant hash function, the following holds for \comm (see \Cref{algorithm:reducer}) in the presence of a computationally bounded adversary:
\begin{compactitem}
    \item The expected message complexity is $O(n^2)$.

    \item The expected bit complexity is $O(n\ell + n^2 \kappa \log n)$.

    \item The expected time complexity is $O(1)$.
\end{compactitem}
\end{theorem}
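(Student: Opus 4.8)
\emph{Proof plan.} The plan is to decompose every execution into the dissemination phase (lines~\ref{line:reducer_propose}--\ref{line:dissemination_complete}) and the sequence of iterations that follow, to bound the cost of the dissemination phase and of a single iteration separately, and then to multiply the per-iteration bound by the expected number of iterations. Formally, I would write the total cost as $(\text{dissemination cost}) + \sum_{k \ge 1}\mathbf{1}[\text{iteration } k \text{ is reached}]\cdot C_k$, where $C_k$ denotes the cost incurred while executing iteration $k$; the goal is to bound $\mathbb{E}[C_k]$ uniformly in $k$, to show that $\Pr[\text{iteration } k \text{ is reached}]$ decays geometrically in $k$, and to finish with a Wald-type summation. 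Throughout, the cost of common-coin queries is omitted, as is conventional for the protocol core.

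First I would bound the cost of dissemination and of one iteration. Dissemination is a constant number of all-to-all rounds (\textsc{init}, \textsc{ack}, \textsc{done}, and two \textsc{finish} rounds), giving $O(n^2)$ messages and $O(1)$ time; since an \textsc{init} message carries an RS symbol of $O(\tfrac{\ell}{n}+\log n)$ bits, a digest of $O(\kappa)$ bits, and a witness of $O(\kappa\log n)$ bits while all other dissemination messages carry $O(\log n)\le O(\kappa)$ bits, dissemination exchanges $O(n\ell + n^2\kappa\log n)$ bits. Within an iteration, a correct process participates in the \textsc{stored}, \textsc{suggest}, and three \textsc{reconstruct} all-to-all rounds (each \textsc{stored}/\textsc{suggest} message carries $O(\kappa)$ bits, each \textsc{reconstruct} message an RS-symbol/digest/witness triple of $O(\tfrac{\ell}{n}+\kappa\log n)$ bits), and it invokes three \smba instances and three \mbasimple instances. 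Summing over senders and receivers, the explicit message rounds contribute $O(n^2)$ messages, $O(n\ell + n^2\kappa\log n)$ bits, and $O(1)$ time. For the sub-protocols I would invoke their stated complexities --- \smba: $O(n^2)$ messages, $O(n^2\kappa)$ bits, $O(1)$ expected time; \mbasimple: $O(n^2)$ messages, $O(n\ell+n^2\kappa\log n)$ bits, $O(1)$ expected time --- after checking that their input assumptions hold, in particular that \comm's reducing technique feeds each \smba instance only $O(1)$ distinct digests from correct processes, as required by \Cref{mod:async_smba}. This yields $\mathbb{E}[C_k]=O(n^2)$ messages and $O(n\ell + n^2\kappa\log n)$ bits, with iteration $k$ completing in $O(1)$ expected time (all \textbf{wait for} statements unblock, since $\ge n-t$ correct processes complete dissemination and hence broadcast every message of every iteration).

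Next I would bound the number $K$ of iterations that are reached. By \Cref{definition:good_iteration}, iteration $k$ is good iff $\mathsf{leader}(k)\in\dfirst$, and since $\mathsf{leader}(k)$ is a fresh uniformly random index and $|\dfirst|\ge n-2t=2t+1$, each iteration is independently good with probability $\ge \tfrac{2t+1}{4t+1}\ge \tfrac12$. The termination argument behind \Cref{theorem:reducer_correct} shows that all correct processes decide no later than the first good iteration, and --- by agreement of the underlying \mbasimple and the shared $\mathsf{Index}()$ value --- they all decide in the \emph{same} iteration, after which no correct process begins a later iteration. Hence $\Pr[K\ge k]\le 2^{-(k-1)}$ and $\mathbb{E}[K]\le 2$. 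The remaining subtlety is that $C_k$ is itself random (through the \smba/\mbasimple coins): I would condition on the information available before iteration $k$'s sub-protocol randomness is sampled; the event $\{K\ge k\}$ is measurable with respect to it, while the sub-protocol complexity bounds hold against \emph{every} adversary strategy, so the conditional expectation of $C_k$ is the per-iteration bound above. Then $\mathbb{E}\big[\sum_{k\ge 1}\mathbf{1}[K\ge k]C_k\big]=\sum_{k\ge1}\mathbb{E}\big[\mathbf{1}[K\ge k]\,\mathbb{E}[C_k\mid\cdot]\big]\le O(n^2)\sum_{k\ge1}2^{-(k-1)}=O(n^2)$ messages, and identically $O(n\ell+n^2\kappa\log n)$ bits; applying the same conditioning to per-iteration durations and adding the $O(1)$-time dissemination phase gives $O(1)$ expected time. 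Adding the dissemination cost completes the proof.

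I expect the main obstacle to be exactly this last point: making rigorous that conditioning on the adversarially-influenced event ``iteration $k$ is reached'' does not inflate the expected cost of that iteration's \smba/\mbasimple invocations. This is where it is essential that those complexity guarantees are worst-case over adversary strategies and that \comm's reduction enforces the $O(1)$-distinct-digests precondition of \smba; once that is in place, the geometric tail $\Pr[K\ge k]\le 2^{-(k-1)}$ makes all three sums converge. A secondary, bookkeeping-flavored point is confirming that correct processes cease producing messages once all have decided, which follows from their deciding in the same iteration and halting thereafter.
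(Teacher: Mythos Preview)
Your proposal is correct and follows essentially the same approach as the paper: decompose into dissemination plus iterations, bound each part separately, show iterations are good with probability $\ge\tfrac12$ so $\mathbb{E}[K]\in O(1)$, and sum. Your explicit conditioning argument for why the event $\{K\ge k\}$ does not inflate $\mathbb{E}[C_k]$ is more carefully stated than the paper's sketch; conversely, the paper's formal proof in the appendix additionally tracks an $O(n^2\log k)$ per-iteration overhead from tagging messages with the iteration index and handles it via $\mathbb{E}[K^2]\in O(1)$, a bookkeeping detail you omit but which is easily absorbed.
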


Let us informally prove the theorem.

\smallskip
\noindent \emph{Expected time complexity.}
As previously argued, \comm terminates in a good iteration.
Given that each iteration is good with a probability $P \geq \frac{1}{2}$, \comm terminates in expected two iterations.
As (1) each iteration takes $O(1)$ time to complete in expectation, and (2) the dissemination phase consists of $O(1)$ rounds of communication, the expected time complexity is $O(1)$.

\smallskip
\noindent \emph{Expected message complexity.}
Given that (1) each iteration exchanges $O(n^2)$ messages in expectation, and (2) \comm terminates in $O(1)$ iterations in expectation, correct processes exchange $O(n^2)$ messages in expectation through the iterations of \comm.
Finally, correct processes exchange $O(n^2)$ messages in expectation throughout the entire \comm protocol as the dissemination phase exchanges $O(n^2)$ messages as well.

\smallskip
\noindent \emph{Expected bit complexity.}
The dissemination phase exchanges $O(n\ell + n^2 \kappa \log n)$ bits.
Moreover, each iteration exchanges $O(n\ell + n^2 \kappa \log n)$ bits in expectation.
Given that there are $O(1)$ iterations in expectation, the expected bit complexity of \comm is
\begin{equation*}
    \underbrace{O(n\ell + n^2 \kappa \log n)}_{\text{dissemination phase}} 
    {}+{}
    O(1) \cdot \underbrace{O(n\ell + n^2 \kappa \log n)}_{\text{each iteration}}
    \subseteq O(n\ell + n^2 \kappa \log n). 
\end{equation*}

\section{\commplus: Pseudocode \& Proof Sketch} \label{section:reducer_plus}

This section introduces \commplus, our MVBA algorithm that comes $\epsilon$-close to optimal one-third resilience, for any fixed constant $\epsilon > 0$.
\commplus exchanges $O(n^2C^2)$ messages and $O\big( C^2 (n \ell + n^2 \kappa \log n) \big)$ bits, and terminates in $O(C^2)$ time, where $C = \lceil 1 + \frac{2}{\epsilon} \rceil^2$.
As a downside compared to \comm, \commplus requires hash functions modeled as a random oracle that uniformly distributes its outputs.
We start by presenting \commplus's pseudocode (\Cref{subsection:reducer_plus_implementation}).
Then, we give an informal analysis of \commplus's correctness and complexity (\Cref{subsection:reducer_plus_analysis}).
We conclude the section by describing how to substantially reduce the constant multiplicative factor in the complexity of \commplus (\Cref{subsection:reducing_constants}).
A formal proof of \commplus's correctness and complexity is relegated to \Cref{section:reducer_plus_proof}.

\subsection{Pseudocode} \label{subsection:reducer_plus_implementation}

The pseudocode of \commplus is given in \Cref{algorithm:reducer_plus}.

\smallskip
\noindent\textbf{Pseudocode description.}
Lines~\ref{line:reducer_plus_uses} to~\ref{line:reducer_plus_last} define the employed primitives, the rules governing the behavior of correct processes, as well as the constants and local variables.
When processes start executing the \commplus algorithm (line~\ref{line:reducert_plus_start}), they engage in the dissemination phase (line~\ref{line:dissemination_phase_plus}) that is identical to the dissemination phase of \comm (and HMVBA) except that values are treated as polynomials of degree $\epsilon t$.
Once the dissemination phase is completed, processes execute \commplus through \emph{iterations}.
Each iteration $k \in \mathbb{N}$ proceeds as follows (lines~\ref{line:iteration_start_plus}-\ref{line:reducer_decide_plus}):
\begin{compactenum}
    \item Processes elect the leader---$\mathsf{leader}(k)$---using a common coin (line~\ref{line:random_election_plus}).

    \item Processes determine their candidate digests through \textsc{stored} and \textsc{suggest} messages (lines~\ref{line:broadcast_stored_plus}-\ref{line:candidates_remove_plus}), as explained in \Cref{subsection:commplus_overview}.
    Formally, we say that a correct process $p_i$ \emph{commits} a digest $z$ in iteration $k$ if and only if $z$ belongs to the $\mathit{candidates}_i$ list when $p_i$ reaches line~\ref{line:for_plus} in iteration $k$.
    As argued in \Cref{subsection:commplus_overview} (and proven in \Cref{section:reducer_plus_proof}), given a good iteration $k$, (1) each correct process commits up to $\lceil 1 + \frac{2}{\epsilon} \rceil$ digests, and (2) there are at most $C = \lceil 1 + \frac{2}{\epsilon} \rceil^2$ different digests committed across all correct processes (i.e., $|\mathsf{committed}(k)| \leq C$).

    \item Processes aim to agree on a valid value through $C$ sequential probabilistic trials (lines~\ref{line:for_plus}-\ref{line:quasi_decide_plus}).
    Concretely, we divide iteration $k$ into $C$ sub-iterations $(k, 1)$, $(k, 2)$, ..., $(k, C)$.
    Each sub-iteration $(k, x \in [1, C])$ represents the $x$-th probabilistic trial within iteration $k$, and unfolds as follows.
    First, correct processes obtain a random value $\phi$ by utilizing a common coin.
    Then, each correct process $p_i$ adopts a digest by updating its $\mathit{adopted\_digest}_i$ variable:
    \begin{compactitem}
        \item If no digest is committed by $p_i$, process $p_i$ adopts the fixed default digest.

        \item Otherwise, process $p_i$ adopts a digest as follows: 
        (1) For each committed digest $z$, $p_i$ appends $\mathsf{hash}(z, \phi)$ to the $\mathit{hashed\_candidates}_i$ list.
        (2) Process $p_i$ sorts $\mathit{hashed\_candidates}_i$ in the lexicographic order.
        (3) Finally, process $p_i$ adopts the committed digest $z'$ that, along with $\phi$, produced the smallest element of the sorted $\mathit{hashed\_candidates}_i$ list.
    \end{compactitem}
    We underline that correct processes might adopt different digests.
    
    Once processes adopt their digests, they start the R\&A mechanism.
    Specifically, processes disseminate the RS symbols received from $\mathsf{leader}(k)$ during the dissemination phase.
    Each correct process $p_i$, if possible, decodes some value $r_i$ using the received RS symbols that correspond to its adopted digest; if it is impossible to decode any value, process $p_i$ sets $r_i$ to its proposal $v_i$.
    Finally, each correct process $p_i$ proposes value $r_i$ to the MBA primitive.
    If the decided value $v$ is valid, process $p_i$ quasi-decides $v$.

    \item The final phase (lines~\ref{line:check_quasi_decisions_plus}-\ref{line:reducer_decide_plus}) of the iteration ensures the quality property.
    After finishing all $C$ sub-iterations, processes check if any value was quasi-decided.
    If so, processes obtain a random integer $I \in [1, C]$, which is then used to select one of the previously quasi-decided values for the decision.
\end{compactenum}

\begin{algorithm}[tbp]
\caption{\commplus: Pseudocode (for process $p_i$)}
\label{algorithm:reducer_plus}
\begin{algorithmic}[1]
\scriptsize

\State \textbf{Uses:} \label{line:reducer_plus_uses}
\State \hskip2em \textcolor{jnSUDigitalRedLight}{\(\triangleright\) $\mbasimple$ exchanges $O(n^2)$ messages and $O(n\ell + n^2 \kappa \log n)$ bits and terminates in $O(1)$ time}
\State \hskip2em MBA algorithm $\mbasimple$ with $\valuemba = \valuemvba$, \textbf{instances} $\mathcal{MBA}[k][x]$, $\forall k, x \in \mathbb{N}$

\medskip
\State \textbf{Rules:}
\State \hskip2em - Any message with an invalid witness is ignored.

\State \hskip2em - Only one \textsc{init} message is processed per process.

\medskip
\State \textbf{Constants:}
\State \hskip2em $\mathsf{Digest}$ $\mathit{default}$%
        \BlueComment{default digest} \label{line:reducer_fixed_default_plus}

\State \hskip2em $\mathsf{Integer}$ $C = \lceil 1 + 2/\epsilon \rceil^2$ \BlueComment{constant $C$ is used by all processes} \label{line:constant_c}

\medskip
\State \textbf{Local variables:}
\State \hskip2em $\valuemvba$ $v_i \gets p_i$'s proposal
\State \hskip2em $\mathsf{Boolean}$ $\mathit{dissemination\_completed}_i \gets \mathit{false}$
\State \hskip2em $\mathsf{Map}(\mathsf{Process} \to [\mathsf{RS}, \mathsf{Digest}, \mathsf{Witness}])$ $\mathit{symbols}_i \gets \text{empty map}$
\State \hskip2em $\mathsf{List}(\mathsf{Digest})$ $\mathit{candidates}_i \gets \text{empty list}$\BlueComment{will be reset every iteration}
\State \hskip2em $\mathsf{List}(\mathsf{Hash})$ $\mathit{hashed\_candidates}_i \gets \text{empty list}$\BlueComment{will be reset every iteration}
\State \hskip2em $\mathsf{Digest}$ $\mathit{adopted\_digest}_i \gets \bot$
\State \hskip2em $\mathsf{List}(\valuemvba)$ $\mathit{quasi\_decisions}_i \gets$ empty list \label{line:reducer_plus_last}

\medskip
\State \textbf{upon} $\mathsf{propose}(\valuemvba \text{ } v_i)$: \BlueComment{start of the algorithm} \label{line:reducert_plus_start}
\State \hskip2em Execute the dissemination phase identical to that of \comm (lines~\ref{line:reducer_propose}-\ref{line:dissemination_complete} of \Cref{algorithm:reducer}) except that values are treated as polynomials \hphantom{|||||L}of degree $\epsilon t$ \label{line:dissemination_phase_plus}

\medskip
\State \textbf{upon} receiving $\langle \textsc{finish} \rangle$ from $n - t$ processes (for the first time):\label{line:receive_quorum_finish_plus}
\State \hskip2em $\mathit{dissemination\_completed}_i \gets \mathit{true}$\label{line:dissemination_complete_plus} \BlueComment{dissemination phase completes}

\State \hskip2em \textbf{for each} $k = 1, 2, ...$: \label{line:iteration_start_plus}
\State \hskip4em $\mathit{candidates}_i \gets \text{empty list}$; $\mathit{hashed\_candidates}_i \gets \text{empty list}$\BlueComment{reset the candidates} \label{line:reset_candidates_plus}
\State \hskip4em $\mathsf{Process}$ $\mathsf{leader}(k) \gets \mathsf{Election()}$\BlueComment{elect a random leader}\label{line:random_election_plus}
\State \hskip4em \textbf{broadcast} $\langle \textsc{stored}, k, \mathit{symbols}_i[\mathsf{leader}(k)].\mathsf{digest()} \rangle$ \BlueComment{disseminate the leader's digest}\label{line:broadcast_stored_plus}
\State \hskip4em \textbf{wait for} $n - t = (2 + \epsilon) t + 1$ \textsc{stored} messages for iteration $k$\label{line:wait_for_stored_messages_plus}

\State \hskip4em \textbf{for each} $\mathsf{Digest}$ $z$ included in $n - 3t = \epsilon t + 1$ received \textsc{stored} messages:\label{line:check_stored_plus}
            \State \hskip6em $\mathit{candidates}_i.\mathsf{append}(z)$ \label{line:reducer_candidates_append_plus}

        \State \hskip4em \textbf{broadcast} $\langle \textsc{suggest}, k, \mathit{candidates}_i \rangle$\label{line:broadcast_suggest_plus} \BlueComment{disseminate $p_i$'s candidates}
        \State \hskip4em \textbf{wait for} $n - t = (2 + \epsilon)t + 1$ \textsc{suggest} messages for iteration $k$\label{line:wait_for_suggest_messages_plus}

        \State \hskip4em \textbf{for each} $\mathsf{Digest} \text{ } z \in \mathit{candidates}_i$:
            \State \hskip6em \textbf{if} $z$ is not included in $n - 2t = (1 + \epsilon)t + 1$ received \textsc{suggest} messages:\label{line:rule_precommit_plus}
                \State \hskip8em $\mathit{candidates}_i.\mathsf{remove}(z)$ \label{line:candidates_remove_plus}
        \State \hskip4em \textbf{for each} $x = 1, 2, ..., C$: \label{line:for_plus}
            \State \hskip6em $\mathsf{Integer}$ $\phi \gets \mathsf{Noise()}$\BlueComment{obtain a random $\kappa$-bit value $\phi$}\label{line:random_noise_plus}
            \State \hskip6em \textcolor{jnSUDigitalRedLight}{\(\triangleright\) if no digest is committed, then adopt $\mathit{default}$}
            \State \hskip6em \textbf{if} $\mathit{candidates}_i.\mathsf{size} = 0$:  $\mathit{adopted\_digest}_i \gets \mathit{default}$\label{line:acc_proposal_bot_plus}
            \State \hskip6em \textbf{else:}
                \State \hskip8em \textbf{for each} $j = 1, 2, ..., \mathit{candidates}_i.\mathsf{size}$: \label{line:start_hash_finding}
                    \State \hskip10em $\mathit{hashed\_candidates}_i[j] = \mathsf{hash}(\mathit{candidates}_i[j], \phi)$ \label{line:hash_candidates_plus}
                \State \hskip8em \textcolor{jnSUDigitalRedLight}{$\triangleright$ find the smallest hashed candidate and adopt the associated digest}
                \State \hskip8em Sort $\mathit{hashed\_candidates}_i$ in the lexicographic order \label{line:sort_hashed}
                \State \hskip8em $\mathsf{Hash}$ $\mathit{smallest}_i \gets \mathit{hashed\_candidates}_i[1]$ \label{line:find_smallest_plus}
                \State \hskip8em Let $z' \in \mathit{candidates}_i$ be the digest such that $\mathit{smallest}_i = \mathsf{hash}(z', \phi)$ \label{line:find_C_plus}
                \State \hskip8em $\mathit{adopted\_digest}_i \gets z'$ \label{line:update_acc_proposal_2_plus}
                    \smallskip
                    \State \hskip6em \textcolor{jnSUDigitalRedLight}{$\triangleright$ Reconstruct \& Agree} \

                    \State \hskip6em \textbf{broadcast} $\langle \textsc{reconstruct}, k, x, \mathit{symbols}_i[\mathsf{leader}(k)] \rangle$ \label{line:reconstruct_broadcast_reducer_plus}
                \State \hskip6em \textbf{wait for} $n - t = (2 + \epsilon)t + 1$ \textsc{reconstruct} messages for sub-iteration $(k, x)$\label{line:wait_for_reconstruct_messages_plus}

                \State \hskip6em $\mathsf{Set}(\mathsf{RS})$ $S_i \gets$ received RS symbols with valid witnesses for $\mathit{adopted\_digest}_i$
                \State \hskip6em \textcolor{jnSUDigitalRedLight}{\(\triangleright\) if a value can be decoded, set $r_i$ to the decoded value}
                \State \hskip6em \textbf{if} $|S_i| \geq \epsilon t + 1$: $\valuemvba$ $r_i \gets \mathsf{decode}(S_i)$ \label{line:reducer_plus_decode}
                \State \hskip6em \textbf{else:} $\valuemvba$ $r_i \gets v_i$ \BlueComment{if not, set $r_i$ to $p_i$'s proposal} \label{line:reducer_plus_own_proposal}

                \State \hskip6em $\valuemvba \cup \{ \bot_{\mathsf{MBA}} \}$ $v \gets \mathcal{MBA}[k][x].\mathsf{propose}(r_i)$\label{line:lmba_plus} \BlueComment{propose $r_i$}
                \State \hskip6em \textbf{if} $\mathsf{valid}(v) = \mathit{true}$: $\mathit{quasi\_decisions}_i.\mathsf{append}(v)$ \BlueComment{quasi-decide $v$} \label{line:quasi_decide_plus}
        \smallskip
        \State \hskip4em \textbf{if} $\mathit{quasi\_decisions}_i.\mathsf{size} > 0$ and $p_i$ has not previously decided: \label{line:check_quasi_decisions_plus}
            \State \hskip6em $\mathsf{Integer}$ $I \gets \mathsf{Index}()$ \label{line:random_index_plus} \BlueComment{obtain a random integer $I$ in the $[1, C]$ range}
            \State \hskip6em \textcolor{jnSUDigitalRedLight}{\(\triangleright\) for quality}
            \State \hskip6em \textbf{trigger} $\mathsf{decide}\big( \mathit{quasi\_decisions}_i[(I \text{ mod } \mathit{quasi\_decisions}_i.\mathsf{size}) + 1] \big)$ \label{line:reducer_decide_plus}

\end{algorithmic}
\end{algorithm}

\subsection{Proof Sketch} \label{subsection:reducer_plus_analysis}

This subsection gives a proof sketch of \commplus's correctness and complexity.
Recall that a formal proof is relegated to \Cref{section:reducer_plus_proof}.

\smallskip
\noindent\textbf{Correctness.}
We start by providing a proof sketch of the following theorem.

\begin{theorem} [\commplus is correct] \label{theorem:reducer_plus_correct}
Given $n = (3 + \epsilon)t + 1$, for any fixed constant $\epsilon > 0$, and the existence of a hash function modeled as a random oracle, \commplus (see \Cref{algorithm:reducer_plus}) is a correct implementation of the MVBA primitive in the presence of a computationally bounded adversary.
\end{theorem}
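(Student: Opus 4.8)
I would establish the six MVBA properties of \Cref{mod:mvba} one by one, reusing the arguments from the proof sketch of \Cref{theorem:reducer_correct} wherever they transfer verbatim. \emph{External validity} and \emph{Integrity} are immediate: a value enters $\mathit{quasi\_decisions}_i$ only after passing the $\mathsf{valid}(\cdot)$ check on line~\ref{line:quasi_decide_plus}, and a process decides only if it ``has not previously decided'' (line~\ref{line:check_quasi_decisions_plus}). \emph{Agreement} follows from the agreement property of $\mbasimple$ invoked in each sub-iteration (line~\ref{line:lmba_plus}): all correct processes obtain the same value $v$ from $\mathcal{MBA}[k][x]$ and hence append it (or not) to $\mathit{quasi\_decisions}$ identically, so they hold identical $\mathit{quasi\_decisions}$ lists; since $\mathsf{Index}()$ (line~\ref{line:random_index_plus}) returns the same integer to all, all correct processes decide the same value, and in the same iteration (the first one in which $\mathit{quasi\_decisions}$ becomes non-empty, which is the same for all). \emph{Weak validity} follows from the justification of $\mbasimple$: if a correct process decides $v$, some correct $p_j$ proposed $v=r_j$, and $r_j$ is either the value $\mathsf{decode}(\cdot)$ returns on RS symbols carrying valid witnesses against $p_j$'s adopted digest---which, when all processes are correct, are genuine symbols of $\mathsf{leader}(k)$'s proposal---or $p_j$'s own proposal; either way $v$ is a correct process's input.

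The crux is \emph{Termination}, which I would prove by lower-bounding, by a positive constant, the probability that all correct processes decide in a given iteration; since iterations use fresh independent coins, almost-sure termination follows. Instantiating the notion of good iteration of \Cref{definition:good_iteration} at $n=(3+\epsilon)t+1$, an iteration is good with probability at least $\frac{n-2t}{n}=\frac{(1+\epsilon)t+1}{(3+\epsilon)t+1}$, a positive constant. Fixing a good iteration $k$ with $\mathsf{leader}(k)\in\dfirst$, valid proposal $v^\star(k)$, and digest $z^\star(k)$, I would first re-derive from the quorum arithmetic at $n=(3+\epsilon)t+1$ the two structural facts of \Cref{subsection:commplus_overview}: (i)~every correct process commits $z^\star(k)$, because $z^\star(k)$ is stored by at least $n-2t$ correct processes and therefore occurs in at least $n-3t=\epsilon t+1$ of any $n-t$ \textsc{stored} messages and---being then suggested by every correct process---in at least $n-2t=(1+\epsilon)t+1$ of any $n-t$ \textsc{suggest} messages; and (ii)~$|\mathsf{committed}(k)|\le C$, because each correct process holds at most $\lceil 3/\epsilon\rceil$ candidates and each committed digest is suggested by at least $n-3t=\epsilon t+1$ correct processes, so $|\mathsf{committed}(k)|\le\big((3+\epsilon)t+1\big)\lceil 3/\epsilon\rceil/(\epsilon t+1)\le C$ (the last inequality being the arithmetic lemma of \Cref{section:reducer_plus_proof}). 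I would then record the deterministic consequence: if in some trial $(k,x)$ every correct process adopts $z^\star(k)$, iteration $k$ terminates---quorum intersection over the $n-t$ \textsc{reconstruct} messages gives each correct process at least $\epsilon t+1$ RS symbols with valid witnesses against $z^\star(k)$, which are genuine by the collision resistance of the random-oracle hash (hence by accumulator security), so $\mathsf{decode}$ returns $v^\star(k)$; all correct processes propose $v^\star(k)$ to $\mathcal{MBA}[k][x]$ and, by strong unanimity, decide the valid value $v^\star(k)$ and quasi-decide it.

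It remains to show that, in a good iteration, some trial has all correct processes adopt $z^\star(k)$ with probability at least $1/C$; this is the main obstacle. The difficulty is that the adversary can \emph{rig} a trial: once the coin $\phi=\mathsf{Noise}()$ of sub-iteration $(k,x)$ is revealed, it may corrupt $\mathsf{leader}(k)$, introduce an adversarial digest $z_A$ with $\mathsf{hash}(z_A,\phi)<\mathsf{hash}(z^\star(k),\phi)$, and schedule the \textsc{stored}/\textsc{suggest} deliveries so that a lagging correct process commits and then adopts $z_A$. The key is that each successful rigging forces some correct process to commit a digest not previously committed by any correct process, so by (i)--(ii) the adversary introduces at most $C-1$ such digests over the whole iteration; and because the trials are \emph{sequential}---$\phi_{k,x+1}$ is queried only after $\mathcal{MBA}[k][x]$ returns, and disclosed only after $t+1$ queries, at least one by a correct process---the sets of digests committed by correct processes ``as of the moment'' each coin is revealed form a nested chain that always contains $z^\star(k)$. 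A pigeonhole over the $C$ trials then isolates a trial $x^\star$ whose outcome is governed by a set of at most $C$ committed digests that is independent of $\phi_{k,x^\star}$; modeling $\mathsf{hash}(\cdot,\phi_{k,x^\star})$ as a random oracle on the fresh $\kappa$-bit string $\phi_{k,x^\star}$, the hashes of these digests are independent and uniform except with negligible probability, so $z^\star(k)$ attains the strictly smallest hash with probability at least $1/C$, in which case every correct process adopts $z^\star(k)$ in trial $x^\star$ and iteration $k$ terminates. Making the ``moment a coin is revealed'' and the ``committed set observed then'' precise, and handling the conditioning so that the randomness of the index $x^\star$ does not erode the $1/C$ bound, is where the real work lies; I expect this to be the hardest part of the proof. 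Finally, \emph{Quality} follows by composing three constant-probability events: the first iteration is good (probability $P_1$, a positive constant), that iteration terminates with the leader's (non-adversarial) proposal quasi-decided (probability at least $1/C$), and $\mathsf{Index}()$ then selects that quasi-decision (probability at least $1/C$, since at most $C$ values are quasi-decided), so an adversarial value is decided with probability at most $1-P_1/C^2<1$.
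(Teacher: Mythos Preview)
Your plan matches the paper's proof almost exactly: the same decomposition into the six properties, the same quorum arithmetic for (i) and (ii), the same nested-chain/pigeonhole over the $C$ trials, and the same random-oracle argument that $z^\star(k)$ wins with probability $\geq 1/C$ in a fair trial. Your identification of the difficulty (``making the moment precise'') is also on target.

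One point deserves sharpening, because it is where the paper does real work that you only gesture at. Your ``deterministic consequence'' says: if every correct process adopts $z^\star(k)$ in trial $(k,x)$, then all propose $v^\star(k)$ and strong unanimity of $\mathcal{MBA}[k][x]$ gives the decision. But the fair trial does \emph{not} guarantee that every correct process adopts $z^\star(k)$. The paper's pigeonhole uses two snapshots per trial, $\mathsf{start}(k,x)$ (committed set when the first correct process queries $\mathsf{Noise}$) and $\mathsf{end}(k,x)$ (committed set when the first correct process decides from $\mathcal{MBA}[k][x]$), and the fair trial is one with $\mathsf{start}(k,x)=\mathsf{end}(k,x)$. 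This only constrains correct processes that propose to $\mathcal{MBA}[k][x]$ \emph{before the first decision}; a slow correct process may commit later, after the adversary has seen $\phi$, and adopt something else. The paper closes this gap with a separate proposition about MBA: if all correct processes that proposed before some correct process decides did so with the same value $v$, then that decision must be $v$ (proved by an indistinguishability/extension argument---otherwise one could extend the execution so that all correct processes propose $v$, contradicting strong unanimity). Agreement of MBA then propagates $v^\star(k)$ to the late processes. Your single-snapshot chain (``as of the moment each coin is revealed'') is not quite enough to run the pigeonhole, and strong unanimity alone is not enough to conclude; you will need both the two-snapshot formulation and this MBA extension lemma.
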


Since the analysis of \commplus's agreement, weak validity, external validity, and integrity follows the arguments made in the analysis of \comm (see \Cref{subsection:reducer_analysis}), we primarily focus on \commplus's termination and quality.

\smallskip
\noindent\emph{Termination.}
\commplus ensures termination with constant $\frac{1}{C}$ probability in any good iteration (see \Cref{definition:good_iteration}).
We now provide an informal justification for this statement.
Let $k$ be any good iteration.
Let $v^{\star}(k)$ denote the valid proposal of $\mathsf{leader}(k)$ and let $z^{\star}(k)$ denote the digest of $v^{\star}(k)$.
As discussed in \Cref{subsection:commplus_overview}, the adversary can only inject $C - 1$ adversarial digests in iteration $k$ (as $z^{\star}(k)$ is committed by every correct process and $|\mathsf{committed}(k)| \leq C$), resulting in at least one fair probabilistic trial that the adversary cannot rig.

More specifically, the following holds for some trial $\mathcal{T}$: $\mathsf{start}(\mathcal{T}) = \mathsf{end}(\mathcal{T})$, where $\mathsf{start}(\mathcal{T})$ (resp., $\mathsf{end}(\mathcal{T})$) denotes the set of all digests committed by any correct process before (in global time) the first correct process starts (resp., ends) trial $\mathcal{T}$.
Note that if a correct process adopts the ``good'' digest $z^{\star}(k)$, the process rebuilds value $v^{\star}(k)$ (line~\ref{line:reducer_plus_decode}) as it necessarily receives $n - 3t = \epsilon t + 1$ \textsc{reconstruct} messages for digest $z^{\star}(k)$.
Therefore, the probability that all correct processes adopt the ``good'' digest $z^{\star}(k)$ and input $v^{\star}(k)$ to the MBA primitive (of trial $\mathcal{T}$) before the first correct process ends trial $\mathcal{T}$---which is enough to guarantee that $v^{\star}(k)$ is decided due to the strong unanimity property of the MBA primitive---is given by
\begin{equation*}
    \frac{1}{|\mathsf{start}(\mathcal{T})|} = \frac{1}{|\mathsf{end}(\mathcal{T})|} \geq \frac{1}{C}
\end{equation*}
since $\mathsf{start}(\mathcal{T}) = \mathsf{end}(\mathcal{T}) \subseteq \mathsf{committed}(k)$ and $|\mathsf{committed}(k)| \leq C$.
If this indeed happens, \commplus ensures that all correct processes (quasi-)decide $v^{\star}(k)$ in trial $\mathcal{T}$ and terminate.
A formal argument for \commplus's termination can be found in \Cref{section:reducer_plus_proof}.

\smallskip
\noindent \emph{Quality.}
If (1) the first iteration is good, (2) correct processes quasi-decide the valid non-adversarial value $v^{\star}(1)$, and (3) the $\mathsf{Index}()$ request (line~\ref{line:random_index_plus}) selects $v^{\star}(1)$ as the final decision, a non-adversarial value is indeed decided from \commplus.
Hence, this occurs with a probability of:
\begin{equation*}
    P \geq \frac{(1 + \epsilon)t + 1}{(3 + \epsilon)t + 1} \cdot \frac{1}{C} \cdot \frac{1}{C} \approx \frac{1}{3C^2}.
\end{equation*}
As a result, the probability that an adversarial value is decided is $1 - P \leq 1 - \frac{1}{3C^2} < 1$.

A remark analogous to the one regarding the quality of \comm\ applies here.  
Specifically, we emphasize that a non-adversarial value is quasi-decided from \commplus with probability $\frac{1}{3} \cdot \frac{1}{C}$: this occurs if the first iteration is good (probability $\frac{1}{3}$) and the non-adversarial value $v^{\star}(1)$ is quasi-decided in the first iteration (probability $\frac{1}{C}$).  
Hence, if we allow processes to decide a vector (rather than a single value) from \commplus, the probability that this vector includes a non-adversarial value is at least $\frac{1}{3C} \gg \frac{1}{3C^2}$.

\smallskip
\noindent\textbf{Complexity.}
Next, we provide a proof sketch of the following theorem.

\begin{theorem} [\commplus's expected complexity] \label{theorem:reducer_plus_complexity}
Given $n = (3 + \epsilon)t + 1$, for any fixed constant $\epsilon > 0$, and the existence of a hash function modeled as a random oracle, the following holds for \commplus (see \Cref{algorithm:reducer_plus}) in the presence of a computationally bounded adversary:
\begin{compactitem}
    \item The expected message complexity is $O(n^2 C^2)$.

    \item The expected bit complexity is $O\big( C^2 (n\ell + n^2 \kappa \log n) \big)$.

    \item The expected time complexity is $O(C^2)$.
\end{compactitem}
\end{theorem}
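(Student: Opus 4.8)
The plan is to decompose the total cost into a one-time dissemination phase plus a random number $N$ of iterations, bound $\mathbb{E}[N]=O(C)$, bound the (conditional) expected cost of a single iteration, and then glue the two together with a Wald-type stopping-time argument. This mirrors the three reported quantities — $O(C^2n^2)$ messages, $O\big(C^2(n\ell+n^2\kappa\log n)\big)$ bits, $O(C^2)$ time — each of which is of the form $O(C)$ iterations $\times$ $O(C)$ cost per iteration.

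First I would bound the expected number of iterations. Using uniformity and independence of the per-iteration $\mathsf{Election}()$ coin, iteration $k$ is good with probability $|\dfirst|/n\ge\frac{n-2t}{n}=\frac{(1+\epsilon)t+1}{(3+\epsilon)t+1}>\frac14$, conditionally on the history before iteration $k$. Invoking the termination argument already established for \Cref{theorem:reducer_plus_correct} — there is a trial $\mathcal{T}$ with $\mathsf{start}(\mathcal{T})=\mathsf{end}(\mathcal{T})\subseteq\mathsf{committed}(k)$, $|\mathsf{committed}(k)|\le C$, and a fresh $\mathsf{Noise}()$ coin — a good iteration makes all correct processes decide with probability $\ge\frac1C$. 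Hence, conditionally on not yet having terminated, \commplus{} terminates in iteration $k$ with probability $\ge\frac{1}{4C}$, and a standard stopping-time bound gives $\mathbb{E}[N]\le 4C=O(C)$. I would also note here that $N$ is well-defined and common to all correct processes: by the agreement and termination properties of $\mbasimple$, every correct process completes all $C$ sub-iterations of every iteration $\le N$, quasi-decides exactly the same values, and — thanks to the shared $\mathsf{Index}()$ coin — decides the same value, so no correct process keeps incurring cost past iteration $N$.

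Next I would bound the cost of the dissemination phase and of a single iteration. The dissemination phase is HMVBA's with a degree-$\epsilon t$ encoding, so correct processes send $O(n^2)$ messages using $O(n\ell+n^2\kappa\log n)$ bits (an RS symbol is $O(\tfrac{\ell}{\epsilon n}+\log n)$ bits, a witness $O(\kappa\log n)$ bits, and $\tfrac1\epsilon$ is a constant) in $O(1)$ rounds. Within an iteration, the \textsc{stored} and \textsc{suggest} all-to-all exchanges cost $O(n^2)$ messages and $O(n^2\kappa)$ bits (a correct process suggests at most $\lceil\tfrac3\epsilon\rceil$ digests) in $O(1)$ rounds; then each of the $C$ sub-iterations performs one \textsc{reconstruct} all-to-all exchange ($O(n^2)$ messages, $O(n\ell+n^2\kappa\log n)$ bits, $O(1)$ rounds) and one $\mbasimple$ instance ($O(n^2)$ expected messages, $O(n\ell+n^2\kappa\log n)$ expected bits, $O(1)$ expected rounds, \emph{unconditionally}). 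Summing over the $C$ sub-iterations, one iteration costs $O(Cn^2)$ expected messages, $O\big(C(n\ell+n^2\kappa\log n)\big)$ expected bits, and $O(C)$ expected rounds, with the same bounds for the conditional expectation given the pre-iteration history.

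Finally I would combine the two with a stopping-time (Wald-type) argument. Writing the total message count as $M_{\mathrm{diss}}+\sum_{k\ge1}M_k\mathbf{1}[N\ge k]$ and using that $\{N\ge k\}$ is determined by the history before iteration $k$, one gets $\mathbb{E}\big[M_k\mathbf{1}[N\ge k]\big]=\mathbb{E}\big[\mathbf{1}[N\ge k]\,\mathbb{E}[M_k\mid\text{history}]\big]\le O(Cn^2)\Pr[N\ge k]$, whence $\sum_k\mathbb{E}\big[M_k\mathbf{1}[N\ge k]\big]\le O(Cn^2)\,\mathbb{E}[N]=O(C^2n^2)$; the identical computation with the per-iteration bit bound and round bound yields expected bit complexity $O\big(C^2(n\ell+n^2\kappa\log n)\big)$ and expected time complexity $O(C^2)$. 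The hard part will be exactly this gluing step, together with the conditional bound in Step 1: because the per-iteration cost is itself random (the embedded $\mbasimple$ instances) and correlated with $N$, one cannot simply multiply expectations, and the argument hinges on the stated expected cost of each $\mbasimple$ invocation holding \emph{unconditionally} — regardless of the adversary's adaptive corruptions and of whether termination is imminent. A minor bookkeeping point is folding the $\tfrac1\epsilon$ factors (candidate count, RS-symbol length) into the $\epsilon$-dependent constants, which is harmless since $\tfrac1\epsilon=O(C)$.
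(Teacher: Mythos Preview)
Your proposal is correct and follows essentially the same decomposition as the paper: bound the dissemination phase, bound the per-iteration cost, bound the expected number of iterations by $O(C)$, and combine. The one genuine methodological difference is how the combination is done. You use a Wald-type stopping-time argument, relying on the fact that $\{N\ge k\}$ is measurable with respect to the pre-iteration history and that the conditional expected per-iteration cost is uniformly bounded; this yields the product $O(C)\cdot O(C\,\cdot\,\text{per-round cost})$ directly. The paper instead writes the total cost explicitly as a sum over $k=1,\dots,K$, takes expectations term by term, and---because it tracks an $O(n^2\log k)$ per-iteration overhead for tagging each message with its iteration index---ends up needing $\mathbb{E}[K^2]\in O(C^2)$ (from the geometric tail of $K$) to control the $\sum_{k\le K} n^2 k$ piece. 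Your Wald argument is cleaner for the message and time bounds; the paper's explicit summation is what lets it absorb the $\log k$ tagging cost, which you do not mention. That omission is harmless for the stated asymptotic bound (the tagging term is dominated by $n^2\kappa\log n$ after taking expectations), but it is the one bookkeeping item you would need to add for a complete bit-complexity proof. Your remark that the $\mbasimple$ cost bounds must hold \emph{unconditionally} on the adaptive history is a point the paper leaves implicit.
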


Let us now informally prove the theorem.

\smallskip
\noindent \emph{Expected time complexity.}
As previously discussed, \commplus terminates in a good iteration with at least $\frac{1}{C}$ probability.
Given that each iteration is good with a probability of $P \geq \frac{(1 + \epsilon)t + 1}{(3 + \epsilon)t + 1} \approx \frac{1}{3}$, \commplus terminates in expected $3C$ iterations.
As (1) each iteration takes $O(C)$ time to complete in expectation, and (2) the dissemination phase consists of $O(1)$ rounds of communication, the expected time complexity of \commplus is $O(C^2)$.

\smallskip
\noindent \emph{Expected message complexity.}
As (1) each iteration exchanges $O(n^2 C)$ messages in expectation, and (2) \commplus terminates in $O(C)$ iterations in expectation, correct processes exchange $O(n^2 C^2)$ messages in expectation through the iterations of \commplus.
Finally, correct processes exchange $O(n^2 C^2)$ messages in expectation throughout the entire \commplus protocol as the dissemination phase exchanges only $O(n^2)$ messages.

\smallskip
\noindent \emph{Expected bit complexity.}
The dissemination phase exchanges $O(n\ell + n^2 \kappa \log n)$ bits.
Moreover, each iteration exchanges $O\big( C( n\ell + n^2 \kappa \log n) \big)$ bits in expectation.
Given that there are $O(C)$ iterations in expectation, the expected bit complexity of \commplus is
\begin{equation*}
    \underbrace{O(n\ell + n^2 \kappa \log n)}_{\text{dissemination phase}} 
    {}+{}
    O(C) \cdot \underbrace{O\big( C (n\ell + n^2 \kappa \log n) \big)}_{\text{each iteration}} 
    \subseteq O\big( C^2 (n\ell + n^2 \kappa \log n) \big). 
\end{equation*}

\subsection{Reducing the Constant Multiplicative Overhead} \label{subsection:reducing_constants}

Thus far, our focus has been on establishing the feasibility of achieving near-optimally resilient MVBA with asymptotically optimal complexity, favoring a simple and transparent construction over fine-tuning constants. 
Nevertheless, a straightforward algorithmic modification allows us to reduce the constant multiplicative overhead in \commplus\ from $O(1/\epsilon^{4})$ to $O(1/\epsilon^{3})$. 
We now sketch the main intuition behind this improvement.

\smallskip
\noindent\textbf{Source of the $C^{2}$ overhead.}  
Recall that $C$ denotes the maximum number of distinct digests that correct processes may commit during a single good iteration. 
The constant multiplicative overhead of $C^{2}$ in \commplus\ arises from two factors. 
First, the protocol requires $O(C)$ iterations in expectation before termination. 
Second, each iteration contains $C$ trials. 
Having $C$ trials guarantees that, given that up to $C$ different digests can be committed, every good iteration includes at least one fair trial, and thus termination occurs with probability $1/C$ in that iteration. 
Since $C \in O(1/\epsilon^{2})$, the overall constant multiplicative overhead is $O(C^{2}) \subseteq O(1/\epsilon^{4})$.

\smallskip
\noindent\textbf{Reducing the number of trials per iteration.}
The key observation is that $C \in O(1/\epsilon^{2})$ trials per iteration are not necessary; in fact, only $O(1/\epsilon)$ trials suffice to guarantee the existence of a fair trial in every good iteration. 
To see this, consider a good iteration $k$ and a trial $x$ which the adversary attempts to rig by introducing a new adversarial digest $z_A$ (i.e., by forcing a correct process to commit $z_A$). 
For a correct process to commit $z_A$, the adversary must ensure that at least $\epsilon \cdot t + 1$ correct processes suggest $z_A$ in iteration $k$. 
This follows because a correct process commits a digest only after receiving at least $(1+\epsilon)\cdot t + 1$ suggestions for it (line~\ref{line:rule_precommit_plus}), of which at most $t$ can originate from faulty processes.
Hence, injecting each new adversarial digest ``consumes'' at least $\epsilon \cdot t + 1$ correct processes. 
Moreover, once a correct process has participated in the suggestion phase of iteration $k$, the digests it suggests are fixed and cannot subsequently be ``redirected'' to support other adversarial digests within the same iteration.
Since at most $t$ correct processes remain outside the suggestion phase of iteration $k$ at the start of the first trial, the adversary can introduce at most 
\[
  \frac{t}{\epsilon \cdot t + 1} \approx \frac{1}{\epsilon}
\] 
distinct adversarial digests across all trials of iteration $k$.
Therefore, $O(1/\epsilon)$ trials per iteration suffice to ensure that a good iteration always contains at least one fair trial.

\smallskip
\noindent\textbf{Introducing a precommit phase.}  
To make the above idea work, we introduce a small change to \commplus: an all-to-all round immediately following the suggestion phase, which we call the \emph{precommit phase}.  
During this phase, each correct process disseminates, via a \textsc{precommit} message, the digests that ``survived'' the suggestion phase; we say that these digests are \emph{precommitted} by the process.  
Importantly, only $C$ different digests can be precommitted by correct processes in any good iteration.  
A correct process commits a digest $z$ after the precommit phase if it receives $(1+\epsilon)\cdot t + 1$ \textsc{precommit} messages for $z$.  
Consequently, given that at most $C$ distinct digests can be precommitted, at most $C$ distinct digests can be committed in a good iteration.

This additional precommit phase is essential: without it, the adversary could take a digest $z$ already suggested by $\epsilon \cdot t + 1$ correct processes and, by using the $t$ Byzantine processes, force $z$ to be committed without ``sacrificing'' any additional correct processes.  
With the precommit phase in place, we obtain the following guarantee. 
In any trial of a good iteration, either the injected adversarial digest was already precommitted by correct processes---so the trial remains fair (recall that the number of distinct precommitted digests is bounded by $C$)---or the injected digest was not previously precommitted, in which case the trial is rigged.
In the latter case, this digest must have been precommitted by at least $\epsilon \cdot t + 1$ correct processes that had not previously participated in the precommit phase.  
Furthermore, this participation \emph{fixes} the precommitted digests of these processes.  
Consequently, because committing a digest requires the participation of these $\epsilon \cdot t + 1$ previously uninvolved correct processes, at least $\epsilon \cdot t + 1$ correct processes are ``sacrificed,'' as intended.

\medskip
In summary, by reducing the number of trials per iteration from $C \in O(1 / \epsilon^2)$ to $O(1/\epsilon)$, the constant multiplicative overhead of \commplus decreases from $O(1/\epsilon^{4})$ to $O(1/\epsilon^{3})$.

\section{Related Work}\label{section:related_work}

First, we examine earlier results on Byzantine agreement in asynchronous settings (\Cref{subsection:existing-results}).
Second, we compare our techniques with those of closely-related works (\Cref{subsection:techniques_comparison}).

\subsection{Earlier Results}\label{subsection:existing-results}

\noindent \textbf{Byzantine agreement in the full information model with an adaptive adversary.}
Ben-Or~\cite{BenOr83} introduced the first solution to the asynchronous Byzantine agreement problem.
In Ben-Or's algorithm, each process relies on a local coin. 
This algorithm has the notable advantage of operating in the \emph{full information model} (where the adversary is aware of the internal states of all processes) and tolerating an adaptive adversary. 
However, since each process uses a local coin---ensuring an agreement probability of $2^{-n}$---the algorithm suffers from exponential latency. 
Solving asynchronous Byzantine agreement in the full information model with an adaptive adversary, and without assuming a common coin, remains a significant challenge, as highlighted in several works \cite{huang2022byzantine,huang2023byzantine,kimmett2020improvement,king2016byzantine,king2018correction,melnyk2020byzantine}.
A major breakthrough occurred in 2018 with the first polynomial-time algorithm offering linear resilience \cite{king2018correction}, which corrected a technical flaw made in an earlier result \cite{king2016byzantine}. Despite this progress, the resilience of the solution in \cite{king2018correction} was limited to $1.14 \cdot 10^{-9} \cdot n$. More recently, the first polynomial-time algorithm achieving optimal resilience for this model was presented in \cite{huang2023byzantine}, building on a near-optimal-resilience result from the same authors \cite{huang2022byzantine}. However, this achievement comes with a significant drawback: an expected time complexity of $\tilde{O}(n^{12})$, which is prohibitively high for practical applications.

\smallskip
\noindent \textbf{Byzantine agreement with private channels.}
Faced with the significant challenges of solving Byzantine agreement in the full information model, the research community shifted its focus to a model incorporating \emph{private channels}. 
In this alternative model, algorithms typically depend on a \emph{weak common coin}, which permits a constant probability of disagreement on a random value. 
The use of a weak common coin for designing Byzantine agreement protocols was pioneered by Canetti and Rabin~\cite{canetti1993fast}, who proposed an information-theoretically and adaptively secure asynchronous binary agreement algorithm with $O(n^7)$ bit complexity and $O(1)$ time complexity.
Building on this approach, Abraham \emph{et al.}~\cite{AbrahamJMMST21} developed an adaptively secure asynchronous common subset (ACS) algorithm with $O(n^3 \kappa)$ bit complexity and constant time complexity, relying on public-key cryptography (where $\kappa$ represents the signature size).
More recently, Abraham \emph{et al.}~\cite{AAPS23} introduced a statistically secure ACS protocol with $O(n^5)$ bit complexity and $O(1)$ time complexity, designed for $t < \frac{1}{4}n$.
To achieve optimal one-third resilience, this protocol incorporates asynchronous verifiable secret sharing (AVSS).
While it is formally proven to be secure against a static adversary, the authors conjecture that their protocol can be extended to withstand an adaptive adversary as well.
Similarly, the hash-based ACS protocol proposed by Das \emph{et al.}~\cite{DBLP:conf/ccs/DasDLM0S24} employs a weak common coin, achieving $O(1)$ time complexity and $O(n^3 \kappa)$ bit complexity, but its security is limited to static adversaries.

\smallskip
\noindent \textbf{Byzantine agreement with an idealized common coin.}
It has become standard practice~\cite{hmvba,finmvba,sq,dumbomvba,flt24mvba} when constructing asynchronous Byzantine agreement protocols to do so in two parts:
an \emph{idealized common-coin abstraction},
and an otherwise (possibly) deterministic \emph{protocol core}.
The common coin
encapsulates the randomness,
and upon invocation by sufficiently many processes provides the same unpredictable and unbiasable random sequence to all processes.
The rest of the protocol is the actual deterministic distributed-computing ``core mechanism''.
A common coin can be implemented without a trusted dealer (assumed in the pioneering work of Rabin~\cite{Rabin83}) by utilizing threshold cryptography~\cite{CachinKS05}.
Moreover, the literature contains some dedicated common coin protocols~\cite{DasYXMK022,DBLP:conf/icdcs/GaoLLTXZ22,DBLP:journals/iacr/BandarupalliBBKR23,Bacho2023,Souza22}.
As discussed in \Cref{section:preliminaries},
\comm and \commplus can also be made to work with only weak common coins.

\smallskip
\noindent \textbf{MVBA algorithms.}
The MVBA problem was introduced in~\cite{CKPS01} alongside a protocol that achieves $O(1)$ time complexity, $O(n^2 \ell + n^2 \kappa + n^3)$ bit complexity, and optimal one-third resilience (see \Cref{tab:mvba-constructions}). VABA~\cite{vaba} improves bit complexity to $O(n^2 \ell + n^2 \kappa)$ while maintaining optimal resilience and time complexity, as does sMVBA~\cite{smvba}. 
Dumbo-MVBA~\cite{dumbomvba} further reduces bit complexity to $O(n \ell + n^2 \kappa)$ while retaining $O(1)$ time complexity and one-third resilience. 
All these protocols are secure against an adaptive adversary, but rely on threshold cryptography which requires trusted setup (or expensive key generation protocols), is not post-quantum secure, and tends to be slow. 

These limitations have sparked growing interest in designing adaptively-secure MVBA protocols that are hash-based from the ground up (assuming a common-coin object),
the state-of-the-art of which are HMVBA~\cite{hmvba}, FIN-MVBA~\cite{finmvba}, and FLT24-MVBA~\cite{flt24mvba} (see \Cref{tab:mvba-constructions}).
HMVBA achieves $O(n\ell + n^2 \kappa \log n)$ bit complexity and $O(1)$ time complexity, but only sub-optimal one-fifth resilience.
FIN-MVBA tolerates up to $t < n / 3$ faults, while also achieving $O(1)$ time complexity, but only sub-optimal $O(n^2 \ell + n^3 \kappa)$ or $O(n^2 \ell + n^2 \kappa + n^3 \log n)$ bit complexity.
FIN-MVBA itself is an improvement over the hash-based MVBA protocol implied by the distributed key generation protocol of~\cite{DasYXMK022,eprint_2023_1196},
which suffers from $O(\log n)$ time complexity, and is only secure against static adversaries.
Both HMVBA and FIN-MVBA satisfy the quality property.
FLT24-MVBA~\cite{flt24mvba} achieves optimal resilience, tolerating up to one-third faulty processes.
Moreover, FLT24-MVBA exchanges $O(n \ell + n^2 \kappa \log n + n^2 \kappa \lambda)$ bits and terminates in $O(\log \lambda)$ time, where $\lambda$ denotes a statistical security parameter (that cannot be treated as a constant, as explained in \Cref{subsection:techniques_comparison}).
It is also important to mention that FLT24-MVBA does not satisfy the quality property (see \Cref{section:introduction}).

\smallskip
\noindent \textbf{Other Byzantine agreement solutions.}
Beyond MVBA, Byzantine agreement primitives such as multi-valued Byzantine agreement (MBA), asynchronous common subset (ACS), and atomic broadcast (ABC) are well-studied in distributed systems. 
Mostefaoui \emph{et al.}~\cite{DBLP:journals/acta/MostefaouiR17,MostefaouiMR15} introduced a cryptography-free asynchronous MBA protocol with optimal resilience, $O(n^2 \ell)$ bits and $O(1)$ time.
In~\cite{Nayak0SVX20}, a general way of building MBA protocols for long values by ``extending'' MBA protocols for short values is proposed (both in synchrony and asynchrony).

PACE~\cite{pace} solves the ACS problem by relying on $n$ parallel instances of asynchronous binary agreement, thus obtaining $O(\log n)$ time complexity.
Building on FIN-MVBA, \cite{finmvba} presents a hash-based ACS protocol with $O(1)$ time complexity and $O(n^2 \ell + n^3 \kappa)$ bit complexity.
Similarly, \cite{AAPS23} achieves ACS with $O(1)$ time complexity without assuming a common coin, but only with $1/4$ resilience and $O(n^4 \log n)$ bit complexity; it is worth noting that~\cite{AAPS23} is safe against an adaptive adversary.
Moreover, it is worth mentioning that~\cite{AAPS23} proposes a similar result with statistical security for $t < n / 3$.
Constructing ABC from MVBA is possible, as shown in~\cite{CKPS01}.
ABC constructed from FIN-MVBA has $O(n^3)$ message complexity, which~\cite{sq} reduces to $O(n^2)$ but without reducing the $O(n^2 \ell + n^3 \kappa)$ bit complexity any further. 
All of these protocols have optimal resilience.

\subsection{Our Techniques vs.\ Techniques of Closely-Related Works}\label{subsection:techniques_comparison}

\noindent \textbf{FLT24-MVBA~\cite{flt24mvba}.}
The FLT24-MVBA algorithm starts with the dissemination phase in which each process disseminates its value using a method similar (but not identical) to the approach employed by HMVBA and our algorithms.
The dissemination phase ensures that if a process $p_i$ successfully disseminates its proposal, then at least $n - 2t$ correct processes can reconstruct it even if process $p_i$ later gets corrupted.
Then, processes elect $\lambda$ leaders $L_1, L_2, ..., L_{\lambda}$ whose values they try to reconstruct; $\lambda$ denotes a statistical security parameter.

The crux of the FLT24-MVBA algorithm is a primitive called synchronized multi-valued broadcast (SMB); this primitive is inspired by the MV broadcast primitive introduced by Mostefaoui \emph{et al.}~\cite{DBLP:journals/acta/MostefaouiR17}.
The primitive ensures that if $n - 2t$ correct processes broadcast the same input $v$, then all correct processes output a set containing at most two values. 
Furthermore, the outputs of any two correct processes are guaranteed to overlap: if one process outputs a single value, that value must appear in the other process's output set. 
If both processes output two values, their sets are identical. 
The SMB primitive acts as a robust filtering mechanism, ensuring that processes consider at most two different valid values per each elected leader.

Let us focus on a specific leader $L_j$.
To select one of the two values as the final output of leader $L_j$, FLT24-MVBA employs an asynchronous reliable consensus (ARC) protocol. 
ARC ensures agreement among processes and guarantees termination if all correct processes share the same input value. 
Since SMB may leave two possible values, all correct processes participate in two parallel ARC protocols, one for each value. 
However, a termination issue may arise if one of the two values is not held by all correct processes. 
To address this, the algorithm utilizes a standard technique~\cite{CGL18} based on the asynchronous binary agreement (ABA) primitive: an ABA protocol follows each ARC instance in order to decide whether that value (out of the two) should be selected.
If a process outputs 1 in one ABA instance, it proposes 0s to all other ABA instances whose corresponding ARC counterparts have not yet terminated.
Finally, as there exists at least one ``good'' leader $L$ (except with negligible probability in $\lambda$), $L$'s SMB instance will terminate, allowing all correct processes to reach consensus.

\smallskip
\noindent \emph{Comparison with FLT24-MVBA.}
The filtering SMB step and our reducing step serve the same fundamental purpose, making our algorithms similar in this regard.
However, a key difference lies in how termination is handled.
In FLT24-MVBA, the SMB-ARC-ABA sequence fails to terminate if it is tied to a ``bad'' leader.
This constraint prevents FLT24-MVBA from adopting our ``one-leader-per-iteration'' structure, as electing a bad leader---which happens with constant probability---would then cause the entire algorithm to stall indefinitely.

To address this problem, FLT24-MVBA incorporates a statistical security parameter $\lambda$.
It is essential to understand why $\lambda$ cannot be treated as a constant.
First, note that, if no ``good'' leader is elected, FLT24-MVBA stalls indefinitely, i.e., fails to terminate.
Consequently, the protocol solves the MVBA problem with probability $1 - c^{\lambda}$, where $c$ represents the fraction of ``bad'' leaders.
Only treating $\lambda$ as a security parameter (rather than as a constant) results in the desired negligible (in $\lambda$) error probability, and allows for instance the algorithm to be used in a polynomial (in $\lambda$) composition while maintaining this negligible error probability (e.g., in the ``repeated MVBA'' construction of atomic broadcast or state-machine replication; see \Cref{section:introduction}).
In contrast, treating $\lambda$ as a constant would result in a constant error probability, making the algorithm ill-suited for composition.

The current design of FLT24-MVBA, where ``the fastest leader wins'', prevents the protocol from ensuring quality. 
For example, a single adaptive corruption can lead to the decision of an adversarial value. 
If a fast, corrupted leader propagates an adversarial value and quickly moves through the SMB, ARC, and ABA phases, correct processes are forced to follow it. 
To these processes, the leader may appear correct and be seen as their only path to termination. As this leader is significantly faster than others, only the ABA instance tied to its adversarial value will decide on $1$, while all other instances will decide $0$, ultimately forcing correct processes to decide on the adversarial value.
In contrast, our algorithms guarantee that the original valid proposal of a ``good'' leader is eventually decided (with at least some constant probability), thereby ensuring quality---the importance of which is discussed in \Cref{section:introduction}.

\smallskip
\noindent \textbf{Comparison with~\cite{DBLP:journals/acta/MostefaouiR17}.}
While the ideas of~\cite{DBLP:journals/acta/MostefaouiR17} share the same spirit as ours, they are insufficient for our algorithms.
To elaborate, \cite{DBLP:journals/acta/MostefaouiR17} introduces two broadcasting primitives: 
(1)~the reducing (RD) broadcast primitive that reduces the number of values to a constant, and 
(2)~the multi-valued validated (MV) broadcast primitive that outputs a set of values such that, if the output set of a correct process contains a single value $v$, then the output set of any other correct process contains $v$. 
(Recall that the MV broadcast primitive served as inspiration for the SMB primitive of 
FLT24-MVBA~\cite{flt24mvba}.)
Crucially for our discussion, both the RD and MV broadcast primitives guarantee a delivery of a given value $v$ only if \emph{all} correct processes broadcast $v$.
In other words, the RD and MV broadcast primitives ''preserve'' a value only if all correct processes hold it.
If a majority---but not all---of the correct processes broadcast $v$, these primitives do not guarantee that any correct process will deliver $v$, meaning $v$ might not be preserved.
This design aligns with the focus of~\cite{DBLP:journals/acta/MostefaouiR17} on multi-valued Byzantine agreement (MBA; see \Cref{mod:async_mba}), which satisfies (only) strong unanimity:
if all correct processes propose the same value, that value must be decided; otherwise, any value can be decided.
In this context, the RD and MV broadcast primitives are sufficient, as the MBA problem only requires preserving a value when all correct processes propose it.

Our algorithms, on the other hand, necessitate the preservation of a value even when it is not held by all correct processes, which represents the primary reason why the techniques of~\cite{DBLP:journals/acta/MostefaouiR17} cannot be directly applied to our algorithms.
To clarify, the main sub-problem in our algorithms can be seen as follows: 
If $n - 2t$ correct processes hold a ``good'' digest $z$ in a good iteration, each correct process must obtain $z$ and rebuild the corresponding ``good'' value $v$ (which occurs only with constant probability in \commplus).
This must be satisfied even if other correct processes hold adversarial non-$z$ digests, as such an attack can occur in any good iteration. 
As a result, our algorithms require stronger techniques (SMBA in \comm and hash-based adoption procedure in \commplus) than those of~\cite{DBLP:journals/acta/MostefaouiR17}: as only $n - 2t$ (and not all!) correct processes hold a ``good'' digest, RD and MV broadcasts may never deliver $z$, thus preventing termination.
Finally, we design our own MBA algorithm instead of using the one from~\cite{DBLP:journals/acta/MostefaouiR17} as that one requires $O(n^2\ell)$ bits.

\smallskip
\noindent \textbf{Comparison with SQ~\cite{sq}.}
First, it is important to emphasize that SQ, strictly speaking, addresses the problem of \emph{atomic broadcast}, which involves reaching agreement not on a single value (or message), but on an ordered sequence of values (or messages).
However, since their solution to the atomic broadcast problem determines a value for each position in the agreed sequence, it directly provides a solution to the MVBA problem with the same complexity.

In terms of techniques, SQ, like our approach and many others~\cite{finmvba,flt24mvba}, employs a common coin for leader election.
The overarching strategy, as in our protocols, is that once the leader election succeeds---what we refer to as a ``good iteration''---the protocol is guaranteed to terminate using the MBA primitive.
The key distinction between our protocols and SQ lies in the dissemination phase.
Specifically, SQ employs the \emph{parallel consistent broadcast with weak agreed set (PCBW)} primitive, which, as the name implies, essentially consists of executing $n$ consistent broadcasts in parallel, one per process.
PCBW guarantees that every correct process eventually delivers the inputs from all other correct processes.
This, however, is precisely why SQ incurs a bit complexity of $O(n^2\ell)$: since each correct process must deliver $O(n)$ messages of $\ell$ bits, the total communication amounts to $O(n^2\ell)$ bits.
As a result, any approach based on PCBW inherently leads to $O(n^2\ell)$ bit complexity, which stands in contrast to the communication efficiency we aim for in our work.

Unlike SQ, processes in our protocols do not deliver (i.e., reconstruct) a linear number of values during the dissemination phase.
Instead, they reconstruct values only ``when necessary''---specifically, one value per iteration, corresponding to the proposal of that iteration's leader.
Given that (1) the protocol terminates as soon as a ``good'' leader is elected, and (2) this happens in expectation within $O(1)$ iterations, only $O(1)$ values need to be reconstructed (whereas SQ reconstructs $O(n)$ values).
This is why our approach avoids the $O(n^2\ell)$ bit complexity and instead achieves a total complexity of $O(1) \cdot O(n\ell) \subseteq O(n\ell)$.

\section*{Acknowledgment}
We thank Pierre Civit, Daniel Collins, Sourav Das, Jason Milionis, and Manuel Vidigueira for fruitful discussions.
The work of Jovan Komatovic was conducted in part while at a16z Crypto Research and while at EPFL.
The research of Tim Roughgarden at Columbia University was supported in part by NSF awards CCF-2006737 and CNS-2212745, and research awards from the Briger Family Digital Finance Lab and the Center for Digital Finance and Technologies.

\nocite{fullversion}
\bibliographystyle{splncs04}
\bibliography{references}

\clearpage

\section*{Organization of the Appendix}

We provide a full definition of cryptographic accumulators and compare the SMBA primitive with strong consensus in \Cref{section:preliminaries_full}.
We discuss the resilience of \comm and \commplus in \Cref{subsection:discussion}.
In \Cref{section:regular_mba}, we present the pseudocode and proof of our MBA algorithm, $\mbasimple$.
The pseudocode and proof of our SMBA algorithm, \smba, are provided in \Cref{section:strong_mba}.
Finally, we formally prove the correctness and complexity of \comm and \commplus in \Cref{section:reducer_proof} and \Cref{section:reducer_plus_proof}, respectively.

\appendix
\renewcommand{\theHsection}{\Alph{section}}

\section{Preliminaries: Omitted Definitions \& Discussions}
\label{section:preliminaries_full}

\noindent \textbf{Cryptographic accumulators.}
We adopt the definition of 
cryptographic accumulators 
from earlier works~\cite{bhat2021randpiper,Nayak0SVX20}.
A cryptographic accumulator scheme constructs an accumulation value for a set of values and produces a witness for each value in the set.
Given the accumulation value and a witness, any process can verify if a value is indeed in the set.
More formally, given a security parameter $\kappa$ and a set $\mathcal{D}$ of $n$ values $d_1, ..., d_n$, an accumulator has the following syntax:
\begin{compactitem}
    \item $\mathsf{Gen}(1^\kappa, n)$: Takes a parameter $\kappa$ in unary representation $1^\kappa$ and an accumulation threshold $n$ (an upper bound on the number of values that can be accumulated securely); returns a public accumulator key $ak$.

    \item $\mathsf{Eval}(ak, \mathcal{D})$: Takes an accumulator key $ak$ and a set of values $\mathcal{D}$ to be accumulated; returns an accumulation value $z$ for the set $\mathcal{D}$.

    \item $\mathsf{CreateWit}(ak, z, d_i, \mathcal{D})$: Takes an accumulator key $ak$, an accumulation value $z$ for $\mathcal{D}$, a value $d_i$ and a set of values $\mathcal{D}$; returns $\bot$ if $d_i \notin \mathcal{D}$, and a witness $w_i$ if $d_i \in \mathcal{D}$.

    \item $\mathsf{Verify}(ak, z, w_i, d_i)$: Takes an accumulator key $ak$, an accumulation value $z$ for $\mathcal{D}$, a witness $w_i$, and a value $d_i$; returns $\mathit{true}$ if $w_i$ is the witness for $d_i \in \mathcal{D}$, and $\mathit{false}$ otherwise.
\end{compactitem}
An accumulator scheme is secure if the adversary cannot produce a valid witness for a value $d_i$ that was not in the set $\mathcal{D}$ used to produce the accumulation value $z$,
i.e., for any accumulator key $ak \gets \mathsf{Gen}(1^{\kappa}, n)$, it is computationally infeasible to obtain $(\{d_1, ..., d_n\}, d', w')$ such that (1)~$d' \notin \{d_1, ..., d_n\}$, (2)~$z \leftarrow \mathsf{Eval}(ak, \{d_1, ..., d_n\})$, and (3)~$\mathsf{Verify}(ak, z, w', d') = \mathit{true}$. 
Throughout the paper, we refrain from explicitly mentioning the accumulator key $ak$ as we assume that the associated hash function is fixed.

\smallskip
\noindent \textbf{SMBA vs. strong consensus.}
Here, we point out subtle differences between the SMBA primitive (defined in \Cref{section:preliminaries}) and the well-known strong consensus primitive~\cite{fitzi2003efficient}.
Like SMBA, strong consensus requires agreement on the proposal of a correct process.
Importantly, the strong consensus primitive provides strong validity, agreement, integrity, and termination only if correct processes propose values from a domain of possible inputs
where
(1)~the domain is known to the protocol,
and (2)~the domain contains at most two values.
Fitzi and Garay~\cite{fitzi2003efficient} prove that strong consensus (with two different values) can be solved in asynchrony if and only if $n > 3t$.
In contrast, our SMBA primitive does not require knowledge of the domain of correct processes' proposals,
and provides agreement, integrity, and termination irrespective of the size of the set comprising all correct processes' proposals,
while it provides 
strong validity if it so happens that the set comprising all correct processes' proposals contains at most two values.
In other words, SMBA ``knows'' that the \emph{size} of the domain of correct inputs is at most two, but does not ``know'' the \emph{values} in said domain, and only its validity is conditional on this knowledge, not its agreement, integrity, or termination.

Given that the SMBA problem is harder than the strong consensus problem, our \smba algorithm naturally solves the strong consensus problem.
Moreover, we underline that our \smba algorithm can trivially be adapted to solve the strong consensus problem even with \emph{optimal} $t < \frac{1}{3} n$ resilience (see \Cref{section:strong_mba} for more details).\footnote{Recall that the resilience of the \smba algorithm when solving the SMBA problem is $t < \frac{1}{4} n$.}
Additionally, \smba can be easily modified to allow us to solve the strong consensus problem with $x$ different proposals, for any $x \in O(1)$, with optimal resilience of $n > (x + 1)t$~\cite{fitzi2003efficient}, no employed cryptography, and optimal complexity.
\section{Discussion on \comm's and \commplus's Resilience} \label{subsection:discussion}

This section discusses the resilience limits of our algorithms.

\smallskip
\noindent \textbf{\comm below $n = 4t + 1$.}
Recall that one of the crucial ingredients of the \comm algorithm is the use of the SMBA primitive that guarantees agreement on a digest proposed by a correct process, as long as at most two different digests are proposed by correct processes.
To ensure that the ``two-different-proposals'' precondition is met in a good iteration $k$, \comm enforces two key constraints: (1) any correct process commits at most two digests in iteration $k$, and (2) there are at most three different digests committed across all correct processes (i.e., $|\mathsf{committed}(k)| \leq 3$).
The two aforementioned constraints guarantee that correct processes will eventually reach an invocation of the SMBA primitive where they all propose and decide $z^{\star}(k)$. 
In brief, \comm cannot improve upon the $t < \frac{1}{4} n$ resilience threshold because such a ``$z^{\star}(k)$-agreeing'' invocation of the SMBA primitive may never be reached in a good iteration $k$ assuming $n = 4t$.

\smallskip
\noindent\emph{Observation 1: Each correct process can have more than two candidates.}
First, we note that when $n = 4t$, each correct process may have up to three different candidates after the exchange of \textsc{stored} messages in good iteration $k$.
Indeed, as $n = 4t$, $\mathsf{leader}(k)$ has successfully disseminated the digest $z^{\star}(k)$ of its valid proposal $v^{\star}(k)$ to at least $(n - t) - t = 2t$ correct processes.
Hence, each correct process is guaranteed to hear $z^{\star}(k)$ only from $(n - t) + (n - 2t) - n = n - 3t =  t$ processes once it receives $n - t = 3t$ \textsc{stored} messages in iteration $k$.
Thus, to ensure that every correct process marks the ``good'' digest $z^{\star}(k)$ as a candidate, the ``candidate-threshold'' is $n - 3t = t$.
As each correct process waits for $n - t = 3t$ \textsc{stored} messages before determining its candidates, each correct process may end up with up to $\frac{3t}{t} = 3$ different candidates.

\smallskip
\noindent\emph{Observation 2: The number of different candidates across correct processes may be greater than three.}
To ensure that every correct process commits the ``good'' digest $z^{\star}(k)$, each correct process commits its candidate $z$ if it hears $(n - t) - t = 2t$ \textsc{suggest} messages for $z$.
The fact that each correct process can suggest two different adversarial (non-$z^{\star}(k)$) digests implies that the number of different candidates ``surviving'' the suggestion phase across all correct processes might be greater than three: $|\mathsf{committed}(k)| > 3$.
Concretely, it can be shown that $|\mathsf{committed}(k)| \leq 7$ with $n = 4t$.

\smallskip
\noindent\emph{Observation 3: \comm among $n = 4t$ would require an impossible variant of the SMBA primitive.}
Let us demonstrate a problematic scenario that may arise given $|\mathsf{committed}(k)| \leq 7$.
Suppose $\mathsf{committed}(k) = \{ z_1, z_2, z^{\star}(k), z_3, z_4 \}$ with $z_1 < z_2 < z^{\star}(k) < z_3 < z_4$ according to the lexicographic order.
Let us partition all correct processes into three non-empty and disjoint sets $\mathcal{S}_1$, $\mathcal{S}_2$, and $\mathcal{S}_3$.
The following ``spread'' of the committed digests is possible:
\begin{compactitem}
    \item Correct processes in the $\mathcal{S}_1$ set commit $[z_1, z_2, z^{\star}(k)]$.

    \item Correct processes in the $\mathcal{S}_2$ set commit $[z_2, z^{\star}(k), z_3]$.

    \item Correct processes in the $\mathcal{S}_3$ set commit $[z^{\star}(k), z_3, z_4]$.
\end{compactitem}
Thus, for any $c \in \{ 1, 2, 3 \}$, there exist $3 > 2$ different $c$-committed digests.

For \comm to deal with the scenario above (when $n = 4t$), we would need to develop the SMBA primitive with strictly stronger properties than those required for $n = 4t + 1$.
Specifically, for correct processes to learn about the proposals of other correct processes---the core principle underlying \comm{}---the primitive must satisfy the following guarantee: if up to $d > 2$ different digests are proposed by correct processes, then the decided digest was proposed by a correct process.
(Without this guarantee, correct processes might keep deciding ``useless'' digests not held by any correct process.)
Unfortunately, Fitzi and Garay~\cite{fitzi2003efficient} prove that, among $n = 4t$ processes, such a primitive cannot be implemented in asynchrony even for $d = 3$, indicating that \comm's structure is not suitable for $n < 4t + 1$.

\smallskip
\noindent \textbf{\commplus with optimal $n = 3t + 1$.}
We conclude the section by explaining why \commplus cannot achieve optimal resilience.
One reason is that, when $t < \frac{1}{3} n$, \commplus cannot maintain its quasi-quadratic expected bit complexity.
To ensure that the ``good'' digest $z^{\star}(k)$ is identified as a candidate by each correct process after receiving $n - t = 2t + 1$ \textsc{stored} messages in a good iteration $k$, the ``candidate threshold'' must be set at $(n - t) + (n - 2t) - n = 1$.
Therefore, each correct process could have \emph{linearly} many candidates after collecting the \textsc{stored} messages.
As a result, disseminating these candidates via the \textsc{suggest} messages would require $O(n^3 \kappa)$ exchanged bits (digests are of size $\kappa$ bits), thereby violating the desired quasi-quadratic upper bound.
Incorporating the $\epsilon \cdot t$ gap into the resilience of \commplus allows each correct process to have only \emph{constantly} many candidates, thus ensuring that the exchange of \textsc{suggest} messages incurs $O(n^2\kappa)$ bits.

\section{$\mbasimple$: Pseudocode \& Proof} \label{section:regular_mba}

This section presents $\mbasimple$, our adaptively-secure asynchronous MBA protocol employed by both \comm and \commplus.
$\mbasimple$ exchanges $O(n^2)$ messages and $O(n \ell + n^2 \kappa \log n)$ bits, and terminates in $O(1)$ time.
Moreover, $\mbasimple$ tolerates up to $t < \frac{1}{3}n$ failures.
Recall that the specification of the MBA primitive is given in \Cref{mod:async_mba}.

\subsection{Pseudocode}

The pseudocode of the $\mbasimple$ algorithm is given in \Cref{algorithm:mbasimple}.
Before discussing $\mbasimple$'s pseudocode, we formally introduce the graded consensus and binary Byzantine agreement primitives employed in $\mbasimple$.

\smallskip
\noindent \textbf{Graded consensus.}
The formal specification of the primitive is given in \Cref{mod:graded_consensus}.

\begin{module}[tbp]
\caption{Graded Consensus}
\label{mod:graded_consensus}
\scriptsize
\begin{algorithmic}[1]

\Statex \textbf{Associated values:}
\begin{compactitem} [-]
    \item set $\valuemba$ of $\ell$-bit values
\end{compactitem}

\smallskip
\Statex \textbf{Events:}
\begin{compactitem}[-]
    \item \emph{input} $\mathsf{propose}(v \in \valuemba)$: a process proposes value $v$.

    \item \emph{output} $\mathsf{decide}(v' \in \valuemba, g' \in \{0, 1\})$: a process decides value $v'$ with grade $g'$.
\end{compactitem}

\smallskip 
\Statex \textbf{Assumed behavior:} 
\begin{compactitem}[-]
    \item Every correct process proposes exactly once.
\end{compactitem}

\smallskip 
\Statex \textbf{Properties:} \BlueComment{ensured only if correct processes follow the behavior stated above}
\begin{compactitem}[-]
    \item \emph{Strong unanimity:} If all correct processes propose the same value $v \in \valuemba$ and a correct process decides a pair $(v' \in \valuemba, g' \in \{ 0, 1\})$, then $v' = v$ and $g' = 1$.

    \item \emph{Consistency:} If any correct process decides a pair $(v \in \valuemba, 1)$, then no correct process decides any pair $(v' \in \valuemba, \cdot)$ with $v' \neq v$.

    \item \emph{Justification:} If any correct process decides a pair $(v' \in \valuemba, \cdot)$, then $v'$ is proposed by a correct process.

    \item \emph{Integrity:} No correct process decides more than once.

    \item \emph{Termination:} All correct processes eventually decide.
\end{compactitem}
\end{algorithmic}
\end{module}

In our $\mbasimple$ algorithm, we rely on a \emph{deterministic} implementation~\cite{free_partial_sync} of the graded consensus primitive that exchanges $O(n^2)$ messages and $O(n \ell + n^2 \kappa \log n)$ bits, and terminates in $O(1)$ time.
This implementation tolerates up to $t < \frac{1}{3} n$ faulty processes and relies solely on a collision-resistant hash function.
Note that, as the implementation is deterministic, it is inherently secure against an adaptive adversary.

\smallskip
\noindent \textbf{Binary Byzantine agreement.}
The binary agreement primitive is similar to the MBA primitive (see \Cref{mod:async_mba}).
However, there are two major differences: (1) correct processes only propose $0$ and $1$, and (2) correct processes cannot decide a special value, i.e., only $0$ and $1$ can be decided from BA.
The formal specification is given in \Cref{mod:async_ba}.

\begin{module}[tbp]
\caption{Binary Byzantine Agreement}
\label{mod:async_ba}
\scriptsize
\begin{algorithmic}[1]

\Statex \textbf{Events:}
\begin{compactitem}[-]
    \item \emph{input} $\mathsf{propose}(v \in \{ 0, 1 \})$: a process proposes binary value $v$.

    \item \emph{output} $\mathsf{decide}(v' \in \{ 0, 1 \})$: a process decides binary value $v'$.
\end{compactitem}

\smallskip 
\Statex \textbf{Assumed behavior:} 
\begin{compactitem}[-]
    \item Every correct process proposes exactly once.
\end{compactitem}

\smallskip 
\Statex \textbf{Properties:} \BlueComment{ensured only if correct processes follow the behavior stated above}
\begin{compactitem}[-]
    \item \emph{Strong unanimity:} If all correct processes propose the same value $v \in \{ 0, 1 \}$ and a correct process decides a value $v' \in \{ 0, 1 \}$, then $v' = v$.
    
    \item \emph{Agreement:} No two correct processes decide different values. 

    \item \emph{Integrity:} No correct process decides more than once.

    \item \emph{Termination:} All correct processes eventually decide.
\end{compactitem}
\end{algorithmic}
\end{module}

In our $\mbasimple$ algorithm, we utilize a binary agreement algorithm proposed by Mostefaoui and Raynal~\cite{MostefaouiMR15}; this algorithm is safe and live against an adaptive adversary, exchanges $O(n^2)$ messages and bits in expectation, terminates in $O(1)$ time in expectation, and relies on no cryptography.

\smallskip
\noindent \textbf{Pseudocode description.}
When a correct process $p_i$ proposes its value $v_i$ (line~\ref{line:regular_mba_start}), it forwards the value to the $\mathcal{GC}$ graded consensus algorithm (line~\ref{line:regular_mba_gc}).
Then, process $p_i$ proposes the grade decided from $\mathcal{GC}$ to the $\mathcal{BA}$ binary Byzantine agreement algorithm (line~\ref{line:regular_mba_ba_1}).
If the bit decided from $\mathcal{BA}$ is $1$, then process $p_i$ decides the value decided from $\mathcal{GC}$ (line~\ref{line:regular_mba_decide_1}).
Otherwise, process $p_i$ decides $\botmba$ (line~\ref{line:regular_mba_decide_bot}).

\begin{algorithm}[tbp]
\caption{$\mbasimple$: Pseudocode (for process $p_i$)}
\label{algorithm:mbasimple}
\begin{algorithmic}[1]
\scriptsize

\State \textbf{Uses:}
    \State \hskip2em \textcolor{jnSUDigitalRedLight}{\(\triangleright\) \cite{free_partial_sync} exchanges $O(n^2)$ messages and $O(n\ell + n^2 \kappa \log n)$ bits and terminates in $O(1)$ time}
    \State \hskip2em Graded consensus algorithm~\cite{free_partial_sync} on the $\valuemba$ set, \textbf{instance} $\mathcal{GC}$

    \smallskip
    \State \hskip2em \textcolor{jnSUDigitalRedLight}{\(\triangleright\) \cite{MostefaouiMR15} exchanges $O(n^2)$ messages and $O(n^2)$ bits and terminates in $O(1)$ time}
    \State \hskip2em Binary Byzantine agreement algorithm~\cite{MostefaouiMR15}, \textbf{instance} $\mathcal{BA}$

\medskip
\State \textbf{upon} $\mathsf{propose}(\valuemba \text{ } v_i)$:\label{line:regular_mba_start} \BlueComment{start of the algorithm}
    \State \hskip2em $(\valuemba, \{ 0, 1 \})$ $(\mathit{adopted\_value}, g) \gets \mathcal{GC}.\mathsf{propose}(v_i)$ \label{line:regular_mba_gc}

    \State \hskip2em $\{0, 1\}$ $g' \gets \mathcal{BA}.\mathsf{propose}(g)$ \label{line:regular_mba_ba_1}

    \State \hskip2em \textbf{if} $g' = 1$:
        \State \hskip4em \textbf{trigger} $\mathsf{decide}(\mathit{adopted\_value})$ \label{line:regular_mba_decide_1}
    \State \hskip2em \textbf{else:}
        \State \hskip4em \textbf{trigger} $\mathsf{decide}(\botmba)$ \label{line:regular_mba_decide_bot}

\end{algorithmic}
\end{algorithm}

\subsection{Proof of Correctness \& Complexity}

This section proves $\mbasimple$'s correctness and complexity.

\smallskip
\noindent \textbf{Proof of correctness.}
To prove the correctness of $\mbasimple$, we prove the following lemma.

\begin{lemma} [$\mbasimple$ is correct]
Given $t < \frac{1}{3} n$ and the existence of a collision-resistant hash function, $\mbasimple$ (see \Cref{algorithm:mbasimple}) is a correct implementation of the MBA primitive in the presence of a computationally bounded adversary.
\end{lemma}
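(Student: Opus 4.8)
The plan is to verify each of the five MBA properties (strong unanimity, agreement, integrity, termination) for $\mbasimple$ in turn, reducing each to the already-established guarantees of the graded consensus instance $\mathcal{GC}$ (deterministic, from~\cite{free_partial_sync}) and the binary Byzantine agreement instance $\mathcal{BA}$ (from~\cite{MostefaouiMR15}). The overall structure is a short ``composition'' argument: since $\mbasimple$ merely sequences one call to $\mathcal{GC}$ followed by one call to $\mathcal{BA}$ with no other interaction, essentially every property falls out of matching the right guarantee of the right sub-primitive, plus the observation that all messages in both sub-protocols are deterministic (in $\mathcal{GC}$'s case) or safe/live against an adaptive adversary (in $\mathcal{BA}$'s case), so no new adaptive-adversary concerns arise.

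First I would dispatch \emph{termination} and \emph{integrity}, which are immediate: by termination of $\mathcal{GC}$ every correct process eventually obtains $(\mathit{adopted\_value}, g)$ and hence proposes to $\mathcal{BA}$; by termination of $\mathcal{BA}$ every correct process eventually obtains $g'$ and triggers exactly one $\mathsf{decide}$ (either at line~\ref{line:regular_mba_decide_1} or line~\ref{line:regular_mba_decide_bot}), giving integrity as well. Next, \emph{agreement}: by agreement of $\mathcal{BA}$, all correct processes decide the same bit $g'$; if $g' = 0$ they all decide $\botmba$; if $g' = 1$ they all decide their respective $\mathit{adopted\_value}$, so I must show these agree. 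This is where the \emph{consistency} property of graded consensus is used together with the fact that, if $g'=1$ is decided by $\mathcal{BA}$, then by strong unanimity of $\mathcal{BA}$ some correct process proposed $1$ to $\mathcal{BA}$ (else all correct processes propose $0$ and $0$ is forced) — hence some correct process decided grade $g=1$ from $\mathcal{GC}$. Consistency of $\mathcal{GC}$ then pins down a unique value $v$ such that no correct process decides $(v',\cdot)$ with $v'\neq v$, so all correct processes hold $\mathit{adopted\_value}=v$ and agreement follows. \emph{Justification} of the decided non-$\botmba$ value likewise reduces to justification of $\mathcal{GC}$: when $g'=1$, the decided value $\mathit{adopted\_value}$ was output by $\mathcal{GC}$, hence proposed by a correct process.

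Finally \emph{strong unanimity}: assume all correct processes propose the same $v\in\valuemba$. Then by strong unanimity of $\mathcal{GC}$, every correct process decides $(v,1)$ from $\mathcal{GC}$, so every correct process proposes $1$ to $\mathcal{BA}$; by strong unanimity of $\mathcal{BA}$, $g'=1$ is decided, and every correct process triggers $\mathsf{decide}(\mathit{adopted\_value}) = \mathsf{decide}(v)$ at line~\ref{line:regular_mba_decide_1}, so no correct process decides anything other than $v$ (in particular not $\botmba$).

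I do not expect a serious obstacle here; the one subtlety worth stating carefully is the direction of the argument in the agreement proof — one must invoke strong unanimity of $\mathcal{BA}$ contrapositively to conclude that $g'=1$ can only arise when at least one correct process fed grade $1$ into $\mathcal{BA}$, which is precisely what licenses the use of $\mathcal{GC}$'s consistency. A minor housekeeping point is to remark explicitly that both sub-protocols are adaptively secure (the graded consensus of~\cite{free_partial_sync} because it is deterministic, the binary agreement of~\cite{MostefaouiMR15} by assumption), so the composition inherits adaptive security and the overall adversary being computationally bounded only matters through the collision-resistant hash function used inside $\mathcal{GC}$.
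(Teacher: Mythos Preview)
Your proposal is correct and follows essentially the same approach as the paper: the paper likewise establishes each MBA property separately, using $\mathcal{GC}$'s strong unanimity for strong unanimity, $\mathcal{GC}$'s justification for justification, $\mathcal{BA}$'s integrity for integrity, termination of both for termination, and---exactly as you identify as the one subtlety---the contrapositive of $\mathcal{BA}$'s strong unanimity combined with $\mathcal{GC}$'s consistency for agreement in the $g'=1$ case. The paper factors that last step into a small preliminary claim (``if any correct process proposes $1$ to $\mathcal{BA}$, then no two correct processes adopt different values''), but the argument is otherwise identical to yours.
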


In the rest of the section, we say that a correct process $p_i$ \emph{adopts} a value $v$ if and only if $\mathit{adopted\_value} = v$ at process $p_i$ when process $p_i$ reaches line~\ref{line:regular_mba_ba_1}.
First, we prove that if any correct process proposes $1$ to the $\mathcal{BA}$ instance of the BA primitive, then no two correct processes adopt different values.

\begin{myclaim} \label{claim:mbasimple_different_adopts}
If any correct process proposes $1$ to $\mathcal{BA}$, then no two correct processes adopt different values.
\end{myclaim}
\begin{proof}
Let $p_i$ be any correct process that proposes $1$ to $\mathcal{BA}$.
Hence, $p_i$ decides with grade $1$ from $\mathcal{GC}$.
The statement of the claim then holds due to the consistency property of $\mathcal{GC}$.
\end{proof}

We are ready to prove $\mbasimple$'s strong unanimity.

\begin{proposition} [$\mbasimple$ satisfies strong unanimity]
Given $t < \frac{1}{3} n$ and the existence of a collision-resistant hash function, $\mbasimple$ (see \Cref{algorithm:mbasimple}) satisfies strong unanimity in the presence of a computationally bounded adversary.
\end{proposition}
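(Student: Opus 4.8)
The plan is to chain together the strong-unanimity guarantees of the two sub-protocols $\mathcal{GC}$ and $\mathcal{BA}$ invoked by $\mbasimple$. Assume the hypothesis of strong unanimity: all correct processes propose the same value $v \in \valuemba$, and some correct process $p_i$ decides a value $v' \in \valuemba \cup \{ \botmba \}$. We want to conclude $v' = v$.

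First I would invoke the strong unanimity property of the graded consensus instance $\mathcal{GC}$. Since every correct process proposes $v$ to $\mathcal{GC}$ at line~\ref{line:regular_mba_gc}, every correct process that decides from $\mathcal{GC}$ decides the pair $(v, 1)$; that is, $\mathit{adopted\_value} = v$ and $g = 1$ at every correct process when it reaches line~\ref{line:regular_mba_ba_1}. (Termination of $\mathcal{GC}$ guarantees each correct process does reach this point.) Consequently, every correct process proposes $1$ to the binary Byzantine agreement instance $\mathcal{BA}$ at line~\ref{line:regular_mba_ba_1}.

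Next I would apply the strong unanimity property of $\mathcal{BA}$: since all correct processes propose $1$, any correct process deciding a bit from $\mathcal{BA}$ decides $g' = 1$. In particular $p_i$ obtains $g' = 1$, so it takes the branch at line~\ref{line:regular_mba_decide_1} and triggers $\mathsf{decide}(\mathit{adopted\_value})$; since $\mathit{adopted\_value} = v$ at $p_i$ by the previous paragraph, we get $v' = v$, which is exactly what strong unanimity requires.

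There is essentially no obstacle here — the argument is a two-line composition. The only thing to be careful about is to phrase it so that it applies to \emph{every} correct process that decides (not just one), which follows because the conclusions ``$\mathit{adopted\_value} = v$'' and ``$g' = 1$'' hold uniformly across all correct processes, not just $p_i$; this uniformity is also what will be reused when proving agreement and justification in the subsequent propositions.
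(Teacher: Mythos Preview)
Your proposal is correct and follows essentially the same approach as the paper: chain the strong unanimity of $\mathcal{GC}$ (so every correct process obtains $(\mathit{adopted\_value}, g) = (v, 1)$) with the strong unanimity of $\mathcal{BA}$ (so every correct process obtains $g' = 1$) to conclude that every correct process decides $v$ at line~\ref{line:regular_mba_decide_1}.
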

\begin{proof}
Suppose all correct processes propose the same value $v$.
Hence, each correct process $p_i$ decides $(v, 1)$ from $\mathcal{GC}$ (due to its strong unanimity property) and adopts $v$.
Furthermore, each correct process proposes $1$ to $\mathcal{BA}$, which ensures that all correct processes decide $1$ from $\mathcal{BA}$ (due to its strong unanimity property).
Finally, each correct process decides $v$.
\end{proof}

Next, we prove $\mbasimple$'s agreement.

\begin{proposition} [$\mbasimple$ satisfies agreement]
Given $t < \frac{1}{3} n$ and the existence of a collision-resistant hash function, $\mbasimple$ (see \Cref{algorithm:mbasimple}) satisfies agreement in the presence of a computationally bounded adversary.
\end{proposition}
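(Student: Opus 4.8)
The plan is to show that $\mbasimple$'s agreement follows directly from the agreement property of the underlying binary Byzantine agreement instance $\mathcal{BA}$ and the consistency property of the graded consensus instance $\mathcal{GC}$. First I would fix two correct processes $p_i$ and $p_j$ that decide, say, values $d_i$ and $d_j$ respectively, and let $g'_i$ (resp.\ $g'_j$) denote the bit each decides from $\mathcal{BA}$ (line~\ref{line:regular_mba_ba_1}). By the agreement property of $\mathcal{BA}$, $g'_i = g'_j =: g'$.

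The argument then splits on the value of $g'$. If $g' = 0$, both processes take the branch at line~\ref{line:regular_mba_decide_bot}, so $d_i = d_j = \botmba$ and agreement holds trivially. If $g' = 1$, both processes decide their locally adopted value from $\mathcal{GC}$ (line~\ref{line:regular_mba_decide_1}), i.e., $d_i = \mathit{adopted\_value}_i$ and $d_j = \mathit{adopted\_value}_j$. Here I would invoke \Cref{claim:mbasimple_different_adopts}: since $g' = 1$ is decided from $\mathcal{BA}$, by its strong unanimity property \emph{some} correct process must have proposed $1$ to $\mathcal{BA}$ (if all correct processes had proposed $0$, strong unanimity would force the decision to be $0$). \Cref{claim:mbasimple_different_adopts} then yields that no two correct processes adopt different values, so $\mathit{adopted\_value}_i = \mathit{adopted\_value}_j$, hence $d_i = d_j$. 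This covers all cases.

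The only subtlety — and the main obstacle, though a mild one — is the inference in the $g' = 1$ case that some correct process proposed $1$ to $\mathcal{BA}$. This is exactly where strong unanimity of $\mathcal{BA}$ is needed: the decided value $1$ cannot have arisen if every correct input to $\mathcal{BA}$ was $0$. One must be careful that this is the right direction of the strong-unanimity implication (it rules out the all-zero correct-input scenario, which is what we need), rather than mistakenly trying to use agreement or justification of $\mathcal{BA}$ for this. Once that observation is in place, \Cref{claim:mbasimple_different_adopts} closes the gap and the proof is complete.
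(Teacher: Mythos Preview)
Your proposal is correct and follows essentially the same approach as the paper: case-split on the bit decided from $\mathcal{BA}$ (using $\mathcal{BA}$'s agreement), handle $g'=0$ trivially, and for $g'=1$ invoke strong unanimity of $\mathcal{BA}$ to get a correct process proposing $1$, then apply \Cref{claim:mbasimple_different_adopts}. The paper's proof is slightly terser (it implicitly uses $\mathcal{BA}$'s agreement by speaking of ``the'' decided bit $g^\star$), but the structure is identical.
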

\begin{proof}
Let $g^{\star}$ denote the binary value decided from $\mathcal{BA}$.
We distinguish two cases:
\begin{compactitem}
    \item Let $g^{\star} = 0$.
    In this case, all correct processes that decide do so with $\botmba$.
    The agreement property is satisfied in this case.

    \item Let $g^{\star} = 1$.
    Therefore, every correct process decides its adopted value.
    Moreover, the strong unanimity property of $\mathcal{BA}$ ensures that a correct process proposes $1$ to $\mathcal{BA}$.
    Therefore, the agreement is satisfied due to \Cref{claim:mbasimple_different_adopts}.
\end{compactitem}
As agreement is ensured in any possible case, the proof is concluded.
\end{proof}

We proceed by proving $\mbasimple$'s justification.

\begin{proposition} [$\mbasimple$ satisfies justification]
Given $t < \frac{1}{3} n$ and the existence of a collision-resistant hash function, $\mbasimple$ (see \Cref{algorithm:mbasimple}) satisfies justification in the presence of a computationally bounded adversary.
\end{proposition}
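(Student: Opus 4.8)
The plan is to prove that $\mbasimple$ satisfies the justification property of the MBA primitive: whenever a correct process decides a value $v' \in \valuemba$ (i.e., $v' \neq \botmba$), that value was proposed by a correct process. First I would recall that a correct process decides a non-$\botmba$ value only at \Cref{line:regular_mba_decide_1}, which is reached precisely when $g' = 1$ is decided from the binary Byzantine agreement instance $\mathcal{BA}$; in that case the decided value equals $\mathit{adopted\_value}$, the value output from the graded consensus instance $\mathcal{GC}$ at \Cref{line:regular_mba_gc}. Hence it suffices to show that this adopted value was proposed by a correct process to $\mathcal{GC}$, and then observe that proposals to $\mathcal{GC}$ are exactly the proposals $v_i$ to $\mbasimple$ (\Cref{line:regular_mba_start,line:regular_mba_gc}).

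The key step is to invoke the \emph{justification} property of the graded consensus primitive (\Cref{mod:graded_consensus}): if any correct process decides a pair $(v', \cdot)$ from $\mathcal{GC}$, then $v'$ was proposed by a correct process. Since the correct process $p_i$ that decides $\mathit{adopted\_value}$ at \Cref{line:regular_mba_decide_1} obtained $(\mathit{adopted\_value}, g)$ as its output from $\mathcal{GC}$, graded-consensus justification immediately gives that $\mathit{adopted\_value}$ was proposed by some correct process $p_j$ to $\mathcal{GC}$. But $p_j$ proposed $v_j$ to $\mathcal{GC}$ at \Cref{line:regular_mba_gc}, and $v_j$ is exactly $p_j$'s proposal to $\mbasimple$. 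Therefore the value decided from $\mbasimple$ is a correct process's proposal, establishing justification. One small point to handle carefully: the $\botmba$ case is vacuous for this property (justification only constrains non-$\botmba$ decisions), so no argument is needed when $g' = 0$ and the process decides $\botmba$ at \Cref{line:regular_mba_decide_bot}.

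I do not anticipate a genuine obstacle here — this is essentially a one-line reduction to the graded consensus justification property, with the only care being to trace which line a non-$\botmba$ decision comes from and to note that $\mathcal{GC}$'s inputs coincide with $\mbasimple$'s inputs. The proof structure mirrors the agreement and strong-unanimity proofs already given in the excerpt, so it should read as:

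\begin{proof}
By inspection of \Cref{algorithm:mbasimple}, a correct process decides a value $v' \in \valuemba$ (i.e., $v' \neq \botmba$) only at \Cref{line:regular_mba_decide_1}, in which case $v' = \mathit{adopted\_value}$, the value decided from $\mathcal{GC}$ at \Cref{line:regular_mba_gc}. By the justification property of $\mathcal{GC}$, $v'$ was proposed by some correct process $p_j$ to $\mathcal{GC}$. Since $p_j$ proposes to $\mathcal{GC}$ exactly its own proposal $v_j$ to $\mbasimple$ (\Cref{line:regular_mba_start,line:regular_mba_gc}), it follows that $v'$ is the proposal of the correct process $p_j$, as required.
\end{proof}
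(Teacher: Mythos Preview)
Your proposal is correct and takes essentially the same approach as the paper's proof: both trace a non-$\botmba$ decision back to the value output by $\mathcal{GC}$ and then invoke the justification property of graded consensus. Your version is slightly more explicit in spelling out that proposals to $\mathcal{GC}$ coincide with proposals to $\mbasimple$, but the argument is identical.
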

\begin{proof}
Suppose a correct process $p_i$ decides a non-$\botmba$ value $v$.
Hence, process $p_i$ adopts $v$.
Finally, the justification property of $\mathcal{GC}$ guarantees that $v$ is proposed to $\mbasimple$ by a correct process, which concludes the proof.
\end{proof}

Next, we prove the integrity property.

\begin{proposition} [$\mbasimple$ satisfies integrity]
Given $t < \frac{1}{3} n$ and the existence of a collision-resistant hash function, $\mbasimple$ (see \Cref{algorithm:mbasimple}) satisfies integrity in the presence of a computationally bounded adversary.
\end{proposition}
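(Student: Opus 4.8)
The plan is to trace the control flow of $\mbasimple$ (see \Cref{algorithm:mbasimple}) and argue, step by step, that each correct process can trigger $\mathsf{decide}$ at most once. First I would invoke the assumed behavior of the MBA primitive: every correct process proposes exactly once, so each correct process executes the \textbf{upon} $\mathsf{propose}$ handler (line~\ref{line:regular_mba_start}) exactly once. Consequently it invokes $\mathcal{GC}.\mathsf{propose}$ (line~\ref{line:regular_mba_gc}) at most once; by the integrity property of the graded consensus primitive, the process decides from $\mathcal{GC}$ at most once, so the pair $(\mathit{adopted\_value}, g)$ is assigned at most once and the subsequent invocation $\mathcal{BA}.\mathsf{propose}(g)$ (line~\ref{line:regular_mba_ba_1}) occurs at most once.

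Next I would apply the integrity property of the binary Byzantine agreement primitive: the process decides from $\mathcal{BA}$ at most once, so the binary value $g'$ is assigned at most once. At that point exactly one of the two branches is taken---line~\ref{line:regular_mba_decide_1} (when $g' = 1$) or line~\ref{line:regular_mba_decide_bot} (when $g' = 0$)---and each branch triggers $\mathsf{decide}$ exactly once. Chaining these observations yields that a correct process triggers $\mathsf{decide}$ in $\mbasimple$ at most once, which is precisely the integrity property.

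There is no genuine obstacle here: the argument is an immediate consequence of the integrity guarantees of the two sub-protocols combined with the fact that control passes linearly through the handler, with no loops or re-entry points. The only point requiring (trivial) care is to note that a correct process never reaches line~\ref{line:regular_mba_ba_1} without having first decided from $\mathcal{GC}$, and never reaches the decision branches without having first decided from $\mathcal{BA}$---both of which are immediate from the sequential structure of the pseudocode.
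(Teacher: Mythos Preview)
Your proposal is correct and follows essentially the same approach as the paper, namely reducing $\mbasimple$'s integrity to the integrity of its sub-protocols. The paper's proof is terser---it cites only the integrity of $\mathcal{BA}$, since the $\mathsf{decide}$ trigger in $\mbasimple$ is directly gated by the (unique) decision from $\mathcal{BA}$---whereas you additionally trace the upstream control flow through $\mathcal{GC}$, which is sound but not strictly necessary.
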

\begin{proof}
The integrity property is satisfied due to the integrity property of $\mathcal{BA}$.
\end{proof}

Finally, we prove $\mbasimple$'s termination.

\begin{proposition} [$\mbasimple$ satisfies termination]
Given $t < \frac{1}{3} n$ and the existence of a collision-resistant hash function, $\mbasimple$ (see \Cref{algorithm:mbasimple}) satisfies termination in the presence of a computationally bounded adversary.
\end{proposition}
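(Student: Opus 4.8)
The plan is to show that $\mbasimple$ terminates by chaining together the termination guarantees of its two sub-components, $\mathcal{GC}$ and $\mathcal{BA}$, and observing that the only intervening step (lines~\ref{line:regular_mba_ba_1}--\ref{line:regular_mba_decide_bot}) performs no blocking operation. Concretely, I would argue as follows.

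\begin{proposition} [$\mbasimple$ satisfies termination]
Given $t < \frac{1}{3} n$ and the existence of a collision-resistant hash function, $\mbasimple$ (see \Cref{algorithm:mbasimple}) satisfies termination in the presence of a computationally bounded adversary.
\end{proposition}
\begin{proof}
Consider any correct process $p_i$. By the assumed behavior of the MBA primitive, $p_i$ proposes exactly once, invoking $\mathsf{propose}(v_i)$ at line~\ref{line:regular_mba_start} and forwarding $v_i$ to $\mathcal{GC}$ at line~\ref{line:regular_mba_gc}. Since every correct process proposes exactly once to $\mbasimple$, every correct process proposes exactly once to $\mathcal{GC}$, so the assumed behavior of the graded consensus primitive is met. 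By the termination property of $\mathcal{GC}$, process $p_i$ eventually decides some pair $(\mathit{adopted\_value}, g) \in \valuemba \times \{0, 1\}$ from $\mathcal{GC}$. Process $p_i$ then proposes $g$ to $\mathcal{BA}$ at line~\ref{line:regular_mba_ba_1}; again, since every correct process reaches this line and proposes exactly once, the assumed behavior of the binary Byzantine agreement primitive is satisfied. By the termination property of $\mathcal{BA}$, process $p_i$ eventually decides some bit $g' \in \{0, 1\}$. If $g' = 1$, process $p_i$ decides $\mathit{adopted\_value}$ at line~\ref{line:regular_mba_decide_1}; otherwise $p_i$ decides $\botmba$ at line~\ref{line:regular_mba_decide_bot}. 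In either case, $p_i$ decides. As $p_i$ was an arbitrary correct process, all correct processes eventually decide.
\end{proof}

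The argument is essentially a routine composition: the body of $\mbasimple$ contains no loops and no explicit waits beyond the two nested sub-protocol invocations, so the only thing to verify is that each sub-protocol's precondition (``every correct process proposes exactly once'') is discharged, which follows immediately from the corresponding precondition of $\mbasimple$ together with the fact that all correct processes execute lines~\ref{line:regular_mba_gc} and~\ref{line:regular_mba_ba_1}. I do not anticipate a genuine obstacle here; the only subtlety worth flagging is making explicit that $\mathcal{BA}$ is invoked by \emph{all} correct processes (not just those that decided a particular grade from $\mathcal{GC}$), which is what the binary-agreement termination property requires, and this is clear from the unconditional placement of line~\ref{line:regular_mba_ba_1} in the pseudocode.
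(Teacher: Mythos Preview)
Your proof is correct and takes essentially the same approach as the paper, which simply states that termination follows because both $\mathcal{GC}$ and $\mathcal{BA}$ terminate. Your version is more detailed—explicitly verifying that the preconditions of each sub-primitive are met before invoking its termination guarantee—but the underlying argument is identical.
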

\begin{proof}
The termination property is satisfied as both $\mathcal{GC}$ and $\mathcal{BA}$ terminate.
\end{proof}

\smallskip
\noindent \textbf{Proof of complexity.}
We now proceed to prove $\mbasimple$'s complexity.

\begin{lemma} [$\mbasimple$'s expected complexity] 
Given $t < \frac{1}{3}n$ and the existence of a collision-resistant hash function, the following holds for $\mbasimple$ (see \Cref{algorithm:mbasimple}) in the presence of a computationally bounded adversary:
\begin{compactitem}
    \item The expected message complexity is $O(n^2)$.

    \item The expected bit complexity is $O(n\ell + n^2 \kappa \log n)$.

    \item The expected time complexity is $O(1)$.
\end{compactitem} 
\end{lemma}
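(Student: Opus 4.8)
The plan is to exploit the fact that $\mbasimple$ (see \Cref{algorithm:mbasimple}) performs no communication of its own: after $\mathsf{propose}(v_i)$ (line~\ref{line:regular_mba_start}), each correct process only invokes two black-box sub-protocols in sequence — the single graded-consensus instance $\mathcal{GC}$ (line~\ref{line:regular_mba_gc}) and the single binary-agreement instance $\mathcal{BA}$ (line~\ref{line:regular_mba_ba_1}) — and then does $O(1)$ purely local work before deciding (lines~\ref{line:regular_mba_decide_1}--\ref{line:regular_mba_decide_bot}). Hence every complexity measure of $\mbasimple$ is, up to additive constants, the sum of the corresponding measure of $\mathcal{GC}$ and of $\mathcal{BA}$, and it suffices to recall the costs of these two building blocks as already stated in this section.

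Concretely, I would first invoke the deterministic graded-consensus implementation of~\cite{free_partial_sync}, which tolerates $t < \frac{1}{3}n$, relies only on a collision-resistant hash function, exchanges $O(n^2)$ messages and $O(n\ell + n^2\kappa\log n)$ bits, and terminates in $O(1)$ time; being deterministic, these are worst-case bounds. Then I would invoke the binary-agreement implementation of Mostéfaoui--Raynal~\cite{MostefaouiMR15}, which exchanges $O(n^2)$ messages and $O(n^2)$ bits and terminates in $O(1)$ time, all in expectation, against an adaptive adversary and using no cryptography. Since $\mathcal{BA}$ is invoked strictly after $\mathcal{GC}$ returns, the two executions are sequential, so by linearity of expectation their message, bit, and time costs simply add.

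Finally I would conclude: the expected message complexity is $O(n^2) + O(n^2) = O(n^2)$; the expected time complexity is $O(1) + O(1) = O(1)$; and the expected bit complexity is $O(n\ell + n^2\kappa\log n) + O(n^2) \subseteq O(n\ell + n^2\kappa\log n)$, the $O(n^2)$ term contributed by $\mathcal{BA}$ being absorbed since $\kappa > \log n \geq 1$ (as assumed in \Cref{section:preliminaries}). There is no real obstacle here — the argument is just bookkeeping over the two sub-protocols; the only point worth stating explicitly is that mixing the worst-case bounds of the deterministic $\mathcal{GC}$ with the expectation bounds of the randomized $\mathcal{BA}$ is harmless, because $\mbasimple$ runs them back-to-back and linearity of expectation applies to the combined cost.
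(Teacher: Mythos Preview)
Your proposal is correct and follows essentially the same approach as the paper: both simply sum the message, bit, and time costs of the two black-box sub-protocols $\mathcal{GC}$ and $\mathcal{BA}$. Your write-up is in fact more explicit than the paper's, spelling out the sequential composition, the use of linearity of expectation, and why the $O(n^2)$ bits from $\mathcal{BA}$ are absorbed into $O(n\ell + n^2\kappa\log n)$.
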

\begin{proof}
As the worst-case message complexity of $\mathcal{GC}$ is $O(n^2)$ and the expected message complexity of $\mathcal{BA}$ is $O(n^2)$, the expected message complexity of $\mbasimple$ is $O(n^2)$.
Given that correct processes send $O(n\ell + n^2 \kappa \log n)$ bits in $\mathcal{GC}$ and $O(n^2)$ bits in $\mathcal{BA}$, the expected bit complexity of $\mbasimple$ is indeed $O(n\ell + n^2 \kappa \log n)$.
Finally, as both $\mathcal{GC}$ and $\mathcal{BA}$ terminate in $O(1)$ time in expectation, the expected time complexity of $\mbasimple$ is $O(1)$.
\end{proof}

\section{\smba: Pseudocode \& Proof} \label{section:strong_mba}

This section presents \smba, our adaptively-secure asynchronous SMBA protocol employed in \comm.
\smba exchanges $O(n^2)$ messages and $O(n^2 \kappa)$ bits, and terminates in $O(1)$ time.
The specification of the SMBA primitive can be found in \Cref{mod:async_smba}.

\subsection{Collective Reliable Broadcast (CRB): Pseudocode \& Proof} \label{section:crb}

First, we introduce the CRB primitive that plays a major role in our \smba algorithm.

\smallskip
\noindent \textbf{Primitive definition.}
The CRB primitive closely resembles the reliable broadcast primitive described in \cite{book-cachin-guerraoui-rodrigues}.  
Its formal specification is provided in \Cref{mod:async_crb}.

\begin{module}[tbp]
\caption{CRB}
\label{mod:async_crb}
\scriptsize
\begin{algorithmic}[1]

\Statex \textbf{Associated values:}
\begin{compactitem} [-]
    \item set $\valuedigest$ of $O(\kappa)$-bit digests

    \item special value $\botcrb \notin \valuedigest$
\end{compactitem}

\smallskip
\Statex \textbf{Events:}
\begin{compactitem}[-]
    \item \emph{input} $\mathsf{broadcast}(z \in \valuedigest)$: a process broadcasts digest $z$.

    \item \emph{output} $\mathsf{deliver}(z' \in \valuedigest \cup \{ \botcrb \})$: a process delivers digest $z'$ or the special value $\botcrb$.
\end{compactitem}

\smallskip 
\Statex \textbf{Assumed behavior:} 
\begin{compactitem}[-]
    \item Every correct process broadcasts exactly once.

    \item Only $O(1)$ different digests are broadcast by correct processes.
\end{compactitem}

\smallskip 
\Statex \textbf{Properties:} \BlueComment{ensured only if correct processes follow the behavior stated above}
\begin{compactitem}[-]
    \item \emph{Validity:} If up to two different digests are broadcast by correct processes, no correct process delivers the special value $\botcrb$.

    \item \emph{Justification:} If any correct process delivers a digest $z' \in \valuedigest$ ($z' \neq \botcrb$), then $z'$ is broadcast by a correct process.
    
    \item \emph{Integrity:} No correct process delivers unless it has previously broadcast.

    \item \emph{Termination:} All correct processes deliver at least once.

    \item \emph{Totality:} If any correct process delivers $z \in \valuedigest \cup \{ \botcrb \}$, then all correct processes eventually deliver $z$.
\end{compactitem}
\end{algorithmic}
\end{module}

We emphasize that, per \Cref{mod:async_crb}, any correct process may deliver multiple times.
Moreover, the termination property ensures that each correct process delivers at least once.

\subsubsection{Pseudocode}
The pseudocode of \crb, our CRB algorithm, can be found in \Cref{algorithm:crb}.
Our \crb algorithm exchanges $O(n^2)$ messages and $O(n^2 \kappa)$ bits, and terminates in $O(1)$ time.
Moreover, the \crb algorithm tolerates up to $t < \frac{1}{4} n$ faulty processes.
We underline that the \crb algorithm is heavily inspired by Bracha's reliable broadcast algorithm~\cite{Bracha87}.

\smallskip
\noindent \textbf{Pseudocode description.}
We describe the pseudocode of the \crb algorithm from the perspective of a correct process $p_i$.
Upon broadcasting its digest $z_i$ (line~\ref{line:crb-broadcast}), process $p_i$ broadcasts an \textsc{init} message for $z_i$ (line~\ref{line:crb-init-broadcast}).
Once process $p_i$ receives the same digest $z$ in (at least) $t + 1$ \textsc{init} messages (line~\ref{line:receive_init_rule}), which proves that $z$ is broadcast by a correct process, process $p_i$ broadcasts an \textsc{echo} message for $z$ (line~\ref{line:crb_broadcast_echo}).
Similarly, once process $p_i$ receives the same digest $z$ in (at least) $2t + 1$ \textsc{echo} (line~\ref{line:crb_quorum_echo_rule}) or $t + 1$ \textsc{ready} messages (line~\ref{line:crb_plurality_ready}), process $p_i$ broadcasts a \textsc{ready} message for $z$ (line~\ref{line:crb_broadcast_ready_1} or line~\ref{line:crb_broadcast_ready_2}).
Finally, once process $p_i$ receives $2t + 1$ \textsc{ready} messages for the same digest (line~\ref{line:crb_receive_quorum_ready}), it delivers the digest (line~\ref{line:crb_deliver}).

Upon realizing that there are at least three different digests broadcast by correct processes (line~\ref{line:send_broken_rule}), process $p_i$ broadcasts a \textsc{broken} message (line~\ref{line:crb_broadcast_broken_2}).
Similarly, upon receiving $t + 1$ \textsc{broken} messages (line~\ref{line:crb-broken-plurality}), process $p_i$ rebroadcasts the message (line~\ref{line:crb_broadcast_broken_1}).
Once process $p_i$ receives $2t + 1$ \textsc{broken} messages (line~\ref{line:crb_deliver_def_rule}), it delivers the special value $\botcrb$ (line~\ref{line:crb_deliver_def}).
 
\begin{algorithm}[tbp]
\caption{\crb: Pseudocode (for process $p_i$)}
\label{algorithm:crb}
\begin{algorithmic}[1]
\scriptsize

\State \textbf{Rules:}
    \State \hskip2em - Only one \textsc{init} message is processed per process. \label{line:crb_rule_1}
    \State \hskip2em - At most one \textsc{echo} message is broadcast per digest.
    \State \hskip2em - At most one \textsc{ready} message is broadcast per digest.
    \State \hskip2em - At most one \textsc{broken} message is broadcast.

    \State \hskip2em - No computational steps are taken unless a digest was previously broadcast. \label{line:crb_rule_integrity}

\medskip
\State \textbf{Local variables:}
    \State \hskip2em $\valuedigest$ $z_i \gets p_i$'s broadcast digest
    \State \hskip2em $\mathsf{Map}(\valuedigest \to \mathsf{Integer})$ $\mathit{num}_i \gets \{0, \text{ for every $z \in \valuedigest$} \}$

\medskip
\State \textbf{Local functions:}
    \State \hskip2em - $\mathsf{distinct()} = |\{ z \in \valuedigest \,|\, \mathit{num}_i[z] > 0 \} |$.
    \State \hskip2em - $\mathsf{eliminated()}$:
    \State \hskip4em (1) Let $Z = \{ z \in \valuedigest \,|\, \mathit{num}_i[z] > 0 \}$.
    \State \hskip4em (2) Sort $Z$ in the ascending order according to the $\mathit{num}_i$ map.\label{line:crb_sorted_list}
    \State \hskip4em (3) Return the greatest integer $x \in \mathbb{N}_{\geq 0}$ such that $\mathit{num}_i[Z[1]] + ... + \mathit{num}_i[Z[x]] \leq t$.

\medskip
\State \textbf{upon} $\mathsf{broadcast}(\valuedigest \text{ } z_i)$: \BlueComment{start of the algorithm} \label{line:crb-broadcast}
    \State \hskip2em \textbf{broadcast} $\langle \textsc{init}, z_i \rangle$ \label{line:crb-init-broadcast}

\medskip
\State \textbf{upon} exists $\valuedigest$ $z$ such that $\langle \textsc{init}, z \rangle$ is received from $t + 1$ processes:\label{line:receive_init_rule}
    \State \hskip2em \textbf{broadcast} $\langle \textsc{echo}, z \rangle$\label{line:crb_broadcast_echo}

\medskip
\State \textbf{upon} exists $\valuedigest$ $z$ such that $\langle \textsc{echo}, z \rangle$ is received from $2t + 1$ processes:\label{line:crb_quorum_echo_rule}
    \State \hskip2em \textbf{broadcast} $\langle \textsc{ready}, z \rangle$\label{line:crb_broadcast_ready_1}

\medskip
\State \textbf{upon} exists $\valuedigest$ $z$ such that $\langle \textsc{ready}, z \rangle$ is received from $t + 1$ processes:\label{line:crb_plurality_ready}
    \State \hskip2em Broadcast $\langle \textsc{ready}, z \rangle$\label{line:crb_broadcast_ready_2}

\medskip
\State \textbf{upon} exists $\valuedigest$ $z$ such that $\langle \textsc{ready}, z \rangle$ is received from $2t + 1$ processes:\label{line:crb_receive_quorum_ready}
    \State \hskip2em \textbf{trigger} $\mathsf{deliver}(z)$\label{line:crb_deliver}

\medskip
\State \textbf{upon} $\langle \textsc{broken} \rangle$ is received from $t + 1$ processes: \label{line:crb-broken-plurality}
    \State \hskip2em \textbf{broadcast} $\langle \textsc{broken} \rangle$ \label{line:crb_broadcast_broken_1}

\medskip
\State \textbf{upon} $\langle \textsc{broken} \rangle$ is received from $2t + 1$ processes:\label{line:crb_deliver_def_rule}
    \State \hskip2em \textbf{trigger} $\mathsf{deliver}(\botcrb)$\label{line:crb_deliver_def}

\medskip
\State \textbf{upon} receiving $\langle \textsc{init}, \valuedigest \text{ } z \rangle$:
    \State \hskip2em $\mathit{num}_i[z] \gets \mathit{num}_i[z] + 1$

\medskip
\State \textbf{upon} (1) $\mathsf{distinct}() - \mathsf{eliminated()} \geq 3$, and (2) $\geq n - t \geq 3t + 1$ \textsc{init} messages are received:\label{line:send_broken_rule}
    \State \hskip2em \textbf{broadcast} $\langle \textsc{broken} \rangle$\label{line:crb_broadcast_broken_2}

\end{algorithmic}
\end{algorithm}

\subsubsection{Proof of Correctness \& Complexity}

This subsection proves \crb's correctness and complexity.

\smallskip
\noindent \textbf{Proof of correctness.}
To prove \crb's correctness, we prove the following lemma.

\begin{lemma} [\crb is correct]
Given $t < \frac{1}{4}n$, \crb (see \Cref{algorithm:crb}) is a correct implementation of the CRB primitive in the presence of a computationally unbounded adversary.
\end{lemma}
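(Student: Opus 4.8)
The plan is to verify the five CRB properties (\Cref{mod:async_crb}) for \crb (\Cref{algorithm:crb}) one at a time, using the standard Bracha-style bookkeeping specialized to $n=4t+1$: since at most $t$ senders are faulty, any bag of $t+1$ received messages of a given type contains one from a correct process, any bag of $2t+1$ contains at least $t+1$ from correct processes, and ``only one \textsc{init} is processed per process'' (\Cref{line:crb_rule_1}) means the total number of processed \textsc{init} messages originating at faulty senders is at most $t$. \emph{Integrity} is immediate from \Cref{line:crb_rule_integrity}. For \emph{justification} I would trace a would-be delivery of $z\in\valuedigest$ backwards: a correct process delivers $z$ only after $2t+1$ $\langle\textsc{ready},z\rangle$ (\Cref{line:crb_receive_quorum_ready}), so $\ge t+1$ correct processes sent $\langle\textsc{ready},z\rangle$; the first correct one must have used the \textsc{echo}-quorum rule (\Cref{line:crb_broadcast_ready_1}), not the \textsc{ready}-amplification rule (\Cref{line:crb_plurality_ready}), hence saw $2t+1$ $\langle\textsc{echo},z\rangle$, hence some correct process sent $\langle\textsc{echo},z\rangle$, which (\Cref{line:crb_broadcast_echo}) required $t+1$ $\langle\textsc{init},z\rangle$, so some correct process broadcast $z$. \emph{Totality} follows by the usual amplification in both directions: if a correct process delivers $z$, then $\ge t+1$ correct processes broadcast $\langle\textsc{ready},z\rangle$, so every correct process eventually (re)broadcasts $\langle\textsc{ready},z\rangle$ (\Cref{line:crb_broadcast_ready_2}), collects $\ge 3t+1\ge 2t+1$ of them, and delivers $z$; symmetrically for $\botcrb$ via \Cref{line:crb_broadcast_broken_1,line:crb_deliver_def}.

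The crux of \emph{validity} is a combinatorial claim: if correct processes broadcast at most two distinct digests, no correct process can ever satisfy the \textsc{broken}-trigger of \Cref{line:send_broken_rule}. I would fix the first correct process $p$ that would broadcast $\langle\textsc{broken}\rangle$; it cannot do so via \Cref{line:crb_broadcast_broken_1} (which needs $t+1$ \textsc{broken} messages, one from a correct process, contradicting minimality), so it acts via \Cref{line:crb_broadcast_broken_2}, having received $\ge n-t=3t+1$ \textsc{init} messages with $\mathsf{distinct}()-\mathsf{eliminated}()\ge 3$ in its view. In that view the correct \textsc{init}s land on at most two digests, and the digests that received only faulty \textsc{init}s have per-digest counts summing (over all of them) to at most $t$; hence the $|D_f|$ smallest per-digest counts sum to at most $t$, so $\mathsf{eliminated}()\ge|D_f|$, while $\mathsf{distinct}()\le 2+|D_f|$, giving $\mathsf{distinct}()-\mathsf{eliminated}()\le 2$ --- a contradiction. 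Thus no correct process ever broadcasts $\langle\textsc{broken}\rangle$, at most $t$ \textsc{broken} messages are ever sent, and no correct process reaches \Cref{line:crb_deliver_def}.

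For \emph{termination} I would split on the global number of distinct correct digests. If it is at most two, pigeonhole on the $\ge 3t+1>2t$ correct broadcasters yields a digest $z^{\mathrm{maj}}$ with $\ge t+1$ correct broadcasters; then every correct process eventually sees $t+1$ $\langle\textsc{init},z^{\mathrm{maj}}\rangle$ and echoes it, so all $\ge 3t+1$ correct processes echo, then ready, then deliver $z^{\mathrm{maj}}$ (via \Cref{line:crb_broadcast_echo,line:crb_broadcast_ready_1,line:crb_deliver}); the same chain applies whenever some correct digest has $\ge t+1$ correct broadcasters. Otherwise every correct digest has $\le t$ correct broadcasters, so there are $d\ge\lceil(3t+1)/t\rceil=4$ of them; writing their correct-broadcaster counts $a_1\le\dots\le a_d$ (each $\le t$, summing to $c\ge 3t+1$), we get $a_1+\dots+a_{d-2}=c-a_{d-1}-a_d\ge(3t+1)-2t=t+1>t$, so once a correct process has received all correct \textsc{init}s its view has $\mathsf{eliminated}()\le d-3$ and $\mathsf{distinct}()\ge d$, i.e.\ $\mathsf{distinct}()-\mathsf{eliminated}()\ge 3$. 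Since this quantity is monotone non-decreasing as more \textsc{init}s arrive, every correct process eventually triggers \Cref{line:crb_broadcast_broken_2}, so $\ge 3t+1\ge 2t+1$ \textsc{broken} messages reach every correct process, which then delivers $\botcrb$. In every case each correct process delivers at least once. The complexity claims are routine: a correct process echoes (resp.\ readies) only the $O(1)$ digests backed by some correct \textsc{init}, and sends each of \textsc{init}/\textsc{echo}/\textsc{ready}/\textsc{broken} at most $O(1)$ times, giving $O(n^2)$ messages, $O(n^2\kappa)$ bits, and $O(1)$ rounds.

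The main obstacle I anticipate is the ``many thin digests'' case of termination: showing that the adversary, who controls \textsc{init} scheduling and may inject up to $t$ extra \textsc{init}s (possibly on fresh digests), cannot keep $\mathsf{distinct}()-\mathsf{eliminated}()$ below $3$ at every correct process forever. This rests on two ingredients that need to be made airtight: the monotonicity of $\mathsf{distinct}()-\mathsf{eliminated}()$ under arrival of further \textsc{init}s --- in particular that inserting a brand-new minimal-count digest raises $\mathsf{distinct}()$ and $\mathsf{eliminated}()$ by at most the same amount, and that incrementing an existing count never raises $\mathsf{eliminated}()$ --- and the counting identity $a_1+\dots+a_{d-2}\ge c-2t\ge t+1$ applied to the (eventually fully delivered) correct \textsc{init}s alone. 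Everything else is a direct adaptation of Bracha's argument.
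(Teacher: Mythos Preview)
Your plan is correct, and for integrity, justification, and totality it coincides with the paper's proof. For validity and termination, however, you take a genuinely different route.

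For \emph{validity}, the paper (\Cref{claim:crb_validity_crucial}) fixes the first correct process to trigger \textsc{broken} and runs a detailed case analysis on whether one or two correct digests appear in its view, with several sub-cases depending on the relative positions of correct and faulty digests in the sorted list $Z$. Your counting argument---faulty-only digests have total count $\le t$, so the $|D_f|$ smallest counts sum to $\le t$ (by a swap argument), hence $\mathsf{eliminated}()\ge|D_f|$ and $\mathsf{distinct}()-\mathsf{eliminated}()\le 2$---handles all cases at once and is considerably cleaner; it does not even use the $\ge 3t+1$ received-\textsc{init} condition.

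For \emph{termination} in the ``many thin digests'' case, the paper (\Cref{claim:crb_termination_crucial_2}) works directly with the actual view at the moment all correct \textsc{init}s have arrived (faulty \textsc{init}s possibly mixed in) and argues by case analysis that at most $x-3$ correct digests can be eliminated. You instead compute $\mathsf{distinct}()-\mathsf{eliminated}()\ge 3$ in the hypothetical correct-only view via the identity $a_1+\dots+a_{d-2}\ge (3t+1)-2t>t$, and then lift this to the real view by monotonicity of $\mathsf{distinct}()-\mathsf{eliminated}()$ under \textsc{init} arrivals. This is sound---incrementing an existing count cannot increase $\mathsf{eliminated}()$, and adding a fresh count-$1$ digest raises both quantities by at most one---but the paper avoids isolating monotonicity as a separate lemma. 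Your approach trades a small auxiliary lemma for a shorter main argument; the paper's trades a longer case analysis for self-containment. One minor wording issue: in the actual view you do not get $\mathsf{eliminated}()\le d-3$ and $\mathsf{distinct}()\ge d$ separately (faulty-only digests can inflate both); what monotonicity gives you is only their difference $\ge 3$, which is all you need.
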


In the rest of the proof, let $\mathcal{B}_C$ denote the set of all digests broadcasts by correct processes.
We start by proving that \crb satisfies the validity property.
To this end, we show that if there exists a correct process that broadcasts a \textsc{broken} message at line~\ref{line:crb_broadcast_broken_2}, then $|\mathcal{B}_C| \geq 3$.

\begin{myclaim} \label{claim:crb_validity_crucial}
If there exists a correct process that broadcasts a \textsc{broken} message at line~\ref{line:crb_broadcast_broken_2}, then $|\mathcal{B}_C| \geq 3$.
\end{myclaim}
\begin{proof}
Let $p_i$ be any correct process that broadcasts a \textsc{broken} message at line~\ref{line:crb_broadcast_broken_2}.
By contradiction, suppose $|\mathcal{B}_C| \leq 2$.
Let $\mathit{init}$ denote the set of all \textsc{init} messages received by $p_i$ prior to broadcasting the \textsc{broken} message at line~\ref{line:crb_broadcast_broken_2}.
Let us define a set $\mathcal{I}$:
\begin{equation*}
    \mathcal{I} = \{ (p_j, z) \,|\, \text{$ \exists m = \langle \textsc{init}, z \rangle: m \in \mathit{init}$ $\land$ $p_j$ is the sender of $m$}\}.
\end{equation*}
Note that $|\mathcal{I}| \geq n - t \geq 3t + 1$ (due to the second condition of the rule at line~\ref{line:send_broken_rule}).
We now define a set of digests $\mathit{correct\_digs}$ in the following way:
\begin{equation*}
    \mathit{correct\_digs} = \{ z \,|\, \exists (p_j, z) \in \mathcal{I}: \text{$p_j$ is a correct process} \}.
\end{equation*}
Let $X = |\mathit{correct\_digs}|$.
Note that $X \in \{1, 2\}$ (as $|\mathcal{B}_C| \leq 2$, $|\mathcal{I}| \geq 3t + 1$ and there are at most $t$ faulty processes).
We also define a set of digests $\mathit{faulty\_digs}$:
\begin{equation*}
    \mathit{faulty\_digs} = \{ z \,|\, \exists(p_j, z) \in \mathcal{I}: \text{$p_j$ is a faulty process}\}.
\end{equation*}
Note that $\mathsf{distinct()} = |\mathit{correct\_digs} \cup \mathit{faulty\_digs}|$.
Let $F$ be defined as 
\begin{equation*}
    F = |\{ (p_j, z) \,|\, (p_j, z) \in \mathcal{I} \text{ $\land$ $p_j$ is a faulty process} \}|.
\end{equation*}
Note that $F \leq t$ as process $p_i$ accepts at most one \textsc{init} message per process (line~\ref{line:crb_rule_1}) and there are up to $t$ faulty processes.

Let $Z$ denote the sorted list of digests constructed by the $\mathsf{eliminated()}$ function (line~\ref{line:crb_sorted_list}).
We say that a digest $z$ is \emph{eliminated} if and only if (1) $z = Z[i]$, and (2) $i \in [1, \mathsf{eliminated()}]$.
(If $\mathsf{eliminated()} = 0$, no digest is eliminated.)

As process $p_i$ sends the \textsc{broken} message at line~\ref{line:crb_broadcast_broken_2}, $\mathsf{distinct()} - \mathsf{eliminated()} \geq 3$ at process $p_i$ (line~\ref{line:send_broken_rule}).
Therefore, there are at least three non-eliminated digests.
We distinguish two possible scenarios:
\begin{compactitem}
    \item Let $X = 1$.
    Let $\mathit{correct\_digs} = \{ z \}$, for some digest $z$.
    Note that $\mathit{num}_i[z] \geq 2t + 1$ (as there are at least $2t + 1$ messages in $\mathit{init}$ that are received from correct processes).
    Therefore, $z$ cannot be eliminated.
    Moreover, note that there does not exist a digest $z' \in \mathit{correct\_digs} \cup \mathit{faulty\_digs}$ such that $z$ precedes $z'$ in $Z$.
    If such $z'$ existed, $\mathit{num}_i[z'] \geq 2t + 1$, which further implies that $z' \in \mathit{correct\_digs}$.
    As $\mathit{correct\_digs} = \{ z \}$, this is impossible.
    Therefore, $z$ must be the last digest in the list $Z$.

    For every digest $z' \in \mathit{faulty\_digs} \setminus { \{z \}}$, all \textsc{init} messages for $z'$ received by $p_i$ are sent by faulty processes.
    Hence, every digest $z' \in \mathit{faulty\_digs} \setminus{ \{z \} }$ is eliminated.
    Thus, $\mathsf{distinct()} - \mathsf{eliminated()} = |\{ z \} \cup \mathit{faulty\_digs}| - |\mathit{faulty\_digs} \setminus{ \{ z \}}| = 1$, which represents a contradiction with the fact that $\mathsf{distinct()} - \mathsf{eliminated()} \geq 3$.
    This means that this case is impossible.

    \item Let $X = 2$.
    Let $\mathit{correct\_digs} = \{ z_1, z_2 \}$, for some digests $z_1$ and $z_2$.
    Note that $\mathit{num}_i[z_1] \geq t + 1$ or $\mathit{num}_i[z_2] \geq t + 1$ (as there are at least $2t + 1$ \textsc{init} messages received by $p_i$ from correct processes).
    Without loss of generality, let $\mathit{num}_i[z_1] \geq t + 1$.
    Therefore, the digest $z_1$ cannot be eliminated.
    Observe also that there cannot exist a digest $z_3 \in \mathit{faulty\_digs} \setminus{ \mathit{correct\_digs} }$ such that $z_1$ precedes $z_3$ in $Z$.
    Indeed, for such digest $z_3$ to exist, $\mathit{num\_i}[z_3] \geq t + 1$, which then implies that $z_3$ must belong to $\mathit{correct\_digs}$.
    As this is not the case, the only digest that can succeed $z_1$ in the list $Z$ is $z_2$.
    We distinguish two cases:
    \begin{compactitem}
        \item Let $z_2$ not be eliminated.
        We further study two cases:
        \begin{compactitem}
            \item Let $z_1$ precede $z_2$ in the sorted list of digests $Z$.
            In this case, $z_2$ is the last digest in the list $Z$ and $z_1$ is the penultimate digest in the list $Z$.
            (This holds as only $z_2$ can succeed $z_1$ in $Z$.)
            Crucially, all other digests in the list are eliminated as (1) \textsc{init} messages for them are sent only by faulty processes, and (2) there are at most $t$ faulty processes.
            Therefore, $\mathsf{distinct()} - \mathsf{eliminated()} = 2$, which contradicts the fact that $\mathsf{distinct()} - \mathsf{eliminated()} \geq 3$.
            Thus, this case is impossible.

            \item Let $z_2$ precede $z_1$ in the sorted list of digests $Z$.
            In this case, $z_1$ is the last digest in the list.
            (This holds as only $z_2$ can succeed $z_1$ in $Z$.)
            Note that all values that precede $z_2$ in the list $Z$ are eliminated (as these are values held by faulty processes only).
            Let $\mathit{eliminated}$ denote the set of eliminated digests.
            Let $B_{\mathit{eliminated}} = \sum\limits_{z \in \mathit{eliminated}} \mathit{num}_i[z]$.
            Note that $B_{\mathit{eliminated}} \leq t$ due to the definition of the eliminated digests.
            Importantly, as $z_2$ is not eliminated, we have $B_{\mathit{eliminated}} + \mathit{num}_i[z_2] \geq t + 1$, which further implies $\mathit{num}_i[z_2] \geq t + 1 - B_{\mathit{eliminated}}$.

            By contradiction, suppose there exists a digest $z_3$ such that (1) $z_2$ precedes $z_3$ in $Z$, and (2) $z_3$ precedes $z_1$ in $Z$.
            As $z_3 \in \mathit{faulty\_digs} \setminus{ \mathit{correct\_digs} }$, $B_{\mathit{eliminated}}$ messages are used on the values preceding $z_2$ and there are $F$ messages issued by faulty processes, $\mathit{num}_i[z_3] \leq F - B_{\mathit{eliminated}}$.
            Moreover, $\mathit{num}_i[z_3] \geq \mathit{num}_i[z_2]$ (as $z_2$ precedes $z_3$ in $Z$).
            This implies  $\mathit{num}_i[z_2] \leq F - B_{\mathit{eliminated}}$.
            Hence, $t + 1 - B_{\mathit{eliminated}} \leq \mathit{num}_i[z_2] \leq F - B_{\mathit{eliminated}}$, which further implies $t + 1 \leq F$.
            This is impossible as $F \leq t$.
        \end{compactitem}

        \item Let $z_2$ be eliminated.
        Note that, as $z_2$ is eliminated and $z_1$ is not eliminated, $z_2$ precedes $z_1$ in list $Z$.
        Given that only $z_2$ can succeed $z_1$ in list $Z$, $z_1$ is the last digest in $Z$.
        As $\mathsf{distinct()} - \mathsf{eliminated()} \geq 3$ and $z_1$ is not eliminated, there are at least two digests in $Z$ that are (1) not eliminated, and (2) not broadcast by correct processes.
        Let $z' = Z[\mathsf{eliminated}() + 1]$ and $z'' = Z[\mathsf{eliminated}() + 2]$. 
        Note that $z_2$ precedes both $z'$ and $z''$ as $z_2$ is eliminated and $z'$ and $z''$ are not.
        Moreover, note that all \textsc{init} messages sent for $z'$ or $z''$ are sent by faulty processes (as $z', z'' \in \mathit{faulty\_digs} \setminus{ \mathit{correct\_digs} } $).

        Let $\mathit{eliminated}$ denote the set of eliminated digests.
        For each digest $z \in \mathit{eliminated}$, let $C_{z}$ (resp., $B_{z}$) denote the number of $\langle \textsc{init}, z \rangle$ messages received by $p_i$ from correct (resp., faulty) processes.
        (Hence, for each digest $z \in \mathit{eliminated}$, $\mathit{num}_i[z] = C_{z} + B_{z}$.)
        Note that $C_{z_2} > 0$ (as $z_2 \in \mathit{correct\_digs}$); for every digest $z \in \mathit{eliminated} \setminus { \{z_2\} }$, $C_{z} = 0$.
        Let $B_{\mathit{eliminated}} = \sum\limits_{z \in \mathit{eliminated}} B_{z}$.
        Observe that $C_{z_2}+ B_{\mathit{eliminated}} + \mathit{num}_i[z'] \geq t + 1$ as, otherwise, $z'$ would also be eliminated.
        Therefore, $\mathit{num}_i[z'] \geq t + 1 - C_{z_2} - B_{\mathit{eliminated}}$.
        As $\mathit{num}_i[z''] \geq \mathit{num}_i[z']$, $\mathit{num}_i[z''] \geq t + 1 - C_{z_2} - B_{\mathit{eliminated}}$.
        Thus, $\mathit{num}_i[z'] + \mathit{num}_i[z''] \geq 2t + 2 - 2C_{z_2} - 2B_{\mathit{eliminated}}$.
        Moreover, $\mathit{num}_i[z'] + \mathit{num}_i[z''] \leq F - B_{\mathit{eliminated}}$ (as all \textsc{init} messages for $z'$ or $z''$ are sent by faulty processes and there are already $B_{\mathit{eliminated}}$ faulty processes that sent \textsc{init} messages for digests different from $z'$ and $z''$).
        Hence, we have:
        \begin{equation*}
            2t + 2 - 2C_{z_2} - 2B_{\mathit{eliminated}} \leq \mathit{num}_i[z'] + \mathit{num}_i[z''] \leq F - B_{\mathit{eliminated}}.
        \end{equation*}
        This implies $2t + 2 - 2C_{z_2} - 2B_{\mathit{eliminated}} \leq F - B_{\mathit{eliminated}}$, which implies $2C_{z_2} \geq 2t + 2 -B_{\mathit{eliminated}} - F$.

        Observe that $\mathit{num}_i[z''] \geq \mathit{num}_i[z'] \geq C_{z_2}$ (as neither $z'$ nor $z''$ are eliminated, $\mathit{num}_i[z_2] \geq C_{z_2}$ and $z_2$ is eliminated). 
        Therefore, $\mathit{num}_i[z'] + \mathit{num}_i[z''] \geq 2C_{z_2}$.
        Therefore, $F - B_{\mathit{eliminated}} \geq 2C_{z_2}$.
        
        Thus, we have $2t + 2 - B_{\mathit{eliminated}} - F \leq 2C_{z_2} \leq F - B_{\mathit{eliminated}}$.
        Therefore, $2t + 2 - B_{\mathit{eliminated}} - F \leq F - B_{\mathit{eliminated}}$, which implies $2t + 2 \leq 2F$.
        As $F \leq t$, this is impossible.
    \end{compactitem}
\end{compactitem}
The claim holds as its statement holds in all possible cases.
\end{proof}

We are now ready to prove the \crb satisfies validity.

\begin{proposition} [\crb satisfies validity]
Given $t < \frac{1}{4}n$, \crb (see \Cref{algorithm:crb}) satisfies validity in the presence of a computationally unbounded adversary.
\end{proposition}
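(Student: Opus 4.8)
The plan is to reduce the proposition entirely to \Cref{claim:crb_validity_crucial}, which already does the combinatorial heavy lifting. Recall that a correct process delivers $\botcrb$ only at \cref{line:crb_deliver_def}, and only after receiving $\langle \textsc{broken} \rangle$ from $2t+1$ processes (\cref{line:crb_deliver_def_rule}). Since at most $t$ processes are faulty and $2t + 1 > t$, such a set of $2t+1$ senders must include at least one correct process that broadcast a \textsc{broken} message. Hence it suffices to show that, when $|\mathcal{B}_C| \leq 2$, \emph{no} correct process ever broadcasts a \textsc{broken} message.

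To establish this, I would argue by contradiction: assume some correct process broadcasts \textsc{broken}, and let $p_i$ be the first correct process (in global time) to do so. A correct process broadcasts \textsc{broken} only at \cref{line:crb_broadcast_broken_1} or \cref{line:crb_broadcast_broken_2}. The first option is ruled out directly: \cref{line:crb-broken-plurality} requires receiving $\langle \textsc{broken} \rangle$ from $t+1$ processes, at least one of which must be correct (again since there are at most $t$ faulty processes), contradicting the minimality of $p_i$. Therefore $p_i$ must have broadcast \textsc{broken} via the rule at \cref{line:send_broken_rule} (i.e., \cref{line:crb_broadcast_broken_2}). But then \Cref{claim:crb_validity_crucial} applies and yields $|\mathcal{B}_C| \geq 3$, contradicting the hypothesis $|\mathcal{B}_C| \leq 2$. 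Consequently no correct process broadcasts \textsc{broken}, at most $t < 2t+1$ processes (all faulty) ever send a \textsc{broken} message, and thus no correct process can trigger $\mathsf{deliver}(\botcrb)$. This proves validity.

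I do not expect any real obstacle here: the proposition is a short corollary, and all the difficulty has been pushed into \Cref{claim:crb_validity_crucial} (the delicate counting over the $\mathsf{eliminated}()$ list). The only points requiring minor care are (i) being explicit that the quorum thresholds $t+1$ and $2t+1$ both exceed $t$, so that any quorum of \textsc{broken} senders contains a correct process, and (ii) invoking the ``first correct process to act'' argument cleanly so that the $t+1$-threshold rebroadcast at \cref{line:crb_broadcast_broken_1} cannot be the origin of the chain. With those in place the argument closes immediately.
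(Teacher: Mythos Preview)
Your proposal is correct and follows essentially the same approach as the paper: assume by contradiction that a correct process delivers $\botcrb$, trace back to the first correct process that broadcasts \textsc{broken}, observe that it must do so via \cref{line:crb_broadcast_broken_2}, and then invoke \Cref{claim:crb_validity_crucial} to derive $|\mathcal{B}_C|\geq 3$. Your write-up is slightly more explicit about why the $t{+}1$ and $2t{+}1$ quorums force a correct sender, but the argument is the same.
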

\begin{proof}
Let $|\mathcal{B}_C| \leq 2$.
By contradiction, suppose there exists a correct process $p_i$ that delivers $\botcrb$.
Therefore, there exists a correct process that broadcasts a \textsc{broken} message.
The first correct process that broadcasts a \textsc{broken} message does so at line~\ref{line:crb_broadcast_broken_2}.
Hence, \Cref{claim:crb_validity_crucial} proves that $|\mathcal{B}_C| \geq 3$, which implies contradiction with $|\mathcal{B}_C| \leq 2$.
Thus, \crb satisfies validity.
\end{proof}

Next, we prove the termination property of \crb.
To do so, we first show that if there exists a digest $z$ broadcast by at least $t + 1$ correct processes, then every correct process delivers $z$ within $O(1)$ asynchronous rounds.

\begin{myclaim} \label{claim:crb_termination_crucial_1}
Let there exist a digest $z$ broadcast by at least $t + 1$ correct processes.
Then, every correct process delivers $z$ within $O(1)$ asynchronous rounds.
\end{myclaim}
\begin{proof}
As $t + 1$ correct processes broadcast $z$, every correct process eventually receives $t + 1$ \textsc{init} messages for $z$ (line~\ref{line:receive_init_rule}) and broadcasts an \textsc{echo} message for $z$ (line~\ref{line:crb_broadcast_echo}).
(This occurs within a single message delay.)
Therefore, every correct process eventually broadcasts a \textsc{ready} message for $z$ (line~\ref{line:crb_broadcast_ready_1}) upon receiving an \textsc{echo} message for $z$ from (at least) $2t + 1$ processes (line~\ref{line:crb_quorum_echo_rule}).
(This incurs another message delay.)
Finally, every correct process eventually receives a \textsc{ready} message for $z$ from (at least) $2t + 1$ processes (line~\ref{line:crb_receive_quorum_ready}) and delivers $z$ (line~\ref{line:crb_deliver}) in $3 \in O(1)$ message delays.
\end{proof}

The following claim proves that all correct processes deliver the special value $\botcrb$ within $O(1)$ asynchronous rounds given that (1) $|\mathcal{B}_C| > 3$, and (2) no digest is broadcast by $t + 1$ (or more) correct processes.

\begin{myclaim} \label{claim:crb_termination_crucial_2}
If (1) $|\mathcal{B}_C| > 3$, and (2) no digest is broadcast by $t + 1$ correct processes, then every correct process delivers $\botcrb$ within $O(1)$ asynchronous rounds.
\end{myclaim}
\begin{proof}
To prove the claim, we prove that all correct processes broadcast a \textsc{broken} message within $O(1)$ message delays, which then implies that all correct processes receive $2t + 1$ \textsc{broken} messages (line~\ref{line:crb_deliver_def_rule}) and deliver $\botcrb$ (line~\ref{line:crb_deliver_def}) within $O(1)$ message delays. 
Let $|\mathcal{B}_C| = x > 3$.
Without loss of generality, let $\mathcal{B}_C = \{ z_1, z_2, ..., z_{x - 1}, z_x \}$, for some digests $z_1, z_2, ..., z_x$.
Consider any correct process $p_i$ and time $\tau$ at which process $p_i$ receives \textsc{init} messages from all correct processes.
Note that this occurs within a single message delay.
Let $\mathcal{I}$ denote the set of \textsc{init} messages received by $p_i$ at time $\tau$; note that $|\mathcal{I}| \geq n - t \geq 3t + 1$ as there are at least $n - t \geq 3t + 1$ correct processes.
Next, we define a set of digests $\mathit{correct\_digs}$ in the following way:
\begin{equation*}
    \mathit{correct\_digs} = \{ z \,|\, z \text{ is received in a message $m \in \mathcal{I}$ whose sender is correct} \}.
\end{equation*}
Note that $\mathit{correct\_digs} = \{ z_1, z_2, ..., z_{x - 1}, z_x \}$.
For each digest $z \in \mathit{correct\_digs}$, let $C_{z}$ (resp., $B_{z}$) denote the number of $\langle \textsc{init}, z \rangle$ messages received by $p_i$ from correct (resp., faulty) processes at time $\tau$.
Observe that, for every $z \in \mathit{correct\_digs}$, $\mathit{num}_i[z] = C_{z} + B_{z}$.
Moreover, $C_{z_1} + C_{z_2} + ... + C_{z_{x - 1}} + C_{z_x} \geq n - t \geq 3t + 1$ as there are at least $n - t \geq 3t + 1$ correct processes.

Let $Z$ denote the sorted list of digests constructed by the $\mathsf{eliminated()}$ function (line~\ref{line:crb_sorted_list}).
We say that a digest $z$ is \emph{eliminated} if and only if (1) $z = Z[i]$, and (2) $i \in [1, \mathsf{eliminated()}]$.
(If $\mathsf{eliminated()} = 0$, no digest is eliminated.)
To prove the claim, it suffices to show that at most $x - 3$ digests from the $\mathit{correct\_digs}$ set are eliminated at time $\tau$ as this (along with the fact that $|\mathcal{I}| \geq n - t \geq 3t + 1$) guarantees that the rule at line~\ref{line:send_broken_rule} activates at process $p_i$.
By contradiction, suppose that $x - 2$ (or more) digests from the $\mathit{correct\_digs}$ set are eliminated.
We distinguish three cases:
\begin{compactitem}
    \item Suppose exactly $x - 2$ digests from the $\mathit{correct\_digs}$ set are eliminated.
    Without loss of generality, let $z_1, z_2, ..., z_{x - 3}, z_{x - 2}$ be eliminated at time $\tau$.
    This implies that $\mathit{num}_i[z_1] + \mathit{num}_i[z_2] + ... + \mathit{num}_i[z_{x - 2}] = (C_{z_1} + B_{z_1}) + (C_{z_2} + B_{z_2}) + ... + (C_{z_{x - 2}} + B_{z_{x - 2}}) \leq t$, which further implies that $C_{z_1} + C_{z_2} + ... + C_{z_{x - 2}} \leq t$.
    As $C_{z_1} + C_{z_2} + ... + C_{z_{x - 1}} + C_{z_x} \geq 3t + 1$ and $C_{z_1} + C_{z_2} + ... + C_{z_{x - 2}} \leq t$, $C_{z_{x - 1}} + C_{z_x} \geq 2t + 1$.
    Therefore, $C_{z_{x - 1}} \geq t + 1$ or $C_{z_x} \geq t + 1$, which contradicts the fact that no digest is broadcast by $t + 1$ correct processes.

    \item Suppose exactly $x - 1$ digests from the $\mathit{correct\_digs}$ set are eliminated.
    Without loss of generality, let $z_1, z_2, ..., z_{x - 3}, z_{x - 2}, z_{x - 1}$ be eliminated at time $\tau$.
    This implies that $\mathit{num}_i[z_1] + \mathit{num}_i[z_2] + ... + \mathit{num}_i[z_{x - 2}] + \mathit{num}_i[z_{x - 1}] = (C_{z_1} + B_{z_1}) + (C_{z_2} + B_{z_2}) + ... + (C_{z_{x - 2}} + B_{z_{x - 2}}) + (C_{z_{x - 1}} + B_{z_{x - 1}}) \leq t$, which further implies that $C_{z_1} + C_{z_2} + ... + C_{z_{x - 2}} + C_{z_{x - 1}} \leq t$.
    As $C_{z_1} + C_{z_2} + ... + C_{z_{x - 1}} + C_{z_x} \geq 3t + 1$ and $C_{z_1} + C_{z_2} + ... + C_{z_{x - 2}} + C_{z_{x - 1}} \leq t$, $C_{z_x} \geq 2t + 1$.
    This contradicts the fact that no digest is broadcast by  $t + 1$ correct processes.

    \item Suppose exactly $x$ digests from the $\mathit{correct\_digs}$ set are eliminated.
    This is impossible as $\mathit{num}[z_1] + \mathit{num}[z_2] + ... + \mathit{num}[z_x] \geq 3t + 1$ due to the fact that $C_{z_1} + C_{z_2} + ... + C_{z_{x - 1}} + C_{z_x} \geq 3t + 1$.
\end{compactitem}
As neither of the aforementioned cases can occur, the claim holds.
\end{proof}

We are finally ready to prove the termination property of \Cref{algorithm:crb}.

\begin{proposition} [\crb satisfies termination] \label{proposition:crb_termination}
Given $t < \frac{1}{4} n$, \crb (see \Cref{algorithm:crb}) satisfies termination in the presence of a computationally unbounded adversary.
Precisely, every correct process delivers within $O(1)$ asynchronous rounds.
\end{proposition}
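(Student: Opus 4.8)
The plan is to prove termination by a case analysis on the structure of $\mathcal{B}_C$, the multiset of digests broadcast by correct processes, mirroring the two helper claims already established (\Cref{claim:crb_termination_crucial_1} and \Cref{claim:crb_termination_crucial_2}). The goal is to show every correct process delivers at least once within $O(1)$ asynchronous rounds. First I would observe that, since $|\mathcal{B}_C| \in O(1)$ by assumption and there are at least $n - t \ge 3t+1$ correct processes, by pigeonhole some digest $z$ is broadcast by a ``large'' number of correct processes. The natural threshold to split on is $t+1$: either some digest is broadcast by $\ge t+1$ correct processes, or none is.

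In the first case --- some digest $z$ is broadcast by $\ge t+1$ correct processes --- \Cref{claim:crb_termination_crucial_1} immediately gives that every correct process delivers $z$ within $O(1)$ rounds, so termination holds. In the second case --- no digest is broadcast by $t+1$ or more correct processes --- I would argue that necessarily $|\mathcal{B}_C| \ge 4$: if $|\mathcal{B}_C| \le 3$, then distributing $\ge 3t+1$ correct broadcasts among at most $3$ distinct digests forces one of them to receive $\ge t+1$, a contradiction. Hence $|\mathcal{B}_C| > 3$ and no digest has $t+1$ correct broadcasters, so \Cref{claim:crb_termination_crucial_2} applies and every correct process delivers $\botcrb$ within $O(1)$ rounds. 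Combining the two cases, termination follows, and the $O(1)$-round bound is inherited from the two claims.

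The main subtlety --- really the only place where care is needed --- is the boundary case $|\mathcal{B}_C| = 3$ with the correct broadcasts split roughly evenly (e.g., close to $t$ each), which does \emph{not} fall under either helper claim as stated: no digest reaches the $t+1$ correct-broadcaster threshold of \Cref{claim:crb_termination_crucial_1}, yet $|\mathcal{B}_C| \not> 3$ so \Cref{claim:crb_termination_crucial_2} does not directly apply either. I expect this to be the main obstacle. The resolution is the pigeonhole argument above: with $\ge 3t+1$ correct broadcasts across only $3$ digests, the counts must sum to at least $3t+1$, so by averaging at least one digest is broadcast by $\lceil (3t+1)/3 \rceil = t+1$ correct processes, placing us back in the first case. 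Thus the apparent gap is vacuous, and a clean statement is: either $|\mathcal{B}_C| \le 3$ (in which case some digest has $\ge t+1$ correct broadcasters and \Cref{claim:crb_termination_crucial_1} applies) or $|\mathcal{B}_C| \ge 4$ (in which case, if additionally no digest has $t+1$ correct broadcasters, \Cref{claim:crb_termination_crucial_2} applies, and otherwise \Cref{claim:crb_termination_crucial_1} applies). In every branch some correct process, hence all correct processes, deliver within $O(1)$ rounds, completing the proof.
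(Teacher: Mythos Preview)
Your proposal is correct and takes essentially the same approach as the paper: both invoke \Cref{claim:crb_termination_crucial_1} and \Cref{claim:crb_termination_crucial_2} after a case split, and both close the apparent gap at $|\mathcal{B}_C|=3$ via the pigeonhole argument that $\ge 3t+1$ correct broadcasts across at most three digests forces one digest to have $\ge t+1$ correct broadcasters. The only cosmetic difference is that the paper splits first on $|\mathcal{B}_C|\le 3$ versus $|\mathcal{B}_C|>3$ (and then sub-splits the latter on the $t+1$ threshold), whereas you split first on the $t+1$ threshold and then argue the cardinality bound; your write-up is also more explicit about the pigeonhole step, which the paper leaves implicit.
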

\begin{proof}
To prove the proposition, we study two possible scenarios:
\begin{compactitem}
    \item Let $|\mathcal{B}_C| \leq 3$.
    In this case, there exists a digest $z$ broadcast by at least $t + 1$ correct processes.
    Therefore, every correct process delivers $z$ within $O(1)$ message delays (by \Cref{claim:crb_termination_crucial_1}).
    Hence, the termination property is satisfied in this case.

    \item Let $|\mathcal{B}_C| > 3$.
    We further distinguish two cases:
    \begin{compactitem}
        \item Let there exist a digest $z$ broadcast by $t + 1$ correct processes.
        In this case, the termination property is ensured due to \Cref{claim:crb_termination_crucial_1}.

        \item Let there be no digest broadcast by $t + 1$ correct processes.
        The termination property holds in this case due to \Cref{claim:crb_termination_crucial_2}.
    \end{compactitem}
\end{compactitem}
As termination is satisfied in every possible case, the proposition holds.
\end{proof}

Note that \Cref{proposition:crb_termination} proves that the worst-case time complexity of \crb is $O(1)$: correct processes deliver the first digest in $O(1)$ time.
Next, we prove \crb's integrity.

\begin{proposition} [\crb satisfies integrity]
Given $t < \frac{1}{4} n$, \crb (see \Cref{algorithm:crb}) satisfies integrity in the presence of a computationally unbounded adversary.
\end{proposition}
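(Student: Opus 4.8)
The plan is to obtain integrity essentially for free from the guarding rule imposed on correct executions of \crb at \Cref{line:crb_rule_integrity}, which stipulates that a correct process performs no computational steps until it has invoked $\mathsf{broadcast}(\cdot)$. First I would fix an arbitrary correct process $p_i$ and suppose, towards the claim, that $p_i$ triggers some $\mathsf{deliver}(z')$ with $z' \in \valuedigest \cup \{\botcrb\}$. Scanning \Cref{algorithm:crb}, the only lines that trigger a delivery are \Cref{line:crb_deliver} (delivering a digest $z$ upon receipt of $\langle \textsc{ready}, z \rangle$ from $2t + 1$ processes) and \Cref{line:crb_deliver_def} (delivering $\botcrb$ upon receipt of $\langle \textsc{broken} \rangle$ from $2t + 1$ processes); both are bodies of \textbf{upon} rules reacting to received messages, hence computational steps in the sense of \Cref{line:crb_rule_integrity}.

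Next I would observe that the sole point at which a correct process calls $\mathsf{broadcast}(z_i)$ is the external entry handler \textbf{upon} $\mathsf{broadcast}(\valuedigest\ z_i)$ at \Cref{line:crb-broadcast}. Putting these together: because $p_i$ is correct, \Cref{line:crb_rule_integrity} forbids it from executing \Cref{line:crb_deliver} or \Cref{line:crb_deliver_def} before it has executed \Cref{line:crb-broadcast}; equivalently, $p_i$ does not deliver before it has broadcast. Since this reasoning concerns only the local behavior of correct processes and uses no cryptographic hardness, it goes through verbatim against a computationally unbounded adversary, and quantifying over all correct processes yields integrity.

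I do not expect any real obstacle: the statement is an immediate corollary of the guarding rule together with the fact that every delivery path is confined to the two \textbf{upon} blocks identified above. The only diligence required is the routine check that I have enumerated \emph{all} delivery-triggering lines of \Cref{algorithm:crb} and that each such line sits inside a step that \Cref{line:crb_rule_integrity} governs --- a matter of inspecting the short pseudocode rather than of argument.
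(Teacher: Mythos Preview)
Your proposal is correct and matches the paper's own proof, which simply cites the rule at \Cref{line:crb_rule_integrity}. You have spelled out in detail what the paper leaves implicit (enumerating the delivery lines and noting each is a computational step governed by that rule), but the underlying argument is identical.
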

\begin{proof}
The integrity property is satisfied due to the rule at line~\ref{line:crb_rule_integrity}.
\end{proof}

The following proposition proves \crb's justification.

\begin{proposition} [\crb satisfies justification]
Given $t < \frac{1}{4} n$, \crb (see \Cref{algorithm:crb}) satisfies justification in the presence of a computationally unbounded adversary.
\end{proposition}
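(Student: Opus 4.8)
The plan is to prove that \crb satisfies justification: if some correct process delivers a digest $z' \in \valuedigest$ (with $z' \neq \botcrb$), then $z'$ was broadcast by a correct process. First I would observe that a correct process delivers $z'$ only via line~\ref{line:crb_deliver}, which fires upon receiving $\langle \textsc{ready}, z' \rangle$ from $2t + 1$ processes. Since there are at most $t$ faulty processes, at least $t + 1$ of these \textsc{ready} messages come from correct processes, so at least one correct process broadcast $\langle \textsc{ready}, z' \rangle$. The key step is then to ``unwind'' the \textsc{ready} messages: I would take the \emph{first} correct process $p_j$ (in global time) to broadcast $\langle \textsc{ready}, z' \rangle$. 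Process $p_j$ did so at either line~\ref{line:crb_broadcast_ready_1} or line~\ref{line:crb_broadcast_ready_2}. It cannot have been line~\ref{line:crb_broadcast_ready_2}, because that rule requires $t + 1$ \textsc{ready} messages for $z'$, hence at least one from a correct process, contradicting that $p_j$ is the first correct process to send such a \textsc{ready} message. Therefore $p_j$ broadcast via line~\ref{line:crb_broadcast_ready_1}, which means $p_j$ received $\langle \textsc{echo}, z' \rangle$ from $2t + 1$ processes, so at least $t + 1$ correct processes broadcast $\langle \textsc{echo}, z' \rangle$.

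Next I would push the argument back one more layer. A correct process broadcasts $\langle \textsc{echo}, z' \rangle$ only at line~\ref{line:crb_broadcast_echo}, upon receiving $\langle \textsc{init}, z' \rangle$ from $t + 1$ processes. Since at least one correct process sent such an \textsc{echo}, it received $t + 1$ \textsc{init} messages for $z'$, at least one of which came from a correct process (again by $t < n/4 \le n/3$, so $t + 1$ exceeds the number of faulty processes). A correct process sends $\langle \textsc{init}, z' \rangle$ only at line~\ref{line:crb-init-broadcast}, immediately after it invoked $\mathsf{broadcast}(z')$. Hence $z' \in \mathcal{B}_C$, i.e., $z'$ was broadcast by a correct process, which is exactly the justification property.

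The main obstacle — though it is a mild one — is the bookkeeping of ``first correct process'' arguments to break the circular dependency between lines~\ref{line:crb_broadcast_ready_1} and~\ref{line:crb_broadcast_ready_2} (and, if one wanted to be maximally careful, between the two \textsc{ready}-propagation paths). This is the standard Bracha-style quorum-chasing argument, and the only place the resilience assumption $t < n/4$ is used is indirectly, through $t + 1 > t$, to guarantee that any set of $t + 1$ senders contains a correct one; nothing here requires the sharper bound, so the proof is essentially identical to the reliable-broadcast justification proof of~\cite{Bracha87,book-cachin-guerraoui-rodrigues}. I would write it as: (i) $2t+1$ \textsc{ready} for $z'$ $\Rightarrow$ some correct process sent \textsc{ready} for $z'$; (ii) the first such correct process used line~\ref{line:crb_broadcast_ready_1}, so $2t+1$ \textsc{echo} for $z'$ $\Rightarrow$ some correct process sent \textsc{echo} for $z'$; (iii) that process saw $t+1$ \textsc{init} for $z'$ $\Rightarrow$ some correct process sent \textsc{init} for $z'$ $\Rightarrow$ that process broadcast $z'$.

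\begin{proof}
Suppose a correct process $p_i$ delivers a digest $z' \in \valuedigest$.
Then $p_i$ delivered $z'$ at line~\ref{line:crb_deliver}, having received $\langle \textsc{ready}, z' \rangle$ from $2t + 1$ processes.
As there are at most $t$ faulty processes, at least $t + 1 \geq 1$ correct processes broadcast $\langle \textsc{ready}, z' \rangle$.
Let $p_j$ be the first correct process (in global time) to broadcast $\langle \textsc{ready}, z' \rangle$.
Process $p_j$ did so either at line~\ref{line:crb_broadcast_ready_1} or at line~\ref{line:crb_broadcast_ready_2}.
It cannot be line~\ref{line:crb_broadcast_ready_2}: that rule requires $\langle \textsc{ready}, z' \rangle$ from $t + 1$ processes, at least one of which is correct, contradicting the minimality of $p_j$.
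Hence $p_j$ broadcast $\langle \textsc{ready}, z' \rangle$ at line~\ref{line:crb_broadcast_ready_1}, so $p_j$ received $\langle \textsc{echo}, z' \rangle$ from $2t + 1$ processes, and therefore at least $t + 1 \geq 1$ correct processes broadcast $\langle \textsc{echo}, z' \rangle$.
A correct process broadcasts $\langle \textsc{echo}, z' \rangle$ only at line~\ref{line:crb_broadcast_echo}, upon receiving $\langle \textsc{init}, z' \rangle$ from $t + 1$ processes, of which at least one is correct.
A correct process broadcasts $\langle \textsc{init}, z' \rangle$ only at line~\ref{line:crb-init-broadcast}, immediately after invoking $\mathsf{broadcast}(z')$.
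Thus $z' \in \mathcal{B}_C$, i.e., $z'$ is broadcast by a correct process, which establishes justification.
\end{proof}
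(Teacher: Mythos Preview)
Your proof is correct and follows essentially the same approach as the paper: both trace a delivery back through the first correct \textsc{ready} sender (which must have used line~\ref{line:crb_broadcast_ready_1}), then through an \textsc{echo} sender to a correct \textsc{init} sender. Your version is simply more explicit in spelling out each quorum-intersection step that the paper compresses into a couple of sentences.
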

\begin{proof}
Suppose a correct process $p_i$ delivers a digest $z \in \valuedigest$.
Therefore, there exists a \textsc{ready} message for $z$ broadcast by a correct process.
Note that the first correct process to broadcast a \textsc{ready} message for $z$ does so at line~\ref{line:crb_broadcast_ready_1}.
Hence, there exists a correct process that broadcasts an \textsc{echo} message for $z$, which implies $z \in \mathcal{B}_C$ (as at least $t + 1$ \textsc{init} messages for $z$ are received).
\end{proof}

Finally, we prove that \crb satisfies totality.

\begin{proposition} [\crb satisfies totality] \label{proposition:crb_totality}
Given $t < \frac{1}{4} n$, \crb (see \Cref{algorithm:crb}) satisfies totality in the presence of a computationally unbounded adversary.
Precisely, if any correct process delivers $z \in \valuedigest \cup \{ \botcrb \}$, then every correct process delivers $z$ within $O(1)$ asynchronous rounds from the aforementioned delivery.
\end{proposition}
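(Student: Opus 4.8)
The statement to prove is the totality property of \crb: if any correct process delivers some $z \in \valuedigest \cup \{\botcrb\}$, then every correct process delivers $z$ within $O(1)$ asynchronous rounds. The plan is to split on whether the delivered value is $\botcrb$ or a genuine digest, and in each case run the standard ``Bracha amplification'' argument adapted to the two thresholds ($t+1$ to echo the minority, $2t+1$ to deliver).

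\textbf{Case $z = \botcrb$.} If a correct process $p_i$ delivers $\botcrb$, it received $\langle\textsc{broken}\rangle$ from $2t+1$ processes (line~\ref{line:crb_deliver_def_rule}), hence from at least $t+1$ correct processes. Every correct process therefore eventually receives $t+1$ \textsc{broken} messages and rebroadcasts \textsc{broken} (line~\ref{line:crb_broadcast_broken_1}) within one message delay; consequently every correct process sends a \textsc{broken} message, so each correct process eventually collects $2t+1$ such messages and delivers $\botcrb$ (line~\ref{line:crb_deliver_def}), all within $O(1)$ message delays of $p_i$'s delivery.

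\textbf{Case $z \in \valuedigest$.} If a correct process $p_i$ delivers $z$, it received $\langle\textsc{ready}, z\rangle$ from $2t+1$ processes, hence from at least $t+1$ correct processes. The first step is a quorum-intersection argument showing no correct process ever broadcasts $\langle\textsc{ready}, z'\rangle$ for $z' \neq z$: a correct process first broadcasts $\langle\textsc{ready}, z'\rangle$ only via line~\ref{line:crb_broadcast_ready_1}, upon $2t+1$ \textsc{echo} messages for $z'$; two such sets of size $2t+1$ among $n \leq 4t+1$ processes intersect in $\geq t+1$, hence in a correct process, but a correct process sends at most one \textsc{echo} message (by the rule), so $z' = z$. (Strictly, one should also argue the ``ready amplification'' at line~\ref{line:crb_broadcast_ready_2} cannot introduce a conflicting $z'$: if $t+1$ \textsc{ready} messages for $z'$ were received, one came from a correct process, and by induction on the first correct sender that correct process used line~\ref{line:crb_broadcast_ready_1}, forcing $z' = z$.) Given this, the amplification proceeds: the $t+1$ correct processes that broadcast $\langle\textsc{ready}, z\rangle$ cause every correct process to receive $t+1$ \textsc{ready} messages for $z$ and rebroadcast $\langle\textsc{ready}, z\rangle$ (line~\ref{line:crb_broadcast_ready_2}) within one message delay; then every correct process collects $2t+1$ \textsc{ready} messages for $z$ and delivers $z$ (line~\ref{line:crb_deliver}) within $O(1)$ message delays. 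Since all correct processes deliver the \emph{same} $z$ (by the uniqueness argument, a correct process can deliver only $z$ via line~\ref{line:crb_deliver}, and cannot deliver $\botcrb$ simultaneously in the scenario being considered — though note that a correct process may in general deliver both, per the primitive spec; here we only assert everyone delivers the particular $z$), totality holds.

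The main obstacle is the conflict-freedom argument for \textsc{ready} messages, i.e.\ establishing that at most one digest can ever be ``ready''-broadcast by correct processes. The subtlety is that the echo threshold $2t+1$ must be large enough relative to $n \leq 4t+1$ to force quorum intersection in a correct process, and one must carefully handle the second path to broadcasting \textsc{ready} (line~\ref{line:crb_broadcast_ready_2}) by an inductive/``first correct sender'' argument so that the amplification path cannot smuggle in a second digest. Everything else is a routine message-delay count. Note this proposition only claims totality; the $\botcrb$-vs-digest disjointness that a full CRB correctness proof might want is not asserted here and need not be proved.

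\begin{proof}
We consider the value $z$ delivered by some correct process $p_i$.

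\smallskip
\noindent\emph{Case $z = \botcrb$.}
Process $p_i$ delivered $\botcrb$ at line~\ref{line:crb_deliver_def}, so it received $\langle \textsc{broken} \rangle$ from $2t+1$ processes (line~\ref{line:crb_deliver_def_rule}), and hence from at least $(2t+1) - t = t+1$ correct processes. Each of these correct processes broadcast $\langle \textsc{broken} \rangle$ (either at line~\ref{line:crb_broadcast_broken_1} or line~\ref{line:crb_broadcast_broken_2}). Therefore, within one message delay, every correct process receives $t+1$ \textsc{broken} messages (line~\ref{line:crb-broken-plurality}) and, if it has not already done so, broadcasts $\langle \textsc{broken} \rangle$ (line~\ref{line:crb_broadcast_broken_1}). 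Consequently, every correct process broadcasts $\langle \textsc{broken} \rangle$, so within one further message delay every correct process receives $2t+1$ \textsc{broken} messages (line~\ref{line:crb_deliver_def_rule}) and delivers $\botcrb$ (line~\ref{line:crb_deliver_def}). Thus every correct process delivers $\botcrb$ within $O(1)$ asynchronous rounds of $p_i$'s delivery.

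\smallskip
\noindent\emph{Case $z \in \valuedigest$.}
We first show that no correct process ever broadcasts $\langle \textsc{ready}, z' \rangle$ with $z' \neq z$.
Consider the first correct process (in global time) to broadcast a \textsc{ready} message; it must do so at line~\ref{line:crb_broadcast_ready_1} (the path at line~\ref{line:crb_broadcast_ready_2} requires $t+1$ \textsc{ready} messages, hence a prior correct \textsc{ready} sender). Suppose, for contradiction, that some correct process broadcasts $\langle \textsc{ready}, z' \rangle$ with $z' \neq z$, and consider the first correct process $p_j$ to do so; by the same reasoning $p_j$ acts at line~\ref{line:crb_broadcast_ready_1}, upon receiving $\langle \textsc{echo}, z' \rangle$ from $2t+1$ processes. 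Also, since $p_i$ delivered $z$, at least one correct process broadcast $\langle \textsc{ready}, z \rangle$; the first such correct process acted at line~\ref{line:crb_broadcast_ready_1}, upon receiving $\langle \textsc{echo}, z \rangle$ from $2t+1$ processes. As $n \leq 4t+1$, these two sets of $2t+1$ processes intersect in at least $(2t+1) + (2t+1) - (4t+1) = t+1$ processes, hence in a correct process. But a correct process broadcasts at most one \textsc{echo} message, so $z' = z$, a contradiction. An easy induction on the order in which correct processes broadcast \textsc{ready} messages (using line~\ref{line:crb_broadcast_ready_2} for the inductive step) then shows every correct \textsc{ready} broadcast is for $z$.

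Now, since $p_i$ delivered $z$ at line~\ref{line:crb_deliver}, it received $\langle \textsc{ready}, z \rangle$ from $2t+1$ processes, hence from at least $t+1$ correct processes, each of which broadcast $\langle \textsc{ready}, z \rangle$. Therefore, within one message delay, every correct process receives $t+1$ \textsc{ready} messages for $z$ (line~\ref{line:crb_plurality_ready}) and broadcasts $\langle \textsc{ready}, z \rangle$ (line~\ref{line:crb_broadcast_ready_2}) if it has not already done so. Hence every correct process broadcasts $\langle \textsc{ready}, z \rangle$, so within one further message delay every correct process receives $2t+1$ \textsc{ready} messages for $z$ (line~\ref{line:crb_receive_quorum_ready}) and delivers $z$ (line~\ref{line:crb_deliver}). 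Thus every correct process delivers $z$ within $O(1)$ asynchronous rounds of $p_i$'s delivery.
\end{proof}
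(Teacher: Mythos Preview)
Your amplification argument in each case (from $2t+1$ received messages to $t+1$ correct senders, hence every correct process crosses the $t+1$ threshold and rebroadcasts, hence every correct process crosses the $2t+1$ threshold and delivers) is correct and is exactly the paper's proof.

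However, the conflict-freedom detour in the $z \in \valuedigest$ case is both unnecessary and wrong. It is unnecessary because totality only asserts that every correct process delivers \emph{this} $z$; the CRB specification has no consistency property and explicitly allows multiple deliveries (see the spec: ``No correct process delivers more than once'' is absent, and termination says ``at least once''), so there is nothing to prove about other digests. It is wrong because you rely on ``a correct process broadcasts at most one \textsc{echo} message,'' but the rule in \Cref{algorithm:crb} is ``at most one \textsc{echo} message is broadcast \emph{per digest}.'' A correct process may perfectly well broadcast $\langle\textsc{echo}, z\rangle$ and $\langle\textsc{echo}, z'\rangle$ for $z \neq z'$ (indeed, this happens whenever two digests each gather $t+1$ \textsc{init} messages), so your quorum-intersection contradiction does not go through. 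The same holds for \textsc{ready}: correct processes can and do broadcast \textsc{ready} for multiple digests, and correct processes can deliver multiple distinct digests from this primitive. Simply delete the uniqueness paragraph; the remaining two-step amplification stands on its own and is the paper's proof.
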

\begin{proof}
Suppose a correct process $p_i$ delivers $z \in \valuedigest \cup \{ \botcrb \}$.
We distinguish two possibilities:
\begin{compactitem}
    \item Let $z \in \valuedigest$.
    (This implies that $z \neq \botcrb$.)
    Hence, $p_i$ received a $\langle \textsc{ready}, z \rangle$ message from (at least) $2t + 1$ processes, which implies that all correct processes eventually receive (at least) $t + 1$ \textsc{ready} message for $z$ and broadcast a \textsc{ready} message for $z$.
    Thus, every correct process receives $2t + 1$ \textsc{ready} messages for $z$ within $O(1)$ message delays and delivers $z$.

    \item Let $z = \botcrb$.
    Following the same argument as above (just applied to \textsc{broken} messages), we conclude that every correct process delivers $\botcrb$ within $O(1)$ message delays.
\end{compactitem}
As the statement of the proposition holds in both cases, the proof is concluded.
\end{proof}

Observe that \Cref{proposition:crb_totality} proves that the totality property is satisfied within $O(1)$ asynchronous rounds from the first delivery.

\smallskip
\noindent \textbf{Proof of complexity.}
We now prove \crb's complexity.

\begin{lemma} [\crb's worst-case complexity] 
Given $t < \frac{1}{4}n$, the following holds for \crb (see \Cref{algorithm:crb}) in the presence of a computationally unbounded adversary:
\begin{compactitem}
    \item The worst-case message complexity is $O(n^2)$.

    \item The worst-case bit complexity is $O(n^2 \kappa)$.

\end{compactitem} 
\end{lemma}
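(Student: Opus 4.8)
The plan is to bound, for each of the four message types used by \crb (\textsc{init}, \textsc{echo}, \textsc{ready}, \textsc{broken}), how many copies a single correct process broadcasts, and then multiply by $n$ (the cost of one ``broadcast'', i.e.\ $n$ unicasts) and by $n$ again (the number of correct senders). All local bookkeeping---the $\mathit{num}_i$ updates and the $\mathsf{distinct}()$/$\mathsf{eliminated}()$ functions---generates no messages, so only these four types matter. Since each of them carries at most a single digest of $O(\kappa)$ bits (and \textsc{broken} carries none), a bit bound of $O(n^2\kappa)$ follows from a message bound of $O(n^2)$; it therefore suffices to prove the latter.

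First I would dispose of \textsc{init} and \textsc{broken}. By the rule at \Cref{line:crb_rule_1} each process processes only one \textsc{init} message, and a correct process broadcasts \textsc{init} exactly once (\Cref{line:crb-init-broadcast}); by the rule that at most one \textsc{broken} message is broadcast, together with the fact that \Cref{line:crb_broadcast_broken_1,line:crb_broadcast_broken_2} are both guarded by this rule, a correct process broadcasts \textsc{broken} at most once. Hence correct processes collectively send $O(n)\cdot O(n)=O(n^2)$ \textsc{init} messages and $O(n^2)$ \textsc{broken} messages.

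The crux is bounding the number of \emph{distinct} digests for which a correct process can broadcast an \textsc{echo} or \textsc{ready} message; once this is shown to be $O(1)$, the per-digest rules (``at most one \textsc{echo}/\textsc{ready} per digest'') bound the number of such broadcasts per correct process by $O(1)$, giving $O(n^2)$ messages of each type as above. For \textsc{echo}: a correct process broadcasts $\langle\textsc{echo},z\rangle$ only upon receiving $t+1$ \textsc{init} messages for $z$ (\Cref{line:receive_init_rule}); since at most $t$ processes are faulty, at least one of these comes from a correct process, so $z\in\mathcal{B}_C$. For \textsc{ready}: the first correct process to broadcast $\langle\textsc{ready},z\rangle$ cannot do so via \Cref{line:crb_broadcast_ready_2} (that would require $t+1$ \textsc{ready} messages for $z$, hence a strictly earlier correct \textsc{ready} for $z$), so it does so via \Cref{line:crb_broadcast_ready_1}, upon receiving $2t+1$ \textsc{echo} messages for $z$; at least $t+1$ of these are from correct processes, so some correct process echoed $z$, whence $z\in\mathcal{B}_C$ by the previous sentence. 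By the assumed behavior in \Cref{mod:async_crb}, $|\mathcal{B}_C|\in O(1)$, so the set of digests that correct processes can ever echo or ready has constant size.

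Combining the four counts gives a worst-case message complexity of $O(n^2)$, and since every message is $O(\kappa)$ bits (using $\kappa>\log n$ to absorb headers), a worst-case bit complexity of $O(n^2\kappa)$. I do not expect a genuine obstacle: the only non-mechanical ingredient is the amplification argument in the previous paragraph, which is the standard Bracha-style reasoning already underlying the proof of \crb's justification property, specialized to the observation that the $O(1)$-digest assumption caps how many distinct digests survive each quorum-threshold layer.
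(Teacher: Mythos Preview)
Your proposal is correct and follows essentially the same approach as the paper: bound \textsc{init} and \textsc{broken} to one broadcast each per correct process, show that every \textsc{echo} and \textsc{ready} sent by a correct process is for some $z\in\mathcal{B}_C$, and conclude using $|\mathcal{B}_C|\in O(1)$. The paper's proof is terser and merely asserts the ``$z\in\mathcal{B}_C$'' claim, whereas you spell out the Bracha-style amplification argument for \textsc{ready}; otherwise the two arguments are identical.
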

\begin{proof}
Consider any correct process $p_i$.
If $p_i$ broadcasts an \textsc{echo} or a \textsc{ready} message for a digest $z$, then $z \in \mathcal{B}_C$.
Moreover, process $p_i$ broadcasts at most one \textsc{init} and at most one \textsc{broken} message.
Therefore, process $p_i$ sends
\begin{equation*}
    \underbrace{O(n)}_\text{\textsc{init}} 
    {}+{}
    \underbrace{O(n)}_\text{\textsc{broken}} 
    {}+{}
    \underbrace{|\mathcal{B}_C| \cdot O(n)}_\text{\textsc{echo} \& \textsc{ready}} \text{ messages}.
\end{equation*}
Moreover, process $p_i$ sends 
\begin{equation*}
    \underbrace{O(n \kappa)}_\text{\textsc{init}}
    {}+{}
    \underbrace{O(n)}_\text{\textsc{broken}} 
    {}+{}
    \underbrace{|\mathcal{B}_C| \cdot O(n \kappa)}_\text{\textsc{echo} \& \textsc{ready}} \text{ bits}.
\end{equation*}
Given that $|\mathcal{B}_C| \in O(1)$, process $p_i$ sends $O(n)$ messages and $O(n \kappa)$ bits, which implies that all correct processes send $O(n^2)$ messages and $O(n^2\kappa)$ bits.
\end{proof}

\subsection{Pseudocode}

The pseudocode of \smba is given in \Cref{algorithm:async_smba}.
Internally, \smba relies on (1) instances $\mathcal{MBA}_1$ and $\mathcal{MBA}_2$ of the MBA primitive run on digests (lines~\ref{line:strong_mba_init} and~\ref{line:strong_mba_init_2}), and (2) an instance $\mathcal{CRB}$ of the CRB algorithm \crb (line~\ref{line:strong_crb_init}).
Concretely, for $\mathcal{MBA}_1$ and $\mathcal{MBA}_2$, we rely on the MBA algorithm introduced in~\cite{DBLP:journals/acta/MostefaouiR17} that (1) exchanges $O(n^2)$ messages and $O(n^2 \kappa)$ bits in expectation, and (2) terminates in $O(1)$ time in expectation.

\begin{algorithm}[tbp]
\caption{\smba: Pseudocode (for process $p_i$)}
\label{algorithm:async_smba}
\begin{algorithmic}[1]
\scriptsize

\State \textbf{Uses:}
    \State \hskip2em \textcolor{jnSUDigitalRedLight}{\(\triangleright\) \cite{DBLP:journals/acta/MostefaouiR17} exchanges $O(n^2)$ messages and $O(n^2 \kappa)$ bits and terminates in $O(1)$ time}
    \State \hskip2em MBA algorithm~\cite{DBLP:journals/acta/MostefaouiR17} with $\valuemba = \valuedigest \cup \{ \botcrb \}$, \textbf{instance} $\mathcal{MBA}_1$  \label{line:strong_mba_init} 

    \smallskip
    \State \hskip2em \textcolor{jnSUDigitalRedLight}{\(\triangleright\) \cite{DBLP:journals/acta/MostefaouiR17} exchanges $O(n^2)$ messages and $O(n^2 \kappa)$ bits and terminates in $O(1)$ time}
    \State \hskip2em MBA algorithm~\cite{DBLP:journals/acta/MostefaouiR17} with $\valuemba = \valuedigest$, \textbf{instance} $\mathcal{MBA}_2$  \label{line:strong_mba_init_2} 

    \smallskip
    \State \hskip2em \textcolor{jnSUDigitalRedLight}{\(\triangleright\) \crb exchanges $O(n^2)$ messages and $O(n^2 \kappa)$ bits}
    \State \hskip2em CRB algorithm \crb, \textbf{instance} $\mathcal{CRB}$  \label{line:strong_crb_init}

\medskip
\State \textbf{Constants:}
    \State \hskip2em $\valuedigest$ $\mathit{default}$ \BlueComment{default digest}

\medskip
\State \textbf{Local variables:}
    \State \hskip2em $\valuedigest$ $z_i \gets p_i$'s proposal
    \State \hskip2em $\mathsf{Set}(\valuedigest)$ $\mathit{delivered}_i \gets \emptyset$

\medskip
\State \textbf{upon} $\mathsf{propose}(\valuedigest \text{ } z_i)$:\label{line:strong_start} \BlueComment{start of the algorithm}
    \State \hskip2em \textbf{invoke} $\mathcal{CRB}.\mathsf{broadcast}(z_i)$\label{line:strong_crb_broadcast}

\medskip
\State \textbf{upon} $\mathcal{CRB}.\mathsf{deliver}(\valuedigest \cup \{ \botcrb \} \text{ } z)$:\label{line:upon_deliver_crb} 
    \State \hskip2em $\mathit{delivered}_i \gets \mathit{delivered}_i \cup \{ z \}$
    \State \hskip2em \textbf{if} $|\mathit{delivered}_i| = 1$:
        \State \hskip4em \textbf{invoke} $\mathcal{MBA}_1.\mathsf{propose}(z)$\label{line:collective_propose_mba}

\medskip
\State \textbf{upon} $\mathcal{MBA}_1.\mathsf{decide}(\valuedigest \cup \{ \botcrb, \botmba \} \text{ } z')$:\label{line:collective_upon_mba} \BlueComment{we assume $\botcrb \neq \botmba$}
    \State \hskip2em \textbf{if} $z' \neq \botmba$:
        \State \hskip4em $\valuedigest \cup \{ \botcrb \}$ $z^{\star} \gets z'$ \label{line:strong_set_decided_value}
        \State \hskip4em \textbf{if} $z^{\star} = \botcrb$:
            \State \hskip6em $z^{\star} \gets \mathit{default}$\label{line:strong_set_default_acc_value}
        \State \hskip4em \textbf{invoke} $\mathcal{MBA}_2.\mathsf{propose}(z^{\star})$\label{line:strong_mba_2_propose_1}
    \State \hskip2em \textbf{else:}
        \State \hskip4em \textbf{wait for} $|\mathit{delivered}_i| = 2$%
            \label{line:strong_wait_for_delivered}
        \State \hskip4em Let $z^{\star}$ be the lexicographically smallest digest ($\neq \botcrb$) in $\mathit{delivered}_i$%
            \label{line:collective_choose_smallest} 
        \State \hskip4em \textbf{invoke} $\mathcal{MBA}_2.\mathsf{propose}(z^{\star})$\label{line:strong_mba_2_propose_2}

\medskip
\State \textbf{upon} $\mathcal{MBA}_2.\mathsf{decide}(\valuedigest \cup \{ \botmba \} \text{ } z'')$:\label{line:strong_decide_rule}
    \State \hskip2em \textbf{if} $z'' = \botmba$: \label{line:strong_mba_check_bot}
        \State \hskip4em  $z'' \gets \mathit{default}$\label{line:strong_update_bot_decide_2}
    \State \hskip2em \textbf{trigger} $\mathsf{decide}(z'')$\label{line:strong_decide_2}

\end{algorithmic}
\end{algorithm}

\smallskip
\noindent \textbf{Pseudocode description.}
We explain the pseudocode of our SMBA protocol \smba from the perspective of a correct process $p_i$.
Once $p_i$ proposes its digest $z_i$ (line~\ref{line:strong_start}), process $p_i$ broadcasts $z_i$ via $\mathcal{CRB}$ (line~\ref{line:strong_crb_broadcast}).
When $p_i$ delivers the first digest (or the special value $\botcrb$) from $\mathcal{CRB}$ (line~\ref{line:upon_deliver_crb}), $p_i$ proposes it to $\mathcal{MBA}_1$ (line~\ref{line:collective_propose_mba}).
Let $p_i$ decide $z' \in \valuedigest \cup \{ \botcrb, \botmba \}$ from $\mathcal{MBA}_1$ (line~\ref{line:collective_upon_mba}); we assume $\botcrb \neq \botmba$.
Now, we distinguish two cases:
\begin{compactitem}
    \item Let $z' \neq \botmba$.
    Note that the justification property of $\mathcal{MBA}_1$ ensures that $z'$ was proposed by a correct process.
    We further investigate two scenarios:
    \begin{compactitem}
        \item Let $z' = \botcrb$.
        In this case, process $p_i$ proposes the default digest $\mathit{default}$ to $\mathcal{MBA}_2$ (lines~\ref{line:strong_set_default_acc_value} and~\ref{line:strong_mba_2_propose_1}).
        Note that, as $\botcrb$ is delivered from $\mathcal{CRB}$, there must exist more than two different digests proposed by correct processes (due to $\mathcal{CRB}$'s validity property).
        Therefore, this case does \emph{not} require the decision to be proposed by a correct process.

        \item Let $z' \neq \botcrb$.
        Then, process $p_i$ proposes $z'$ to $\mathcal{MBA}_2$ (lines~\ref{line:strong_set_decided_value} and~\ref{line:strong_mba_2_propose_1}).
        Observe that the justification property of $\mathcal{CRB}$ guarantees that $z'$ is proposed by correct process.
    \end{compactitem}

    \item Let $z' = \botmba$.
    The strong unanimity property of $\mathcal{MBA}_1$ guarantees that not all correct processes proposed the same  value $z \in \valuedigest \cup \{ \botcrb \}$ to $\mathcal{MBA}_1$.
    (Otherwise, $\botmba$ could not have been decided.)
    Thus, there are at least two different values $z_1, z_2 \in \valuedigest \cup \{ \botcrb \}$ delivered from $\mathcal{CRB}$ by correct processes.
    In this case, process $p_i$ waits to deliver the second digest (or $\botcrb$) from $\mathcal{CRB}$ (line~\ref{line:strong_wait_for_delivered}).
    Once that happens (and it will due to $\mathcal{CRB}$'s totality property), process $p_i$ proposes to $\mathcal{MBA}_2$ the lexicographically smallest digest (thus, not $\botcrb$!) it has delivered from $\mathcal{CRB}$ (lines~\ref{line:collective_choose_smallest} and~\ref{line:strong_mba_2_propose_2}).
\end{compactitem}
Finally, once process $p_i$ decides $z'' \in \valuedigest \cup \{ \botmba \}$ from $\mathcal{MBA}_2$ (line~\ref{line:strong_decide_rule}), $p_i$ decides (1) the default digest $\mathit{default}$ if $z'' = \botmba$ (lines~\ref{line:strong_update_bot_decide_2} and~\ref{line:strong_decide_2}), or (2) $z''$ otherwise (line~\ref{line:strong_decide_2}).

\subsection{Proof of Correctness \& Complexity}

This subsection formally proves the correctness and complexity of \smba.

\smallskip
\noindent \textbf{Proof of correctness.}
We first prove the following lemma.

\begin{lemma} [\smba is correct]
Given $t < \frac{1}{4} n$, \smba (see \Cref{algorithm:async_smba}) is a correct implementation of the SMBA primitive in the presence of a computationally unbounded adversary.
\end{lemma}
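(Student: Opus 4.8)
The plan is to verify each of the five SMBA properties (strong validity, agreement, integrity, termination) in turn, treating the internal primitives $\mathcal{CRB}$, $\mathcal{MBA}_1$, and $\mathcal{MBA}_2$ as black boxes with the guarantees stated in \Cref{mod:async_crb,mod:async_mba}. The key structural observation to establish up front is that the behavior of \smba is entirely dictated by the value decided from $\mathcal{MBA}_1$: either it is a genuine digest/$\botcrb$ (in which case all correct processes feed a common, well-justified value into $\mathcal{MBA}_2$), or it is $\botmba$, in which case the strong unanimity of $\mathcal{MBA}_1$ forces the existence of (at least) two distinct $\mathcal{CRB}$-deliveries among correct processes, so the ``wait for $|\mathit{delivered}_i| = 2$'' at line~\ref{line:strong_wait_for_delivered} is unblocked by $\mathcal{CRB}$'s totality, and every correct process picks the \emph{same} lexicographically smallest delivered digest (by totality again, all correct processes eventually see the same $\mathit{delivered}$ set up to the point of the second delivery, so they agree on the smallest).

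First I would handle the easy properties. \emph{Integrity} follows because $p_i$ only triggers $\mathsf{decide}$ upon $\mathcal{MBA}_2.\mathsf{decide}$, which happens at most once by $\mathcal{MBA}_2$'s integrity. \emph{Termination} requires chaining the termination guarantees: every correct process invokes $\mathcal{CRB}.\mathsf{broadcast}$, so by $\mathcal{CRB}$'s termination each delivers at least once and thus invokes $\mathcal{MBA}_1.\mathsf{propose}$; $\mathcal{MBA}_1$ then terminates; in the $\botmba$ branch, the wait at line~\ref{line:strong_wait_for_delivered} completes because $\mathcal{MBA}_1$ deciding $\botmba$ (via strong unanimity) implies two distinct correct deliveries exist, and $\mathcal{CRB}$'s totality propagates both to $p_i$; then $\mathcal{MBA}_2.\mathsf{propose}$ is invoked by all correct processes and $\mathcal{MBA}_2$ terminates. \emph{Agreement} is immediate from $\mathcal{MBA}_2$'s agreement: all correct processes decide the same $z''$, hence the same final value after the $\botmba \mapsto \mathit{default}$ remapping. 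The only subtlety is to confirm that \emph{all} correct processes do reach $\mathcal{MBA}_2.\mathsf{propose}$ (established in the termination argument), so that $\mathcal{MBA}_2$'s guarantees actually apply.

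The main obstacle is \emph{strong validity}: if at most two distinct digests are proposed by correct processes and a correct process decides $z'$, then $z'$ was proposed by a correct process. Here I would argue: with $\le 2$ distinct correct proposals, $\mathcal{CRB}$'s validity property guarantees no correct process ever delivers $\botcrb$; hence every value proposed to $\mathcal{MBA}_1$ is a genuine digest, and by $\mathcal{CRB}$'s justification it was $\mathcal{CRB}$-broadcast — i.e., proposed to \smba — by a correct process. Now split on what $\mathcal{MBA}_1$ decides. If it decides some $z' \neq \botmba$, then $z' \neq \botcrb$ (no correct process proposed $\botcrb$, and by $\mathcal{MBA}_1$'s justification $z'$ was proposed by a correct process), so all correct processes feed $z'$ into $\mathcal{MBA}_2$, and $\mathcal{MBA}_2$'s strong unanimity forces every correct decision to equal $z'$, which is a correct \smba-proposal. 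If instead $\mathcal{MBA}_1$ decides $\botmba$, then each correct process picks the lexicographically smallest digest in its $\mathit{delivered}$ set, which (by $\mathcal{CRB}$'s justification) is a correct \smba-proposal; moreover, since there are only two distinct correct proposals and $\mathcal{CRB}$'s justification rules out any other deliveries, all correct processes that reach line~\ref{line:collective_choose_smallest} are selecting from the same two-element pool, so they all propose the \emph{same} smallest digest to $\mathcal{MBA}_2$, and strong unanimity of $\mathcal{MBA}_2$ again pins the decision to that correct proposal. The delicate point I would be careful about is ensuring that in the $\botmba$ branch every correct process really does select the same digest — this needs the claim that under $\le 2$ distinct correct proposals, the set of values ever delivered by $\mathcal{CRB}$ to correct processes is exactly those $\le 2$ proposals (no $\botcrb$, nothing else), which is precisely what $\mathcal{CRB}$'s validity plus justification give. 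I would state this as a small claim and then the property follows.
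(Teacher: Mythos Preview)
Your proposal is correct and follows essentially the same approach as the paper: the paper proves agreement and integrity directly from $\mathcal{MBA}_2$, termination by chaining the primitive guarantees (using $\mathcal{MBA}_1$'s strong unanimity to argue that $\botmba$ forces two distinct $\mathcal{CRB}$-deliveries, which totality then propagates), and strong validity via exactly the claim you isolate—that under $\le 2$ correct proposals, $\mathcal{CRB}$'s validity rules out $\botcrb$ and justification confines all deliveries to $\mathcal{Z}_C$, so in the $\botmba$ branch every correct process's $\mathit{delivered}_i$ at size~2 is precisely $\mathcal{Z}_C$ and all pick the same smallest digest. The one small thing the paper makes explicit that you leave implicit is a preliminary claim that $\mathcal{CRB}$'s assumed behavior (only $O(1)$ distinct broadcast digests) is inherited from \smba's own assumed behavior, so $\mathcal{CRB}$'s guarantees actually apply.
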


We start by proving the agreement property of \smba.

\begin{proposition} [\smba satisfies agreement]
Given $t < \frac{1}{4} n$, \smba (see \Cref{algorithm:async_smba}) satisfies agreement in the presence of a computationally unbounded adversary.
\end{proposition}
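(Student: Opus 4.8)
The plan is to derive agreement on the final decision from the agreement property of the two internal MBA instances, together with the deterministic post-processing of their outputs. First I would observe that every correct process decides at line~\ref{line:strong_decide_2}, and the decided value is a deterministic function of the value $z''$ obtained from $\mathcal{MBA}_2.\mathsf{decide}(\cdot)$: if $z'' = \botmba$ the process decides $\mathit{default}$ (lines~\ref{line:strong_update_bot_decide_2}--\ref{line:strong_decide_2}), and otherwise it decides $z''$ itself. By the agreement property of $\mathcal{MBA}_2$, no two correct processes obtain different values $z''$; hence they apply the same deterministic map to the same input and therefore decide the same value. This is the core of the argument.

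The remaining obligation is to check that every correct process that decides does in fact reach an invocation $\mathcal{MBA}_2.\mathsf{propose}(\cdot)$ and that $\mathcal{MBA}_2$ is used correctly (i.e.\ each correct participant proposes exactly once), so that the agreement property of $\mathcal{MBA}_2$ actually applies. I would argue this by tracing the control flow from line~\ref{line:upon_deliver_crb} onward: by the termination property of $\mathcal{CRB}$ every correct process delivers at least once, hence proposes to $\mathcal{MBA}_1$ at line~\ref{line:collective_propose_mba}; by termination of $\mathcal{MBA}_1$ it then reaches the handler at line~\ref{line:collective_upon_mba}. In the branch $z' \neq \botmba$ it invokes $\mathcal{MBA}_2.\mathsf{propose}$ directly; in the branch $z' = \botmba$ it first waits for $|\mathit{delivered}_i| = 2$, which is eventually satisfied because $\botmba$ being decided from $\mathcal{MBA}_1$ forces (by strong unanimity of $\mathcal{MBA}_1$) two distinct values to have been delivered from $\mathcal{CRB}$ by correct processes, and then $\mathcal{CRB}$'s totality property propagates a second delivery to every correct process. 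In both branches exactly one $\mathcal{MBA}_2.\mathsf{propose}$ call is made per correct process, so the assumed behavior of $\mathcal{MBA}_2$ is met.

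I expect the main obstacle to be the bookkeeping in the $z' = \botmba$ branch: showing rigorously that the wait at line~\ref{line:strong_wait_for_delivered} always completes requires combining the strong-unanimity property of $\mathcal{MBA}_1$ (to conclude that not all correct processes proposed the same value to $\mathcal{MBA}_1$, hence at least two distinct values were delivered from $\mathcal{CRB}$) with $\mathcal{CRB}$'s totality (to conclude those deliveries reach every correct process). Once that liveness fact is in place, agreement itself is immediate. A subtlety worth spelling out is that the handler at line~\ref{line:collective_upon_mba} fires exactly once per correct process (integrity of $\mathcal{MBA}_1$), so no correct process proposes twice to $\mathcal{MBA}_2$; I would state this explicitly to discharge $\mathcal{MBA}_2$'s assumed-behavior precondition before invoking its agreement guarantee.
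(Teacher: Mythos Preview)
Your core argument is exactly the paper's: the decided value is a deterministic function of the output of $\mathcal{MBA}_2$, so agreement of $\mathcal{MBA}_2$ immediately yields agreement of \smba. The paper's proof is literally the one sentence ``The agreement property follows directly from the agreement property of $\mathcal{MBA}_2$.''

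The substantial extra work you propose---tracing control flow to show every correct process eventually proposes to $\mathcal{MBA}_2$, invoking $\mathcal{CRB}$'s totality and $\mathcal{MBA}_1$'s strong unanimity to discharge the wait at line~\ref{line:strong_wait_for_delivered}---is a liveness argument, not a safety one. Agreement is the safety statement ``no two correct processes decide different values''; it suffices that any correct process that \emph{does} decide obtains its value from $\mathcal{MBA}_2$, and that $\mathcal{MBA}_2$'s agreement holds. The paper places your liveness reasoning where it belongs, namely in the separate termination proof (\Cref{proposition:smba_termination}), rather than folding it into agreement. Your concern about discharging $\mathcal{MBA}_2$'s ``every correct process proposes exactly once'' precondition before invoking its agreement guarantee is defensible on a strict reading of the module specification, but in this paper (and standardly) the safety properties of the underlying MBA are treated as holding unconditionally, with the precondition mattering only for termination.
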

\begin{proof}
The agreement property follows directly from the agreement property of $\mathcal{MBA}_2$.
\end{proof}

Next, we prove that \smba satisfies strong validity.
In the rest of the proof, let $\mathcal{Z}_C$ denote the set of digests proposed by correct processes.
We start by proving that only constantly many different digests are broadcast by correct processes via $\mathcal{CRB}$.
(Recall that this assumption is required by $\mathcal{CRB}$; see \Cref{mod:async_crb}.)

\begin{myclaim} \label{claim:crb_works}
Only $O(1)$ different digests are broadcast by correct processes via $\mathcal{CRB}$.
\end{myclaim}
\begin{proof}
The claim follows directly from the assumption that only $O(1)$ different digests are proposed by correct processes (see \Cref{mod:async_smba}).
\end{proof}

\Cref{claim:crb_works} proves that $\mathcal{CRB}$ behaves according to its specification.
(To not pollute the presentation of the proof, we might not explicitly rely on \Cref{claim:crb_works} in the rest of the proof.)
Next, we prove that if $|\mathcal{Z}_C| \leq 2$, then all correct processes propose the same digest $z$ to $\mathcal{MBA}_2$ with $z \in \mathcal{Z}_C$.

\begin{myclaim} \label{claim:strong_all_propose_same_mba_2}
Let $|\mathcal{Z}_C| \leq 2$.
Then, there exists a digest $z$ such that (1) $z \in \mathcal{Z}_C$, and (2) every correct process proposes $z$ to $\mathcal{MBA}_2$.
\end{myclaim}
\begin{proof}
As $|\mathcal{Z}_C| \leq 2$, at most two different values are broadcast via $\mathcal{CRB}$ by correct processes.
Therefore, the validity property of $\mathcal{CRB}$ guarantees that no correct process delivers the special value $\botcrb$ from $\mathcal{CRB}$.
Moreover, the termination property of $\mathcal{CRB}$ ensures that all correct processes eventually deliver from $\mathcal{CRB}$ and, thus, propose to $\mathcal{MBA}_1$.
Hence, the termination and agreement properties of $\mathcal{MBA}_1$ ensure that all correct processes eventually decide $z' \in \valuedigest \cup \{ \botcrb, \botmba \}$ from $\mathcal{MBA}_1$.
We distinguish two cases:
\begin{compactitem}
    \item Let $z' \neq \botmba$.
    The justification property of $\mathcal{MBA}_1$ guarantees that $z'$ was proposed to $\mathcal{MBA}_1$ by a correct process.
    Therefore, $z'$ was delivered by a correct process from $\mathcal{CRB}$, which further implies that (1) $z' \neq \botcrb$, and (2) $z' \in \mathcal{Z}_C$ due to $\mathcal{CRB}$'s justification property.
    This means that every correct process proposes $z'$ to $\mathcal{MBA}_2$.
    Hence, the statement of the claim holds in this case.

    \item Let $z' = \botmba$.
    The strong unanimity property of $\mathcal{MBA}_1$ proves that at least two different values from the $\valuedigest \cup \{ \botcrb \}$ set are proposed to $\mathcal{MBA}_1$ by correct processes.
    Let $z_1$ denote one such value and let $z_2 \neq z_1$ denote another.
    Both $z_1$ and $z_2$ are delivered from $\mathcal{CRB}$ by correct processes, which implies that $z_1 \neq \botcrb$ and $z_2 \neq \botcrb$.
    Moreover, the justification property of $\mathcal{CRB}$ guarantees that $z_1 \in \mathcal{Z}_C$ and $z_2 \in \mathcal{Z}_C$.
    As $z_1 \neq z_2$ and $|\mathcal{Z}_C| \leq 2$, $\mathcal{Z}_C = \{ z_1, z_2 \}$.

    Now, consider any correct process $p_i$.
    As both $z_1$ and $z_2$ are delivered by correct processes from $\mathcal{CRB}$, process $p_i$ eventually delivers these two digests (due to $\mathcal{CRB}$'s totality property).
    Moreover, process $p_i$ never delivers (1) any other digest $z_3 \neq \botcrb$ from $\mathcal{CRB}$ as $\mathcal{CRB}$'s justification would imply $z_3 \in \mathcal{Z}_C$ (recall that $\mathcal{Z}_C = \{ z_1, z_2 \}$), and (2) $\botcrb$ as no correct process ever delivers $\botcrb$ due to $\mathcal{CRB}$'s validity property (recall that $|\mathcal{Z}_C| \leq 2$).
    Hence, $\mathit{delivered}_i = \{ z_1, z_2 \}$ when $|\mathit{delivered}_i| = 2$ at process $p_i$.
    Thus, $p_i$ (and every other correct process) proposes the lexicographically smaller digest between $z_1$ and $z_2$, which implies that the statement of the claim holds even in this case.
\end{compactitem}
The claim holds as its statement is satisfied in both possible scenarios.
\end{proof}

We are ready to prove that \smba satisfies strong validity.

\begin{proposition} [\smba satisfies strong validity]
Given $t < \frac{1}{4} n$, \smba (see \Cref{algorithm:async_smba}) satisfies strong validity in the presence of a computationally unbounded adversary.
\end{proposition}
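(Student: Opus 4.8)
The plan is to reduce \smba's strong validity entirely to \Cref{claim:strong_all_propose_same_mba_2} combined with the strong unanimity property of $\mathcal{MBA}_2$. First I would fix the hypothesis: suppose $|\mathcal{Z}_C| \leq 2$, where $\mathcal{Z}_C$ is the set of digests proposed to \smba by correct processes, and let $p_i$ be a correct process that decides some digest $z'$ from \smba at \Cref{line:strong_decide_2}. The goal is to show $z' \in \mathcal{Z}_C$.

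Next I would invoke \Cref{claim:strong_all_propose_same_mba_2}: since $|\mathcal{Z}_C| \leq 2$, there exists a digest $z$ with $z \in \mathcal{Z}_C$ such that every correct process proposes exactly $z$ to $\mathcal{MBA}_2$. Combined with the fact that all correct processes eventually reach $\mathcal{MBA}_2$ (which follows from the termination/agreement properties of $\mathcal{CRB}$ and $\mathcal{MBA}_1$ used in the proof of that claim, and in particular from the totality of $\mathcal{CRB}$ unblocking any correct process waiting at \Cref{line:strong_wait_for_delivered}), the strong unanimity property of $\mathcal{MBA}_2$ applies: the digest $z''$ decided from $\mathcal{MBA}_2$ equals $z$. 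Since $z \in \mathcal{Z}_C \subseteq \valuedigest$, we have $z'' = z \neq \botmba$, so the substitution branch at \Cref{line:strong_mba_check_bot}--\Cref{line:strong_update_bot_decide_2} is not taken, and $p_i$ decides $z' = z'' = z$ at \Cref{line:strong_decide_2}. As $z \in \mathcal{Z}_C$, the decided digest was proposed by a correct process, which is exactly strong validity.

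The main subtlety -- rather than a genuine obstacle -- is verifying that both preconditions of $\mathcal{MBA}_2$'s strong unanimity actually hold: that \emph{every} correct process proposes to $\mathcal{MBA}_2$, and that they all propose the \emph{same} digest. Both are exactly what \Cref{claim:strong_all_propose_same_mba_2} delivers, and its proof already absorbs the heavier combinatorial content -- namely the $\botcrb$-exclusion argument of \Cref{claim:crb_validity_crucial} guaranteeing that no correct process delivers $\botcrb$ when $|\mathcal{Z}_C| \leq 2$, and the case split on whether $\mathcal{MBA}_1$ decides $\botmba$. Consequently, once that claim is in hand, strong validity of \smba is a short two-step consequence and requires no new technical work.
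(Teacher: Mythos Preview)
Your proposal is correct and follows essentially the same approach as the paper: invoke \Cref{claim:strong_all_propose_same_mba_2} to obtain a common digest $z \in \mathcal{Z}_C$ proposed by all correct processes to $\mathcal{MBA}_2$, then apply strong unanimity of $\mathcal{MBA}_2$ to conclude the decided digest equals $z$. The paper's proof is slightly terser but structurally identical; your added remark that $z \neq \botmba$ avoids the default-substitution branch at \Cref{line:strong_mba_check_bot} is a correct elaboration that the paper leaves implicit.
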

\begin{proof}
Let $|\mathcal{Z}_C| \leq 2$.
Let $p_i$ be any correct process that decides some digest $z'$.
\Cref{claim:strong_all_propose_same_mba_2} shows that there exists a digest $z$ such that (1) $z \in \mathcal{Z}_C$, and (2) all correct processes propose $z$ to $\mathcal{MBA}_2$.
Therefore, the strong unanimity property of $\mathcal{MBA}_2$ ensures that all correct processes decide $z \neq \bot_{\mathsf{MBA}}$ from $\mathcal{MBA}_2$.
Thus, $z' = z$, which implies $z' \in \mathcal{Z}_C$.
\end{proof}

Next, we prove the integrity property of \smba.

\begin{proposition} [\smba satisfies integrity]
Given $t < \frac{1}{4} n$, \smba (see \Cref{algorithm:async_smba}) satisfies integrity in the presence of a computationally unbounded adversary.
\end{proposition}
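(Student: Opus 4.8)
The statement to prove is that \smba satisfies integrity: no correct process decides more than once. Integrity for \smba should follow almost immediately from the integrity property of the underlying $\mathcal{MBA}_2$ instance, since the only place where a correct process triggers $\mathsf{decide}(\cdot)$ in \Cref{algorithm:async_smba} is at line~\ref{line:strong_decide_2}, which fires exactly once in response to a decision from $\mathcal{MBA}_2$ at line~\ref{line:strong_decide_rule}.

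The plan is as follows. First I would observe that a correct process $p_i$ executes $\mathsf{decide}(z'')$ only within the handler triggered by the event $\mathcal{MBA}_2.\mathsf{decide}(\cdot)$ (the rule at line~\ref{line:strong_decide_rule}). Second, I would invoke the integrity property of the MBA primitive (see \Cref{mod:async_mba}), which guarantees that no correct process decides more than once from $\mathcal{MBA}_2$; hence the handler at line~\ref{line:strong_decide_rule} is activated at most once at $p_i$. Consequently, the $\mathsf{decide}$ event of \smba is triggered at most once at $p_i$, which is precisely the integrity property of SMBA. The (trivial) branching at lines~\ref{line:strong_mba_check_bot}--\ref{line:strong_update_bot_decide_2} only affects \emph{which} value is decided, not \emph{how many times}, so it does not interfere with the argument.

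I do not anticipate any real obstacle here; this is one of the "trivially satisfied'' properties. The only point requiring a moment's care is to confirm that there is no \emph{other} code path in \Cref{algorithm:async_smba} that invokes $\mathsf{decide}$ — a quick inspection of the pseudocode confirms line~\ref{line:strong_decide_2} is the unique such site — and that the handler cannot somehow be re-entered (it cannot, since $\mathcal{MBA}_2$ delivers at most one decision to $p_i$). So the proof is essentially a one-line reduction to the integrity of $\mathcal{MBA}_2$.

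\begin{proof}
The only line at which a correct process $p_i$ triggers a $\mathsf{decide}$ event in \Cref{algorithm:async_smba} is line~\ref{line:strong_decide_2}, which $p_i$ executes solely within the handler of the event $\mathcal{MBA}_2.\mathsf{decide}(\cdot)$ (the rule at line~\ref{line:strong_decide_rule}). By the integrity property of the MBA primitive (\Cref{mod:async_mba}), no correct process decides more than once from $\mathcal{MBA}_2$; thus $p_i$ activates the handler at line~\ref{line:strong_decide_rule} at most once and hence triggers $\mathsf{decide}(\cdot)$ at most once. Therefore, no correct process decides more than once, which establishes integrity.
\end{proof}
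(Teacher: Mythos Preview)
Your proof is correct and matches the paper's own argument, which is the one-line observation that integrity holds due to the integrity property of $\mathcal{MBA}_2$. You simply spell out the obvious detail that line~\ref{line:strong_decide_2} is the unique decision site, which is harmless.
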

\begin{proof}
The proposition holds due to the integrity property of $\mathcal{MBA}_2$.
\end{proof}

Finally, we prove the termination property of \smba.

\begin{proposition} [\smba satisfies termination] \label{proposition:smba_termination}
Given $t < \frac{1}{4} n$, \smba (see \Cref{algorithm:async_smba}) satisfies termination in the presence of a computationally unbounded adversary.
Precisely, every correct process decides within $O(1)$ asynchronous rounds in expectation.
\end{proposition}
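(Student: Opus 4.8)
\textbf{Proof plan for \Cref{proposition:smba_termination} (\smba satisfies termination).}

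The plan is to trace execution through the three building blocks $\mathcal{CRB}$, $\mathcal{MBA}_1$, and $\mathcal{MBA}_2$ in the order in which a correct process passes through them, and argue that each is eventually (in fact, in expected $O(1)$ asynchronous rounds) triggered and completes. First I would invoke \Cref{claim:crb_works} to confirm that $\mathcal{CRB}$ is used within its specification (only $O(1)$ distinct digests broadcast by correct processes), so that all of $\mathcal{CRB}$'s guarantees apply. Since every correct process broadcasts its proposal to $\mathcal{CRB}$ at line~\ref{line:strong_crb_broadcast}, the termination property of $\mathcal{CRB}$ (\Cref{proposition:crb_termination}) ensures each correct process delivers at least once from $\mathcal{CRB}$ within $O(1)$ asynchronous rounds; hence each correct process reaches line~\ref{line:collective_propose_mba} and proposes to $\mathcal{MBA}_1$ within $O(1)$ rounds. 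By the termination property of $\mathcal{MBA}_1$ (with its $O(1)$ expected round complexity), every correct process eventually decides some $z' \in \valuedigest \cup \{\botcrb, \botmba\}$ from $\mathcal{MBA}_1$ at line~\ref{line:collective_upon_mba}.

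Next I would handle the branch at line~\ref{line:collective_upon_mba}. If $z' \neq \botmba$, the process immediately proposes to $\mathcal{MBA}_2$ (line~\ref{line:strong_mba_2_propose_1}) after possibly substituting $\mathit{default}$ for $\botcrb$ — no waiting is involved, so this is instantaneous. The subtler case is $z' = \botmba$: the process waits at line~\ref{line:strong_wait_for_delivered} until $|\mathit{delivered}_i| = 2$. Here I would argue that the agreement and strong unanimity properties of $\mathcal{MBA}_1$ imply that $\botmba$ is decided only if correct processes proposed at least two distinct values to $\mathcal{MBA}_1$; those two distinct values $z_1 \neq z_2$ were each delivered from $\mathcal{CRB}$ by some correct process, so by the totality property of $\mathcal{CRB}$ (\Cref{proposition:crb_totality}) every correct process eventually delivers both of them — hence every correct process eventually has $|\mathit{delivered}_i| \geq 2$ within $O(1)$ rounds and proceeds past line~\ref{line:strong_wait_for_delivered} to propose to $\mathcal{MBA}_2$ (line~\ref{line:strong_mba_2_propose_2}). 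In both branches, every correct process proposes to $\mathcal{MBA}_2$ within $O(1)$ asynchronous rounds in expectation.

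Finally, once all correct processes have proposed to $\mathcal{MBA}_2$, its termination property (again $O(1)$ expected rounds) guarantees every correct process decides some $z'' \in \valuedigest \cup \{\botmba\}$ at line~\ref{line:strong_decide_rule}, and then unconditionally triggers $\mathsf{decide}(z'')$ (after substituting $\mathit{default}$ for $\botmba$ if needed) at line~\ref{line:strong_decide_2}. Summing the $O(1)$ round costs of each stage yields the claimed $O(1)$ expected asynchronous round complexity. The main obstacle I anticipate is the $z' = \botmba$ case: one must carefully justify, using strong unanimity of $\mathcal{MBA}_1$ together with justification and totality of $\mathcal{CRB}$, that the wait at line~\ref{line:strong_wait_for_delivered} cannot block forever — i.e., that two distinct deliverable values genuinely exist and that totality propagates both of them to every correct process. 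Everything else is routine chaining of the termination/totality properties of the sub-protocols.
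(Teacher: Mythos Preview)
Your proposal is correct and follows essentially the same approach as the paper's proof: trace through $\mathcal{CRB} \to \mathcal{MBA}_1 \to \mathcal{MBA}_2$, and in the $z' = \botmba$ branch use strong unanimity of $\mathcal{MBA}_1$ to conclude two distinct values were delivered from $\mathcal{CRB}$, then invoke $\mathcal{CRB}$'s totality (\Cref{proposition:crb_totality}) to unblock the wait at line~\ref{line:strong_wait_for_delivered}. One minor remark: the justification property of $\mathcal{CRB}$ that you mention in your closing sentence is not actually needed for termination (it is used for strong validity instead); only strong unanimity of $\mathcal{MBA}_1$ and totality of $\mathcal{CRB}$ are required to handle the $\botmba$ branch.
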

\begin{proof}
The $\mathcal{CRB}$ primitive guarantees that all correct processes propose to $\mathcal{MBA}_1$ within $O(1)$ time (by \Cref{proposition:crb_termination}).
Therefore, the termination and agreement properties of $\mathcal{MBA}_1$ guarantee that all correct processes eventually decide some $z' \in \valuedigest \cup \{ \botcrb, \botmba \}$.
Given that $\mathcal{MBA}_1$ terminates in $O(1)$ expected time, all correct processes decide from $\mathcal{MBA}_1$ in $O(1)$ expected time.
We now differentiate two cases:
\begin{compactitem}
    \item Let $z' \neq \botmba$.
    In this case, all correct processes propose to $\mathcal{MBA}_2$ within $O(1)$ time in expectation.

    \item Let $z' = \botmba$.
    The strong unanimity property of $\mathcal{MBA}_1$ proves that at least two different values from the $\valuedigest \cup \{ \botcrb \}$ set have been proposed to $\mathcal{MBA}_1$ by correct processes.
    Therefore, there are at least two different values delivered from $\mathcal{CRB}$ by correct processes.
    The totality property guarantees that all correct processes eventually deliver (at least) two values from $\mathcal{CRB}$, and they do so in additional $O(1)$ time (by \Cref{proposition:crb_totality}).
    Thus, all correct processes eventually propose to $\mathcal{MBA}_2$ even in this case.
    Concretely, all correct processes propose to $\mathcal{MBA}_2$ within $O(1)$ time in expectation.
\end{compactitem}
As all correct processes propose to $\mathcal{MBA}_2$ within $O(1)$ time in expectation (in any of the two cases), the termination property of $\mathcal{MBA}_2$ and its $O(1)$ expected time complexity ensure that all correct processes decide from $\mathcal{MBA}_2$ in additional $O(1)$ time in expectation.
Thus, \smba indeed terminates in $O(1)$ expected time, which concludes the proof.
\end{proof}

\smallskip
\noindent \textbf{Proof of complexity.}
To prove \smba's complexity, we prove the following lemma.

\begin{lemma} [\smba's expected complexity] 
Given $t < \frac{1}{4}n$, the following holds for \smba (see \Cref{algorithm:async_smba}) in the presence of a computationally unbounded adversary:
\begin{compactitem}
    \item The expected message complexity is $O(n^2)$.

    \item The expected bit complexity is $O(n^2 \kappa)$.

\end{compactitem} 
\end{lemma}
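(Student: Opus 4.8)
The plan is to derive the bound compositionally: \smba (lines~\ref{line:strong_start}--\ref{line:strong_decide_2}) sends no messages of its own — every communication step is delegated to one of the three sub-protocols it composes, namely the CRB instance $\mathcal{CRB}$ and the two MBA instances $\mathcal{MBA}_1$ and $\mathcal{MBA}_2$ — so the total cost is simply the sum of the costs of these three instances. Thus the task reduces to (i) checking that each sub-protocol is invoked in accordance with its assumed behavior, so that its stated complexity applies, and (ii) adding up the three contributions.

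For step (i), I would argue as follows. For $\mathcal{CRB}$: by the assumed behavior of SMBA (\Cref{mod:async_smba}), only $O(1)$ different digests are proposed by correct processes, hence — as recorded in \Cref{claim:crb_works} — only $O(1)$ different digests are broadcast via $\mathcal{CRB}$, and each correct process broadcasts exactly once at line~\ref{line:strong_crb_broadcast}; so $\mathcal{CRB}$ runs within its specification. For $\mathcal{MBA}_1$: each correct process proposes exactly once, on its first $\mathcal{CRB}$-delivery, because the invocation at line~\ref{line:collective_propose_mba} is guarded by $|\mathit{delivered}_i| = 1$. For $\mathcal{MBA}_2$: each correct process proposes exactly once, since the two branches of the rule at line~\ref{line:collective_upon_mba} (the case $z' \neq \botmba$ and the case $z' = \botmba$) are mutually exclusive, and each branch reaches exactly one of line~\ref{line:strong_mba_2_propose_1} and line~\ref{line:strong_mba_2_propose_2}. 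Hence all three sub-protocols operate within their assumed behavior.

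For step (ii), I would plug in the known complexities. The CRB algorithm \crb has worst-case message complexity $O(n^2)$ and worst-case bit complexity $O(n^2 \kappa)$ (proved earlier in this section). The MBA algorithm of~\cite{DBLP:journals/acta/MostefaouiR17}, run on $O(\kappa)$-bit digests, has expected message complexity $O(n^2)$ and expected bit complexity $O(n^2 \kappa)$, and this applies to both $\mathcal{MBA}_1$ and $\mathcal{MBA}_2$. Summing these three contributions yields expected message complexity $O(n^2) + O(n^2) + O(n^2) \subseteq O(n^2)$ and expected bit complexity $O(n^2 \kappa) + O(n^2 \kappa) + O(n^2 \kappa) \subseteq O(n^2 \kappa)$, as claimed. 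The argument is essentially bookkeeping; the only genuine content is the sanity check in step (i) that the $O(1)$-distinct-digest precondition of $\mathcal{CRB}$ is met and that each MBA instance receives exactly one proposal per correct process, and I do not anticipate any real obstacle — there is no loop or iteration structure inside \smba to inflate constants, and the mix of worst-case (for $\mathcal{CRB}$) and expected (for the MBAs) bounds composes harmlessly into an expected bound.
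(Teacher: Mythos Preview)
Your proposal is correct and takes essentially the same approach as the paper: sum the complexities of $\mathcal{CRB}$, $\mathcal{MBA}_1$, and $\mathcal{MBA}_2$. The paper's proof is in fact a single sentence doing exactly your step~(ii); your step~(i) is extra sanity-checking that the paper omits here (it relies on \Cref{claim:crb_works} having been established earlier), but it is not wrong to include.
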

\begin{proof}
	As $\mathcal{CRB}$, $\mathcal{MBA}_1$, and $\mathcal{MBA}_2$ exchange $O(n^2)$ messages and $O(n^2 \kappa)$ bits in expectation, the expected message complexity of \smba is $O(n^2)$ and its expected bit complexity is $O(n^2 \kappa)$.
\end{proof}

\smallskip
\noindent \textbf{\smba for strong consensus.}
As noted earlier in \Cref{section:preliminaries}, \smba can be easily adapted to solve the strong consensus~\cite{fitzi2003efficient} problem with up to $x = |\mathcal{X}|$ predetermined proposals $\mathcal{X}$, for any $x \in O(1)$.
First, our \crb algorithm, when operating among $n = (x + 1)t + 1$ processes with up to $x$ predetermined values, should be modified to never send \textsc{broken} messages (see \Cref{algorithm:crb}).
The \smba algorithm modified for strong consensus operates as follows.
\begin{compactenum}
    \item Each correct process broadcasts its proposal using the \crb algorithm.

    \item Each correct process delivers a value $v \in \mathcal{X}$ from \crb.
    Due to the justification property of the \crb algorithm, $v$ was proposed by a correct process to \smba.

    \item For each $\mathit{index} = 1, 2, ..., x - 1$, execute the following logic:
    \begin{compactenum}
        \item Propose $v$ to the $\mathit{index}$-th instance of the MBA primitive.

        \item Decide some value $v' \in \mathcal{X} \cup \{ \botmba \}$.
        Now, we differentiate two cases:
        \begin{compactitem}
            \item If $v' \neq \botmba$, then decide $v'$ and terminate.
            Importantly, the justification property of the MBA primitive ensures that $v'$ was proposed by a correct process (to \smba).

            \item If $v' = \botmba$, then not all correct processes proposed the same value to the $\mathit{index}$-th instance of the MBA primitive.
            Hence, wait to deliver $\mathit{index} + 1$ different values from \crb.
            Once this happens (and it will due to the totality property of the \crb algorithm), update $v$ to the lexicographically smallest value delivered from \crb.
            The justification property of the \crb algorithm ensures that $v$ was proposed by a correct process to \smba.
        \end{compactitem}
    \end{compactenum}

    \item Decide the lexicographically smallest value from the $\mathcal{X}$ set.
    If this step is reached, it means that all values from the $\mathcal{X}$ set have been proposed by correct processes to \smba; otherwise, correct processes would have decided in one of the $x - 1$ iterations.
\end{compactenum}
As $|\mathcal{X}| = x \in O(1)$, the modified \smba algorithm exchanges $O(xn^2)$ messages and $O(x n^2 \ell)$ bits (where $\ell$ denotes the bit-size of the values from the $\mathcal{X}$ set), and terminates in $O(x)$ time. 
\section{\comm: Proof} \label{section:reducer_proof}

This section formally proves the correctness and complexity of our MVBA algorithm \comm.
Recall that \comm's pseudocode is given in \Cref{algorithm:reducer}.

\subsection{Proof of Correctness}

This subsection formally proves \Cref{theorem:reducer_correct}.
Recall that $n = 4t + 1$.

\smallskip
\noindent \textbf{External validity.}
We start by proving that \comm satisfies external validity.

\begin{lemma} [\comm satisfies external validity] \label{lemma:reducer_external_validity}
Given $n = 4t + 1$ and the existence of a collision-resistant hash function, \comm (see \Cref{algorithm:reducer}) satisfies external validity in the presence of a computationally bounded adversary.
\end{lemma}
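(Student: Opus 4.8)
The plan is to trace every path by which a correct process can trigger \textsf{decide}$(v)$ in \Cref{algorithm:reducer} and verify that $\mathsf{valid}(v) = \mathit{true}$ in each case. A correct process $p_i$ only decides at line~\ref{line:reducer_decide}, and the guard at line~\ref{line:check_quasi_decisions} ensures this happens only when $\mathit{quasi\_decisions}_i$ is non-empty. Hence the decided value is an element of $\mathit{quasi\_decisions}_i$. So it suffices to show that every value appended to $\mathit{quasi\_decisions}_i$ (at line~\ref{line:reducer_quasi_decide}) is valid.

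First I would observe that a value $v$ is appended to $\mathit{quasi\_decisions}_i$ at line~\ref{line:reducer_quasi_decide} only inside the branch guarded by line~\ref{line:final_check}, which explicitly checks $\mathsf{valid}(v) = \mathit{true}$. This immediately rules out $v = \botmba$, since $\mathsf{valid}(\botmba) = \mathit{false}$ by the convention fixed in \Cref{section:preliminaries} and \Cref{mod:async_mba}; so the value $v$ is a genuine element of $\valuemvba$ satisfying the external-validity predicate. The only subtlety is that the value $v$ that reaches line~\ref{line:final_check} is whatever $\mathcal{MBA}[k][x]$ decided at line~\ref{line:lmba}, and MBA guarantees nothing about validity of its output on its own — but that is irrelevant here, because the explicit $\mathsf{valid}(\cdot)$ test at line~\ref{line:final_check} filters out any invalid decision before it can be quasi-decided.

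Then I would close the argument by noting the final-decision step at line~\ref{line:reducer_decide} selects $\mathit{quasi\_decisions}_i[(I \bmod \mathit{quasi\_decisions}_i.\mathsf{size}) + 1]$, i.e. some element already shown to be valid, so the triggered \textsf{decide} carries a valid value. Since no correct process decides through any other line, external validity holds. The main obstacle — such as it is — is simply being careful that the validity check genuinely gates every write to $\mathit{quasi\_decisions}_i$ and that the index arithmetic at line~\ref{line:reducer_decide} cannot reach outside the list; both are immediate from inspection of the pseudocode, so this lemma is essentially a one-paragraph syntactic argument with no real combinatorial or cryptographic content (the collision-resistance hypothesis is carried along only because it is part of the standing assumptions, not because it is used here).
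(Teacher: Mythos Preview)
Your proposal is correct and follows essentially the same approach as the paper's proof, which simply observes that the check at line~\ref{line:final_check} ensures every quasi-decided value is valid. Your version is more detailed (tracing the decision path through line~\ref{line:reducer_decide} and noting the index arithmetic is well-defined), but the core argument is identical.
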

\begin{proof}
The external validity property is satisfied as every value quasi-decided by a correct process is valid due to the check at line~\ref{line:final_check}.
\end{proof}

\smallskip
\noindent \textbf{Agreement.}
Next, we prove \comm's agreement.
We say that a correct process $p_i$ \emph{quasi-decides} a vector $\mathit{vec}$ in an iteration $k \in \mathbb{N}$ if and only if $\mathit{quasi\_decisions}_i = \mathit{vec}$ when process $p_i$ reaches line~\ref{line:check_quasi_decisions} in iteration $k$.
We now prove that, for every iteration $k \in \mathbb{N}$, different vectors cannot be quasi-decided by correct processes.

\begin{proposition} \label{proposition:quasi_decided_vectors}
Let $k \in \mathbb{N}$ be any iteration.
Suppose a correct process $p_i$ quasi-decides a vector $\mathit{vec}_i$ in iteration $k$ and another correct process $p_j$ quasi-decides a vector $\mathit{vec}_j$ in iteration $k$.
Then, $\mathit{vec}_i = \mathit{vec}_j$.
\end{proposition}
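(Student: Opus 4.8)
\textbf{Proof proposal for \Cref{proposition:quasi_decided_vectors}.}

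The plan is to show that the quasi-decision vectors of any two correct processes in a fixed iteration $k$ agree entry-by-entry. A correct process appends a value to $\mathit{quasi\_decisions}$ exactly once per sub-iteration $(k,x)$, $x \in \{1,2,3\}$ (at line~\ref{line:reducer_quasi_decide}), and only when the value decided from $\mathcal{MBA}[k][x]$ is valid (line~\ref{line:final_check}). Hence the length and the positional structure of $\mathit{quasi\_decisions}_i$ in iteration $k$ are completely determined by which of the three MBA instances $\mathcal{MBA}[k][1], \mathcal{MBA}[k][2], \mathcal{MBA}[k][3]$ produce valid (non-$\botmba$) decisions, together with the actual decided values. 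So the statement reduces to two observations about a single fixed sub-iteration $(k,x)$.

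First I would invoke the \emph{agreement} property of $\mathcal{MBA}[k][x]$: no two correct processes decide different values from this instance. Consequently, either every correct process that reaches line~\ref{line:final_check} in sub-iteration $(k,x)$ sees a valid value $v$ (the same $v$ for all), in which case they all append that same $v$ in this sub-iteration; or every such process sees $\botmba$ (which is invalid by the convention $\mathsf{valid}(\botmba)=\mathit{false}$ stated in \Cref{section:preliminaries}), in which case no correct process appends anything in this sub-iteration. A subtlety worth addressing explicitly is that the proposition talks about processes that \emph{reach line~\ref{line:check_quasi_decisions}}, i.e.\ that have completed all three sub-iterations of iteration $k$; I would note that because the MBA instances and the $\mathcal{SMBA}[k][x]$ instances all satisfy termination, a correct process that is still running iteration $k$ will eventually pass through each sub-iteration, and the decisions it obtains are fixed by agreement, so comparing two processes' vectors is well-defined regardless of timing.

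Putting these together: for each $x \in \{1,2,3\}$, the contribution to $\mathit{quasi\_decisions}$ from sub-iteration $(k,x)$ is identical across all correct processes — either ``append the common value $v_x$'' or ``append nothing'' — and since the three sub-iterations are processed in the fixed order $x=1,2,3$ by every correct process, the resulting lists are equal: $\mathit{vec}_i = \mathit{vec}_j$. I do not anticipate a genuine obstacle here; the only care needed is the bookkeeping that ``valid decision from $\mathcal{MBA}[k][x]$'' is a well-defined, process-independent event (which follows from agreement of each MBA instance plus the fact that $\mathsf{valid}(\cdot)$ is a deterministic predicate), and that the empty-list case (no sub-iteration yields a valid value) is also consistent across processes. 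This proposition will then feed into the agreement proof of \comm together with the fact that $\mathsf{Index}()$ returns the same integer to all correct processes.
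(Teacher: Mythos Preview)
Your proposal is correct and takes essentially the same approach as the paper: both rely on the agreement property of each $\mathcal{MBA}[k][x]$ instance for $x\in\{1,2,3\}$, together with the fact that $\mathsf{valid}(\cdot)$ is deterministic, to conclude the vectors coincide. The paper's proof is a one-liner; you have simply spelled out the bookkeeping (ordering, valid-versus-$\botmba$ split, termination so both processes reach line~\ref{line:check_quasi_decisions}) in more detail.
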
 
\begin{proof}
The proposition follows from the agreement property of the $\mathcal{MBA}[k][x]$ instance (of the MBA primitive), for every $x \in \{ 1, 2, 3 \}$.
\end{proof}

We are now ready to prove that \comm ensures agreement.

\begin{lemma} [\comm satisfies agreement]
Given $n = 4t + 1$ and the existence of a collision-resistant hash function, \comm (see \Cref{algorithm:reducer}) satisfies agreement in the presence of a computationally bounded adversary.
\end{lemma}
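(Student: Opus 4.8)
The plan is to reduce agreement on the final decided value to two ingredients: (1) the fact that in any fixed iteration $k$, all correct processes that reach line~\ref{line:check_quasi_decisions} hold the same $\mathit{quasi\_decisions}$ vector, and (2) the fact that the $\mathsf{Index}()$ common coin delivers the same integer $I$ to all correct processes. Ingredient (1) is exactly \Cref{proposition:quasi_decided_vectors}, which follows from the agreement property of each $\mathcal{MBA}[k][x]$ instance: in sub-iteration $(k,x)$, every correct process that decides from $\mathcal{MBA}[k][x]$ decides the same value $v$, so every correct process appends the same value (or nothing, if $v$ is invalid) to its $\mathit{quasi\_decisions}$ list at line~\ref{line:reducer_quasi_decide}. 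Since the three sub-iterations are processed in the same order $x = 1, 2, 3$ by every correct process, the resulting lists agree component-by-component.

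First I would argue that if two correct processes $p_i$ and $p_j$ decide, they decide in the same iteration. Suppose $p_i$ decides in iteration $k_i$ and $p_j$ decides in iteration $k_j$ with, say, $k_i \le k_j$. Because $p_i$ triggered $\mathsf{decide}$ at line~\ref{line:reducer_decide} in iteration $k_i$, it had $\mathit{quasi\_decisions}_i.\mathsf{size} > 0$ at line~\ref{line:check_quasi_decisions} of iteration $k_i$, meaning some valid value $v$ was decided from some $\mathcal{MBA}[k_i][x]$ instance. By the agreement property of that instance, $p_j$ also decides $v$ from $\mathcal{MBA}[k_i][x]$ (every correct process eventually runs every sub-iteration it reaches, and no correct process can ``skip'' an iteration), so $p_j$ has $\mathit{quasi\_decisions}_j.\mathsf{size} > 0$ at line~\ref{line:check_quasi_decisions} of iteration $k_i$ as well. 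Hence the guard ``$p_i$ has not previously decided'' forces $p_j$ to decide in iteration $k_i$, i.e., $k_j = k_i$. (Here I rely on the fact that a correct process that reaches line~\ref{line:check_quasi_decisions} with a non-empty quasi-decisions list decides there and never proceeds past it without deciding; the termination argument guarantees it does reach this line.)

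Now fix the common decision iteration $k$. By the previous paragraph's reasoning (the agreement property of each $\mathcal{MBA}[k][x]$), both $p_i$ and $p_j$ quasi-decide the \emph{same} vector $\mathit{vec}$ in iteration $k$ — this is \Cref{proposition:quasi_decided_vectors}. Both then invoke $\mathsf{Index}()$ at line~\ref{line:random_index}, which by the common-coin specification returns the same integer $I$ to both. Therefore both compute the same index $(I \bmod \mathit{vec}.\mathsf{size}) + 1$ and decide the same entry $\mathit{vec}[(I \bmod \mathit{vec}.\mathsf{size}) + 1]$, establishing agreement. I expect the main subtlety to be the step showing the decision iteration is common — one has to be careful that the guard at line~\ref{line:check_quasi_decisions} and the fact that correct processes execute iterations in lockstep order (even if at different wall-clock times) together prevent a correct process from decidingin a later iteration after another already decided in an earlier one; everything else is a direct appeal to the agreement property of MBA and the determinism of the common coin.
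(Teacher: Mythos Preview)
Your proposal is correct and follows essentially the same approach as the paper: both arguments use \Cref{proposition:quasi_decided_vectors} (which rests on MBA agreement) to show that correct processes quasi-decide identical vectors in any iteration, combine this with the ``has not previously decided'' guard at line~\ref{line:check_quasi_decisions} to force a common decision iteration, and then invoke the common $\mathsf{Index}()$ coin to conclude equal decisions. The paper phrases this as a proof by contradiction with a two-case split on $k_i = k_j$ versus $k_i < k_j$, whereas you establish the common iteration first and then the common value directly, but the substance is identical.
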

\begin{proof}
By contradiction, suppose (1) there exists a correct process $p_i$ that decides a value $v_i$, and (2) there exists a correct process $p_j$ that decides a value $v_j \neq v_i$.
Let $p_i$ (resp., $p_j$) decide $v_i$ (resp., $v_j$) in some iteration $k_i \in \mathbb{N}$ (resp., $k_j \in \mathbb{N}$).
Therefore, process $p_i$ (resp., $p_j$) quasi-decides $v_i$ (resp., $v_j$) in iteration $k_i$ (resp., $k_j$).
Without loss of generality, let $k_i \leq k_j$.

As process $p_i$ quasi-decides $v_i$ in iteration $k_i$, process $p_i$ quasi-decides a vector $\mathit{vec}_i$ in iteration $k_i$; note that $v_i$ belongs to $\mathit{vec}_i$.
By \Cref{proposition:quasi_decided_vectors}, process $p_j$ also quasi-decides the non-empty vector $\mathit{vec}_i$ in iteration $k_i$.
We separate two cases:
\begin{compactitem}
    \item Let $k_i = k_j$.
    Due to the fact that the $\mathsf{Index}()$ request invoked in iteration $k_i = k_j$ returns the same integer to all correct processes, we have that $v_j = v_i$.
    Thus, we reach a contradiction with $v_j \neq v_i$ in this case.

    \item Let $k_i < k_j$.
    As $p_j$ quasi-decides the non-empty vector $\mathit{vec}_i$ in iteration $k_i$, we reach a contradiction with the fact that $p_j$ decides in iteration $k_j > k_i$.
\end{compactitem}
As neither of the above cases can occur, the agreement property is satisfied.
\end{proof}

\smallskip
\noindent \textbf{Weak validity.}
To prove that \comm satisfies weak validity, we first show that if any correct process proposes a value $v $ to the $\mathcal{MBA}[k][x]$ instance, for any sub-iteration $(k, x)$, and all processes are correct, then $v$ is the proposal of a correct process.

\begin{proposition} \label{proposition:integrity_proposal_correct}
Let $(k \in \mathbb{N}, x \in \{ 1, 2, 3 \})$ be any sub-iteration and let all processes be correct.
If any correct process $p_i$ proposes a value $v$ to $\mathcal{MBA}[k][x]$, then $v$ is the proposal of a correct process.
\end{proposition}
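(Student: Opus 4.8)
The plan is to trace the provenance of the value $v$ back to $\mathsf{leader}(k)$'s proposal. Process $p_i$ proposes to $\mathcal{MBA}[k][x]$ the value $r_i$, which is set either to $\mathsf{decode}(S_i)$ (when $|S_i| \geq t + 1$) or to its own proposal $v_i$ (line~\ref{line:reducer_own_proposal}). In the latter case there is nothing to prove, since $p_i$ is correct by hypothesis. So I would assume $r_i = \mathsf{decode}(S_i)$ with $|S_i| \geq t + 1$, where $S_i$ is the set of RS symbols with valid witnesses for the digest $z$ decided from $\mathcal{SMBA}[k][x]$.

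The key step is to show that every RS symbol in $S_i$ is a genuine symbol of $\mathsf{leader}(k)$'s valid proposal $v^{\star}(k)$, at the position of its original recipient. Since all processes are correct, each symbol in $S_i$ was received inside a \textsc{reconstruct} message broadcast by some correct process $p_j$ (line~\ref{line:reducer_reconstruct}), and such a message carries exactly $\mathit{symbols}_j[\mathsf{leader}(k)]$, i.e., the triple $[m, z', w]$ that $p_j$ stored upon accepting the unique \textsc{init} message it received from $\mathsf{leader}(k)$. Because channels are authenticated and $\mathsf{leader}(k)$ is correct, that \textsc{init} message is the genuine one: $m$ is the $j$-th RS symbol of $\mathsf{encode}(v^{\star}(k))$, $z' = z^{\star}(k)$, and $w$ is the Merkle proof produced by $\mathsf{CreateWit}$ for $(j, m)$ against $z^{\star}(k)$. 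Now, Merkle-tree verification is deterministic: $\mathsf{Verify}(\cdot, w, (j, m))$ recomputes a unique root from $(j, m)$ and $w$, and that root is $z^{\star}(k)$. Hence this triple lies in $S_i$ only if $z = z^{\star}(k)$, and in that case it is indeed a genuine symbol of $v^{\star}(k)$. (In particular, if $z \neq z^{\star}(k)$ --- e.g., if $\mathcal{SMBA}[k][x]$ decided the $\mathit{default}$ digest --- then $S_i = \emptyset$, contradicting $|S_i| \geq t + 1$; so this situation does not arise.) Moreover, the symbols in $S_i$ sit at pairwise distinct positions, since distinct correct senders relay the leader's symbols for their own distinct indices.

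It then follows that $S_i$ consists of at least $t + 1$ evaluations, at distinct locations, of the degree-$t$ polynomial whose coefficients encode $v^{\star}(k)$; by the interpolation guarantee of $\mathsf{decode}(\cdot)$, $r_i = \mathsf{decode}(S_i) = v^{\star}(k)$, which is the proposal of the correct process $\mathsf{leader}(k)$. In either case, $v = r_i$ is the proposal of a correct process, as claimed. The only delicate point is the bookkeeping in the middle step --- carefully pinning down that every symbol feeding $S_i$ was relayed, unmodified, from $\mathsf{leader}(k)$'s honest \textsc{init}, and that the determinism of Merkle verification forces $z = z^{\star}(k)$; everything else is immediate from the ``all processes correct'' hypothesis.
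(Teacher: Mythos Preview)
Your proof is correct and follows the same two-case split as the paper (decode vs.\ own proposal). The paper's proof is considerably more terse in the decode case, simply asserting that since all processes are correct the RS symbols in $S_i$ were sent by $\mathsf{leader}(k)$ and correspond to its proposal; you fill in the chain of custody (\textsc{reconstruct} $\to$ stored \textsc{init} $\to$ genuine symbol) and make explicit the observation that Merkle verification deterministically pins $z = z^{\star}(k)$, which the paper leaves implicit.
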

\begin{proof}
Recall that $\mathsf{leader}(k)$ denotes the leader of iteration $k$.
We distinguish two cases:
\begin{compactitem}
    \item Let $p_i$ execute line~\ref{line:reducer_decode}.
    In this case, process $p_i$ has received (at least) $t + 1$ RS symbols.
    Given that all processes are correct, all these RS symbols are sent by $\mathsf{leader}(k)$ during the dissemination phase and they all correspond to $\mathsf{leader}(k)$'s proposal.
    Therefore, $v$ is the proposal of $\mathsf{leader}(k)$, which proves the statement of the proposition in this case.

    \item Let $p_i$ execute line~\ref{line:reducer_own_proposal}.
    The statement of the proposition trivially holds in this case as $v$ is $p_i$'s proposal.
\end{compactitem}
As the statement of the proposition holds in both cases, the proof is concluded.
\end{proof}

The following lemma proves that \comm satisfies weak validity.

\begin{lemma} [\comm satisfies weak validity]
Given $n = 4t + 1$ and the existence of a collision-resistant hash function, \comm (see \Cref{algorithm:reducer}) satisfies weak validity in the presence of a computationally bounded adversary.
\end{lemma}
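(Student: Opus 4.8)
The plan is to prove weak validity by tracing a decided value back to its origin, using the same skeleton that \Cref{proposition:integrity_proposal_correct} already established for values proposed to an $\mathcal{MBA}[k][x]$ instance. Assume all processes are correct and that some correct process $p_i$ decides a value $v$. By the check at \cref{line:check_quasi_decisions} and \cref{line:reducer_decide}, $v$ was quasi-decided in some iteration $k$, i.e., $v$ appears in $\mathit{quasi\_decisions}_i$; hence $v$ was decided from $\mathcal{MBA}[k][x]$ for some sub-iteration $(k,x)$ with $\mathsf{valid}(v) = \mathit{true}$ (so in particular $v \neq \botmba$).

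First I would invoke the \emph{justification} property of the MBA primitive on the instance $\mathcal{MBA}[k][x]$: since a correct process decides the non-$\botmba$ value $v$ from it, some correct process $p_j$ must have proposed $v$ to $\mathcal{MBA}[k][x]$ (i.e., $p_j$ called $\mathcal{MBA}[k][x].\mathsf{propose}(v)$ with $r_j = v$ at \cref{line:lmba}). Next I would apply \Cref{proposition:integrity_proposal_correct} directly: it states precisely that, when all processes are correct, any value proposed by a correct process to $\mathcal{MBA}[k][x]$ is the proposal of a correct process. This immediately gives that $v$ is the proposal of some correct process, which is exactly weak validity. Concretely, $p_j$ either decoded $v$ at \cref{line:reducer_decode} from $t+1$ RS symbols — which, with all processes correct, must have been produced by $\mathsf{leader}(k)$ from its own proposal — or $p_j$ set $r_j = v_j$ at \cref{line:reducer_own_proposal}, in which case $v$ is $p_j$'s own proposal.

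There is essentially no obstacle here: the lemma is a short corollary of the already-proved \Cref{proposition:integrity_proposal_correct} together with the justification property of MBA, and the only bookkeeping is to unwind the two indirection layers (decide $\to$ quasi-decide $\to$ MBA decision $\to$ MBA proposal). The one place to be slightly careful is confirming that a \emph{decided} value (not merely a quasi-decided one) is covered — but this follows because \cref{line:reducer_decide} selects an entry of $\mathit{quasi\_decisions}_i$, so any decided value was quasi-decided, and any quasi-decided value passed the $\mathsf{valid}$ check and came out of an $\mathcal{MBA}[k][x]$ instance. I would therefore write the proof in three sentences: (1) $v$ was decided from some $\mathcal{MBA}[k][x]$; (2) by justification, a correct process proposed $v$ to $\mathcal{MBA}[k][x]$; (3) by \Cref{proposition:integrity_proposal_correct}, $v$ is the proposal of a correct process, completing the argument.

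\begin{proof}
Suppose all processes are correct and a correct process $p_i$ decides a value $v$ in some iteration $k$ (at \cref{line:reducer_decide}). By the check at \cref{line:check_quasi_decisions} and the selection at \cref{line:reducer_decide}, $v$ belongs to $\mathit{quasi\_decisions}_i$; thus $p_i$ quasi-decided $v$ (\cref{line:reducer_quasi_decide}) in some sub-iteration $(k, x)$, which means $v$ was decided from $\mathcal{MBA}[k][x]$ (\cref{line:lmba}) and $\mathsf{valid}(v) = \mathit{true}$ (\cref{line:final_check}); in particular $v \neq \botmba$. By the justification property of the MBA primitive applied to $\mathcal{MBA}[k][x]$, some correct process proposed $v$ to $\mathcal{MBA}[k][x]$. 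Finally, \Cref{proposition:integrity_proposal_correct} ensures that, as all processes are correct, $v$ is the proposal of a correct process, which establishes weak validity.
\end{proof}
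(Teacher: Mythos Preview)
Your proof is correct and follows essentially the same approach as the paper: trace the decided value back through quasi-decision to an $\mathcal{MBA}[k][x]$ output, use $\mathsf{valid}(v)=\mathit{true}$ to rule out $\botmba$, invoke MBA justification to get a correct proposer, and then apply \Cref{proposition:integrity_proposal_correct}. The only cosmetic difference is that the paper cites \Cref{lemma:reducer_external_validity} for validity of $v$ rather than the line-level check, but the argument is otherwise identical.
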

\begin{proof}
Suppose all processes are correct.
Moreover, let a correct process $p_i$ decide some value $v$; note that value $v$ must be valid by \Cref{lemma:reducer_external_validity}.
Hence, process $p_i$ quasi-decides $v$ in some iteration $k \in \mathbb{N}$, which further implies that $v$ is decided from $\mathcal{MBA}[k][x]$ in some sub-iteration $(k, x \in \{1 , 2, 3\})$.
Given that $v$ is valid and $\botmba$ is invalid, $v \neq \botmba$.
Thus, the justification property of $\mathcal{MBA}[k][x]$ guarantees that $v$ is proposed to $\mathcal{MBA}[k][x]$ by a correct process.
\Cref{proposition:integrity_proposal_correct} then proves that $v$ is the proposal of a correct process, which concludes the proof.
\end{proof}

\smallskip
\noindent \textbf{Integrity.}
Next, we prove \comm's integrity.

\begin{lemma} [\comm satisfies integrity]
Given $n = 4t + 1$ and the existence of a collision-resistant hash function, \comm (see \Cref{algorithm:reducer}) satisfies integrity in the presence of a computationally bounded adversary.
\end{lemma}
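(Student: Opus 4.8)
The integrity property asserts that no correct process decides more than once. The plan is to argue directly from the structure of \Cref{algorithm:reducer}: the only place a correct process triggers $\mathsf{decide}(\cdot)$ is at line~\ref{line:reducer_decide}, and this line is guarded by the condition at line~\ref{line:check_quasi_decisions}, which requires that $p_i$ \emph{has not previously decided}. First I would observe that once a correct process $p_i$ executes line~\ref{line:reducer_decide} in some iteration $k$, it has decided, so in every subsequent iteration $k' > k$ the guard at line~\ref{line:check_quasi_decisions} evaluates to false and line~\ref{line:reducer_decide} is not re-executed. Within a single iteration, line~\ref{line:check_quasi_decisions} (and hence line~\ref{line:reducer_decide}) is reached only once, since the iteration body is a straight-line block after the three sub-iterations complete.

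\begin{proof}
A correct process $p_i$ triggers $\mathsf{decide}(\cdot)$ only at line~\ref{line:reducer_decide}, which is executed only if the condition at line~\ref{line:check_quasi_decisions} holds; in particular, $p_i$ must not have previously decided. Moreover, line~\ref{line:check_quasi_decisions} is reached at most once per iteration (it lies in the straight-line portion of the iteration body, after the loop over $x \in \{1, 2, 3\}$), and once $p_i$ has executed line~\ref{line:reducer_decide} in some iteration, the predicate ``$p_i$ has not previously decided'' is permanently false, so line~\ref{line:reducer_decide} is never executed again in any later iteration. Hence a correct process decides at most once, establishing integrity.
\end{proof}

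The main (and only mild) obstacle here is purely one of bookkeeping: making sure that, within the $\textbf{for each}$ loop over iterations $k = 1, 2, \ldots$ at line~\ref{line:start_iteration}, the decision check genuinely sits outside the sub-iteration loop and is evaluated exactly once per iteration, and that the ``has not previously decided'' flag is a persistent piece of state (not reset per iteration like $\mathit{candidates}_i$). Both facts are immediate from the pseudocode, so this lemma is one of the easy ones; the proof above should suffice with no further calculation.
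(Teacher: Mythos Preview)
Your proof is correct and takes essentially the same approach as the paper, which simply observes that the lemma holds due to the check at line~\ref{line:check_quasi_decisions}. Your argument spells out in more detail why the ``has not previously decided'' guard suffices, but the underlying idea is identical.
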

\begin{proof}
The lemma trivially holds due to the check at line~\ref{line:check_quasi_decisions}.
\end{proof}

\smallskip
\noindent \textbf{Termination.}
Next, we prove that \comm satisfies termination.
We say that a correct process \emph{completes the dissemination phase} if and only if the process executes line~\ref{line:dissemination_complete}.
The following proposition proves that at least one correct process completes the dissemination phase (and thus starts the first iteration of \comm).

\begin{proposition} \label{lemma:one_process_completes_dispersion}
At least one correct process completes the dissemination phase.
\end{proposition}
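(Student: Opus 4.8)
The plan is to trace the dissemination phase of \comm (lines~\ref{line:reducer_propose}--\ref{line:dissemination_complete} of \Cref{algorithm:reducer}) and argue that some correct process inevitably reaches line~\ref{line:dissemination_complete}. I would argue by contradiction: assume that \emph{no} correct process ever completes the dissemination phase, i.e., no correct process ever receives $n - t$ \textsc{finish} messages.

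First I would establish that every correct process eventually broadcasts an \textsc{init} message to every process (this is immediate from the \texttt{upon propose} handler at line~\ref{line:reducer_propose}, since every correct process proposes by assumption). Consequently, each correct process $p_j$ eventually receives a valid \textsc{init} message from every correct sender $p_i$ (the witness $w_j$ computed at line~\ref{line:compute_witness} is valid against $z_i$), and so $p_j$ replies with an \textsc{ack} (line~\ref{line:send_ack}). Hence every correct process eventually receives $\geq n - t$ \textsc{ack} messages (there are $\geq n - t$ correct processes) and broadcasts \textsc{done} at line~\ref{line:broadcast_done}. By the same counting, every correct process eventually receives $\geq n - t$ \textsc{done} messages and therefore broadcasts \textsc{finish} at line~\ref{line:first_finish} (if it has not already done so). Thus \emph{all} $\geq n - t$ correct processes eventually broadcast a \textsc{finish} message.

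Then I would conclude: since every correct process sends \textsc{finish} to every process, every correct process eventually receives $\geq n - t$ \textsc{finish} messages, and hence executes line~\ref{line:dissemination_complete}, completing the dissemination phase. This contradicts the assumption, proving the proposition. (In fact the argument shows the stronger statement that \emph{all} correct processes complete the dissemination phase; the weaker ``at least one'' suffices for the subsequent termination argument of \Cref{section:reducer_proof}.)

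The argument is essentially a routine ``eventual delivery $+$ quorum counting'' liveness chase through the four message types, relying only on $n - t$ correct processes existing and on the eventual-delivery guarantee of the asynchronous network between correct processes; the lines~\ref{line:receive_plurality_finish}--\ref{line:second_finish} amplification step for \textsc{finish} is not even needed here, though it is what makes the phase robust when only a subset of correct processes would otherwise broadcast \textsc{finish}. There is no real obstacle: the only subtlety to state carefully is that a so-far-uncorrupted process that later gets corrupted may have its in-flight messages retracted (the after-the-fact-removal capability), but since we only count messages originating from processes that are \emph{correct} (never corrupted), their messages are guaranteed delivered, so the counting goes through.
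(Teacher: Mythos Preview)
Your proof is correct and follows the same contradiction-based chase through \textsc{init}/\textsc{ack}/\textsc{done}/\textsc{finish} as the paper. One caveat on your closing parenthetical: the argument does \emph{not} directly yield that \emph{all} correct processes complete the phase, since the contradiction hypothesis is precisely what guarantees every correct process keeps replying with \textsc{ack} (line~\ref{line:send_ack} is gated on $\mathit{dissemination\_completed}_i = \mathit{false}$); the paper accordingly handles the ``all'' statement via the separate \Cref{lemma:if_one_completes_dispersion_everyone_completes_dispersion}.
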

\begin{proof}
By contradiction, suppose no correct process completes the dissemination phase.
Hence, no correct process stops responding with \textsc{ack} messages (line~\ref{line:send_ack}) upon receiving \textsc{init} messages (line~\ref{line:receive_init}).
As there are (at least) $n - t$ correct processes, every correct process eventually broadcasts a \textsc{done} message (line~\ref{line:broadcast_done}).
Similarly, every correct process eventually receives $n - t$ \textsc{done} messages (line~\ref{line:receive_quorum_done}) and broadcasts a \textsc{finish} message (line~\ref{line:first_finish}).
Thus, every correct process eventually receives a \textsc{finish} message from (at least) $n - t$ processes (line~\ref{line:receive_quorum_finish}) and completes the dissemination phase (line~\ref{line:dissemination_complete}), thus contradicting the fact that no correct process completes the dissemination phase.
\end{proof}

The following proposition proves that if a correct process completes the dissemination phase, then all correct processes eventually complete the dissemination phase.

\begin{proposition} \label{lemma:if_one_completes_dispersion_everyone_completes_dispersion}
If any correct process completes the dissemination phase, then every correct process eventually completes the dissemination phase.
\end{proposition}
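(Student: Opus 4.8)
The plan is to establish this as a standard ``reliable broadcast''-style amplification argument over the \textsc{finish} messages. Suppose some correct process $p_i$ completes the dissemination phase, i.e., it executes line~\ref{line:dissemination_complete}. By the rule at line~\ref{line:receive_quorum_finish}, $p_i$ received $\langle \textsc{finish} \rangle$ from $n - t = 3t + 1$ processes. Since at most $t$ of these are faulty, at least $(n-t) - t = 2t + 1 \geq t + 1$ of these \textsc{finish} messages were sent by so-far-uncorrupted processes at the time they sent them. The key subtlety is the after-the-fact-removal capability: a so-far-uncorrupted process that sent a \textsc{finish} message may later be corrupted and have its message retracted before it reaches other correct processes. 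I would handle this the usual way: once $n - t$ \textsc{finish} messages have been \emph{received} by some correct process, among the senders at least $2t+1$ were so-far-uncorrupted \emph{when they sent}, so at least $t+1$ of them are in fact correct (never corrupted), and the messages of correct processes are never retracted and are eventually delivered to everyone.

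Given that, the first step is: at least $t + 1$ correct processes broadcast a \textsc{finish} message (at line~\ref{line:first_finish} or line~\ref{line:second_finish}). Hence every correct process eventually receives $\langle \textsc{finish} \rangle$ from $t + 1$ processes and, by the rule at line~\ref{line:receive_plurality_finish}, broadcasts a \textsc{finish} message itself (if it has not already). Therefore \emph{every} correct process eventually broadcasts a \textsc{finish} message, so every correct process eventually receives $\langle \textsc{finish} \rangle$ from all $n - t$ correct processes, triggers the rule at line~\ref{line:receive_quorum_finish}, and completes the dissemination phase at line~\ref{line:dissemination_complete}. This is the classic amplification step from Bracha-style reliable broadcast, where the $t+1$ and $n-t$ thresholds on \textsc{finish} messages play the roles of the ``ready amplification'' and ``ready delivery'' thresholds.

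The main obstacle I anticipate is getting the counting right under the adaptive, after-the-fact-removal adversary. Concretely, one must argue carefully that whenever a correct process receives $n-t$ \textsc{finish} messages, a genuine majority-within-honest of $t+1$ of those senders are \emph{correct} (not merely so-far-uncorrupted), so that the amplification to ``every correct process receives $t+1$ \textsc{finish}'' actually goes through despite retractions. I would phrase this as: a message from a correct process is never retracted and is eventually delivered, and any set of $n-t = 3t+1$ senders contains at least $t+1$ correct processes; combining these yields that the $t+1$ threshold at line~\ref{line:receive_plurality_finish} is eventually met at every correct process. The remaining steps (``$t+1$ correct finish $\Rightarrow$ all correct finish $\Rightarrow$ all correct receive $n-t$ finish'') are then routine liveness arguments using only eventual delivery between correct processes. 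I would keep the write-up short, mirroring the structure of \Cref{lemma:one_process_completes_dispersion}.
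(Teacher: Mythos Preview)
Your proposal is correct and follows essentially the same Bracha-style amplification argument as the paper: from $n-t$ received \textsc{finish} messages, enough correct senders exist to trigger the $t+1$ threshold everywhere, so all correct processes broadcast \textsc{finish} and then all reach the $n-t$ threshold. The only minor difference is that the paper counts $2t+1$ \emph{correct} senders directly (since at most $t$ processes are ever faulty), whereas you take a more cautious two-step route to $t+1$ correct senders; both suffice, and your extra care about after-the-fact removal is unnecessary here but harmless.
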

\begin{proof}
Let $p_i$ be any correct process that completes the dissemination phase.
This implies that $p_i$ receives $n - t = 3t + 1$ \textsc{finish} messages (line~\ref{line:receive_quorum_finish}), out of which (at least) $2t + 1$ messages are sent by correct processes.
Therefore, every correct process eventually receives $2t + 1 \geq t + 1$ \textsc{finish} messages (line~\ref{line:receive_plurality_finish}) and broadcasts its \textsc{finish} message (line~\ref{line:second_finish}).
Given that there are (at least) $n - t$ correct processes, every correct process eventually receives $n - t$ \textsc{finish} messages (line~\ref{line:receive_quorum_finish}) and completes the dissemination phase (line~\ref{line:dissemination_complete}).
\end{proof}

We are ready to prove that every correct process eventually completes the dissemination phase.

\begin{proposition} \label{lemma:everyone_completes_dispersion}
Every correct process eventually completes the dissemination phase.
\end{proposition}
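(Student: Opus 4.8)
\textbf{Proof plan for \Cref{lemma:everyone_completes_dispersion}.}
The statement combines the two preceding propositions, so the plan is simply to chain them together. First I would invoke \Cref{lemma:one_process_completes_dispersion} to obtain that at least one correct process---call it $p_i$---completes the dissemination phase (i.e., executes \Cref{line:dissemination_complete}). Then I would apply \Cref{lemma:if_one_completes_dispersion_everyone_completes_dispersion} to this process $p_i$: since some correct process has completed the dissemination phase, every correct process eventually completes it. This immediately yields the claim.

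The only minor point worth spelling out is that the hypotheses of \Cref{lemma:if_one_completes_dispersion_everyone_completes_dispersion} are met vacuously in the ``no correct process completes'' branch and genuinely in the other branch; since \Cref{lemma:one_process_completes_dispersion} rules out the former, we are always in the latter, so the application is unconditional. No new message-counting or quorum-intersection argument is needed here---all of that work was already done inside the two cited propositions (the key facts being that $n-t$ correct processes keep answering with \textsc{ack}, and that $n-t$ \textsc{finish} messages include $2t+1 \ge t+1$ from correct processes, which triggers the amplification rule at \Cref{line:second_finish}).

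There is essentially no obstacle: this is a one-line corollary. The plan is:
\begin{compactenum}
    \item By \Cref{lemma:one_process_completes_dispersion}, fix a correct process $p_i$ that completes the dissemination phase.
    \item By \Cref{lemma:if_one_completes_dispersion_everyone_completes_dispersion} applied to $p_i$, conclude that every correct process eventually completes the dissemination phase.
\end{compactenum}
If anything requires care in the write-up, it is only making explicit that ``completes the dissemination phase'' is the event of reaching \Cref{line:dissemination_complete}, which is exactly the shared definition used by both cited propositions, so no reconciliation of terminology is needed.
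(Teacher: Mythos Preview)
Your plan is correct and matches the paper's own proof exactly: the paper simply states that the proposition follows directly from \Cref{lemma:one_process_completes_dispersion} and \Cref{lemma:if_one_completes_dispersion_everyone_completes_dispersion}, which is precisely the chaining you describe.
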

\begin{proof}
The proposition follows directly from \cref{lemma:one_process_completes_dispersion,lemma:if_one_completes_dispersion_everyone_completes_dispersion}.
\end{proof}

Recall that the specification of the SMBA primitive (see \Cref{mod:async_smba}) assumes that only $O(1)$ different proposals are input by correct processes.
Hence, to prove that the $\mathcal{SMBA}$ instances utilized in \comm operate according to their specification, we now prove that only $O(1)$ different proposals are input by correct processes to any $\mathcal{SMBA}$ instance.
Recall that we say that a correct process $p_i$ \emph{suggests} a digest $z$ in an iteration $k \in \mathbb{N}$ if and only if $p_i$ broadcasts a \textsc{suggest} message with digest $z$ in iteration $k$ (line~\ref{line:broadcast_suggest}).
Let $\mathsf{suggested}_i(k)$ denote the set of digests suggested by any correct process $p_i$ in any iteration $k \in \mathbb{N}$.
The following proposition proves that each correct process suggests at most two digests in every iteration.

\begin{proposition} \label{lemma:proposed_bound}
For every correct process $p_i$ and every iteration $k \in \mathbb{N}$, $|\mathsf{suggested}_i(k)| \leq 2$.
\end{proposition}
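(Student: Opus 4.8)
The plan is to bound the number of distinct digests a correct process $p_i$ can suggest in iteration $k$, i.e., the size of the list $\mathit{candidates}_i$ at the moment $p_i$ reaches line~\ref{line:broadcast_suggest}. Recall from \Cref{algorithm:reducer} that this list is populated at lines~\ref{line:check_stored}--\ref{line:reducer_candidates_append}: $p_i$ appends a digest $z$ to $\mathit{candidates}_i$ precisely when $z$ appears in at least $n - 3t = t + 1$ of the $n - t = 3t + 1$ \textsc{stored} messages $p_i$ has collected for iteration $k$ (line~\ref{line:wait_for_stored_messages}). Crucially, each correct process broadcasts exactly one \textsc{stored} message per iteration (line~\ref{line:broadcast_stored}), so each of the $3t + 1$ collected \textsc{stored} messages contributes (at most) one digest --- namely, a \textsc{stored} message carries a single digest (or $\bot$, which I will note is not a valid digest and hence never appended).

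First I would fix a correct process $p_i$ and an iteration $k$, and consider the multiset $M$ of digests extracted from the $n - t = 3t + 1$ \textsc{stored} messages $p_i$ waits for at line~\ref{line:wait_for_stored_messages}. Since each such message yields at most one digest, $|M| \leq 3t + 1$ (counting with multiplicity). A digest $z$ is appended to $\mathit{candidates}_i$ only if its multiplicity in $M$ is at least $t + 1$ (the condition at line~\ref{line:check_stored}). Hence the number of distinct digests appended is at most $\lfloor (3t+1)/(t+1) \rfloor$. A short arithmetic check gives $\lfloor (3t+1)/(t+1) \rfloor = 2$ for all $t \geq 1$ (indeed $2(t+1) = 2t + 2 \leq 3t + 1$ iff $t \geq 1$, while $3(t+1) = 3t + 3 > 3t + 1$), so at most two distinct digests survive the \textsc{stored} step. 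Since the \textsc{suggest} message broadcast at line~\ref{line:broadcast_suggest} contains exactly this $\mathit{candidates}_i$ list, we get $|\mathsf{suggested}_i(k)| \leq 2$.

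The only subtlety --- and what I expect to be the one place requiring care rather than routine counting --- is the handling of the $\bot$ entry: when $p_i$ did not receive a digest from $\mathsf{leader}(k)$ during dissemination, it broadcasts a \textsc{stored} message containing $\bot$ (as noted in \Cref{subsection:revisitin_hmvba}), and faulty processes may do likewise or send garbage. I would argue that the loop at line~\ref{line:check_stored} ranges over $\mathsf{Digest}$ values $z$, so $\bot$ is simply not a value it considers; thus $\bot$-carrying messages only consume slots in $M$ without contributing to $\mathit{candidates}_i$, which can only \emph{decrease} the count of distinct appended digests. I would also note the edge case where fewer than two distinct digests meet the threshold --- then $|\mathsf{suggested}_i(k)| \leq 1 \leq 2$ trivially, and the statement still holds. (The subsequent padding at lines~\ref{line:reducer_set_default}--\ref{line:reducer_set_copy} happens \emph{after} the \textsc{suggest} broadcast, so it is irrelevant to $\mathsf{suggested}_i(k)$.) Assembling these observations yields the bound.

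\begin{proof}
Fix a correct process $p_i$ and an iteration $k \in \mathbb{N}$. By definition, $\mathsf{suggested}_i(k)$ is the set of digests contained in the \textsc{suggest} message $p_i$ broadcasts at line~\ref{line:broadcast_suggest}, which equals the contents of $\mathit{candidates}_i$ at that point. The list $\mathit{candidates}_i$ is built at lines~\ref{line:check_stored}--\ref{line:reducer_candidates_append} from the $n - t = 3t + 1$ \textsc{stored} messages for iteration $k$ that $p_i$ awaits at line~\ref{line:wait_for_stored_messages}. Each \textsc{stored} message carries a single digest (or $\bot$, which, not being a $\mathsf{Digest}$, is never iterated over at line~\ref{line:check_stored}). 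Thus, letting $M$ be the multiset of digests extracted from these messages, $|M| \leq 3t + 1$. A digest $z$ is appended to $\mathit{candidates}_i$ only if $z$ occurs in at least $n - 3t = t + 1$ of the collected \textsc{stored} messages, i.e., with multiplicity $\geq t + 1$ in $M$. Therefore the number of distinct digests in $\mathit{candidates}_i$ is at most $\lfloor |M| / (t+1) \rfloor \leq \lfloor (3t+1)/(t+1) \rfloor$. Since $t \geq 1$, we have $2(t+1) = 2t + 2 \leq 3t + 1 < 3t + 3 = 3(t+1)$, so $\lfloor (3t+1)/(t+1) \rfloor = 2$. Hence $|\mathsf{suggested}_i(k)| = |\mathit{candidates}_i| \leq 2$.
\end{proof}
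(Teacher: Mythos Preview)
Your proof is correct and takes essentially the same approach as the paper's: both argue that each suggested digest requires at least $t+1$ of the $3t+1$ collected \textsc{stored} messages, so the number of suggested digests is at most $\frac{3t+1}{t+1} < 3$. Your version is more explicit about the $\bot$ case and computes the floor directly, but the underlying pigeonhole argument is identical.
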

\begin{proof}
For every digest $z \in \mathsf{suggested}_i(k)$, process $p_i$ receives (at least) $n - 3t = t + 1$ \textsc{stored} messages in iteration $k$ (due to the check at line~\ref{line:check_stored}).
As $p_i$ receives $n - t = 3t + 1$ \textsc{stored} messages (line~\ref{line:wait_for_stored_messages}) before broadcasting its \textsc{suggest} message, there can be at most $\frac{3t + 1}{t + 1} < 3$ suggested digests, which concludes the proof.
\end{proof}

Recall that we say a correct process $p_i$ \emph{$1$-commits} (resp., \emph{$2$-commits}) a digest $z$ in an iteration $k \in \mathbb{N}$ if and only if $\mathit{candidates}_i[1] = z$ (resp., $\mathit{candidates}_i[2] = z$) when process $p_i$ reaches line~\ref{line:agreeing_phase} in iteration $k$.
Similarly, a correct process $p_i$ \emph{commits} a digest $z$ in an iteration $k \in \mathbb{N}$ if and only if $p_i$ $1$-commits or $2$-commits $z$ in iteration $k$.
We denote by $\mathsf{committed}_i(k)$ the set of digests a correct process $p_i$ commits in an iteration $k \in \mathbb{N}$.
The following proposition proves that any correct process $p_i$ commits only digests previously suggested by $p_i$ (in the same iteration) or the default digest $\mathit{default}$ (line~\ref{line:reducer_fixed_default}).

\begin{proposition} \label{lemma:precommitted_proposed}
For every correct process $p_i$ and every iteration $k \in \mathbb{N}$, the following holds: $$\mathsf{committed}_i(k) \subseteq (\mathsf{suggested}_i(k) \cup \{ \mathit{default} \}).$$
\end{proposition}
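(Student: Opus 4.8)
The statement to prove is that for every correct process $p_i$ and every iteration $k$, we have $\mathsf{committed}_i(k) \subseteq (\mathsf{suggested}_i(k) \cup \{ \mathit{default} \})$. The plan is to trace, line by line, how the $\mathit{candidates}_i$ list evolves over the course of iteration $k$, and to observe that the only digests ever placed into it are either digests that $p_i$ suggested in iteration $k$ or the fixed constant $\mathit{default}$.

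First I would recall how $\mathit{candidates}_i$ is formed within iteration $k$: at line~\ref{line:reset_candidates} the list is reset to empty, and the only place it is populated \emph{before} the \textsc{suggest} broadcast at line~\ref{line:broadcast_suggest} is the loop at lines~\ref{line:check_stored}--\ref{line:reducer_candidates_append}, where each digest appended is exactly a digest included in $t+1$ received \textsc{stored} messages. These are precisely the digests that $p_i$ includes in its \textsc{suggest} message at line~\ref{line:broadcast_suggest}, so by definition of $\mathsf{suggested}_i(k)$, at the moment $p_i$ broadcasts \textsc{suggest}, the contents of $\mathit{candidates}_i$ form exactly the set $\mathsf{suggested}_i(k)$. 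Next I would observe that the subsequent processing only ever \emph{removes} elements from $\mathit{candidates}_i$ (the loop at lines~\ref{line:rule_precommit}--\ref{line:candidates_remove} calls $\mathsf{remove}$), so after this loop, $\mathit{candidates}_i \subseteq \mathsf{suggested}_i(k)$. Finally I would handle the two ``repair'' branches: if $\mathit{candidates}_i.\mathsf{size} = 0$, then line~\ref{line:reducer_set_default} sets both entries to $\mathit{default}$, so $\mathsf{committed}_i(k) = \{\mathit{default}\}$; if $\mathit{candidates}_i.\mathsf{size} = 1$, line~\ref{line:reducer_set_copy} merely duplicates the surviving element, which is in $\mathsf{suggested}_i(k)$; and if the size is already $2$, no change is made. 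The sorting step at line~\ref{line:reducer_sort} permutes the list but introduces no new digest. Hence when $p_i$ reaches line~\ref{line:agreeing_phase}, every entry of $\mathit{candidates}_i$ lies in $\mathsf{suggested}_i(k) \cup \{\mathit{default}\}$, which is exactly the claim since $\mathsf{committed}_i(k)$ is defined as the set of entries of $\mathit{candidates}_i$ at that point.

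I do not anticipate a serious obstacle here; this is essentially a bookkeeping argument about which lines can insert into $\mathit{candidates}_i$. The one place requiring a little care is making the case split on the size of $\mathit{candidates}_i$ after the removal loop exhaustive and correctly arguing that duplication (line~\ref{line:reducer_set_copy}) and sorting (line~\ref{line:reducer_sort}) preserve the invariant, and that in the all-default branch the right-hand side still covers $\mathsf{committed}_i(k)$ because it includes $\{\mathit{default}\}$. Beyond that the proof is a direct inspection of \Cref{algorithm:reducer}.
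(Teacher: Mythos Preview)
Your proposal is correct and follows essentially the same approach as the paper: a case split on $\mathit{candidates}_i.\mathsf{size}$ after the removal loop, combined with the observation that only suggested digests (or $\mathit{default}$) ever enter the list. The paper phrases it as a proof by contradiction while you give a direct forward trace, but the underlying code inspection is identical.
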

\begin{proof}
Consider any digest $z \in \mathsf{committed}_i(k)$.
To prove the proposition, it suffices to show that if $z \in \mathsf{committed}_i(k)$ and $z \neq \mathit{default}$, then $z \in \mathsf{suggested}_i(k)$.
By contradiction, suppose (1) $z \in \mathsf{committed}_i(k)$, (2) $z \neq \mathit{default}$, and (3) $z \notin \mathsf{suggested}_i(k)$.
Let us consider three possibilities:
\begin{compactitem}
    \item Let $\mathit{candidates}_i.\mathsf{size} = 0$ when $p_i$ reaches line~\ref{line:reducer_if_size_0}.
    This case is impossible as $p_i$ commits only $\mathit{default}$ in this case, i.e., $z \neq \mathit{default}$ is not committed.

    \item Let $\mathit{candidates}_i.\mathsf{size} = 1$ when $p_i$ reaches line~\ref{line:reducer_if_size_0}.
    As $z \notin \mathsf{suggested}_i(k)$, this case is also impossible as $z$ is not committed by $p_i$.

    \item Let $\mathit{candidates}_i.\mathsf{size} = 2$ when $p_i$ reaches line~\ref{line:reducer_if_size_0}.
    Again, this case cannot occur as $z$ cannot be committed given that $z \notin \mathsf{suggested}_i(k)$.
\end{compactitem}
As neither of the three cases can occur, the proposition holds.
\end{proof}

Next, we prove that if any correct process commits a digest $z \neq \mathit{default}$ in any iteration $k$, then (at least) $t + 1$ correct processes suggest $z$ in iteration $k$.

\begin{proposition} \label{lemma:precommitted_supported_by_t+1}
Consider any correct process $p_i$ and any iteration $k \in \mathbb{N}$.
If process $p_i$ commits a digest $z \neq \mathit{default}$, then (at least) $t + 1$ correct processes suggest $z$ in iteration $k$.
\end{proposition}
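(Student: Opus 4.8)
\textbf{Proof plan for \Cref{lemma:precommitted_supported_by_t+1}.}

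The plan is to trace through the two filtering steps of an iteration and see precisely what has to hold for a non-default digest to survive as a committed digest at some correct process $p_i$. First I would invoke \Cref{lemma:precommitted_proposed}: since $z \neq \mathit{default}$ and $z \in \mathsf{committed}_i(k)$, we get $z \in \mathsf{suggested}_i(k)$, i.e.\ $p_i$ itself broadcast a \textsc{suggest} message containing $z$ at line~\ref{line:broadcast_suggest}. Next I would observe that because $z \in \mathit{candidates}_i$ when $p_i$ executes the loop at lines~\ref{line:rule_precommit}--\ref{line:candidates_remove} (otherwise $z$ would not be committed), the ``remove'' condition at line~\ref{line:rule_precommit} did not fire for $z$; hence $z$ is included in at least $n - 2t = 2t + 1$ of the \textsc{suggest} messages that $p_i$ received for iteration $k$.

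From the bound ``$z$ appears in $\geq 2t+1$ received \textsc{suggest} messages'' the conclusion follows by the standard counting argument: at most $t$ of those messages can come from faulty processes (each faulty process contributes at most one \textsc{suggest} message that $p_i$ counts, since $p_i$ processes at most one \textsc{suggest} message per sender for a given iteration), so at least $(2t+1) - t = t+1$ of them come from correct processes. Each such correct process, by definition, suggested $z$ in iteration $k$ — i.e.\ it broadcast a \textsc{suggest} message whose candidate list contains $z$ — which is exactly the claim. I would phrase this as: let $Q$ be the set of senders of the $\geq 2t+1$ \textsc{suggest} messages for $z$ received by $p_i$; $|Q| \geq 2t+1$; $|Q \cap \text{faulty}| \leq t$; therefore $|Q \cap \text{correct}| \geq t+1$, and every process in $Q \cap \text{correct}$ suggests $z$ in iteration $k$.

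The only mild subtlety — and the one place I would be careful — is justifying that $p_i$ counts at most one \textsc{suggest} message per sender when checking the threshold at line~\ref{line:rule_precommit}, so that the ``$t$ faulty senders $\Rightarrow$ at most $t$ spurious messages'' step is legitimate; this is the analogue of the ``only one \textsc{init} per process'' rule and is the implicit convention for all such quorum waits in the paper, but it is worth one sentence. There is no real obstacle here: the lemma is a one-step unfolding of the line~\ref{line:rule_precommit} condition combined with quorum arithmetic, and I expect the whole proof to be three or four sentences. I would not need the assumption $n = 4t+1$ beyond what is already baked into the constants $n - 2t = 2t+1$ in the pseudocode.
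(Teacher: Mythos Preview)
Your proposal is correct and follows essentially the same route as the paper: invoke \Cref{lemma:precommitted_proposed} to get $z \in \mathsf{suggested}_i(k)$, observe that surviving the filter at line~\ref{line:rule_precommit} means $z$ appeared in at least $n-2t = 2t+1$ received \textsc{suggest} messages, and subtract the at most $t$ faulty senders. The paper's proof is exactly these three sentences; your added remark about counting one \textsc{suggest} message per sender is a fair point of hygiene that the paper leaves implicit.
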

\begin{proof}
If process $p_i$ commits $z \neq \mathit{default}$, \Cref{lemma:precommitted_proposed} proves that $z \in \mathsf{suggested}_i(k)$.
Hence, for process $p_i$ to not remove $z$ from its $\mathit{candidates}_i$ list after receiving $n - t = 3t + 1$ \textsc{suggest} messages (line~\ref{line:wait_for_suggest_messages}), $p_i$ must have received at least $2t + 1$ \textsc{suggest} messages for $z$ (line~\ref{line:rule_precommit}).
Therefore, at least $2t + 1 - t = t + 1$ correct processes suggest $z$ in iteration $k$.
\end{proof}

Recall that, for any iteration $k \in \mathbb{N}$, we define the set $\mathsf{committed}(k)$:
\begin{equation*}
    \mathsf{committed}(k) = \{ z \,|\, z \text{ is committed by a correct process in iteration $k$}\}.
\end{equation*}
The following proposition proves that $|\mathsf{committed}(k)| \in O(1)$, for any iteration $k$.

\begin{proposition} \label{lemma:iteration_all_committed_bound}
For every iteration $k \in \mathbb{N}$, $|\mathsf{committed}(k)| \in O(1)$.
\end{proposition}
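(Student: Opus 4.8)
\textbf{Proof plan for \Cref{lemma:iteration_all_committed_bound}.} The plan is to bound $|\mathsf{committed}(k)|$ by a constant using a counting argument over the \textsc{suggest} messages sent by correct processes, leveraging the two already-established facts: (1)~each correct process suggests at most two digests per iteration (\Cref{lemma:proposed_bound}), and (2)~any digest $z \neq \mathit{default}$ committed by a correct process is suggested by at least $t+1$ correct processes in that iteration (\Cref{lemma:precommitted_supported_by_t+1}). Intuitively, each ``surviving'' non-default digest consumes at least $t+1$ of the at most $2(n-t)$ suggestion-slots contributed by correct processes, so only $O(1)$ such digests can exist.

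First I would fix an iteration $k$ and let $W = \mathsf{committed}(k) \setminus \{\mathit{default}\}$ be the set of non-default committed digests; clearly $|\mathsf{committed}(k)| \leq |W| + 1$, so it suffices to bound $|W|$. Next, I would count pairs $(p_j, z)$ where $p_j$ is a correct process that suggests digest $z \in W$ in iteration $k$. On the one hand, by \Cref{lemma:precommitted_supported_by_t+1}, each $z \in W$ contributes at least $t+1$ such pairs, so the total count is at least $(t+1)|W|$. On the other hand, by \Cref{lemma:proposed_bound}, each correct process $p_j$ contributes at most $2$ such pairs (it suggests at most two digests total, hence at most two digests in $W$); since there are at most $n$ correct processes, the total count is at most $2n$. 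Combining, $(t+1)|W| \leq 2n$, hence $|W| \leq \frac{2n}{t+1}$. With $n = 4t+1$ this gives $|W| \leq \frac{2(4t+1)}{t+1} < 8$, so $|W| \leq 7$ and therefore $|\mathsf{committed}(k)| \leq 8 \in O(1)$.

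I do not expect any serious obstacle here, since both ingredients are already proven in the excerpt and the argument is a routine double-counting over suggestion slots. The only care needed is the bookkeeping: making sure the ``at most two digests in $W$ per correct process'' bound is correctly derived from ``at most two suggested digests total per correct process'' (it is, since $W$-suggestions are a subset of all suggestions), and handling the $\mathit{default}$ digest separately (it need not be supported by any suggestions, so it is simply added as a $+1$). Note that for the specific application inside \comm (the argument in \Cref{subsection:reducer_overview}) a sharper bound $|\mathsf{committed}(k)| \leq 3$ is actually what is used, and that sharper bound additionally exploits that every correct process suggests $z^{\star}(k)$ in a \emph{good} iteration, leaving only $n - t$ correct suggestion-slots for non-$z^{\star}(k)$ digests; but for this proposition the crude $O(1)$ bound above is all that is claimed, so I would not belabor the tighter estimate.
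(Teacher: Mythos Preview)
Your proof is correct and follows essentially the same approach as the paper: both use \Cref{lemma:proposed_bound} and \Cref{lemma:precommitted_supported_by_t+1} in a double-counting argument over suggestions by correct processes to get at most $\frac{2(4t+1)}{t+1} < 8$ non-$\mathit{default}$ committed digests, then add one for the possible $\mathit{default}$ digest. The paper states the conclusion as $|\mathsf{committed}(k)| < 9$ while you write $|\mathsf{committed}(k)| \leq 8$, which is of course the same thing.
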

\begin{proof}
\Cref{lemma:precommitted_supported_by_t+1} proves that every committed digest $z \neq \mathit{default}$ is suggested by at least $t + 1$ correct processes.
Moreover, \Cref{lemma:proposed_bound} proves that each correct process suggests at most two digests in each iteration.
Therefore, there can be at most $\frac{2(4t + 1)}{t + 1} = \frac{8t + 2}{t + 1} < 8$ non-$\mathit{default}$ digests that are committed by correct processes in iteration $k$.
Finally, as the special default digest $\mathit{default}$ can also be committed, $|\mathsf{committed}(k)| < 9$, which concludes the proof.
\end{proof}

Finally, we are ready to prove that only $O(1)$ different digests are proposed by correct processes to any $\mathcal{SMBA}$ instance.

\begin{proposition} \label{lemma:smba_correct}
Let $(k \in \mathbb{N}, x \in \{1, 2, 3\})$ be any sub-iteration.
Then, only $O(1)$ different digests are proposed to $\mathcal{SMBA}[k][x]$ by correct processes.
\end{proposition}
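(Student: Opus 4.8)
The plan is to reduce this claim to the already-established bounds on committed digests, using the fact that every digest a correct process proposes to an SMBA instance is one it has committed in that iteration. First I would observe that for a fixed sub-iteration $(k,x)$, the value a correct process $p_i$ proposes to $\mathcal{SMBA}[k][x]$ is $\mathit{adopted\_digest}_i$ as set on lines~\ref{line:acc_proposal_1}, \ref{line:acc_proposal_2}, or~\ref{line:acc_proposal_3}. In all three branches, $\mathit{adopted\_digest}_i$ is either $\mathit{candidates}_i[1]$ or $\mathit{candidates}_i[2]$ evaluated after the sorting/duplication steps (lines~\ref{line:reducer_set_default}--\ref{line:reducer_sort}), i.e. a digest $c$-committed by $p_i$ for some $c \in \{1,2\}$. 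Hence the set of digests proposed by correct processes to $\mathcal{SMBA}[k][x]$ is a subset of $\mathsf{committed}(k)$.

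Next I would invoke \Cref{lemma:iteration_all_committed_bound}, which states $|\mathsf{committed}(k)| \in O(1)$ for every iteration $k$ (concretely $|\mathsf{committed}(k)| < 9$). Since the set of digests proposed to $\mathcal{SMBA}[k][x]$ is contained in $\mathsf{committed}(k)$, its cardinality is at most $|\mathsf{committed}(k)| < 9 \in O(1)$, which is exactly the claim. This also discharges the side condition in \Cref{mod:async_smba} that only $O(1)$ different digests be input by correct processes, so the $\mathcal{SMBA}$ instances in \comm are used within spec. I would phrase the proof as: fix $(k,x)$; for any correct $p_i$ that proposes to $\mathcal{SMBA}[k][x]$, note the proposed digest equals $\mathit{candidates}_i[x']$ for some $x' \in \{1,2\}$ when $p_i$ reaches line~\ref{line:agreeing_phase}, hence lies in $\mathsf{committed}_i(k) \subseteq \mathsf{committed}(k)$; conclude by \Cref{lemma:iteration_all_committed_bound}.

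There is essentially no hard step here — the proposition is a one-line corollary of \Cref{lemma:iteration_all_committed_bound} once one tracks that adopted digests are committed digests. The only thing requiring a moment of care is the line-number bookkeeping for the ``proposal-switching'' branch (lines~\ref{line:proposal_switching_start}--\ref{line:proposal_switching_end}): in the $x=3$ case the process adopts $\mathit{candidates}_i[2]$ or $\mathit{candidates}_i[1]$ depending on whether $\mathit{first\_digest}_i = \mathit{candidates}_i[1]$, but in both sub-cases it is still one of the two committed digests, so membership in $\mathsf{committed}_i(k)$ is preserved. I expect the write-up to be three or four sentences long.

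\begin{proof}
Fix a sub-iteration $(k, x)$ with $x \in \{1, 2, 3\}$ and let $p_i$ be any correct process that proposes a digest $z$ to $\mathcal{SMBA}[k][x]$ (at line~\ref{line:smba}). Then $z = \mathit{adopted\_digest}_i$ as assigned on line~\ref{line:acc_proposal_1}, \ref{line:acc_proposal_2}, or~\ref{line:acc_proposal_3}. In each of these three cases, $\mathit{adopted\_digest}_i$ is set to either $\mathit{candidates}_i[1]$ or $\mathit{candidates}_i[2]$, where the list $\mathit{candidates}_i$ is exactly the (sorted, and if necessary padded) list $p_i$ holds when it reaches line~\ref{line:agreeing_phase} in iteration $k$ (lines~\ref{line:reducer_if_size_0}--\ref{line:reducer_sort} do not modify the list once $x$-loop iterations begin). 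Hence $z$ is $1$-committed or $2$-committed by $p_i$ in iteration $k$, i.e.\ $z \in \mathsf{committed}_i(k) \subseteq \mathsf{committed}(k)$. Therefore the set of digests proposed by correct processes to $\mathcal{SMBA}[k][x]$ is a subset of $\mathsf{committed}(k)$, and by \Cref{lemma:iteration_all_committed_bound} we have $|\mathsf{committed}(k)| \in O(1)$. Consequently, only $O(1)$ different digests are proposed to $\mathcal{SMBA}[k][x]$ by correct processes.
\end{proof}
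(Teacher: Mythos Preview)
Your proposal is correct and follows essentially the same approach as the paper: observe that any digest a correct process proposes to $\mathcal{SMBA}[k][x]$ is one of its committed digests in iteration $k$ (via lines~\ref{line:acc_proposal_1}, \ref{line:acc_proposal_2}, \ref{line:acc_proposal_3}), hence lies in $\mathsf{committed}(k)$, and conclude by \Cref{lemma:iteration_all_committed_bound}. Your write-up is slightly more detailed than the paper's (which is a three-sentence proof), but the argument is identical.
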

\begin{proof}
If a correct process $p_i$ proposes a digest $z$ to $\mathcal{SMBA}[k][x]$ (line~\ref{line:smba}), then $p_i$ commits $z$ in iteration $k$ (lines~\ref{line:acc_proposal_1},~\ref{line:acc_proposal_2} and~\ref{line:acc_proposal_3}).
Hence, $z \in \mathsf{committed}(k)$.
As $|\mathsf{committed}(k)| \in O(1)$ (by \Cref{lemma:iteration_all_committed_bound}), the proof is concluded.
\end{proof}

We now prove that if all correct processes start any iteration $k$, all correct processes eventually start sub-iteration $(k, 1)$.

\begin{proposition} \label{lemma:first_sub_iteration}
Let $k \in \mathbb{N}$ be any iteration such that all correct processes start iteration $k$.
Then, all correct processes eventually start sub-iteration $(k, 1)$.
\end{proposition}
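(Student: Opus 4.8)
The plan is to show that all correct processes eventually progress from starting iteration $k$ (i.e., entering the loop body at line~\ref{line:start_iteration}) to starting sub-iteration $(k,1)$ (i.e., reaching line~\ref{line:agreeing_phase} with $x=1$). The only blocking steps between these two points are the two \textbf{wait for} statements: line~\ref{line:wait_for_stored_messages} (waiting for $n-t=3t+1$ \textsc{stored} messages for iteration $k$) and line~\ref{line:wait_for_suggest_messages} (waiting for $n-t=3t+1$ \textsc{suggest} messages for iteration $k$). Everything else in between---electing $\mathsf{leader}(k)$ via $\mathsf{Election()}$ (which terminates since all correct processes query the coin), computing candidates, broadcasting \textsc{stored} and \textsc{suggest}, refining candidates, and the sorting/duplicating/default logic---is purely local and completes in finite time once the waits are unblocked. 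So the proof reduces to showing both waits eventually succeed at every correct process.

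First I would argue the \textsc{stored} wait. By assumption all correct processes start iteration $k$, so each correct process $p_j$ executes line~\ref{line:broadcast_stored} and broadcasts $\langle \textsc{stored}, k, \cdot\rangle$. Since there are at least $n-t=3t+1$ correct processes and every message between correct processes is eventually delivered, every correct process $p_i$ eventually receives at least $n-t=3t+1$ \textsc{stored} messages for iteration $k$ and passes line~\ref{line:wait_for_stored_messages}. Then $p_i$ performs the local candidate computation (lines~\ref{line:check_stored}--\ref{line:reducer_candidates_append}) and broadcasts its \textsc{suggest} message at line~\ref{line:broadcast_suggest}; crucially this happens for \emph{every} correct process, because each correct process that clears the \textsc{stored} wait reaches line~\ref{line:broadcast_suggest} deterministically. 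Hence every correct process eventually broadcasts $\langle\textsc{suggest}, k, \cdot\rangle$, so by the same counting-plus-eventual-delivery argument every correct process eventually receives $n-t=3t+1$ \textsc{suggest} messages for iteration $k$ and passes line~\ref{line:wait_for_suggest_messages}.

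After clearing the \textsc{suggest} wait, each correct process executes the local refinement loop (lines~\ref{line:rule_precommit}--\ref{line:candidates_remove}), the size-$0$ / size-$1$ adjustments (lines~\ref{line:reducer_if_size_0}--\ref{line:reducer_set_copy}), and the lexicographic sort (line~\ref{line:reducer_sort}); all of these are terminating local operations on a list of bounded size. It then reaches the \textbf{for each} $x=1,2,3$ loop at line~\ref{line:agreeing_phase} and thus starts sub-iteration $(k,1)$. I would also note that this proposition is really a ``base case'' for an induction whose inductive step (handled by later propositions, presumably using termination of $\mathcal{SMBA}[k][x]$ and $\mathcal{MBA}[k][x]$ plus the \textsc{reconstruct} wait) shows that starting sub-iteration $(k,x)$ leads to starting $(k,x+1)$ and eventually to starting iteration $k+1$; but for this statement only the ``start iteration $k$ $\Rightarrow$ start $(k,1)$'' link is needed.

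The main obstacle---though it is mild here---is making sure the argument is not circular: we must not assume processes reach line~\ref{line:broadcast_suggest} in order to conclude they reach line~\ref{line:wait_for_stored_messages}. The clean way is the staged argument above: (i) all correct processes broadcast \textsc{stored} (immediate from the hypothesis), hence (ii) all clear the \textsc{stored} wait, hence (iii) all broadcast \textsc{suggest}, hence (iv) all clear the \textsc{suggest} wait, hence (v) all reach line~\ref{line:agreeing_phase}. Each implication uses only ``there are $\geq 3t+1$ correct senders'' and ``messages among correct processes are eventually delivered,'' together with the fact that the code between consecutive waits is deterministic and terminating. I would also briefly remark that the leader election at line~\ref{line:random_election} does not block, since the common coin $\mathsf{Election()}$ returns a value once $t+1$ processes (in particular, once all correct processes, all of which start iteration $k$ by hypothesis) query it.

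\begin{proof}[Proof sketch]
Assume every correct process starts iteration $k$, i.e., executes the loop body at line~\ref{line:start_iteration} for this value of $k$. We show that every correct process eventually reaches line~\ref{line:agreeing_phase} with $x = 1$, which is by definition the start of sub-iteration $(k, 1)$.

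Between line~\ref{line:start_iteration} and line~\ref{line:agreeing_phase}, the only potentially blocking steps are: (a) the leader election $\mathsf{Election()}$ at line~\ref{line:random_election}; (b) the \textbf{wait for} $n - t = 3t + 1$ \textsc{stored} messages at line~\ref{line:wait_for_stored_messages}; and (c) the \textbf{wait for} $n - t = 3t + 1$ \textsc{suggest} messages at line~\ref{line:wait_for_suggest_messages}. All remaining steps (resetting $\mathit{candidates}_i$, broadcasting \textsc{stored} and \textsc{suggest}, the candidate computation at lines~\ref{line:check_stored}--\ref{line:reducer_candidates_append}, the refinement loop at lines~\ref{line:rule_precommit}--\ref{line:candidates_remove}, the size-$0$/size-$1$ adjustments at lines~\ref{line:reducer_if_size_0}--\ref{line:reducer_set_copy}, and the sort at line~\ref{line:reducer_sort}) are local and terminate in finite time, operating on a list of bounded size.

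For (a): since every correct process starts iteration $k$, every correct process queries $\mathsf{Election()}$; as there are at least $t + 1$ correct processes, the common coin reveals its value, so every correct process obtains $\mathsf{leader}(k)$ and proceeds.

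For (b): having obtained $\mathsf{leader}(k)$, every correct process $p_j$ executes line~\ref{line:broadcast_stored} and broadcasts $\langle \textsc{stored}, k, \cdot \rangle$. There are at least $n - t = 3t + 1$ correct processes, and every message between correct processes is eventually delivered; hence every correct process $p_i$ eventually receives at least $n - t = 3t + 1$ \textsc{stored} messages for iteration $k$ and passes line~\ref{line:wait_for_stored_messages}.

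For (c): every correct process $p_i$ that passes line~\ref{line:wait_for_stored_messages} performs the (terminating, local) candidate computation and then executes line~\ref{line:broadcast_suggest}, broadcasting $\langle \textsc{suggest}, k, \mathit{candidates}_i \rangle$. Thus every correct process eventually broadcasts a \textsc{suggest} message for iteration $k$, and by the same counting-and-delivery argument every correct process eventually receives at least $n - t = 3t + 1$ \textsc{suggest} messages for iteration $k$ and passes line~\ref{line:wait_for_suggest_messages}.

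After passing line~\ref{line:wait_for_suggest_messages}, each correct process executes the remaining local steps (the refinement loop, the size adjustments, and the sort) in finite time and reaches the \textbf{for each} $x = 1, 2, 3$ loop at line~\ref{line:agreeing_phase}, i.e., starts sub-iteration $(k, 1)$. This completes the proof.
\end{proof}
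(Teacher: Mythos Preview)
Your proof is correct and follows essentially the same approach as the paper's proof: both argue that all correct processes broadcast \textsc{stored}, hence all receive $n-t$ \textsc{stored} messages and broadcast \textsc{suggest}, hence all receive $n-t$ \textsc{suggest} messages and reach line~\ref{line:agreeing_phase}. Your version is more detailed (explicitly handling the $\mathsf{Election()}$ call and the intervening local computations), but the underlying argument is the same two-step counting-plus-eventual-delivery chain.
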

\begin{proof}
As all correct processes start iteration $k$ and there are at least $n - t = 3t + 1$ correct processes, all correct processes eventually receive $n - t = 3t + 1$ \textsc{stored} messages (line~\ref{line:wait_for_stored_messages}) and broadcast a \textsc{suggest} message (line~\ref{line:broadcast_suggest}).
Therefore, all correct processes eventually receive $n - t = 3t + 1$ \textsc{suggest} messages (line~\ref{line:wait_for_suggest_messages}) and start sub-iteration $(k, 1)$ at line~\ref{line:agreeing_phase}.
\end{proof}

Next, we prove that if all correct processes start any sub-iteration $(k, x)$, then all correct processes eventually complete sub-iteration $(k, x)$.

\begin{proposition} \label{lemma:sub_iteration_completion}
Let $(k \in \mathbb{N}, x \in \{1, 2, 3\})$ be any sub-iteration such that all correct processes start sub-iteration $(k, x)$.
Then, all correct processes eventually complete sub-iteration $(k, x)$.
\end{proposition}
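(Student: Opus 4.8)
The plan is to prove Proposition 23 (that all correct processes complete sub-iteration $(k,x)$ once all of them start it) by tracing the control flow of lines~\ref{line:sub-iteration_starts}--\ref{line:reducer_quasi_decide} and showing each blocking point is eventually passed. First I would observe that the adoption step (lines~\ref{line:sub-iteration_starts}--\ref{line:proposal_switching_end}) involves only local computation, so every correct process reaches the call $\mathcal{SMBA}[k][x].\mathsf{propose}(\mathit{adopted\_digest}_i)$ at line~\ref{line:smba}. Here I invoke \Cref{lemma:smba_correct}: only $O(1)$ different digests are proposed to $\mathcal{SMBA}[k][x]$ by correct processes, so the precondition of the SMBA primitive is met, and by \Cref{proposition:smba_termination} (the termination property of \smba) all correct processes eventually decide a common digest $z$ from $\mathcal{SMBA}[k][x]$. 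This gets every correct process to the Reconstruct \& Agree block.

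Next I would handle the two \textbf{wait for} statements in the R\&A block. Since all correct processes decide from $\mathcal{SMBA}[k][x]$ and then broadcast a \textsc{reconstruct} message for sub-iteration $(k,x)$ at line~\ref{line:reducer_reconstruct}, and there are at least $n - t = 3t+1$ correct processes, every correct process eventually receives $n - t = 3t+1$ \textsc{reconstruct} messages for $(k,x)$ and passes line~\ref{line:wait_for_reconstruct_messages}. The subsequent steps---computing $S_i$, the check $|S_i| \geq t+1$, decoding or falling back to $v_i$, are local. Then every correct process invokes $\mathcal{MBA}[k][x].\mathsf{propose}(r_i)$ at line~\ref{line:lmba}; since every correct process proposes exactly once to this instance, the termination property of $\mbasimple$ (proven in \Cref{section:regular_mba}) guarantees all correct processes eventually decide some value $v \in \valuemvba \cup \{\botmba\}$. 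The final conditional at line~\ref{line:final_check} and the possible append at line~\ref{line:reducer_quasi_decide} are local, so every correct process completes sub-iteration $(k,x)$.

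The main subtlety---and the step I would be most careful about---is verifying that the preconditions of the two internal primitives ($\mathcal{SMBA}[k][x]$ and $\mathcal{MBA}[k][x]$) actually hold when all correct processes start sub-iteration $(k,x)$. For SMBA this is exactly \Cref{lemma:smba_correct}, which in turn rests on \Cref{lemma:iteration_all_committed_bound} ($|\mathsf{committed}(k)| \in O(1)$); for MBA the only requirement is that each correct process proposes exactly once, which is immediate from the code structure. A second point worth stating explicitly is that ``start sub-iteration $(k,x)$'' means reaching line~\ref{line:smba} for that $x$ (or, for $x=1$, reaching line~\ref{line:agreeing_phase}), so that the hypothesis genuinely places all correct processes at the top of the block. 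Everything else is a routine ``local step terminates, broadcast-then-wait-for-$n-t$ terminates because there are $\geq n-t$ correct processes, sub-protocol terminates by its termination property'' chain, so this proposition is essentially a bookkeeping lemma feeding into the eventual proof that all correct processes start every iteration and hence reach the first good iteration.
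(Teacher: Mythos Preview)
Your proposal is correct and follows essentially the same approach as the paper's own proof: invoke \Cref{lemma:smba_correct} so that $\mathcal{SMBA}[k][x]$'s termination applies, then observe that all $\geq n-t$ correct processes broadcast \textsc{reconstruct} so the wait at line~\ref{line:wait_for_reconstruct_messages} clears, and finally use termination of $\mathcal{MBA}[k][x]$. Your write-up is more detailed (explicitly noting the local-computation steps and the preconditions of both sub-primitives), but the argument structure is identical.
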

\begin{proof}
Given that all correct processes start sub-iteration $(k, x)$, all correct processes propose to $\mathcal{SMBA}[k][x]$ (line~\ref{line:smba}).
By \Cref{lemma:smba_correct}, $\mathcal{SMBA}[k][x]$ behaves according to its specification (see \Cref{mod:async_smba}).
Thus, the termination property of $\mathcal{SMBA}[k][x]$ ensures that all correct processes eventually decide from $\mathcal{SMBA}[k][x]$.
Then, all correct processes broadcast a \textsc{reconstruct} message (line~\ref{line:reducer_reconstruct}), thus ensuring that every correct process eventually receives $n - t = 3t + 1$ \textsc{reconstruct} messages (as there are at least $n - t = 3t + 1$ correct processes).
Hence, all correct processes eventually propose to $\mathcal{MBA}[k][x]$ (line~\ref{line:lmba}).
The termination property of $\mathcal{MBA}[k][x]$ ensures that all correct processes complete sub-iteration $(k, x)$, thus concluding the proof.
\end{proof}

The following proposition proves that every iteration and sub-iteration are eventually started and completed by all correct processes.

\begin{proposition} \label{lemma:every_sub_iteration}
Every iteration and sub-iteration are eventually started and completed by all correct processes.
\end{proposition}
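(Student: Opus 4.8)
I would prove the statement by (double) induction, feeding the previously-established progress lemmas into each other. The base case is \Cref{lemma:everyone_completes_dispersion}: every correct process eventually completes the dissemination phase, hence reaches the loop at \Cref{line:start_iteration} with $k = 1$ and thus starts iteration $1$. For the inductive step I would assume that every correct process eventually starts some iteration $k$ and show that every correct process eventually (i)~starts and completes all three sub-iterations $(k,1)$, $(k,2)$, $(k,3)$, and (ii)~starts iteration $k+1$.

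For (i): by \Cref{lemma:first_sub_iteration}, once all correct processes start iteration $k$, they all start sub-iteration $(k,1)$, i.e., reach the body of the loop at \Cref{line:agreeing_phase} with $x = 1$. An inner induction on $x \in \{1,2,3\}$ then finishes the job: whenever all correct processes start sub-iteration $(k,x)$, \Cref{lemma:sub_iteration_completion} guarantees they all complete it, and since the sub-iterations of iteration $k$ are just the successive passes of the loop at \Cref{line:agreeing_phase}, completing $(k,x)$ with $x < 3$ means each correct process immediately starts $(k, x+1)$.

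For (ii): after completing sub-iteration $(k,3)$, every correct process reaches the final phase at \Cref{line:check_quasi_decisions}. Here I would invoke \Cref{proposition:quasi_decided_vectors}: all correct processes hold the \emph{same} $\mathit{quasi\_decisions}$ vector at this point (agreement of $\mathcal{MBA}[k][x]$ makes the per-process validity checks uniform across correct processes). Hence either \emph{every} correct process finds its vector non-empty, or \emph{none} does. In the former case all correct processes that have not previously decided invoke $\mathsf{Index}()$ at \Cref{line:random_index}; since all $\geq n - t = 3t + 1 \geq t + 1$ correct processes query this common coin, it is revealed, each of them returns from the call and then continues the loop with $k+1$ (the ``has not previously decided'' guard only controls whether $\mathsf{decide}$ is triggered, not whether the loop proceeds). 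In the latter case all correct processes skip the final phase and continue with $k+1$. Either way, every correct process starts iteration $k+1$, closing the outer induction.

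\textbf{Main obstacle.} The only delicate point—and the reason the argument must go through \Cref{proposition:quasi_decided_vectors}—is ruling out a deadlock at the $\mathsf{Index}()$ invocation on \Cref{line:random_index}: that call is guarded by $\mathit{quasi\_decisions}_i.\mathsf{size} > 0$, so if the correct processes' quasi-decision vectors were not uniform, only a strict subset of them might query the coin and it could never be revealed (it needs $t+1$ queries). Establishing uniformity of $\mathit{quasi\_decisions}$ among correct processes at the end of every iteration is therefore the crux; with it in hand, the double induction goes through and every iteration and sub-iteration is eventually started and completed by all correct processes.
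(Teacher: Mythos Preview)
Your approach is correct and matches the paper's: a double induction that chains \Cref{lemma:everyone_completes_dispersion}, \Cref{lemma:first_sub_iteration}, and \Cref{lemma:sub_iteration_completion}. You are in fact more careful than the paper, which simply writes ``by inductively applying \cref{lemma:sub_iteration_completion,lemma:first_sub_iteration}'' and does not explicitly treat the transition from completing sub-iteration $(k,3)$ to starting iteration $k+1$---in particular the potential blocking at the $\mathsf{Index}()$ call that you correctly flag as the only delicate point.

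One small correction: the guard at \Cref{line:check_quasi_decisions} controls \emph{both} the $\mathsf{Index}()$ invocation and the $\mathsf{decide}$ trigger (both are indented inside the \textbf{if}), not just the latter as you state. So your claim that ``all $\geq n-t$ correct processes query this common coin'' needs one more step: you must argue that the predicate ``has not previously decided'' is also uniform across correct processes. This follows by an easy auxiliary induction on $k$: in the first iteration where the (uniform, by \Cref{proposition:quasi_decided_vectors}) quasi-decision vector becomes non-empty, no correct process has yet decided, so all $\geq n-t \geq t+1$ of them invoke $\mathsf{Index}()$, the coin is revealed, and they all decide; in every later iteration all correct processes have previously decided and hence all skip the block entirely. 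Either way, no correct process is stuck at \Cref{line:random_index}, and the outer induction closes as you describe.
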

\begin{proof}
By \Cref{lemma:everyone_completes_dispersion}, all correct processes start iteration $1$.
Therefore, \Cref{lemma:first_sub_iteration} proves that all correct processes start sub-iteration $(1, 1)$.
By inductively applying \cref{lemma:sub_iteration_completion,lemma:first_sub_iteration}, we prove that every sub-iteration (and, thus, every iteration) is eventually started and completed by all correct processes.
\end{proof}

To not pollute the presentation, we might not explicitly rely on \Cref{lemma:every_sub_iteration} in the rest of the proof.
Recall that, by \Cref{definition:good_iteration}, an iteration $k \in \mathbb{N}$ is good if and only if $\mathsf{leader}(k) \in \dfirst$.
Recall that $\dfirst$ denotes the set of so-far-uncorrupted processes from which $\first$---the first correct process that broadcasts a \textsc{finish} message at line~\ref{line:first_finish}---receives \textsc{done} messages before broadcasting the aforementioned \textsc{finish} message.
Note that $|\dfirst| \geq n - 2t = 2t + 1$.
Moreover, recall that, for every good iteration $k$, (1) $v^{\star}(k)$ denotes the valid proposal of $\mathsf{leader}(k)$, and (2) $z^{\star}(k)$ denotes the digest of $v^{\star}(k)$.
The proposition below proves that every correct process suggests $z^{\star}(k)$ in any good iteration $k$.

\begin{proposition} \label{lemma:good_iteration_all_correct_suggest}
Let $k \in \mathbb{N}$ be any good iteration.
Then, every correct process suggests $z^{\star}(k)$ in iteration $k$.
\end{proposition}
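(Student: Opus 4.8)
The plan is to trace the digest $z^{\star}(k)$ through the two communication steps of iteration $k$ (the \textsc{stored} exchange at lines~\ref{line:broadcast_stored}--\ref{line:reducer_candidates_append}, and the \textsc{suggest} exchange at lines~\ref{line:broadcast_suggest}--\ref{line:candidates_remove}) and show that it always ``survives.'' First I would recall from the dissemination phase that, since $k$ is good, $\mathsf{leader}(k) \in \dfirst$, so $\mathsf{leader}(k)$ successfully disseminated its valid proposal $v^{\star}(k)$; in particular $\first$ received a \textsc{done} message from $\mathsf{leader}(k)$, which means $\mathsf{leader}(k)$ received $n-t = 3t+1$ \textsc{ack} messages, hence at least $(n-t)-t = 2t+1$ correct processes stored $[m_j, z^{\star}(k), w_j]$ in their $\mathit{symbols}$ map (line~\ref{line:store_received_symbol}). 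I would state this as the key structural fact: at least $2t+1$ correct processes hold digest $z^{\star}(k)$ in $\mathit{symbols}_i[\mathsf{leader}(k)]$.

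Next I would argue that $z^{\star}(k)$ ends up in $\mathit{candidates}_i$ for every correct $p_i$. When $p_i$ reaches line~\ref{line:wait_for_stored_messages}, it collects $n-t = 3t+1$ \textsc{stored} messages for iteration $k$; by quorum intersection with the $\geq 2t+1$ correct processes that broadcast $z^{\star}(k)$ as their stored digest (line~\ref{line:broadcast_stored}), $p_i$ receives $z^{\star}(k)$ in at least $(3t+1) + (2t+1) - (4t+1) = t+1 = n-3t$ of these messages. Hence the check at line~\ref{line:check_stored} fires for $z = z^{\star}(k)$ and $p_i$ appends it to $\mathit{candidates}_i$ (line~\ref{line:reducer_candidates_append}). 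Consequently every correct process broadcasts a \textsc{suggest} message (line~\ref{line:broadcast_suggest}) whose candidate list contains $z^{\star}(k)$; that is, every correct process suggests $z^{\star}(k)$.

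Finally I would argue that $z^{\star}(k)$ is not removed at line~\ref{line:candidates_remove}: when $p_i$ collects $n-t = 3t+1$ \textsc{suggest} messages (line~\ref{line:wait_for_suggest_messages}), at least $(n-t)-t = 2t+1$ of them come from correct processes, each of which (by the previous paragraph) included $z^{\star}(k)$. So $p_i$ receives $z^{\star}(k)$ in at least $2t+1 = n-2t$ \textsc{suggest} messages, and the removal condition at line~\ref{line:rule_precommit} does not hold for $z^{\star}(k)$. Therefore $z^{\star}(k)$ remains in $\mathit{candidates}_i$ through line~\ref{line:agreeing_phase}, so $p_i$ commits $z^{\star}(k)$ — which is the statement. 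I do not anticipate a serious obstacle here; the whole argument is a chain of quorum-intersection counts with $n = 4t+1$, and the only point requiring care is bookkeeping the definition of a good iteration so that the initial ``$\geq 2t+1$ correct processes stored $z^{\star}(k)$'' claim is properly justified from $\mathsf{leader}(k) \in \dfirst$ (via $\first$'s \textsc{done} messages and the leader's $n-t$ \textsc{ack}s), rather than asserted.

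\begin{proof}[Proof sketch]
Since $k$ is good, $\mathsf{leader}(k) \in \dfirst$, so $\first$ received a \textsc{done} message from the so-far-uncorrupted process $\mathsf{leader}(k)$. Thus $\mathsf{leader}(k)$ had received $n - t = 3t + 1$ \textsc{ack} messages (line~\ref{line:receive_ack}), at least $(n-t) - t = 2t+1$ of them from correct processes. By the rule at lines~\ref{line:receive_init}--\ref{line:send_ack}, each such correct process $p_j$ had set $\mathit{symbols}_j[\mathsf{leader}(k)] = [m_j, z^{\star}(k), w_j]$ with $z^{\star}(k)$ the digest of $\mathsf{leader}(k)$'s valid proposal $v^{\star}(k)$. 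Hence at least $2t+1$ correct processes broadcast $\langle \textsc{stored}, k, z^{\star}(k) \rangle$ at line~\ref{line:broadcast_stored}.

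Consider any correct process $p_i$. Upon receiving $n - t = 3t+1$ \textsc{stored} messages for iteration $k$ (line~\ref{line:wait_for_stored_messages}), by quorum intersection at least $(3t+1) + (2t+1) - (4t+1) = t+1 = n - 3t$ of them carry $z^{\star}(k)$, so the check at line~\ref{line:check_stored} causes $p_i$ to append $z^{\star}(k)$ to $\mathit{candidates}_i$ (line~\ref{line:reducer_candidates_append}). Thus every correct process includes $z^{\star}(k)$ in the \textsc{suggest} message it broadcasts at line~\ref{line:broadcast_suggest}, i.e., every correct process suggests $z^{\star}(k)$ in iteration $k$.
\end{proof}
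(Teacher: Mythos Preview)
Your formal proof sketch (inside the \texttt{proof} environment) is correct and essentially identical to the paper's argument: from $\mathsf{leader}(k)\in\dfirst$ deduce $\geq 2t+1$ correct processes stored $z^{\star}(k)$, then quorum-intersect with the $3t+1$ \textsc{stored} messages to get $\geq t+1$ occurrences of $z^{\star}(k)$, so every correct process appends it to $\mathit{candidates}_i$ and hence includes it in its \textsc{suggest} message.

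One point of confusion in your plan, however: the third paragraph (showing $z^{\star}(k)$ is not removed at line~\ref{line:candidates_remove}) is \emph{not} part of this proposition. Recall that ``$p_i$ suggests $z$'' means $z$ appears in $p_i$'s \textsc{suggest} broadcast (line~\ref{line:broadcast_suggest}); surviving the removal step is the separate notion of \emph{committing}, handled in the very next proposition (\Cref{lemma:good_iteration_all_correct_commit}). Your plan ends with ``so $p_i$ commits $z^{\star}(k)$ --- which is the statement,'' but the statement here is only about suggesting. Your proof sketch itself stops at the right place, so the extra paragraph is harmless overreach rather than an error, but you should keep the two notions (and the two propositions) straight.
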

\begin{proof}
As $\mathsf{leader}(k) \in \dfirst$, $\mathsf{leader}(k)$ stores $z^{\star}(k)$ at $n - t = 3t + 1$ processes in the dissemination phase, out of which at most $t$ can be faulty.
Thus, $\mathsf{leader}(k)$ stores $z^{\star}(k)$ at $\geq 2t + 1$ correct processes.
This further implies that each correct process receives $z^{\star}(k)$ in \textsc{stored} messages from at least $t + 1$ processes, which means that each correct process broadcasts a \textsc{suggest} message with $z^{\star}(k)$ (due to the check at line~\ref{line:check_stored}) and, thus, suggests $z^{\star}(k)$ in iteration $k$.
\end{proof}

Next, we prove that every correct process commits $z^{\star}(k)$ in any good iteration $k$.

\begin{proposition} \label{lemma:good_iteration_all_correct_commit}
Let $k \in \mathbb{N}$ be any good iteration.
Then, every correct process commits $z^{\star}(k)$ in iteration $k$.
\end{proposition}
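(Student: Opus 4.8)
The plan is to show that in any good iteration $k$, the ``good'' digest $z^{\star}(k)$ survives the suggestion-refinement step at every correct process, so it ends up in the $\mathit{candidates}_i$ list when $p_i$ reaches line~\ref{line:agreeing_phase}. First I would recall from \Cref{lemma:good_iteration_all_correct_suggest} that every correct process suggests $z^{\star}(k)$ in iteration $k$; in particular, $z^{\star}(k) \in \mathsf{suggested}_i(k)$ for every correct process $p_i$, so $z^{\star}(k)$ is appended to $\mathit{candidates}_i$ at line~\ref{line:reducer_candidates_append} before the \textsc{suggest} exchange. The only way $z^{\star}(k)$ could fail to be committed by $p_i$ is if it is removed at line~\ref{line:candidates_remove}, which happens exactly when $z^{\star}(k)$ appears in fewer than $n - 2t = 2t + 1$ of the $n - t = 3t + 1$ \textsc{suggest} messages that $p_i$ collects at line~\ref{line:wait_for_suggest_messages}.

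The key step is to rule this out by a counting argument. Among the $n - t = 3t + 1$ \textsc{suggest} messages that $p_i$ waits for, at least $(n-t) - t = 2t + 1$ come from correct processes (since at most $t$ processes are faulty). By \Cref{lemma:good_iteration_all_correct_suggest}, every correct process suggests $z^{\star}(k)$, so each of those $\geq 2t+1$ \textsc{suggest} messages from correct processes contains $z^{\star}(k)$. Hence $p_i$ receives $z^{\star}(k)$ in at least $2t+1 = n - 2t$ \textsc{suggest} messages, so the condition of the \textbf{if} at line~\ref{line:rule_precommit} is \emph{not} satisfied for $z = z^{\star}(k)$, and $z^{\star}(k)$ is not removed from $\mathit{candidates}_i$. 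Therefore $z^{\star}(k)$ is still in $\mathit{candidates}_i$ when $p_i$ reaches line~\ref{line:agreeing_phase}; since the default-digest and duplication branches at lines~\ref{line:reducer_set_default}--\ref{line:reducer_set_copy} only add entries and the sorting at line~\ref{line:reducer_sort} merely reorders, $z^{\star}(k)$ remains among the committed digests, i.e.\ $p_i$ commits $z^{\star}(k)$ in iteration $k$. As $p_i$ was an arbitrary correct process, this proves the proposition.

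I do not anticipate a genuine obstacle here; the only subtlety worth stating carefully is that the $2t+1$ correct \textsc{suggest} messages $p_i$ sees are indeed ``from correct processes that suggested $z^{\star}(k)$'' --- this requires invoking \Cref{lemma:good_iteration_all_correct_suggest}, which applies because $k$ is good and hence $\mathsf{leader}(k) \in \dfirst$ disseminated $z^{\star}(k)$ to $\geq 2t+1$ correct processes during the dissemination phase. I would also note in passing that this argument implicitly uses $n = 4t+1$ only through the arithmetic $(n-t)-t = 2t+1 = n-2t$, which is exactly the threshold tested at line~\ref{line:rule_precommit}; this is the place where the $n = 4t+1$ assumption is tight, mirroring the discussion in \Cref{subsection:discussion}. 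The proof is thus a short two-paragraph argument: invoke \Cref{lemma:good_iteration_all_correct_suggest}, then do the quorum-intersection count to show $z^{\star}(k)$ is not removed at line~\ref{line:candidates_remove}.
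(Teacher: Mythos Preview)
Your proposal is correct and follows essentially the same approach as the paper's proof: invoke \Cref{lemma:good_iteration_all_correct_suggest} to get $z^{\star}(k) \in \mathit{candidates}_i$ before the \textsc{suggest} exchange, then count that among the $n-t = 3t+1$ \textsc{suggest} messages $p_i$ receives at least $(n-t)-t = 2t+1$ are from correct processes and hence contain $z^{\star}(k)$, so the removal at line~\ref{line:candidates_remove} does not trigger. Your extra remark that the default/duplicate/sort steps at lines~\ref{line:reducer_set_default}--\ref{line:reducer_sort} cannot dislodge $z^{\star}(k)$ is a harmless bit of additional care the paper leaves implicit.
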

\begin{proof}
By \Cref{lemma:good_iteration_all_correct_suggest}, all correct processes suggest $z^{\star}(k)$ in iteration $k$.
Hence, for each correct process $p_i$, $\mathit{candidates}_i[1] = z^{\star}(k)$ or $\mathit{candidates}_i[2] = z^{\star}(k)$ when process $p_i$ broadcasts a \textsc{suggest} message (line~\ref{line:broadcast_suggest}).
Furthermore, each correct process $p_i$ receives a \textsc{suggest} message with $z^{\star}(k)$ from at least $n - t - t = 2t + 1$ processes, which means that $p_i$ does not remove $z^{\star}(k)$ from $\mathit{candidates}_i$ at line~\ref{line:candidates_remove}.
Hence, the proposition holds.
\end{proof}

The following proposition proves that if any correct process commits any digest (including $\mathit{default}$) in any good iteration $k$, then (at least) $2t + 1 - f$ correct processes suggest $z$ in iteration $k$, where $f \leq t$ denotes the \emph{actual} number of faulty processes.

\begin{proposition} \label{lemma:good_iteration_committed_suggested}
Consider any correct process $p_i$ and any good iteration $k \in \mathbb{N}$.
If process $p_i$ commits a digest $z$ ($z$ might be equal to $\mathit{default}$), then (at least) $2t + 1 - f$ correct processes suggest $z$ in iteration $k$, where $f \leq t$ denotes the actual number of faulty processes.
\end{proposition}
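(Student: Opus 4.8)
The plan is to adapt the quorum-intersection count of \Cref{lemma:precommitted_supported_by_t+1}, but to track the \emph{actual} number $f \le t$ of faulty processes instead of the worst case, and to dispose of the default digest using the good-iteration guarantees. Fix a correct process $p_i$, a good iteration $k$, and a digest $z$ committed by $p_i$; recall that there are $n - f \ge 3t + 1$ correct processes. Committing $z$ means $z \in \{ \mathit{candidates}_i[1], \mathit{candidates}_i[2] \}$ when $p_i$ reaches line~\ref{line:agreeing_phase}. By \Cref{lemma:precommitted_proposed}, $z$ either lies in $\mathsf{suggested}_i(k)$ (so it was appended at line~\ref{line:reducer_candidates_append} and survived the removal loop) or $z = \mathit{default}$ (and may have been injected by the fallback at line~\ref{line:reducer_set_default}). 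First I would rule out the fallback case, then handle the surviving-candidate case by counting.

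For the fallback: suppose $p_i$ executed line~\ref{line:reducer_set_default}, i.e., $\mathit{candidates}_i.\mathsf{size} = 0$ at line~\ref{line:reducer_if_size_0}, so that $\mathit{candidates}_i = [\mathit{default}, \mathit{default}]$ afterwards and $p_i$ commits only $\mathit{default}$. But \Cref{lemma:good_iteration_all_correct_commit} shows that $p_i$ commits $z^{\star}(k)$ in the good iteration $k$, forcing $z^{\star}(k) = \mathit{default} = z$. In that subcase \Cref{lemma:good_iteration_all_correct_suggest} gives that \emph{every} correct process suggests $z^{\star}(k)$ in iteration $k$, and since $n - f \ge 3t + 1 \ge 2t + 1 - f$, the claimed bound already holds. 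Hence from now on I may assume $z$ was appended at line~\ref{line:reducer_candidates_append} and was not removed at line~\ref{line:candidates_remove}.

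In this remaining case $p_i$ itself suggested $z$ in iteration $k$: it appended $z$ to $\mathit{candidates}_i$ and then broadcast $\langle \textsc{suggest}, k, \mathit{candidates}_i \rangle$ at line~\ref{line:broadcast_suggest}, so $z \in \mathsf{suggested}_i(k)$. Since $z$ was not removed at line~\ref{line:candidates_remove}, the guard at line~\ref{line:rule_precommit} failed for $z$, so $p_i$ received $z$ in at least $n - 2t = 2t + 1$ of the $n - t = 3t + 1$ \textsc{suggest} messages it collected at line~\ref{line:wait_for_suggest_messages}. Of those at least $2t + 1$ messages, at most $f$ originate from faulty processes, so at least $2t + 1 - f$ of them come from distinct correct processes, each of which broadcast a \textsc{suggest} message containing $z$ and therefore suggests $z$ in iteration $k$. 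This yields the claimed bound.

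I expect no deep obstacle here; the only subtlety is making the default digest behave coherently in a good iteration — ensuring that the size-$0$ fallback is either never taken or forces $z^{\star}(k) = \mathit{default}$, which is itself suggested by all correct processes — so that a single uniform quorum count covers every committed digest. The genuinely new ingredient relative to \Cref{lemma:precommitted_supported_by_t+1} is replacing the worst-case bound $t$ by the actual fault count $f$, which is presumably what lets later arguments bound $|\mathsf{committed}(k)|$ more tightly inside good iterations.
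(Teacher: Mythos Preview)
Your proof is correct and follows essentially the same approach as the paper. The only cosmetic difference is in handling the $\mathit{default}$ fallback: the paper observes directly (via \Cref{lemma:good_iteration_all_correct_suggest}) that $z^{\star}(k)$ survives the suggest-filtering step, so $\mathit{candidates}_i.\mathsf{size} > 0$ at line~\ref{line:reducer_if_size_0} and the fallback branch is never taken in a good iteration, whereas you instead allow the fallback hypothetically and argue that it would force $z = \mathit{default} = z^{\star}(k)$, which is equally valid.
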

\begin{proof}
By \Cref{lemma:good_iteration_all_correct_suggest}, all correct processes suggest $z^{\star}(k)$ in iteration $k$.
Thus, when process $p_i$ reaches line~\ref{line:reducer_if_size_0}, $\mathit{candidates}_i.\mathsf{size} > 0$ as $\mathit{candidates}_i[1] = z^{\star}(k)$ or $\mathit{candidates}_i[2] = z^{\star}(k)$.
Hence, process $p_i$ does not execute line~\ref{line:reducer_set_default}.
Therefore, for each digest $z$ that process $p_i$ commits, $z$ is suggested by at least $2t + 1 - f$ correct processes in iteration $k$ (due to the check at line~\ref{line:rule_precommit}).
\end{proof}

The following proposition proves that $|\mathsf{committed}(k)| \leq 3$ in any good iteration.

\begin{proposition} \label{lemma:good_iteration_bound_committed}
Let $k \in \mathbb{N}$ be any good iteration $k$.
Then, $|\mathsf{committed}(k)| \leq 3$.
\end{proposition}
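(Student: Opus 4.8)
\textbf{Proof plan for \Cref{lemma:good_iteration_bound_committed}.}
The plan is to bound the number of committed digests in a good iteration $k$ by combining three facts already established: (1) every committed digest that is not $\mathit{default}$ must be suggested by ``many'' correct processes (a strengthened version of \Cref{lemma:precommitted_supported_by_t+1}, namely \Cref{lemma:good_iteration_committed_suggested}), (2) each correct process suggests at most two digests in iteration $k$ (\Cref{lemma:proposed_bound}), and (3) in a good iteration one of those suggested digests is always $z^{\star}(k)$ (\Cref{lemma:good_iteration_all_correct_suggest}), so $\mathit{default}$ is never committed (as noted in the proof of \Cref{lemma:good_iteration_committed_suggested}). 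The key refinement over the generic $O(1)$ bound of \Cref{lemma:iteration_all_committed_bound} is the sharper threshold: in a good iteration $p_i$ does not run line~\ref{line:reducer_set_default}, so by \Cref{lemma:good_iteration_committed_suggested} each committed digest $z$ is suggested by at least $2t+1-f$ correct processes, where $f \le t$ is the actual number of faulty processes.

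First I would let $f$ denote the actual number of faulty processes, so the number of correct processes is $n - f = 4t + 1 - f$, and each correct process suggests at most two digests. This gives a total ``suggestion budget'' among correct processes of at most $2(4t+1-f) = 8t + 2 - 2f$. Next, by \Cref{lemma:good_iteration_all_correct_suggest}, every one of the $4t+1-f$ correct processes spends (at least) one of its (at most) two suggestion slots on $z^{\star}(k)$. Hence the suggestion budget available for digests \emph{other than} $z^{\star}(k)$ is at most $(8t+2-2f) - (4t+1-f) = 4t + 1 - f$. By \Cref{lemma:good_iteration_committed_suggested}, each committed non-$z^{\star}(k)$ digest consumes at least $2t+1-f$ of these remaining suggestions (from correct processes). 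Therefore the number of committed digests other than $z^{\star}(k)$ is at most $\frac{4t+1-f}{2t+1-f}$.

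It then remains to check that $\frac{4t+1-f}{2t+1-f} < 3$ for all integers $f$ with $0 \le f \le t$, i.e.\ that $4t+1-f < 3(2t+1-f) = 6t+3-3f$, which simplifies to $2f < 2t+2$, i.e.\ $f < t+1$, which holds since $f \le t$. Thus at most two committed digests differ from $z^{\star}(k)$, so $|\mathsf{committed}(k)| \le 3$, and since $z^{\star}(k)$ itself is committed (\Cref{lemma:good_iteration_all_correct_commit}) this bound is tight in the worst case. I expect no genuine obstacle here; the only subtlety is being careful that the counting is done over \emph{correct} processes' suggestions (faulty processes' suggestions must be discarded, since line~\ref{line:rule_precommit} only guarantees $2t+1-f$ correct supporters), and that one correctly accounts for the ``one slot per correct process already spent on $z^{\star}(k)$'' reduction — this is exactly where the good-iteration hypothesis, via \Cref{lemma:good_iteration_all_correct_suggest}, is essential and where the analogous argument fails for $n = 4t$ (cf.\ Observation~2 in \Cref{subsection:discussion}).
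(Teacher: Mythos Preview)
Your proposal is correct and follows essentially the same approach as the paper's proof: both count the ``correct suggestions'' available for non-$z^{\star}(k)$ digests (at most $n-f = 4t+1-f$, using \Cref{lemma:good_iteration_all_correct_suggest} and \Cref{lemma:proposed_bound}), invoke \Cref{lemma:good_iteration_committed_suggested} to get that each such committed digest needs $2t+1-f$ of them, and derive the same inequality $2f < 2t+2$ (the paper phrases this by contradiction rather than as a direct ratio bound, but the arithmetic is identical).
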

\begin{proof}
\Cref{lemma:good_iteration_all_correct_commit} proves that all correct processes commit $z^{\star}(k)$ in iteration $k$.
Thus, $z^{\star}(k) \in \mathsf{committed}(k)$.
To prove the proposition, we analyze the cardinality of the $\mathsf{committed}(k) \setminus{ \{z^{\star}(k) \}}$ set; let $X = |\mathsf{committed}(k) \setminus{ \{z^{\star}(k) \}}|$.

For every digest $z \in \mathsf{committed}(k) \setminus { \{ z^{\star}(k) \} }$, \Cref{lemma:good_iteration_committed_suggested} proves that $z$ is suggested by (at least) $2t + 1 - f$ correct processes in iteration $k$.
By \Cref{lemma:good_iteration_all_correct_suggest}, all correct processes suggest $z^{\star}(k)$ in iteration $k$.
Given that each correct process suggests at most two digests (by \Cref{lemma:proposed_bound}), there are at most $n - f$ ``correct suggestions'' for non-$z^{\star}(k)$ digests.
By contradiction, suppose $X \geq 3$.
Therefore, there are at least $X (2t + 1 - f) \geq 3 (2t + 1 - f) = 6t + 3 - 3f$ ``correct suggestions'' for non-$z^{\star}(k)$ digests.
Thus, we have $n - f = 4t + 1 - f \geq 6t + 3 - 3f$, which implies $2f \geq 2t + 2$ and $f \geq t + 1$.
This is impossible as $f \leq t$, which means $X < 3$, thus concluding the proof.
\end{proof}

For every iteration $k \in \mathbb{N}$ and every $x \in \{ 1, 2 \}$, let us define the $\mathsf{committed}(k, x)$ set in the following way:
\begin{equation*}
    \mathsf{committed}(k, x) = \{ z \,|\, \text{$z$ is $x$-committed by a correct process in iteration $k$} \}.
\end{equation*}
Observe that the following holds: (1) $\mathsf{committed}(k, 1) \subseteq \mathsf{committed}(k)$, and (2) $\mathsf{committed}(k, 2) \subseteq \mathsf{committed}(k)$.
The following proposition proves that \Cref{equation:crucial} holds in any good iteration.

\begin{proposition} \label{lemma:good_iteration_bound_collumn}
Let $k \in \mathbb{N}$ be any good iteration.
Then, the following holds:
\begin{equation*}
	\big( z^{\star}(k) \in \mathsf{committed}(k, 1) \land |\mathsf{committed}(k, 1)| \leq 2 \big) \lor \big( \{ z^{\star}(k) \} = \mathsf{committed}(k, 2) \big).
\end{equation*}
\end{proposition}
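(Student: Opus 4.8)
The plan is to use a short case analysis on the structure of the sorted, duplicated $\mathit{candidates}_i$ lists across correct processes. The essential facts I would invoke are: (i) every correct process commits $z^\star(k)$ in a good iteration $k$ (this is Proposition~\ref{lemma:good_iteration_all_correct_commit}); and (ii) there are at most three distinct committed digests across all correct processes, i.e.\ $|\mathsf{committed}(k)| \leq 3$ (Proposition~\ref{lemma:good_iteration_bound_committed}). From (i), whenever a correct process $p_i$ has $\mathit{candidates}_i.\mathsf{size} < 2$ at line~\ref{line:reducer_if_size_0}, the surviving candidate (if any) must be $z^\star(k)$, so after the duplication step on line~\ref{line:reducer_set_copy} the list becomes $[z^\star(k), z^\star(k)]$, and the default-setting branch on line~\ref{line:reducer_set_default} is never taken. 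Hence for every correct $p_i$, $\mathit{candidates}_i$ is a sorted two-element list whose entries lie in $\mathsf{committed}(k)$ and which contains $z^\star(k)$.

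First I would dispose of the case $|\mathsf{committed}(k, 2)| = 1$. Since $z^\star(k)$ is committed by every correct process, if it is $2$-committed by some correct process it is so by all (any correct process with a list $[a,b]$ has $b = z^\star(k)$ only if, by the sortedness and the fact that $z^\star(k)$ appears, either $a = b = z^\star(k)$, or $a < z^\star(k) = b$ — in either case $b = z^\star(k)$). I would therefore argue: if there exists a correct process that does \emph{not} $1$-commit $z^\star(k)$, then every correct process has list $[a, z^\star(k)]$ with $a \leq z^\star(k)$; combined with the observation that single-survivor processes force $a = z^\star(k)$ too, one concludes $\mathsf{committed}(k,2) = \{z^\star(k)\}$, which is the second disjunct. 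Otherwise, every correct process $1$-commits $z^\star(k)$, so $z^\star(k) \in \mathsf{committed}(k,1)$, and it remains only to bound $|\mathsf{committed}(k,1)|$.

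For the bound on $|\mathsf{committed}(k,1)|$ in the remaining case, I would use sortedness together with $|\mathsf{committed}(k)| \leq 3$. Each correct process has list $[z^\star(k), b]$ with $z^\star(k) \leq b$ and $b \in \mathsf{committed}(k)$; the only way $z^\star(k)$ can be the \emph{first} entry of every correct process's sorted list is if $z^\star(k)$ is lexicographically $\leq$ every committed digest, i.e.\ $z^\star(k) = \min \mathsf{committed}(k)$. But in that situation $\mathsf{committed}(k,1) \subseteq \{z^\star(k)\}$ in fact: the first entry of a sorted list containing the minimum is always that minimum, so $\mathsf{committed}(k,1) = \{z^\star(k)\}$ and a fortiori $|\mathsf{committed}(k,1)| = 1 \leq 2$. (If instead $z^\star(k)$ is not the global minimum, then some correct process's sorted list has $z^\star(k)$ as its \emph{second} entry, and we are back in the situation handled in the previous paragraph leading to the second disjunct; so these cases do not overlap problematically.)

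The main obstacle I anticipate is getting the bookkeeping of the sorting/duplication steps exactly right — in particular making sure the case split is genuinely exhaustive and that the ``single-survivor'' and ``zero-survivor'' processes (lines~\ref{line:reducer_if_size_0}--\ref{line:reducer_set_copy}) are correctly folded into whichever branch, using (i) to rule out line~\ref{line:reducer_set_default} ever firing in a good iteration. A clean way to organize this is to first establish the structural lemma ``in a good iteration, every correct process's final $\mathit{candidates}_i$ is a sorted pair drawn from $\mathsf{committed}(k)$ and containing $z^\star(k)$,'' and then note that whether $z^\star(k)$ occupies position $1$ for all correct processes is dichotomous: either it does (first disjunct, with the size bound coming free from $z^\star(k)$ being the relevant minimum or from $|\mathsf{committed}(k)|\le 3$), or some correct process has it in position $2$, whence sortedness forces $\mathsf{committed}(k,2) = \{z^\star(k)\}$ (second disjunct). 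This makes the proof a few lines once the structural lemma is in hand.
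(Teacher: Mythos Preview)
Your dichotomy does not line up with which disjunct actually holds, and the claim ``if some correct process does not $1$-commit $z^\star(k)$, then every correct process has list $[a, z^\star(k)]$'' is false. Take $\mathsf{committed}(k) = \{z_1, z^\star(k), z_2\}$ with $z_1 < z^\star(k) < z_2$ (allowed by $|\mathsf{committed}(k)| \le 3$). A process that commits $\{z_1, z^\star(k)\}$ ends with the list $[z_1, z^\star(k)]$ and does \emph{not} $1$-commit $z^\star(k)$, so you are in your Case~A. But a process that commits $\{z^\star(k), z_2\}$ ends with $[z^\star(k), z_2]$ and $2$-commits $z_2$, so $\mathsf{committed}(k,2) \supseteq \{z^\star(k), z_2\}$ and the second disjunct fails. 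Your parenthetical justification (``$b = z^\star(k)$ only if \ldots\ $b = z^\star(k)$'') is tautological and does not establish that $z^\star(k)$ being $2$-committed by one process forces it to be $2$-committed by all.

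In this ``middle'' scenario it is the \emph{first} disjunct that holds: $z^\star(k) \in \mathsf{committed}(k,1)$ (witnessed by the process with list $[z^\star(k), z_2]$) and $|\mathsf{committed}(k,1)| \le 2$ because $z_2$ can never be $1$-committed (every list contains $z^\star(k) < z_2$). The paper handles this by splitting on the lexicographic position of $z^\star(k)$ within $\mathsf{committed}(k)$ (smallest, largest, or strictly in between), which directly identifies which disjunct fires in each case. Your structural lemma is fine; the fix is to replace the ``does every process $1$-commit $z^\star(k)$?'' dichotomy with this three-way split, or equivalently to separately handle the case where $\mathsf{committed}(k)$ contains elements both below and above $z^\star(k)$.
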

\begin{proof}
Recall that \Cref{lemma:good_iteration_all_correct_commit} proves that every correct process commits $z^{\star}(k)$ in iteration $k$.
(Thus, $z^{\star}(k) \in \mathsf{committed}(k)$.)
To prove the proposition, we consider three possible cases:
\begin{compactitem}
    \item Let $z^{\star}(k)$ be the lexicographically smallest digest in $\mathsf{committed}(k)$.
    In this case, every correct process $1$-commits $z^{\star}(k)$ in iteration $k$.
    (Otherwise, $z^{\star}(k)$ could not be the smallest digest among $\mathsf{committed}(k)$.)
    Thus, $\mathsf{committed}(k, 1) = \{ z^{\star}(k) \}$.

    \item Let $z^{\star}(k)$ be the lexicographically greatest digest in $\mathsf{committed}(k)$.
    In this case, every correct process $2$-commits $z^{\star}(k)$ in iteration $k$.
    Hence, $\mathsf{committed}(k, 2) = \{ z^{\star}(k) \}$.

    \item Let $z^{\star}(k)$ not be the lexicographically smallest nor the lexicographically greatest digest in $\mathsf{committed}(k)$.
    Hence, there exist digests $z_1$ and $z_2$ in $\mathsf{committed}(k)$ such that (1) $z_1 < z^{\star}(k)$, and (2) $z^{\star}(k) < z_2$.
    (\Cref{lemma:good_iteration_bound_committed} then shows that $\mathsf{committed}(k) = \{ z_1, z^{\star}(k), z_2 \}$.)
    As (1) $z_2 \in \mathsf{committed}(k)$, (2) every correct process commits $z^{\star}(k) < z_2$ in iteration $k$ (by \Cref{lemma:good_iteration_all_correct_commit}), and (3) each correct process commits at most two different digests, there exists a correct process that $1$-commits $z^{\star}(k)$, i.e., $z^{\star}(k) \in \mathsf{committed}(k, 1)$.
    Moreover, since (1) every correct process commits $z^{\star}(k)$ in iteration $k$ (by \Cref{lemma:good_iteration_all_correct_commit}), and (2) $z^{\star}(k) < z_2$, no correct process $1$-commits $z_2$ in iteration $k$.
    Therefore, $z^{\star}(k) \in \mathsf{committed}(k, 1)$ and $|\mathsf{committed}(k, 1)| \leq 2$, which concludes the proof in this case.
\end{compactitem}
As the statement of the proposition holds in each scenario, the proof is concluded.
\end{proof}

We now prove that all correct processes quasi-decide $v^{\star}(k)$ in a good iteration.

\begin{proposition} \label{lemma:good_iteration_termination}
Let $k \in \mathbb{N}$ be any good iteration.
Then, all correct processes quasi-decide $v^{\star}(k)$ in iteration $k$.
\end{proposition}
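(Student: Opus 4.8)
The plan is to case-split on which disjunct of \Cref{equation:crucial} (established in \Cref{lemma:good_iteration_bound_collumn}) holds for the good iteration $k$, and within each case track the digest decided from the relevant $\mathcal{SMBA}[k][x]$ instance, then invoke the reconstruction argument and the strong unanimity of $\mathcal{MBA}[k][x]$ to conclude. I will first record the key reconstruction fact as a sub-claim: \emph{if all correct processes propose $z^{\star}(k)$ to $\mathcal{SMBA}[k][x]$, then all correct processes quasi-decide $v^{\star}(k)$ in sub-iteration $(k,x)$.} This follows because (i) $\mathcal{SMBA}[k][x]$ satisfies strong validity (the $O(1)$-different-proposals precondition holds by \Cref{lemma:smba_correct}, and here only one digest is proposed), so every correct process decides $z^{\star}(k)$ at \cref{line:smba}; (ii) since $k$ is good, $\mathsf{leader}(k)\in\dfirst$ stored valid RS symbols of $v^{\star}(k)$ at $\geq n-2t = 2t+1$ correct processes, so any set of $n-t = 3t+1$ \textsc{reconstruct} messages (\cref{line:wait_for_reconstruct_messages}) contains $\geq (n-t)+(n-2t)-n = t+1$ symbols with valid witnesses for $z^{\star}(k)$; (iii) hence $|S_i|\geq t+1$ and each correct process decodes $v^{\star}(k)$ at \cref{line:reducer_decode} (collision resistance of the hash function rules out a spurious valid witness for a wrong symbol) and proposes $v^{\star}(k)$ to $\mathcal{MBA}[k][x]$ at \cref{line:lmba}; (iv) strong unanimity of $\mathcal{MBA}[k][x]$ forces the decided value to be $v^{\star}(k)$, which is valid, so the check at \cref{line:final_check} passes and every correct process quasi-decides $v^{\star}(k)$ at \cref{line:reducer_quasi_decide}.

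With the sub-claim in hand, I handle the two disjuncts. If $\{z^{\star}(k)\} = \mathsf{committed}(k,2)$, then every correct process $2$-commits $z^{\star}(k)$, so in sub-iteration $(k,2)$ every correct process sets $\mathit{adopted\_digest}_i = \mathit{candidates}_i[2] = z^{\star}(k)$ at \cref{line:acc_proposal_1} and proposes $z^{\star}(k)$ to $\mathcal{SMBA}[k][2]$; the sub-claim finishes this case. If instead $z^{\star}(k)\in\mathsf{committed}(k,1)$ and $|\mathsf{committed}(k,1)|\leq 2$, I look at sub-iteration $(k,1)$: every correct process proposes its $1$-committed digest, and these lie in $\mathsf{committed}(k,1)$, a set of at most two digests, so the strong validity of $\mathcal{SMBA}[k][1]$ applies and the decided digest $z$ is proposed by a correct process, i.e.\ $z\in\mathsf{committed}(k,1)$. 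If $z = z^{\star}(k)$, the sub-claim applies directly to $(k,1)$. If $z\neq z^{\star}(k)$, I pass to sub-iteration $(k,3)$ and analyze the proposal-switching logic (lines \ref{line:proposal_switching_start}--\ref{line:proposal_switching_end}): every correct process $p_i$ has $\mathit{first\_digest}_i = z$; those with $\mathit{candidates}_i[1] = z$ adopt $\mathit{candidates}_i[2]$, and since each such process commits $z^{\star}(k)$ (\Cref{lemma:good_iteration_all_correct_commit}) and $z^{\star}(k)\neq z = \mathit{candidates}_i[1]$, their second committed digest is $z^{\star}(k)$; those with $\mathit{candidates}_i[1]\neq z$ keep $\mathit{candidates}_i[1]$, which — because they commit $z^{\star}(k)$ and $z\in\mathsf{committed}(k,1)$ forces $z^{\star}(k) \leq z$ together with $z^{\star}(k)$ being in their candidate list and sorted first whenever it is present — must equal $z^{\star}(k)$. (The clean way to see the latter: $z\in\mathsf{committed}(k,1)$ and $z\neq z^{\star}(k)$ together with $z^{\star}(k)$ being committed by every correct process and $|\mathsf{committed}(k,1)|\leq 2$ pins down $\mathsf{committed}(k,1) = \{z^{\star}(k), z\}$ with $z^{\star}(k) < z$, so any correct process whose first committed digest is not $z$ has first committed digest $z^{\star}(k)$.) Hence every correct process proposes $z^{\star}(k)$ to $\mathcal{SMBA}[k][3]$, and the sub-claim concludes.

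The main obstacle I anticipate is the bookkeeping in the $z\neq z^{\star}(k)$ branch: I must carefully argue that the "stick with the first committed digest" path and the "switch to the other committed digest" path both lead to $z^{\star}(k)$, which requires combining \Cref{lemma:good_iteration_bound_collumn,lemma:good_iteration_all_correct_commit} to conclude $\mathsf{committed}(k,1) \subseteq \{z^{\star}(k), z\}$ and that $z^{\star}(k)$ is lexicographically smaller than $z$ (else some correct process would $1$-commit $z > z^{\star}(k)$ while also committing $z^{\star}(k)$, violating the sort order — or, more precisely, it would fail to commit $z^{\star}(k)$ since at most two candidates are committed and both sorted slots would then be $\geq z^{\star}(k)$ with neither equal, contradiction). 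A secondary subtlety is that $\mathit{first\_digest}_i$ is indeed set to the value decided from $\mathcal{SMBA}[k][1]$ for every correct process (by agreement of $\mathcal{SMBA}[k][1]$ this common value is $z$), which is used so that the guard at \cref{line:proposal_switching_start} evaluates consistently. Everything else (the reconstruction counting, collision resistance ruling out forged witnesses, strong unanimity of the $\mathcal{MBA}$ instances) is a direct reuse of the arguments already sketched in \Cref{subsection:reducer_analysis}.
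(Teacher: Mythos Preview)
Your proposal is correct and follows essentially the same case analysis as the paper's proof: split on which disjunct of \Cref{lemma:good_iteration_bound_collumn} holds, then in the $|\mathsf{committed}(k,1)|\leq 2$ branch split again on whether $\mathcal{SMBA}[k][1]$ decides $z^{\star}(k)$ or the other digest $z$, using the proposal-switching logic for sub-iteration $(k,3)$ in the latter case. One minor slip: you write $z^{\star}(k) < z$, but the ordering actually goes the other way (the process that $1$-commits $z$ also commits $z^{\star}(k)$, so by the sort $z \leq z^{\star}(k)$); fortunately this inequality is never used, since your clean parenthetical argument only needs $\mathsf{committed}(k,1) = \{z^{\star}(k), z\}$, which you establish correctly.
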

\begin{proof}
By \Cref{lemma:good_iteration_bound_collumn}, the following holds:
\begin{equation*}
	\big( z^{\star}(k) \in \mathsf{committed}(k, 1) \land |\mathsf{committed}(k, 1)| \leq 2 \big) \lor \big( \{ z^{\star}(k) \} = \mathsf{committed}(k, 2) \big).
\end{equation*}
Moreover, recall that \Cref{lemma:good_iteration_all_correct_commit} proves that all correct processes commit $z^{\star}(k)$ in iteration $k$.

We distinguish two cases:
\begin{compactitem}
    \item Let $\mathsf{committed}(k, 2) = \{ z^{\star}(k) \}$.
    Hence, all correct processes propose $z^{\star}(k)$ to $\mathcal{SMBA}[k][2]$.
    The strong validity and termination properties of $\mathcal{SMBA}[k][2]$ ensure that all correct processes decide $z^{\star}(k)$ from $\mathcal{SMBA}[k][2]$.
    As $k$ is a good iteration, correctly encoded RS symbols (that correspond to value $v^{\star}(k)$) are stored at $\geq n - 2t = 2t + 1$ correct processes.
    Therefore, each correct process receives RS symbols with correct witnesses for $z^{\star}(k)$ via \textsc{reconstruct} messages from at least $t + 1$ processes and decodes value $v^{\star}(k)$ (line~\ref{line:reducer_decode}).
    Next, all correct processes propose $v^{\star}(k)$ to $\mathcal{MBA}[k][2]$, which ensures that all correct processes decide $v^{\star}(k)$ (due to its strong unanimity and termination properties).
    As $v^{\star}(k)$ is valid, all correct processes indeed quasi-decide $v^{\star}(k)$ in iteration $k$.

    \item Let $|\mathsf{committed}(k, 1)| \leq 2$ and $z^{\star}(k) \in \mathsf{committed}(k, 1)$.
    In this case, correct processes propose to $\mathcal{SMBA}[k][1]$ at most two different digests (as $|\mathsf{committed}(k, 1)| \leq 2$) out of which one is $z^{\star}(k)$ (as $z^{\star}(k) \in \mathsf{committed}(k, 1)$).
    The termination and strong validity properties of $\mathcal{SMBA}[k][1]$ prove that all correct processes decide from $\mathcal{SMBA}[k][1]$ a digest that belongs to $\mathsf{committed}(k, 1)$.
    We now distinguish two possibilities:
\begin{compactitem}
    \item Let the digest decided from $\mathcal{SMBA}[k][1]$ be $z^{\star}(k)$.
    As $k$ is a good iteration, correctly encoded RS symbols (that correspond to value $v^{\star}(k)$) are stored at $\geq n - 2t = 2t + 1$ correct processes.
    This means that each correct process receives RS symbols with correct witnesses for $z^{\star}(k)$ via \textsc{reconstruct} messages from at least $t + 1$ processes and decodes value $v^{\star}(k)$ (line~\ref{line:reducer_decode}).
    Furthermore, all correct processes propose $v^{\star}(k)$ to $\mathcal{MBA}[k][1]$, which ensures that all correct processes decide $v^{\star}(k)$ (due to its strong unanimity and termination properties).
    As $v^{\star}(k)$ is valid, all correct processes quasi-decide $v^{\star}(k)$ in iteration $k$.

    \item Let the digest decided from $\mathcal{SMBA}[k][1]$ be $z \neq z^{\star}(k)$.
    Due to the strong validity property of $\mathcal{SMBA}[k][1]$ (recall that only two different digests are proposed to $\mathcal{SMBA}[k][1]$ by correct processes), $z \in \mathsf{committed}(k, 1)$.
    Therefore, the following holds:
    \begin{compactitem}
        \item All correct processes that proposed $z$ to $\mathcal{SMBA}[k][1]$ switch their proposal for $\mathcal{SMBA}[k][3]$: all those correct processes propose $z^{\star}(k)$ to $\mathcal{SMBA}[k][3]$ (line~\ref{line:acc_proposal_2}).
        (Recall that all correct processes commit $z^{\star}(k)$ in iteration $k$.) 

        \item All correct processes that proposed $z^{\star}(k)$ to $\mathcal{SMBA}[k][1]$ propose $z^{\star}(k)$ to $\mathcal{SMBA}[k][3]$ (as the check at line~\ref{line:proposal_switching_start} does not pass).
    \end{compactitem}
    Therefore, all correct processes propose $z^{\star}(k)$ to $\mathcal{SMBA}[k][3]$.
    The strong validity and termination properties of $\mathcal{SMBA}[k][3]$ ensure that all correct processes decide $z^{\star}(k)$ from $\mathcal{SMBA}[k][3]$.
    As in the previous cases, correctly encoded RS symbols (that correspond to value $v^{\star}(k)$) are stored at $\geq n - 2t = 2t + 1$ correct processes.
    Therefore, each correct process receives RS symbols with correct witnesses for $z^{\star}(k)$ via \textsc{reconstruct} messages from at least $t + 1$ processes and decodes value $v^{\star}(k)$ (line~\ref{line:reducer_decode}).
    This means that all correct processes propose $v^{\star}(k)$ to $\mathcal{MBA}[k][3]$, which ensures that all correct processes decide $v^{\star}(k)$ (due to its strong unanimity and termination properties).
    Given that $v^{\star}(k)$ is a valid value, all correct processes quasi-decide $v^{\star}(k)$.
\end{compactitem}
\end{compactitem}
As the statement of the proposition holds in both cases, the proof is concluded.
\end{proof}

Finally, we are ready to prove \comm's termination.

\begin{lemma} [\comm satisfies termination] \label{theorem:termination}
Given $n = 4t + 1$ and the existence of a collision-resistant hash function, \comm (see \Cref{algorithm:reducer}) satisfies termination in the presence of a computationally bounded adversary.
Precisely, every correct process decides within $O(1)$ iterations in expectation.
\end{lemma}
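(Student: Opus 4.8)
The plan is to prove termination by combining the structural lemmas already established about good iterations with a standard probabilistic argument about the common coin. The proof breaks into two parts: (i) every correct process that does not decide earlier eventually starts and runs every iteration (guaranteed by \Cref{lemma:every_sub_iteration}), and (ii) the first good iteration forces a decision.

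First I would recall \Cref{lemma:good_iteration_termination}: in any good iteration $k$, all correct processes quasi-decide $v^{\star}(k)$, hence reach line~\ref{line:check_quasi_decisions} with $\mathit{quasi\_decisions}_i.\mathsf{size} > 0$. Combined with the fact that every iteration is eventually started and completed by all correct processes (\Cref{lemma:every_sub_iteration}), this means that if a good iteration exists and no correct process has decided before reaching its end, then at the end of that iteration every correct process either has already decided or triggers $\mathsf{decide}$ at line~\ref{line:reducer_decide}. In either case every correct process decides. So it remains to show that a good iteration occurs (almost surely, and in fact within $O(1)$ iterations in expectation).

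Next I would argue about the coin. By \Cref{definition:good_iteration}, iteration $k$ is good iff $\mathsf{leader}(k) \in \dfirst$, and we established $|\dfirst| \geq n - 2t = 2t + 1$. Since $\mathsf{leader}(k) = \mathsf{Election}()$ is a uniformly random process among the $n = 4t+1$ processes and the coin outputs are independent across iterations, each iteration is good with probability $\geq \frac{2t+1}{4t+1} \geq \frac12$, independently of the history. (One subtlety to address: $\dfirst$ is determined by the execution up to $\first$'s \textsc{finish} broadcast, which happens during the dissemination phase, i.e., before any $\mathsf{Election}()$ coin for the iterations is queried; so the coin for iteration $k$ is independent of the event $\mathsf{leader}(k)\in\dfirst$, and an adaptive adversary cannot bias this because the coin is unpredictable until queried by a correct process.) Hence the index of the first good iteration is stochastically dominated by a geometric random variable with success probability $\geq \frac12$, so it is finite almost surely with expectation $\leq 2$. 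Therefore all correct processes decide by the end of the first good iteration, i.e., within $O(1)$ iterations in expectation.

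The main obstacle — really the only non-routine point — is justifying the independence between the ``goodness'' of an iteration and the leader-election coin, and confirming that the adaptive adversary cannot manipulate $\dfirst$ after observing coin values: one must be careful that $\dfirst$ consists of \emph{so-far-uncorrupted} processes at the time of $\first$'s \textsc{finish} broadcast and that the after-the-fact-removal capability does not retroactively shrink this set in a way that correlates with the coin. This is handled by noting that once a process is counted in $\dfirst$, the digest/RS-symbol material it stored is fixed, and \Cref{lemma:good_iteration_termination} only needs $\mathsf{leader}(k)$ to have been in $\dfirst$ — a historical fact — not to remain correct, which is exactly why the dissemination phase was designed to tolerate post-dissemination corruption of the leader. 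I would then conclude by assembling these observations into the formal statement of \Cref{theorem:termination}.

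\begin{proof}
By \Cref{lemma:every_sub_iteration}, every iteration and sub-iteration is eventually started and completed by every correct process (unless it has already decided). Let $k^{\star}$ denote the first good iteration. By \Cref{lemma:good_iteration_termination}, every correct process quasi-decides $v^{\star}(k^{\star})$ in iteration $k^{\star}$, hence reaches line~\ref{line:check_quasi_decisions} with a non-empty $\mathit{quasi\_decisions}$ list; thus every correct process either has decided in an earlier iteration or triggers $\mathsf{decide}$ at line~\ref{line:reducer_decide} in iteration $k^{\star}$. In all cases, every correct process decides by the end of iteration $k^{\star}$.

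It remains to bound $k^{\star}$. Recall that $\dfirst$ is fixed once $\first$ broadcasts its \textsc{finish} message at line~\ref{line:first_finish}, which occurs during the dissemination phase, i.e., before any $\mathsf{Election}()$ coin associated with an iteration is queried by a correct process; moreover $|\dfirst| \geq n - 2t = 2t + 1$. Since $\mathsf{leader}(k) = \mathsf{Election}()$ returns a uniformly random process in $\{p_1, \ldots, p_n\}$ whose value is unpredictable to the (adaptive) adversary until a correct process queries it, and since the coins for distinct iterations are independent, each iteration $k$ is good with probability
\begin{equation*}
    \Pr[\mathsf{leader}(k) \in \dfirst] \geq \frac{|\dfirst|}{n} \geq \frac{2t + 1}{4t + 1} \geq \frac{1}{2},
\end{equation*}
independently of the outcomes of previous iterations. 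Hence $k^{\star}$ is stochastically dominated by a geometric random variable with success probability at least $\frac{1}{2}$, so $k^{\star}$ is finite almost surely and $\mathbb{E}[k^{\star}] \leq 2$. Therefore every correct process decides within $O(1)$ iterations in expectation.
\end{proof}
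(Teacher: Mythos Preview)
Your proposal is correct and follows essentially the same approach as the paper's proof: invoke \Cref{lemma:good_iteration_termination} to conclude that every good iteration forces a quasi-decision (and hence a decision), then bound the first good iteration by a geometric argument using $\Pr[\mathsf{leader}(k)\in\dfirst]\geq\frac{2t+1}{4t+1}\geq\frac12$. You are in fact more explicit than the paper about the independence of $\dfirst$ from the election coin and about why the adaptive adversary cannot bias ``goodness,'' points the paper's proof leaves implicit.
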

\begin{proof}
\Cref{lemma:good_iteration_termination} proves that all correct processes quasi-decide in a good iteration $k \in \mathbb{N}$.
Each iteration is good with (at least) $P = \frac{2t + 1}{4t + 1} \approx \frac{1}{2}$ probability (due to the fact that $|\dfirst| \geq 2t + 1$ and $n = 4t + 1$).
Let $E_k$ denote the event that \comm does not terminate after $k$-th iteration.
Therefore, $\text{Pr}[E_k] \leq (1 - P)^k \gets 0$ as $k \gets \infty$.
Moreover, let $K$ be the random variable that denotes the number of iterations required for \comm to terminate.
Then, $\mathbb{E}[K] = 1 / P \approx 2$, which proves that \comm terminates within $O(1)$ iterations in expectation.
\end{proof}

\smallskip
\noindent \textbf{Quality.}
To conclude \comm's proof of correctness, we now show that \comm satisfies the quality property.
Recall that the quality property requires that the probability of deciding an adversarial value is strictly less than 1.

\begin{lemma} [\comm satisfies quality]
Given $n = 4t + 1$ and the existence of a collision-resistant hash function, \comm (see \Cref{algorithm:reducer}) satisfies quality in the presence of a computationally bounded adversary.
\end{lemma}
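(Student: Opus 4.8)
The plan is to bound the probability that \comm decides an adversarial value by combining the termination analysis with the quality-selection step. First I would recall from \Cref{theorem:termination} and \Cref{lemma:good_iteration_termination} that \comm decides in the first good iteration, and that the first iteration is good with probability $P_1 \geq \frac{n - 2t}{n} = \frac{2t+1}{4t+1} \geq \frac12$, since $|\dfirst| \geq 2t + 1$ and $\mathsf{leader}(1)$ is chosen uniformly at random, independently of the adversary's actions so far. So with probability at least $\frac12$ the first iteration is good.

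Next I would observe that in a good iteration $k$, \Cref{lemma:good_iteration_termination} shows that all correct processes quasi-decide $v^{\star}(k)$, the valid proposal of $\mathsf{leader}(k) \in \dfirst$. Crucially, $v^{\star}(k)$ is the \emph{original} proposal of a so-far-uncorrupted process at the time it was disseminated, hence it is not a value determined by the adversary. Thus, in the first good iteration, $\mathit{quasi\_decisions}_i$ contains at least one non-adversarial value (namely $v^{\star}(1)$), and by construction (lines~\ref{line:agreeing_phase}--\ref{line:reducer_quasi_decide}) it has size at most $3$. When the process reaches line~\ref{line:check_quasi_decisions}, it invokes $\mathsf{Index}()$ to obtain a uniformly random $I \in [1, 3]$ and decides $\mathit{quasi\_decisions}_i[(I \bmod \mathit{quasi\_decisions}_i.\mathsf{size}) + 1]$. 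Since $\mathsf{Index}()$ is a common coin returning the same value to all processes and is independent of the (at most three) entries of the $\mathit{quasi\_decisions}$ vector, and since the selected index is uniform over the entries, the probability that the selected entry equals a non-adversarial quasi-decided value is at least $\frac13$. Call this probability $P_2 \geq \frac13$.

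I would then combine the two: by independence of the leader election of the first iteration and the $\mathsf{Index}()$ coin, the probability that (1) the first iteration is good and (2) $\mathsf{Index}()$ selects a non-adversarial quasi-decided value is at least $P_1 \cdot P_2 \geq \frac12 \cdot \frac13 = \frac16$. Whenever both events occur, the decided value is non-adversarial. Hence the probability that an adversarial value is decided is at most $1 - P_1 P_2 \leq \frac56 < 1$, establishing the quality property with $q = \frac56$.

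The main subtlety — not a calculation but a point that must be argued carefully — is the independence claim: the adversary is adaptive and learns coin outputs as they are revealed, so I would need to emphasize that the leader-election coin of iteration $1$ is revealed only after $t+1$ processes query it, so the adversary cannot bias which process becomes $\mathsf{leader}(1)$, and that $\first$ and $\dfirst$ are determined by the dissemination phase, which completes before iteration $1$'s coin is sampled; therefore the event ``iteration $1$ is good'' genuinely has probability $\geq \frac{2t+1}{4t+1}$ over the coin. Likewise, the $\mathsf{Index}()$ coin of the first good iteration is revealed only after the three sub-iterations (and hence the $\mathit{quasi\_decisions}$ vector) are fixed, so the adversary cannot arrange the vector adversarially \emph{after} seeing $I$; the at-most-three entries are pinned down first, and then a uniform index picks one, at least one of which ($v^\star$) is non-adversarial. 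With that independence in place the bound $1 - P_1 P_2 < 1$ follows immediately.

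\begin{proof}
By \Cref{theorem:termination} and \Cref{lemma:good_iteration_termination}, \comm decides (in fact, quasi-decides) in the first good iteration, where an iteration $k$ is good if $\mathsf{leader}(k) \in \dfirst$. Let $P_1$ denote the probability that the first iteration is good. Since $|\dfirst| \geq n - 2t = 2t + 1$, the process $\mathsf{leader}(1)$ is elected uniformly at random via $\mathsf{Election}()$ (line~\ref{line:random_election}), and this coin is disclosed only after at least one correct process has queried it — hence the adversary cannot bias the outcome and cannot change $\dfirst$, which is determined during the dissemination phase that precedes iteration~$1$ — we have
\begin{equation*}
    P_1 \geq \frac{n - 2t}{n} = \frac{2t + 1}{4t + 1} \geq \frac{1}{2}.
\end{equation*}

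Consider the event that the first iteration is good, and let $k = 1$ be this good iteration. By \Cref{lemma:good_iteration_termination}, all correct processes quasi-decide $v^{\star}(1)$ in iteration~$1$, where $v^{\star}(1)$ is the valid proposal of $\mathsf{leader}(1) \in \dfirst$. Since $\mathsf{leader}(1)$ was so-far-uncorrupted when it disseminated $v^{\star}(1)$, the value $v^{\star}(1)$ is not a value determined by the adversary. Thus, when a correct process $p_i$ reaches line~\ref{line:check_quasi_decisions} in iteration~$1$, its list $\mathit{quasi\_decisions}_i$ is non-empty and contains the non-adversarial value $v^{\star}(1)$; moreover, $\mathit{quasi\_decisions}_i.\mathsf{size} \leq 3$, as at most one value is appended per sub-iteration $(1, x)$ with $x \in \{1, 2, 3\}$.

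Next, process $p_i$ obtains a uniformly random integer $I \in [1, 3]$ via $\mathsf{Index}()$ (line~\ref{line:random_index}) and decides $\mathit{quasi\_decisions}_i[(I \bmod \mathit{quasi\_decisions}_i.\mathsf{size}) + 1]$ (line~\ref{line:reducer_decide}). This coin is disclosed only after at least one correct process queries it, i.e., after the three sub-iterations of iteration~$1$ have completed; hence the contents of $\mathit{quasi\_decisions}_i$ (at most three values, one of which is $v^{\star}(1)$) are fixed before $I$ is sampled. The index $(I \bmod \mathit{quasi\_decisions}_i.\mathsf{size}) + 1$ is uniform over $\{1, \ldots, \mathit{quasi\_decisions}_i.\mathsf{size}\}$, so the probability that it selects an entry equal to a non-adversarial quasi-decided value is at least $\frac{1}{\mathit{quasi\_decisions}_i.\mathsf{size}} \geq \frac{1}{3}$. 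Let $P_2 \geq \frac{1}{3}$ denote this probability.

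Finally, the $\mathsf{Election}()$ coin of iteration~$1$ and the $\mathsf{Index}()$ coin are independent (all coins are independent). Whenever (1) the first iteration is good and (2) the $\mathsf{Index}()$ request selects a non-adversarial quasi-decided value, the value decided by every correct process from \comm is non-adversarial (by agreement, all correct processes decide the same value, which by the agreement analysis equals $\mathit{quasi\_decisions}_i[(I \bmod \mathit{quasi\_decisions}_i.\mathsf{size}) + 1]$ for this common $I$). Therefore, the probability that an adversarial value is decided is
\begin{equation*}
    1 - P_1 \cdot P_2 \leq 1 - \frac{1}{2} \cdot \frac{1}{3} = \frac{5}{6} < 1.
\end{equation*}
Hence \comm satisfies quality with $q = \frac{5}{6}$.
\end{proof}
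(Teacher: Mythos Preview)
Your approach is essentially the same as the paper's: bound the probability that iteration~$1$ is good by $\tfrac{2t+1}{4t+1}\geq\tfrac12$, invoke \Cref{lemma:good_iteration_termination} to get that $v^{\star}(1)$ is quasi-decided, and then argue that $\mathsf{Index}()$ picks a non-adversarial entry with probability at least $\tfrac13$, yielding $q\leq\tfrac56$. Your independence discussion is more explicit than the paper's, which is fine.

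One small technical slip: the claim that $(I\bmod s)+1$ is \emph{uniform} over $\{1,\ldots,s\}$ (where $s=\mathit{quasi\_decisions}_i.\mathsf{size}$) is false when $s=2$, since $I\in\{1,2,3\}$ gives index~$2$ with probability $\tfrac23$ and index~$1$ with probability $\tfrac13$; consequently the intermediate bound ``$\geq \tfrac{1}{s}$'' also fails in that case ($\tfrac13<\tfrac12$). The final bound $P_2\geq\tfrac13$ is nevertheless correct, because for $s\in\{1,2,3\}$ every index is hit with probability at least $\tfrac13$. Just replace the uniformity sentence with that direct observation and the proof is clean.
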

\begin{proof}
\Cref{lemma:good_iteration_termination} proves that all correct processes quasi-decide value $v^{\star}(k)$ in a good iteration $k$; recall that $v^{\star}(k)$ is a non-adversarial value.
Hence, if the first iteration of \comm is good, all correct processes quasi-decide $v^{\star}(1)$ in iteration $1$.
In that case, for \comm to decide $v^{\star}(1)$, it is required that the $\mathsf{Index}()$ request selects $v^{\star}(1)$.
The probability of that happening is at least $\frac{1}{3}$ as there can be at most three quasi-decided values.
Thus, the probability that \comm decides a non-adversarial value is (at least) $\frac{2t + 1}{4t + 1} \cdot \frac{1}{3} \approx \frac{1}{6}$.
Therefore, quality is ensured as the probability that an adversarial value is decided is at most $\frac{5}{6} < 1$.
\end{proof}

\subsection{Proof of Complexity}

We now formally prove the complexity of \comm (see \Cref{theorem:reducer_complexity}).

\smallskip
\noindent \textbf{Message complexity.}
We start by proving that \comm sends $O(n^2)$ messages in expectation.

\begin{lemma} [\comm's expected message complexity]
Given $n = 4t + 1$ and the existence of a collision-resistant hash function, the expected message complexity of \comm (see \Cref{algorithm:reducer}) is $O(n^2)$ in the presence of a computationally bounded adversary.
\end{lemma}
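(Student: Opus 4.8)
The plan is to bound the expected number of messages sent by correct processes across the whole protocol by summing the contributions of the dissemination phase and of the (expected $O(1)$) iterations, using the known per-component complexities of the subprotocols invoked.

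First I would bound the dissemination phase. Inspecting lines~\ref{line:reducer_propose}--\ref{line:dissemination_complete}, each correct process sends $n$ \textsc{init} messages, and then at most one \textsc{ack} per received \textsc{init} (hence $O(n)$), one \textsc{done} to each process ($O(n)$), and at most one (re-)broadcast \textsc{finish} to each process ($O(n)$). So each correct process sends $O(n)$ messages in the dissemination phase, for a total of $O(n^2)$ over all correct processes; this bound is deterministic (worst-case).

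Next I would bound a single iteration $k$. Within iteration $k$ a correct process sends one \textsc{stored} broadcast ($O(n)$), one \textsc{suggest} broadcast ($O(n)$), and for each of the three sub-iterations $x\in\{1,2,3\}$: it participates in $\mathcal{SMBA}[k][x]$ and in $\mathcal{MBA}[k][x]$, and sends one \textsc{reconstruct} broadcast ($O(n)$). By the ``Uses'' declarations (lines~\ref{line:reducer_uses}--\ref{line:reducer_last_line}), \smba exchanges $O(n^2)$ messages in expectation and $\mbasimple$ exchanges $O(n^2)$ messages in expectation; since there are only three sub-iterations per iteration, a single iteration contributes $O(n^2)$ messages in expectation, and its worst-case (if the coins behave adversarially) remains $O(n^2)$ because the underlying MBA/SMBA building blocks have $O(n^2)$ expected message complexity that is bounded per-run in expectation. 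Finally I would invoke \Cref{theorem:termination} (i.e.\ \Cref{lemma:good_iteration_termination} and the geometric-tail argument): each iteration is good with probability $\ge \tfrac12$, so the expected number of iterations until termination is $O(1)$. Combining, the expected total message complexity is
\begin{equation*}
    \underbrace{O(n^2)}_{\text{dissemination}} \;+\; O(1)\cdot \underbrace{O(n^2)}_{\text{per iteration (expected)}} \;\subseteq\; O(n^2).
\end{equation*}

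The main obstacle I anticipate is handling the interaction between the randomness of the iteration count and the randomness inside each iteration's subprotocols: one cannot simply multiply ``expected number of iterations'' by ``expected messages per iteration'' without an argument, since these random quantities are not obviously independent and the adversary is adaptive. The clean way to discharge this is to observe that the expected message cost of iteration $k$, \emph{conditioned} on reaching iteration $k$, is uniformly $O(n^2)$ regardless of the prior history (because \smba and $\mbasimple$ each have $O(n^2)$ expected message complexity against an adaptive adversary, by their stated guarantees), and then apply Wald's identity / a stopping-time argument with the geometric bound $\Pr[\text{reach iteration } k]\le (1-P)^{k-1}$ where $P\ge \tfrac12$. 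This yields $\sum_{k\ge 1}\Pr[\text{reach iteration }k]\cdot O(n^2) = O(1)\cdot O(n^2) = O(n^2)$, completing the proof.
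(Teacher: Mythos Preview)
Your proposal is correct and follows essentially the same approach as the paper: bound the dissemination phase by $O(n^2)$, bound each iteration by $O(n^2)$ in expectation, and use the $O(1)$ expected number of iterations from \Cref{theorem:termination}. Your treatment is actually more careful than the paper's (which states the three facts and concludes without further justification); in particular, your discussion of the stopping-time/Wald-style argument to handle the dependence between the iteration count and the per-iteration randomness is a sound elaboration that the paper omits.
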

\begin{proof}
The lemma holds as (1) the dissemination phase exchanges $O(n^2)$ messages in expectation, (2) each iteration exchanges $O(n^2)$ messages in expectation, and (3) there are $O(1)$ iterations in expectation (by \Cref{theorem:termination}).
\end{proof}

\smallskip
\noindent \textbf{Bit complexity.}
Next, we prove \comm's expected bit complexity.
We start by proving that correct processes send $O(n\ell + n^2 \kappa \log n)$ bits in the dissemination phase.

\begin{proposition} \label{lemma:dispersion_cost}
Correct processes send $O(n\ell + n^2 \kappa \log n)$ bits in the dissemination phase. 
\end{proposition}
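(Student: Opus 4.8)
The plan is to account separately for the bits sent in each of the dissemination phase's message types (\textsc{init}, \textsc{ack}, \textsc{done}, \textsc{finish}) and sum the contributions. First I would handle the \textsc{init} messages, which carry the bulk of the cost: each correct process $p_i$ computes $\mathsf{encode}(v_i)$, obtaining $n$ RS symbols each of size $O(\frac{\ell}{n} + \log n)$ bits (since $|v_i| = \ell$), and sends to each process $p_j$ one \textsc{init} message containing one RS symbol $m_j$, the digest $z_i$ of size $O(\kappa)$ bits, and the witness $w_j$ of size $O(\kappa \log n)$ bits. Thus each correct process sends $O\big( n \cdot (\frac{\ell}{n} + \log n + \kappa + \kappa \log n) \big) = O(\ell + n\kappa \log n)$ bits in \textsc{init} messages, so across all $n$ correct processes this totals $O(n\ell + n^2 \kappa \log n)$ bits.

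Next I would bound the remaining message types, all of which are ``small''. The \textsc{ack} messages carry no payload (or $O(1)$ bits), and each correct process sends at most one \textsc{ack} per received \textsc{init} message, hence at most $n$ of them, for $O(n)$ bits per process and $O(n^2)$ bits in total. The \textsc{done} messages are broadcast at most once per correct process (via the rule at \cref{line:broadcast_done}), giving $O(n)$ \textsc{done} messages each of $O(1)$ bits from each process, i.e.\ $O(n^2)$ bits overall. Similarly, by the ``broadcast at most once'' discipline enforced at \cref{line:first_finish,line:second_finish} (a correct process checks it has not broadcast \textsc{finish} before), each correct process sends at most one \textsc{finish} message to each process, contributing $O(n^2)$ bits in total. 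All of these are dominated by the $O(n^2 \kappa \log n)$ term (recall $\kappa > \log n$, so $n^2 \kappa \log n \geq n^2$).

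Summing the contributions gives $O(n\ell + n^2 \kappa \log n) + O(n^2) \subseteq O(n\ell + n^2 \kappa \log n)$ bits, as claimed. I do not anticipate a genuine obstacle here — the argument is a routine bookkeeping exercise — but the one point requiring a little care is justifying that each correct process sends only $O(n)$ \textsc{init}, \textsc{ack}, \textsc{done}, and \textsc{finish} messages despite the iterative structure: the dissemination phase runs exactly once (it is the prelude before the \textbf{for each} loop over iterations at \cref{line:start_iteration}), so none of these message types recurs per iteration, and the per-message size bounds follow directly from the stated sizes of RS symbols (from the Reed--Solomon preliminaries), accumulation values, and witnesses (from the cryptographic-accumulators preliminaries).
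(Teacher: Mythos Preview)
Your proposal is correct and follows essentially the same approach as the paper's proof: both account per message type, compute the per-process cost of \textsc{init} messages as $O(\ell + n\kappa\log n)$ using the stated sizes of RS symbols, digests, and witnesses, bound the remaining \textsc{ack}/\textsc{done}/\textsc{finish} messages by $O(n)$ bits per process, and multiply by $n$. Your version is slightly more explicit about why each message type is sent only $O(n)$ times and why the dissemination phase does not recur, but the argument is the same.
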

\begin{proof}
Each RS symbol is of size $O(\frac{\ell}{n} + \log n)$ bits.
Let $p_i$ be any correct process.
Process $p_i$ sends $n \cdot  O(\frac{\ell}{n} + \log n + \kappa + \kappa \log n) \subseteq O(\ell + n \kappa \log n)$ bits via \textsc{init} messages.
Moreover, process $p_i$ sends $O(n)$ bits via \textsc{ack}, \textsc{done} and \textsc{finish} messages.
Therefore, process $p_i$ sends
\begin{equation*}
    \begin{aligned}
    & \underbrace{O(\ell + n \kappa \log n)}_\text{\textsc{init}} 
    {}+{}
    \underbrace{O(n)}_\text{\textsc{ack}, \textsc{done} \& \textsc{finish}}
    \\
    & \subseteq O(\ell + n \kappa \log n) \text{ bits in the dissemination phase.}
    \end{aligned}
\end{equation*}
This implies that correct processes collectively send $O(n\ell + n^2 \kappa \log n)$ bits.
\end{proof}

Next, we prove that correct processes send $O(n\ell + n^2 \kappa \log n + n^2 \log k)$ bits in any iteration $k$.

\begin{proposition} \label{lemma:iteration_cost}
Correct processes send $O(n \ell + n^2 \kappa \log n + n^2\log k)$ bits in expectation in any iteration $k \in \mathbb{N}$.
\end{proposition}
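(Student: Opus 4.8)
\textbf{Proof plan for \Cref{lemma:iteration_cost}.}

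The plan is to bound the bits sent by a single correct process $p_i$ in a fixed iteration $k$, message type by message type, and then multiply by $n$. Each iteration $k$ consists of the leader-election coin call (no bits attributed to the core, per the conventions in the excerpt), one round of \textsc{stored} messages (line~\ref{line:broadcast_stored}), one round of \textsc{suggest} messages (line~\ref{line:broadcast_suggest}), and then three sub-iterations, each containing one SMBA instance, one round of \textsc{reconstruct} messages (line~\ref{line:reducer_reconstruct}), and one MBA instance. First I would observe that a \textsc{stored} message carries the iteration index $k$ and a single digest of size $O(\kappa)$, so $p_i$ sends $n \cdot O(\kappa + \log k) = O(n\kappa + n\log k)$ bits via \textsc{stored}. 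For \textsc{suggest} messages: by \Cref{lemma:proposed_bound}, $\mathit{candidates}_i$ contains at most two digests when $p_i$ broadcasts, so each \textsc{suggest} message is $O(\kappa + \log k)$ bits, giving $O(n\kappa + n\log k)$ bits total across the $n$ recipients.

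Next I would handle the three sub-iterations. Since there are only $3 = O(1)$ of them, it suffices to bound one. Each \textsc{reconstruct} message carries $k$, $x \in \{1,2,3\}$, and the triple $\mathit{symbols}_i[\mathsf{leader}(k)] = [m,z,w]$ received during dissemination, where the RS symbol $m$ has size $O(\frac{\ell}{n}+\log n)$, the digest $z$ has size $O(\kappa)$, and the Merkle witness $w$ has size $O(\kappa\log n)$; this sums to $O(\frac{\ell}{n} + \kappa\log n + \log k)$ bits per message, hence $O(\ell + n\kappa\log n + n\log k)$ bits across the $n$ recipients. For the SMBA and MBA instances I would invoke the complexity bounds already stated in the excerpt: \smba (the SMBA instantiation used in \comm, see line~\ref{line:reducer_uses}) has expected bit complexity $O(n^2\kappa)$, and $\mbasimple$ has expected bit complexity $O(n\ell + n^2\kappa\log n)$. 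Summing the contributions of one sub-iteration: $O(\ell + n\kappa\log n + n\log k)$ from \textsc{reconstruct} across all processes, $O(n^2\kappa)$ from SMBA, and $O(n\ell + n^2\kappa\log n)$ from MBA, which is dominated by $O(n\ell + n^2\kappa\log n + n^2\log k)$; the last term subsumes the per-process $\log k$ overhead summed over $n$ processes and $O(1)$ sub-iterations (here I note $\log k \le \log n$ is not assumed, so I keep the $\log k$ term explicit as the lemma statement does). Multiplying by the constant $3$ and adding the \textsc{stored}/\textsc{suggest} contributions $O(n\kappa + n\log k)$ keeps us within $O(n\ell + n^2\kappa\log n + n^2\log k)$ in expectation.

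The main obstacle is bookkeeping rather than any conceptual difficulty: I must be careful that (i) the SMBA and MBA costs are \emph{expected}, so the per-iteration bound is only in expectation (which is all the lemma claims), and (ii) the $\log k$ term genuinely appears because the iteration index $k$ is attached to every message, and with no assumption $k \in O(n)$ this cannot be folded into the $\kappa\log n$ term. I would also double-check that \Cref{lemma:iteration_all_committed_bound} (only $O(1)$ committed digests) justifies that the \smba instances in iteration $k$ satisfy their input assumption, so their stated $O(n^2\kappa)$ bound applies — though strictly this was needed for correctness rather than for the bit count, it is worth a parenthetical remark. Everything else is the straightforward accounting sketched above.
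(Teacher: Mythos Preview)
Your approach is essentially identical to the paper's: a message-type-by-message-type accounting (\textsc{stored}, \textsc{suggest}, then per sub-iteration SMBA, \textsc{reconstruct}, MBA) summed over the three sub-iterations. The one detail the paper makes explicit that you gloss over is that the $O(n^2)$ expected messages of each $\mathcal{SMBA}[k][x]$ and $\mathcal{MBA}[k][x]$ instance must themselves be tagged with $(k,x)$, adding $O(n^2\log k)$ per instance; this does not affect your final bound since $n^2\log k$ already appears, but it is worth stating for completeness.
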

\begin{proof}
Correct processes send $O(n^2 \kappa + n^2\log k)$ bits via \textsc{stored} and \textsc{suggest} messages.
Recall that \textsc{stored} messages include a single digest and \textsc{suggest} messages include at most two digests. \textsc{stored} and \textsc{suggest} messages also include the number of the iteration to which they refer.

Now, consider any sub-iteration $(k, x \in \{1 , 2, 3 \})$.
Correct processes send $O(n^2 \kappa + n^2 \log k)$ bits in expectation while executing $\mathcal{SMBA}[k][x]$.
Note that the $O(n^2\log k)$ term exists as (1) each message of $\mathcal{SMBA}[k][x]$ needs to be tagged with ``$k, x \in \{1, 2, 3\}$'', and (2) $\mathcal{SMBA}[k][x]$ exchanges $O(n^2)$ messages in expectation.
Next, correct processes send $O(n \ell + n^2 \kappa \log n + n^2 \log k)$ bits via \textsc{reconstruct} messages.
Finally, correct processes send $O(n \ell + n^2 \kappa \log n + n^2 \log k)$ bits in expectation while executing $\mathcal{MBA}[k][x]$.
Again, the $O(n^2\log k)$ term exists as (1) each message of $\mathcal{MBA}[k][x]$ needs to be tagged with ``$k, x \in \{1, 2, 3\}$'', and (2) $\mathcal{MBA}[k][x]$ exchanges $O(n^2)$ messages in expectation.
Therefore, correct processes collectively send 
\begin{equation*}
    \begin{aligned}
        & \underbrace{O(n^2\kappa + n^2 \log k)}_\text{$\mathcal{SMBA}[k][x]$} 
        {}+{}
        \underbrace{O(n\ell + n^2 \kappa \log n + n^2 \log k)}_\text{\textsc{reconstruct}} 
        {}+{}
        \underbrace{O(n\ell + n^2 \kappa \log n + n^2 \log k)}_\text{$\mathcal{MBA}[k][x]$} 
        \\
        & \subseteq O(n \ell + n^2 \kappa \log n + n^2 \log k) \text{ bits in sub-iteration $(k, x)$.}
    \end{aligned}
\end{equation*}
Given that iteration $k$ has three sub-iterations, correct processes collectively send
\begin{equation*}
    \begin{aligned}
        & \underbrace{O(n^2 \kappa + n^2 \log k)}_\text{\textsc{stored} \& \textsc{suggest}} 
        {}+{}
        \underbrace{O(n \ell + n^2 \kappa \log n + n^2 \log k)}_\text{sub-iterations} 
        \\
        & \subseteq O(n \ell + n^2 \kappa \log n + n^2 \log k) \text{ bits in iteration $k$.}
    \end{aligned}
\end{equation*}
Thus, the proposition holds.
\end{proof}

We are ready to prove \comm's expected bit complexity.

\begin{lemma} [\comm's expected bit complexity]
Given $n = 4t + 1$ and the existence of a collision-resistant hash function, the expected bit complexity of \comm (see \Cref{algorithm:reducer}) is $$O(n\ell + n^2 \kappa \log n)$$ in the presence of a computationally bounded adversary.
\end{lemma}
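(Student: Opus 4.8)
The plan is to combine the dissemination-phase bound of \Cref{lemma:dispersion_cost}, the per-iteration bound of \Cref{lemma:iteration_cost}, and the termination bound of \Cref{theorem:termination}, but with care: because the per-iteration bit cost in \Cref{lemma:iteration_cost} carries a $k$-dependent overhead of $O(n^2 \log k)$ (from tagging sub-protocol messages with the iteration index), one cannot simply multiply ``$O(1)$ expected iterations'' by ``$O(n\ell + n^2 \kappa \log n)$ bits per iteration''. Instead I would bound the expected total number of bits sent across all iterations by an explicit convergent series, and then add the dissemination cost.

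Concretely, recall from the proof of \Cref{theorem:termination} that every iteration is good with probability at least $P = \frac{2t+1}{4t+1} \ge \frac{1}{2}$ (the leader is drawn via a fresh $\mathsf{Election}()$ coin), and that \comm terminates in the first good iteration (\Cref{lemma:good_iteration_termination}); hence the probability that \comm reaches iteration $k$ is at most $(1-P)^{k-1} \le (\tfrac{1}{2})^{k-1}$. Moreover, the event ``iteration $k$ is reached'' is determined by the execution of iterations $1,\dots,k-1$, and the bound of \Cref{lemma:iteration_cost} is obtained purely from the (unconditional) expected complexities of the $\mathcal{SMBA}[k][x]$ and $\mathcal{MBA}[k][x]$ sub-instances together with deterministic accounting of the \textsc{stored}, \textsc{suggest}, and \textsc{reconstruct} messages; it therefore continues to hold conditioned on any history preceding iteration $k$. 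Letting $B_k$ denote the number of bits sent by correct processes during iteration $k$ (with $B_k = 0$ if \comm terminates before reaching iteration $k$), this yields $\mathbb{E}[B_k] \le (\tfrac{1}{2})^{k-1} \cdot O(n\ell + n^2 \kappa \log n + n^2 \log k)$. Summing and using linearity of expectation,
\begin{equation*}
    \mathbb{E}\Big[\textstyle\sum_{k \ge 1} B_k\Big] \;\le\; O\big(n\ell + n^2 \kappa \log n\big) \cdot \sum_{k \ge 1}\Big(\tfrac{1}{2}\Big)^{k-1} \;+\; O(n^2) \cdot \sum_{k \ge 1}\Big(\tfrac{1}{2}\Big)^{k-1}\log k .
\end{equation*}
The first series equals $2$, and the second series converges to a constant (a geometric factor dominates $\log k$), so the expected number of bits sent across all iterations is $O(n\ell + n^2 \kappa \log n + n^2) \subseteq O(n\ell + n^2 \kappa \log n)$ using $\kappa > \log n$. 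Adding the $O(n\ell + n^2 \kappa \log n)$ bits of the dissemination phase (\Cref{lemma:dispersion_cost}) gives the claimed expected bit complexity.

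The main point to get right is the step that justifies combining the $k$-indexed per-iteration bound with the geometric tail probability of reaching iteration $k$: this requires observing that ``reach iteration $k$'' lies in the $\sigma$-algebra generated by iterations $1,\dots,k-1$, so that $\mathbb{E}[B_k] = \mathbb{E}\big[\mathbb{1}[\text{reach }k]\cdot \mathbb{E}[\,\text{bits in iter }k \mid \mathcal{F}_{k-1}]\big]$, and that the inner conditional expectation is still $O(n\ell + n^2 \kappa \log n + n^2 \log k)$. Everything else is elementary: the interchange of expectation and the infinite sum is justified by non-negativity (Tonelli), and the remaining ingredient is the standard fact that $\sum_{k\ge 1} x^{k-1}\log k < \infty$ for $x = \tfrac{1}{2}$, which absorbs the iteration-tagging overhead into a constant factor.
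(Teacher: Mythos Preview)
Your proof is correct and follows essentially the same decomposition as the paper (dissemination cost via \Cref{lemma:dispersion_cost}, per-iteration cost via \Cref{lemma:iteration_cost}, geometric tail from \Cref{theorem:termination}). The one technical difference is in how the $n^2\log k$ overhead is absorbed: the paper crudely bounds $\log k \le k$, sums to $\tfrac{1}{2}K(K+1)$, and then invokes $\mathbb{E}[K^2] = \tfrac{2-P}{P^2} \in O(1)$ for the geometric variable $K$, whereas you keep $\log k$ and directly observe that $\sum_{k\ge 1}(\tfrac{1}{2})^{k-1}\log k$ converges. Your route is slightly tighter and, by making the conditional-expectation structure explicit, also a bit more careful than the paper's informal ``$B \in O(\ldots)$'' step; the paper's approach has the minor advantage of packaging everything through moments of $K$ without reasoning about measurability of the event ``reach iteration $k$''.
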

\begin{proof}
Let $B$ be the random variable that denotes the number of bits sent by correct processes in \comm.
Moreover, let $K$ be the random variable that denotes the number of iterations it takes \comm to terminate.
By \Cref{theorem:termination}, $\mathbb{E}[K] \in O(1)$.
\Cref{lemma:dispersion_cost} proves that all correct processes send $O(n \ell + n^2 \kappa \log n)$ during the dissemination phase.
Similarly, \Cref{lemma:iteration_cost} proves that, in each iteration $k \in \mathbb{N}$, correct processes send $O(n \ell + n^2 \kappa \log n + n^2 \log k) \subseteq O(n \ell + n^2 \kappa \log n + n^2 k)$ bits.
Therefore, the following holds:
\begin{equation*}
    \begin{aligned}
        B & \in O\Big( n\ell + n^2 \kappa \log n + \sum\limits_{k = 1}^{K} (n \ell + n^2 \kappa \log n + n^2 k) \Big) \\
        & \subseteq O\Big( n \ell + n^2 \kappa \log n + K \cdot (n\ell + n^2 \kappa \log n) + \sum\limits_{k = 1}^{K} (n^2k) \Big)
        \\ 
        & \subseteq O\Big( K \cdot (n\ell + n^2 \kappa \log n) + n^2 \cdot \frac{1}{2} \cdot K(K + 1) \Big)
        \\
        & \subseteq O\Big( K \cdot (n \ell + n^2 \kappa \log n) + n^2 \cdot (K^2 + K) \Big).
    \end{aligned}
\end{equation*}
Hence, we compute $\mathbb{E}[B]$ in the following way, using $\mathbb{E}[K] \in O(1)$:
\begin{equation*}
    \begin{aligned}
        \mathbb{E}[B] & \in O(n \ell + n^2 \kappa \log n) \cdot \mathbb{E}[K] + O(n^2) \cdot \mathbb{E}[K^2] + O(n^2) \cdot \mathbb{E}[K]
        \\
        & \subseteq O(n \ell + n^2 \kappa \log n) + O(n^2) + O(n^2) \cdot \mathbb{E}[K^2]
        \\
        & \subseteq O(n\ell + n^2 \kappa \log n) + O(n^2) \cdot \mathbb{E}[K^2].
    \end{aligned}
\end{equation*}
As $K$ is a geometric random variable, $\mathbb{E}[K^2] = \frac{2 - P}{P^2} \in O(1)$, where $P \approx \frac{1}{2}$ is the probability that an iteration is good.
Thus, $\mathbb{E}[B] \in O(n \ell + n^2 \kappa \log n)$.
\end{proof}

\smallskip
\noindent \textbf{Time complexity.}
We conclude the subsection by proving \comm's expected time complexity.

\begin{lemma} [\comm's expected time complexity]
Given $n = 4t + 1$ and the existence of a collision-resistant hash function, the expected time complexity of \comm (see \Cref{algorithm:reducer}) is $O(1)$ in the presence of a computationally bounded adversary.
\end{lemma}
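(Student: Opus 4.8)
The plan is to bound the expected number of rounds of communication in \comm and argue that each round takes $O(1)$ time. Recall from \Cref{theorem:termination} that \comm terminates within $O(1)$ iterations in expectation, and that the dissemination phase consists of a constant number of communication rounds (\textsc{init}, \textsc{ack}, \textsc{done}, \textsc{finish}), so it contributes $O(1)$ to the expected time complexity.

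First I would argue that each iteration takes $O(1)$ time in expectation. An iteration consists of: the leader-election coin query; the \textsc{stored} round; the \textsc{suggest} round; and then three sub-iterations, each of which runs one \smba instance, a \textsc{reconstruct} round, and one $\mbasimple$ instance. The coin query and the \textsc{stored}, \textsc{suggest}, and \textsc{reconstruct} rounds are each single message delays. By the preliminaries (and \Cref{proposition:smba_termination}), each \smba instance terminates in $O(1)$ expected time, and each $\mbasimple$ instance terminates in $O(1)$ expected time as well. Since there are exactly three sub-iterations per iteration, summing the $O(1)$ expected durations over this constant number of components yields $O(1)$ expected time per iteration.

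Then I would combine the two facts. Let $K$ be the random variable denoting the number of iterations until \comm terminates; by \Cref{theorem:termination}, $\mathbb{E}[K] \in O(1)$. Conditioned on iteration count, the total time is at most the $O(1)$ dissemination time plus the sum over the first $K$ iterations of their (expected-$O(1)$) durations. Since the per-iteration expected time is bounded by a constant independent of $K$, and $K$ is a geometric random variable (each iteration is good with probability $P \approx \frac12$, independent of earlier coins), Wald's identity (or a direct computation analogous to the one used for the bit complexity) gives expected total time $O(1) + O(1)\cdot\mathbb{E}[K] \subseteq O(1)$.

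The main obstacle is being careful about the independence and the conditioning: one must ensure that the expected duration of an iteration is bounded by a constant \emph{uniformly}, regardless of which iteration it is and regardless of the adversary's adaptive choices, so that the sum of a random ($K$) number of such terms can be controlled. This is exactly the sort of subtlety the paper already handles for the bit-complexity bound via the geometric structure of $K$, so I would mirror that argument, invoking $\mathbb{E}[K]\in O(1)$ and the uniform per-iteration bound, and noting that all sub-protocols' expected running times hold against the adaptive adversary because they are the off-the-shelf primitives cited in \Cref{section:preliminaries}.

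\begin{proof}
By \Cref{theorem:termination}, \comm terminates within $O(1)$ iterations in expectation; let $K$ be the random variable counting these iterations, so $\mathbb{E}[K] \in O(1)$, and $K$ is geometric with success probability $P \approx \frac12$. The dissemination phase (lines~\ref{line:reducer_propose}--\ref{line:dissemination_complete}) consists of a constant number of message-delay rounds (\textsc{init}, \textsc{ack}, \textsc{done}, \textsc{finish}), hence takes $O(1)$ time. Each iteration $k$ comprises the leader-election coin query, the \textsc{stored} round, the \textsc{suggest} round, and the three sub-iterations $(k,1), (k,2), (k,3)$; within each sub-iteration, the $\mathcal{SMBA}[k][x]$ instance terminates in $O(1)$ expected time (by \Cref{proposition:smba_termination}), the \textsc{reconstruct} round is a single message delay, and the $\mathcal{MBA}[k][x]$ instance terminates in $O(1)$ expected time. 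Summing over the constant number of components, each iteration takes $O(1)$ time in expectation, with the constant independent of $k$. Therefore the expected time complexity of \comm is at most the $O(1)$ dissemination time plus $\mathbb{E}[K]$ times the uniform per-iteration bound, which is $O(1) + O(1) \cdot \mathbb{E}[K] \subseteq O(1)$.
\end{proof}
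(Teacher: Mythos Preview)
Your proof is correct and follows essentially the same approach as the paper, which simply notes that (1) correct processes decide within $O(1)$ iterations (by \Cref{theorem:termination}), (2) each iteration takes $O(1)$ time in expectation, and (3) the dissemination phase takes $O(1)$ time. Your write-up is more detailed about the per-iteration decomposition and the Wald-style combination, but the core argument is identical.
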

\begin{proof}
Given that (1) correct processes decide within $O(1)$ iterations (by \Cref{theorem:termination}), (2) each iteration takes $O(1)$ time in expectation, and (3) the dissemination phase takes $O(1)$ time, \comm indeed has $O(1)$ expected time complexity.
\end{proof}
\section{\commplus: Proof} \label{section:reducer_plus_proof}

This section formally proves the correctness and complexity of our MVBA algorithm \commplus.
Recall that \commplus's pseudocode is given in \Cref{algorithm:reducer_plus}.

\subsection{Proof of Correctness}

This subsection proves the correctness of \commplus, i.e., we formally prove \Cref{theorem:reducer_plus_correct}.
Recall that $n = (3 + \epsilon)t + 1$, for any fixed constant $\epsilon > 0$, and $C = \lceil 1 + 2/\epsilon \rceil^2$.

\smallskip
\noindent \textbf{External validity.}
We start by proving \commplus's external validity.

\begin{lemma} [\commplus satisfies external validity] \label{lemma:reducer_plus_external_validity}
Given $n = (3 + \epsilon)t + 1$, for any fixed constant $\epsilon > 0$, and the existence of a hash function modeled as a random oracle, \commplus (see \Cref{algorithm:reducer_plus}) satisfies external validity in the presence of a computationally bounded adversary.
\end{lemma}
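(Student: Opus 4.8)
The statement to prove is that \commplus satisfies external validity: no correct process decides an invalid value. The plan is to mirror the argument already given for \comm's external validity (\Cref{lemma:reducer_external_validity}), which is essentially a one-line observation, and to adapt it to the structure of \commplus.

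First I would trace the decision path in \Cref{algorithm:reducer_plus}. A correct process $p_i$ only decides at line~\ref{line:reducer_decide_plus}, and the value it decides is drawn from $\mathit{quasi\_decisions}_i$, selected via the $\mathsf{Index}()$ coin. Therefore it suffices to show that every value appended to $\mathit{quasi\_decisions}_i$ is valid. The only place a value is appended to $\mathit{quasi\_decisions}_i$ is line~\ref{line:quasi_decide_plus}, and this happens only inside the conditional $\mathsf{valid}(v) = \mathit{true}$, where $v$ is the value decided from $\mathcal{MBA}[k][x]$. Hence every quasi-decided value is valid by construction, and so every decided value is valid.

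The second (and only mildly subtle) point to address is that $\botmba$ must be excluded: if $\mathcal{MBA}[k][x]$ decides $\botmba$, then since $\mathsf{valid}(\botmba) = \mathit{false}$ (as stipulated in \Cref{section:preliminaries} and \Cref{mod:async_mba}), the check at line~\ref{line:quasi_decide_plus} fails and $\botmba$ is never quasi-decided. I would state this explicitly so that the reader sees $\mathit{quasi\_decisions}_i$ never contains $\botmba$, and therefore the final decided value is a genuine $\ell$-bit value from $\valuemvba$ satisfying $\mathsf{valid}(\cdot)$.

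There is essentially no obstacle here: external validity is a purely syntactic guarantee enforced by the explicit validity check before quasi-deciding, and it does not depend on the random-oracle assumption, on $n = (3+\epsilon)t+1$, or on any property of the dissemination phase or the adoption procedure. The proof is a short paragraph. The proof would read: The external validity property is satisfied because every value quasi-decided by a correct process is valid, owing to the check at line~\ref{line:quasi_decide_plus}; since a correct process only ever decides (line~\ref{line:reducer_decide_plus}) a value previously quasi-decided, and $\mathsf{valid}(\botmba) = \mathit{false}$, no correct process decides an invalid value.
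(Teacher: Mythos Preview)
Your proposal is correct and takes essentially the same approach as the paper, which simply observes that the validity check at line~\ref{line:quasi_decide_plus} ensures every quasi-decided (and hence decided) value is valid. Your extra remark about $\botmba$ being excluded by $\mathsf{valid}(\botmba) = \mathit{false}$ is accurate and slightly more explicit than the paper's one-line version.
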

\begin{proof}
The property is satisfied as every value quasi-decided by a correct process is valid as ensured by the check at line~\ref{line:quasi_decide_plus}.
\end{proof}

\smallskip
\noindent \textbf{Agreement.}
We proceed by proving \commplus's agreement.
We say that a correct process $p_i$ \emph{quasi-decides} a vector $\mathit{vec}$ in an iteration $k \in \mathbb{N}$ if and only if $\mathit{quasi\_decisions}_i = \mathit{vec}$ when process $p_i$ reaches line~\ref{line:check_quasi_decisions_plus}.
The following proposition proves that no two correct processes quasi-decide different vectors in any iteration $k$.

\begin{proposition} \label{lemma:quasi_committed_vectors_plus}
Let $k \in \mathbb{N}$ be any iteration.
Suppose a correct process $p_i$ quasi-decides a vector $\mathit{vec}_i$ in iteration $k$ and another correct process $p_j$ quasi-decides a vector $\mathit{vec}_j$ in iteration $k$.
Then, $\mathit{vec}_i = \mathit{vec}_j$.
\end{proposition}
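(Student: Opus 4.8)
The plan is to mirror the proof of the analogous statement for \comm (\Cref{proposition:quasi_decided_vectors}), since the structure of \commplus's iterations is the same in the relevant respect: within a single iteration $k$, each correct process builds its $\mathit{quasi\_decisions}$ list entry-by-entry across the $C$ sub-iterations $(k,1), (k,2), \ldots, (k,C)$, and the $j$-th entry of that list (when it exists) is determined solely by the output of $\mathcal{MBA}[k][j]$ (line~\ref{line:quasi_decide_plus}): a value is appended precisely when $\mathcal{MBA}[k][j]$ decides a valid value, and the appended value is that decided value.

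First I would observe that, for each fixed sub-iteration $(k, x)$ with $x \in [1, C]$, the agreement property of the MBA instance $\mathcal{MBA}[k][x]$ guarantees that all correct processes that decide from $\mathcal{MBA}[k][x]$ decide the same value $v^{(k,x)} \in \valuemvba \cup \{\botmba\}$. Since $\botmba$ is invalid ($\mathsf{valid}(\botmba) = \mathit{false}$) and a correct process appends to its $\mathit{quasi\_decisions}$ list in sub-iteration $(k,x)$ if and only if the decided value $v^{(k,x)}$ is valid, either every correct process appends $v^{(k,x)}$ in sub-iteration $(k,x)$, or none does — and in the first case they all append the same value. Since every correct process processes the sub-iterations $1, 2, \ldots, C$ in order (the \textbf{for each} loop at line~\ref{line:for_plus}), the lists $\mathit{vec}_i$ and $\mathit{vec}_j$ are obtained by filtering the same sequence $v^{(k,1)}, \ldots, v^{(k,C)}$ through the same validity predicate and are therefore identical. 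This yields $\mathit{vec}_i = \mathit{vec}_j$.

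The only subtlety — and the place where I would be slightly careful — is making sure the argument covers the case where some correct process reaches line~\ref{line:check_quasi_decisions_plus} without having decided from every $\mathcal{MBA}[k][x]$; but in fact, by the termination property of each $\mathcal{MBA}[k][x]$ (together with the prior lemmas establishing that all correct processes start and complete every sub-iteration, analogous to \Cref{lemma:every_sub_iteration} for \comm), every correct process that reaches line~\ref{line:check_quasi_decisions_plus} has processed all $C$ sub-iterations, so the filtering is over the full length-$C$ sequence in every case. I do not expect any real obstacle here: this is a routine consequence of MBA agreement applied sub-iteration-by-sub-iteration, exactly parallel to \Cref{proposition:quasi_decided_vectors}. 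The short write-up would read:

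\begin{proof}
For each sub-iteration $(k, x)$ with $x \in [1, C]$, the agreement property of $\mathcal{MBA}[k][x]$ ensures that all correct processes that decide from $\mathcal{MBA}[k][x]$ decide the same value. A correct process appends a value to its $\mathit{quasi\_decisions}$ list in sub-iteration $(k, x)$ (line~\ref{line:quasi_decide_plus}) if and only if the value decided from $\mathcal{MBA}[k][x]$ is valid, and in that case the appended value is exactly the decided value. Since both $p_i$ and $p_j$ process the sub-iterations $(k, 1), (k, 2), \ldots, (k, C)$ in order, the vectors $\mathit{vec}_i$ and $\mathit{vec}_j$ are obtained by applying the same validity filter to the same length-$C$ sequence of MBA decisions, so $\mathit{vec}_i = \mathit{vec}_j$.
\end{proof}
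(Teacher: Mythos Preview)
Your proposal is correct and takes essentially the same approach as the paper: the paper's proof is the one-liner ``The proposition follows from the agreement property of the $\mathcal{MBA}[k][x]$ instance (of the MBA primitive), for every $x \in [1, C]$,'' and your argument is just a more detailed unpacking of exactly that.
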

\begin{proof}
The proposition follows from the agreement property of the $\mathcal{MBA}[k][x]$ instance (of the MBA primitive), for every $x \in [1, C]$.
\end{proof}

We are ready to prove \commplus's agreement.

\begin{lemma} [\commplus satisfies agreement]
Given $n = (3 + \epsilon)t + 1$, for any fixed constant $\epsilon > 0$, and the existence of a hash function modeled as a random oracle, \commplus (see \Cref{algorithm:reducer_plus}) satisfies agreement in the presence of a computationally bounded adversary.
\end{lemma}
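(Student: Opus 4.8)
The plan is to mirror exactly the structure of the agreement proof for \comm, since the high-level argument is identical and only the constant bounds on the number of sub-iterations differ. First I would argue by contradiction: suppose a correct process $p_i$ decides some value $v_i$ in iteration $k_i$, and a correct process $p_j$ decides $v_j \neq v_i$ in iteration $k_j$, and without loss of generality assume $k_i \leq k_j$. By the code at \Cref{line:check_quasi_decisions_plus,line:reducer_decide_plus}, a correct process decides only by selecting (via the $\mathsf{Index}()$ request) an entry of its $\mathit{quasi\_decisions}$ list, so $p_i$ quasi-decides a non-empty vector $\mathit{vec}_i$ in iteration $k_i$, and $v_i$ is an element of $\mathit{vec}_i$.

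The key step is to invoke \Cref{lemma:quasi_committed_vectors_plus}, which (using the agreement property of each $\mathcal{MBA}[k_i][x]$ instance for $x \in [1, C]$) guarantees that $p_j$ also quasi-decides exactly the same non-empty vector $\mathit{vec}_i$ in iteration $k_i$. From here I would split into two cases. If $k_i = k_j$, then both $p_i$ and $p_j$ compute their decision in iteration $k_i$ from the same vector $\mathit{vec}_i$ and the same integer $I$ returned by the common-coin $\mathsf{Index}()$ request (which returns the same value to all correct processes); hence $v_j = v_i$, contradicting $v_j \neq v_i$. If $k_i < k_j$, then $p_j$ has a non-empty $\mathit{quasi\_decisions}$ list already at iteration $k_i$, so the guard ``$p_i$ has not previously decided'' at \Cref{line:check_quasi_decisions_plus} forces $p_j$ to decide at iteration $k_i$ (or earlier), contradicting the assumption that $p_j$ decides in the strictly later iteration $k_j$. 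Either way we reach a contradiction, establishing agreement.

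I do not anticipate a genuine obstacle here: the proof is essentially a transcription of \comm's agreement argument with the range $\{1,2,3\}$ replaced by $[1,C]$, and every ingredient (the agreement property of $\mbasimple$, the determinism of the $\mathsf{Index}()$ coin across correct processes, the ``has not previously decided'' guard, and \Cref{lemma:quasi_committed_vectors_plus}) is already available. The only mild care needed is to note explicitly that $\mathit{vec}_i$ being non-empty is what makes the guard at \Cref{line:check_quasi_decisions_plus} trigger at iteration $k_i$ in the $k_i < k_j$ case; this is immediate because $v_i \in \mathit{vec}_i$. Thus the lemma follows, completing the proof.
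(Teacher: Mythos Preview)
Your proposal is correct and follows essentially the same approach as the paper's own proof: contradiction setup, invocation of \Cref{lemma:quasi_committed_vectors_plus} to align the quasi-decided vectors, and the same two-case split on $k_i = k_j$ versus $k_i < k_j$. If anything, your version is slightly more explicit about why the non-emptiness of $\mathit{vec}_i$ triggers the guard at \Cref{line:check_quasi_decisions_plus} in the second case.
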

\begin{proof}
By contradiction, suppose (1) there exists a correct process $p_i$ that decides a value $v_i$, and (2) there exists a correct process $p_j$ that decides a value $v_j \neq v_i$.
Let $p_i$ (resp., $p_j$) decide $v_i$ (resp., $v_j$) in some iteration $k_i \in \mathbb{N}$ (resp., $k_j \in \mathbb{N}$).
Therefore, process $p_i$ (resp., $p_j$) quasi-decides $v_i$ (resp., $v_j$) in iteration $k_i$ (resp., $k_j$).
Without loss of generality, let $k_i \leq k_j$.

As process $p_i$ quasi-decides $v_i$ in iteration $k_i$, process $p_i$ quasi-decides a vector $\mathit{vec}_i$ in iteration $k_i$; note that $v_i$ belongs to $\mathit{vec}_i$.
By \Cref{lemma:quasi_committed_vectors_plus}, process $p_j$ also quasi-decides the non-empty vector $\mathit{vec}_i$ in iteration $k_i$.
We separate two cases:
\begin{compactitem}
    \item Let $k_i = k_j$.
    Due to the fact that the $\mathsf{Index}()$ request invoked in iteration $k_i = k_j$ returns the same integer to all correct processes, we have that $v_j = v_i$.
    Thus, we reach a contradiction with $v_j \neq v_i$ in this case.

    \item Let $k_i < k_j$.
    As $p_j$ quasi-decides the non-empty vector $\mathit{vec}_i$ in iteration $k_i$, $p_j$ decides in iteration $k_i$ (if it has not done so in an earlier iteration).
    Therefore, we reach a contradiction with the fact that $p_j$ decides in iteration $k_j > k_i$.
\end{compactitem}
As neither of the above cases can occur, the proof is concluded.
\end{proof}

\smallskip
\noindent \textbf{Weak validity.}
First, we show that if any correct process proposes a value $v$ to the $\mathcal{MBA}[k][x]$ instance, for any sub-iteration $(k, x)$, and all processes are correct, then $v$ is the proposal of a correct process.

\begin{proposition} \label{lemma:integrity_proposal_correct_plus}
Let $(k \in \mathbb{N}, x \in [1, C])$ be any sub-iteration and let all processes be correct.
If any correct process $p_i$ proposes a value $v$ to $\mathcal{MBA}[k][x]$, then $v$ is the proposal of a correct process.
\end{proposition}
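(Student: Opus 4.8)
This proposition for \commplus is essentially identical in content to \Cref{proposition:integrity_proposal_correct} for \comm, so the plan is to mirror that proof structure exactly, adapting only the line references and the decoding threshold. The key observation is that when a correct process $p_i$ proposes some value $r_i$ to $\mathcal{MBA}[k][x]$ (line~\ref{line:lmba_plus}), it does so via one of two code paths: either it decoded $r_i$ from received RS symbols at line~\ref{line:reducer_plus_decode}, or it fell back to its own proposal $r_i \gets v_i$ at line~\ref{line:reducer_plus_own_proposal}. In the second case the claim is immediate since $v_i$ is $p_i$'s own (valid) proposal, hence the proposal of a correct process under the all-correct assumption.

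The first case is the one requiring a (short) argument. Since all processes are correct by hypothesis, every \textsc{init} message $p_i$ received during the dissemination phase was sent by the honest $\mathsf{leader}(k)$ and carries a genuine RS symbol of $\mathsf{leader}(k)$'s proposal together with a valid witness against $z^\star(k) = \mathsf{digest}$ of that proposal. When $p_i$ reaches line~\ref{line:reducer_plus_decode}, it has collected $|S_i| \geq \epsilon t + 1$ RS symbols with valid witnesses for $\mathit{adopted\_digest}_i$; by collision-resistance of the hash function (here even unconditionally, as the leader is honest) all these symbols are the authentic RS symbols of $\mathsf{leader}(k)$'s proposal at distinct positions. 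Since in \commplus values are encoded as polynomials of degree $\epsilon t$, any $\epsilon t + 1$ distinct valid symbols interpolate back to exactly that polynomial, so $\mathsf{decode}(S_i)$ yields precisely $\mathsf{leader}(k)$'s proposal. Thus $r_i$ is the proposal of $\mathsf{leader}(k)$, a correct process.

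I would therefore write the proof as a two-case split on whether $p_i$ executes line~\ref{line:reducer_plus_decode} or line~\ref{line:reducer_plus_own_proposal}, concluding in each case that $v$ (the proposed value) is a correct process's proposal. No genuine obstacle arises: the only mild subtlety—making sure the degree-$\epsilon t$ encoding matches the $\epsilon t + 1$ decoding threshold so that decoding is unambiguous—is already baked into the algorithm's definition and into the $\mathsf{encode}/\mathsf{decode}$ conventions recalled in \Cref{section:preliminaries}. The proof is essentially a line-reference update of \Cref{proposition:integrity_proposal_correct}.

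\begin{proof}
Recall that $\mathsf{leader}(k)$ denotes the leader of iteration $k$, and that (since $x$ is fixed) we consider a correct process $p_i$ that proposes $v$ to $\mathcal{MBA}[k][x]$ at line~\ref{line:lmba_plus}. We distinguish two cases:
\begin{compactitem}
    \item Let $p_i$ execute line~\ref{line:reducer_plus_decode}. Then $|S_i| \geq \epsilon t + 1$, i.e., $p_i$ has received at least $\epsilon t + 1$ RS symbols with valid witnesses for $\mathit{adopted\_digest}_i$. Since all processes are correct, every such RS symbol was sent by $\mathsf{leader}(k)$ during the dissemination phase and is a genuine RS symbol of $\mathsf{leader}(k)$'s proposal (at a distinct position). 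As values are encoded as polynomials of degree $\epsilon t$ in \commplus, any $\epsilon t + 1$ such symbols interpolate to exactly that polynomial, so $v = \mathsf{decode}(S_i)$ is $\mathsf{leader}(k)$'s proposal. Hence $v$ is the proposal of a correct process.

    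\item Let $p_i$ execute line~\ref{line:reducer_plus_own_proposal}. Then $v = v_i$, i.e., $v$ is $p_i$'s own proposal, which is trivially the proposal of a correct process.
\end{compactitem}
As the statement holds in both cases, the proof is concluded.
\end{proof}
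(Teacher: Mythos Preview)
Your proof is correct and takes essentially the same approach as the paper: a two-case split on whether $p_i$ executes line~\ref{line:reducer_plus_decode} or line~\ref{line:reducer_plus_own_proposal}, with the first case appealing to all RS symbols being genuine (since all processes are correct) and the second case being immediate. Your version is slightly more explicit about why $\epsilon t + 1$ symbols suffice for unique decoding, but otherwise mirrors the paper's proof.
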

\begin{proof}
Recall that $\mathsf{leader}(k)$ denotes the leader of iteration $k$.
We now separate two cases:
\begin{compactitem}
    \item Let $p_i$ execute line~\ref{line:reducer_plus_decode}.
    In this case, process $p_i$ has received (at least) $\epsilon t + 1$ RS symbols.
    Given that all processes are correct, all these RS symbols are sent by $\mathsf{leader}(k)$ during the dissemination phase and they all correspond to $\mathsf{leader}(k)$'s proposal.
    Therefore, $v$ is the proposal of $\mathsf{leader}(k)$, which proves the statement of the proposition in this case.

    \item Let $p_i$ execute line~\ref{line:reducer_plus_own_proposal}.
    The statement of the proposition trivially holds in this case as $v$ is $p_i$'s proposal to \commplus.
\end{compactitem}
As the statement of the proposition holds in both cases, the proof is concluded.
\end{proof}

We are now ready to prove \commplus's weak validity.

\begin{lemma} [\commplus satisfies weak validity]
Given $n = (3 + \epsilon)t + 1$, for any fixed constant $\epsilon > 0$, and the existence of a hash function modeled as a random oracle, \commplus (see \Cref{algorithm:reducer_plus}) satisfies weak validity in the presence of a computationally bounded adversary.
\end{lemma}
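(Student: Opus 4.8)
The plan is to mirror the structure of the weak-validity proof for \comm (see \Cref{subsection:reducer_analysis} and the analogous lemma in \Cref{section:reducer_proof}), since the two algorithms share the same value-decision skeleton: every decided value is a quasi-decided value, every quasi-decided value is valid and decided from some $\mathcal{MBA}[k][x]$ instance, and the justification property of MBA then pushes the origin of the value down to the proposal of some correct process in the R\&A mechanism. First I would assume all $n$ processes are correct and let a correct process $p_i$ decide a value $v$. By the check at line~\ref{line:quasi_decide_plus} (as already used in \Cref{lemma:reducer_plus_external_validity}), $v$ must be valid, hence $v \neq \botmba$. Since $p_i$ only decides (line~\ref{line:reducer_decide_plus}) after quasi-deciding, $v$ was quasi-decided in some iteration $k$, which means $v$ was decided from $\mathcal{MBA}[k][x].\mathsf{propose}(\cdot)$ (line~\ref{line:lmba_plus}) for some $x \in [1, C]$.

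Next I would invoke the justification property of the MBA primitive (\Cref{mod:async_mba}): because $v \neq \botmba$ was decided from $\mathcal{MBA}[k][x]$, $v$ was proposed to $\mathcal{MBA}[k][x]$ by some correct process $p_j$. It remains to trace back why $p_j$'s proposal $r_j = v$ must be the proposal of a correct process; this is exactly \Cref{lemma:integrity_proposal_correct_plus}. Concretely, $p_j$ set $r_j$ either at line~\ref{line:reducer_plus_decode}, by decoding a value from $\epsilon t + 1$ RS symbols—and since all processes are correct, these RS symbols were all sent by $\mathsf{leader}(k)$ and encode $\mathsf{leader}(k)$'s proposal, so $v = r_j$ is the (correct) leader's proposal—or at line~\ref{line:reducer_plus_own_proposal}, in which case $r_j = v_j$ is $p_j$'s own proposal. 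In both cases $v$ is the proposal of a correct process, which establishes weak validity.

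I do not expect any genuine obstacle here: the argument is essentially a one-to-one transplant of the \comm weak-validity proof, with the only bookkeeping differences being the decoding threshold ($\epsilon t + 1$ rather than $t + 1$) and the index range for sub-iterations ($[1, C]$ rather than $\{1,2,3\}$), neither of which affects the logical structure. The one mild subtlety worth stating explicitly is that weak validity is only required in the all-correct setting, so we never need to reason about adversarial RS symbols or adversarial digests—under the hypothesis that every process is correct, any RS symbols accepted with valid witnesses for $\mathit{adopted\_digest}_j$ necessarily come from $\mathsf{leader}(k)$'s dissemination (a process only accepts one \textsc{init} message, per the rules at the top of \Cref{algorithm:reducer_plus}, inherited from \comm's dissemination phase), which is what makes the decoded value coincide with the leader's proposal. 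I would then close the proof with a short \begin{proof} ... \end{proof} block assembling these observations.

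\begin{proof}
Suppose all processes are correct, and let a correct process $p_i$ decide a value $v$.
By \Cref{lemma:reducer_plus_external_validity}, $v$ is valid, and since $\botmba$ is invalid, $v \neq \botmba$.
Due to the check at line~\ref{line:check_quasi_decisions_plus}, process $p_i$ quasi-decides $v$ in some iteration $k \in \mathbb{N}$; hence, $v$ is decided from $\mathcal{MBA}[k][x]$ in some sub-iteration $(k, x \in [1, C])$.
As $v \neq \botmba$, the justification property of $\mathcal{MBA}[k][x]$ ensures that $v$ is proposed to $\mathcal{MBA}[k][x]$ by a correct process.
Finally, \Cref{lemma:integrity_proposal_correct_plus} proves that $v$ is the proposal of a correct process, which concludes the proof.
\end{proof}
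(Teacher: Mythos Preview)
Your proof is correct and essentially identical to the paper's own proof: both assume all processes are correct, use external validity to get $v \neq \botmba$, trace $v$ back through a quasi-decision to some $\mathcal{MBA}[k][x]$, invoke MBA's justification property, and then apply \Cref{lemma:integrity_proposal_correct_plus} to conclude. The surrounding discussion (decoding threshold $\epsilon t + 1$, sub-iteration range $[1,C]$, all-correct simplification) is accurate and matches the paper's reasoning precisely.
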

\begin{proof}
Suppose all processes are correct.
Moreover, let a correct process $p_i$ decide some value $v$; note that value $v$ must be valid by \Cref{lemma:reducer_plus_external_validity}.
Hence, process $p_i$ quasi-decides $v$ in some iteration $k \in \mathbb{N}$, which further implies that $v$ is decided from $\mathcal{MBA}[k][x]$ in some sub-iteration $(k, x \in [1, C])$.
Given that $v$ is valid and $\bot_{\mathsf{MBA}}$ is invalid, $v \neq \bot_{\mathsf{MBA}}$.
Thus, the justification property of $\mathcal{MBA}[k][x]$ guarantees that $v$ was proposed to $\mathcal{MBA}[k][x]$ by a correct process.
\Cref{lemma:integrity_proposal_correct_plus} then shows that $v$ is the proposal of a correct process, which concludes the proof.
\end{proof}

\smallskip
\noindent \textbf{Integrity.}
Next, we prove \commplus's integrity.

\begin{lemma} [\commplus satisfies integrity]
Given $n = (3 + \epsilon)t + 1$, for any fixed constant $\epsilon > 0$, and the existence of a hash function modeled as a random oracle, \commplus (see \Cref{algorithm:reducer_plus}) satisfies integrity in the presence of a computationally bounded adversary.
\end{lemma}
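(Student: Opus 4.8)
The final statement to prove is that \commplus satisfies integrity: no correct process decides more than once. The plan is to observe that this property follows immediately from the guard at line~\ref{line:check_quasi_decisions_plus}, which is exactly the same structural mechanism used in \comm. First I would recall that a correct process $p_i$ only triggers $\mathsf{decide}(\cdot)$ at line~\ref{line:reducer_decide_plus}, and that this line is reached only from inside the conditional block opened at line~\ref{line:check_quasi_decisions_plus}, whose condition is ``$\mathit{quasi\_decisions}_i.\mathsf{size} > 0$ \emph{and} $p_i$ has not previously decided''. Hence the very first time $p_i$ executes line~\ref{line:reducer_decide_plus} it decides exactly once, and on every subsequent iteration the second conjunct of the guard (``$p_i$ has not previously decided'') is false, so $p_i$ never re-enters the decide branch.

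The proof is therefore a one-paragraph argument mirroring the corresponding lemma for \comm (the ``\comm satisfies integrity'' lemma earlier in the excerpt), which is proven in a single sentence by citing the check at line~\ref{line:check_quasi_decisions}. The only thing to be careful about is that line references point into \Cref{algorithm:reducer_plus} rather than \Cref{algorithm:reducer}: the relevant lines are line~\ref{line:check_quasi_decisions_plus} (the guard) and line~\ref{line:reducer_decide_plus} (the unique decide statement). I would phrase the argument as: since the only occurrence of $\mathsf{decide}$ executed by a correct process is at line~\ref{line:reducer_decide_plus}, and reaching it requires passing the check at line~\ref{line:check_quasi_decisions_plus} which explicitly excludes processes that have previously decided, no correct process decides more than once.

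There is essentially no obstacle here — this is the easiest of the MVBA properties, exactly analogous to the \comm case. The proof body I would write is simply:

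\begin{proof}
The only statement at which a correct process triggers $\mathsf{decide}$ is line~\ref{line:reducer_decide_plus}, which is reached solely through the conditional block guarded at line~\ref{line:check_quasi_decisions_plus}. As this guard requires that the process has not previously decided, no correct process decides more than once.
\end{proof}
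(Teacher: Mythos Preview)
Your proposal is correct and takes essentially the same approach as the paper: the paper's proof is a single sentence stating that the lemma trivially holds due to the check at line~\ref{line:check_quasi_decisions_plus}, which is exactly the mechanism you identify.
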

\begin{proof}
The lemma trivially holds due to the check at line~\ref{line:check_quasi_decisions_plus}.
\end{proof}

\smallskip
\noindent \textbf{Termination.}
We proceed to prove that \commplus satisfies termination.
As in the proof of \comm's correctness, we say that a correct process $p_i$ \emph{completes the dissemination phase} if and only if $p_i$ executes line~\ref{line:dissemination_complete_plus}.
Recall that the dissemination phase of \commplus follows the same structure as the dissemination phase of \comm, with the only difference being that, during encoding, values are considered as polynomials of degree $\epsilon t$.
For completeness, the following proposition proves that at least one correct process completes the dissemination phase.

\begin{proposition} \label{proposition:one_completes_dissemination_reducer_plus}
At least one correct process completes the dissemination phase.
\end{proposition}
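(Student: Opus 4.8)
The statement to be proven is \Cref{proposition:one_completes_dissemination_reducer_plus}: at least one correct process completes the dissemination phase (executes line~\ref{line:dissemination_complete_plus}) in \commplus. Since the dissemination phase of \commplus is identical to that of \comm (modulo the degree of the encoding polynomial, which is irrelevant for the liveness of the message-passing skeleton), this proof should mirror \Cref{lemma:one_process_completes_dispersion} almost verbatim. The approach is a proof by contradiction: assume no correct process completes the dissemination phase, and derive that some correct process must in fact complete it.

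The plan is to proceed as follows. First, I would assume for contradiction that no correct process ever executes line~\ref{line:dissemination_complete_plus}, i.e., no correct process receives $\langle \textsc{finish} \rangle$ from $n - t$ processes. Under this assumption, no correct process is ``stuck'' having completed dissemination, so every correct process continues to respond with $\langle \textsc{ack} \rangle$ messages whenever it receives a valid $\langle \textsc{init} \rangle$ message (the guard $\mathit{dissemination\_completed}_i = \mathit{false}$ at the \textsc{init}-handler remains satisfied). Since every correct process broadcasts its $\langle \textsc{init} \rangle$ messages upon proposing, and there are at least $n - t$ correct processes, every correct process eventually receives $n - t$ $\langle \textsc{ack} \rangle$ messages and therefore broadcasts $\langle \textsc{done} \rangle$. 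Consequently, every correct process eventually receives $n - t$ $\langle \textsc{done} \rangle$ messages and broadcasts $\langle \textsc{finish} \rangle$ (at the first-\textsc{finish} line, analogous to line~\ref{line:first_finish}). Finally, every correct process eventually receives $\langle \textsc{finish} \rangle$ from at least $n - t$ processes and thus executes line~\ref{line:dissemination_complete_plus}, contradicting the assumption.

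I expect no real obstacle here: the argument is a routine liveness chain through the fixed sequence of all-to-all broadcast rounds (\textsc{init} $\to$ \textsc{ack} $\to$ \textsc{done} $\to$ \textsc{finish}), relying only on eventual message delivery between correct processes and on the fact that $n - t$ correct processes always suffice to trigger each quorum threshold. The only point requiring a moment of care is confirming that the contradiction hypothesis (``no correct process completes dissemination'') is precisely what keeps the \textsc{init}-handler's guard open so that correct processes keep acknowledging; this is the same subtlety handled in the \comm proof, and it carries over unchanged. Thus the proof is essentially a pointer to \Cref{lemma:one_process_completes_dispersion} with the line references updated to those of \Cref{algorithm:reducer_plus}.
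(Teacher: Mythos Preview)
Your proposal is correct and follows essentially the same approach as the paper's proof: a contradiction argument tracing the \textsc{init}$\to$\textsc{ack}$\to$\textsc{done}$\to$\textsc{finish} chain, using that the $\mathit{dissemination\_completed}_i$ guard stays open and that the $n-t$ correct processes suffice to meet every quorum threshold.
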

\begin{proof}
By contradiction, suppose no correct process completes the dissemination phase.
Hence, no correct process stops responding with \textsc{ack} messages upon receiving \textsc{init} messages.
As there are (at least) $n - t$ correct processes, every correct process eventually broadcasts a \textsc{done} message.
Similarly, every correct process eventually receives $n - t$ \textsc{done} messages and broadcasts a \textsc{finish} message.
Thus, every correct process eventually receives a \textsc{finish} message from (at least) $n - t$ processes and completes the dissemination phase, thus contradicting the fact that no correct process completes the dissemination phase.
\end{proof}

Next, we prove that if any correct process completes the dissemination phase, every correct process completes the dissemination phase.

\begin{proposition} \label{proposition:if_one_completes_dispersion_everyone_completes_dispersion_reducer_plus}
If any correct process completes the dissemination phase, then every correct process eventually completes the dissemination phase.
\end{proposition}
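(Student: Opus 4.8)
The plan is to mirror the argument used for the analogous statement about \comm (\Cref{lemma:if_one_completes_dispersion_everyone_completes_dispersion}), adapting the quorum arithmetic to $n = (3+\epsilon)t+1$. Suppose a correct process $p_i$ completes the dissemination phase, i.e., executes line~\ref{line:dissemination_complete_plus}. By the triggering condition at line~\ref{line:receive_quorum_finish_plus}, this means $p_i$ has received a \textsc{finish} message from $n - t = (2+\epsilon)t + 1$ distinct processes. Since at most $t$ of these can be faulty, at least $(2+\epsilon)t + 1 - t = (1+\epsilon)t + 1 \geq t + 1$ of them are correct, and each such correct process has broadcast a \textsc{finish} message (at line~\ref{line:first_finish} or line~\ref{line:second_finish} of the dissemination code borrowed from \comm).

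Next I would invoke the \textsc{finish}-amplification rule. Because those $\geq t+1$ correct processes all broadcast \textsc{finish}, every correct process eventually receives $\geq t+1$ \textsc{finish} messages and, by the rule at line~\ref{line:receive_plurality_finish}, itself broadcasts a \textsc{finish} message (if it has not already). Hence every correct process broadcasts \textsc{finish}. Since there are at least $n - t = (2+\epsilon)t + 1$ correct processes, every correct process eventually receives \textsc{finish} from $n - t$ processes, the condition at line~\ref{line:receive_quorum_finish_plus} is met, and it executes line~\ref{line:dissemination_complete_plus}, completing the dissemination phase.

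The only point requiring care — and the closest this argument comes to an obstacle — is confirming that the amplification threshold $t+1$ is genuinely reached by correct senders alone, which hinges on $(1+\epsilon)t + 1 \geq t + 1$; this holds trivially for any $\epsilon > 0$ (indeed already for $\epsilon = 0$), so the step goes through exactly as in \comm. Note that no property of the random oracle or of the $\epsilon t$-degree encoding is used here: this is purely a liveness argument about the message-count thresholds of the (shared) dissemination phase, so the proof is essentially identical to that of \Cref{lemma:if_one_completes_dispersion_everyone_completes_dispersion} with $3t+1$ replaced by $(2+\epsilon)t+1$.
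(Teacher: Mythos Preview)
Your proof is correct and follows essentially the same approach as the paper: both argue that a correct process completing dissemination must have received $n-t=(2+\epsilon)t+1$ \textsc{finish} messages, of which at least $(1+\epsilon)t+1\geq t+1$ come from correct processes, triggering the amplification rule so that all correct processes broadcast \textsc{finish} and hence all receive $n-t$ such messages. The added remarks about the random oracle and the encoding degree being irrelevant are accurate but not present in the paper's terser version.
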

\begin{proof}
Let $p_i$ be any correct process that completes the dissemination phase.
This implies that $p_i$ receives $n - t = (2 + \epsilon)t + 1$ \textsc{finish} messages, out of which (at least) $(1 + \epsilon)t + 1$ messages are sent by correct processes.
Therefore, every correct process eventually receives $(1 + \epsilon)t + 1 \geq t + 1$ \textsc{finish} messages and broadcasts its \textsc{finish} message.
Given that there are (at least) $n - t$ correct processes, every correct process eventually receives $n - t$ \textsc{finish} messages and completes the dissemination phase.
\end{proof}

We now prove that all correct processes eventually complete the dissemination phase of \commplus.

\begin{proposition} \label{lemma:everyone_completes_dispersion_plus}
Every correct process eventually completes the dissemination phase.
\end{proposition}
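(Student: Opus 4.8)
The statement to prove is \Cref{lemma:everyone_completes_dispersion_plus}: every correct process eventually completes the dissemination phase of \commplus. This is the direct analogue of \Cref{lemma:everyone_completes_dispersion} for \comm, and the plan is to combine the two preceding propositions exactly as was done there.

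The proof is immediate: \Cref{proposition:one_completes_dissemination_reducer_plus} establishes that at least one correct process completes the dissemination phase, and \Cref{proposition:if_one_completes_dispersion_everyone_completes_dispersion_reducer_plus} establishes that if any correct process completes the dissemination phase then every correct process eventually does so. Chaining these two facts yields the claim. The concrete proof is a single sentence invoking both propositions by reference.

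\begin{proof}
The proposition follows directly from \cref{proposition:one_completes_dissemination_reducer_plus,proposition:if_one_completes_dispersion_everyone_completes_dispersion_reducer_plus}: the former guarantees that at least one correct process completes the dissemination phase, and the latter guarantees that this implies every correct process eventually completes the dissemination phase.
\end{proof}

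There is no real obstacle here; the content is carried entirely by the two lemmas that precede it (which themselves mirror \Cref{lemma:one_process_completes_dispersion} and \Cref{lemma:if_one_completes_dispersion_everyone_completes_dispersion} from the \comm proof, with the only change being the arithmetic substitution $n = (3+\epsilon)t+1$, $n-t = (2+\epsilon)t+1$, and the observation that $(1+\epsilon)t+1 \ge t+1$ still suffices to trigger the \textsc{finish}-amplification rule). If anything merits a second look, it is simply confirming that the quorum thresholds used in the dissemination phase of \commplus are identical to those of \comm — which they are, since \commplus reuses \comm's dissemination phase verbatim apart from the RS encoding degree, and the encoding degree plays no role in the liveness of the \textsc{init}/\textsc{ack}/\textsc{done}/\textsc{finish} message flow.
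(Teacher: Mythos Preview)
Your proof is correct and essentially identical to the paper's own proof, which simply cites \cref{proposition:one_completes_dissemination_reducer_plus,proposition:if_one_completes_dispersion_everyone_completes_dispersion_reducer_plus} and concludes. Your additional commentary about the quorum thresholds and encoding degree is accurate but not needed for the argument.
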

\begin{proof}
The proposition holds due to \cref{proposition:one_completes_dissemination_reducer_plus,proposition:if_one_completes_dispersion_everyone_completes_dispersion_reducer_plus}.
\end{proof}

Next, we prove that if all correct processes start any iteration $k$, all correct processes eventually start sub-iteration $(k, 1)$.

\begin{proposition} \label{lemma:first_sub_iteration_plus}
Let $k \in \mathbb{N}$ be any iteration such that all correct processes start iteration $k$.
Then, all correct processes eventually start sub-iteration $(k, 1)$.
\end{proposition}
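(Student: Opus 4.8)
The plan is to mirror the argument already used for \comm's \Cref{lemma:first_sub_iteration}, since the structure of the \commplus iteration around the \textsc{stored} and \textsc{suggest} steps is essentially identical to that of \comm (only the quorum thresholds differ, being expressed in terms of $n = (3+\epsilon)t+1$ rather than $n = 4t+1$). The key observation is that all the waiting conditions inside iteration $k$ prior to the trials are on message counts that are guaranteed to be met eventually whenever all correct processes participate, because there are at least $n - t$ correct processes.

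Concretely, first I would note that by hypothesis all correct processes start iteration $k$, meaning each reaches line~\ref{line:broadcast_stored_plus} and broadcasts a $\langle\textsc{stored}, k, \cdot\rangle$ message. Since there are at least $n - t = (2+\epsilon)t + 1$ correct processes, every correct process eventually receives $n - t$ \textsc{stored} messages for iteration $k$, so the wait at line~\ref{line:wait_for_stored_messages_plus} terminates for every correct process. Next, each correct process computes its candidate list (lines~\ref{line:check_stored_plus}--\ref{line:reducer_candidates_append_plus}) and broadcasts a $\langle\textsc{suggest}, k, \mathit{candidates}_i\rangle$ message at line~\ref{line:broadcast_suggest_plus}. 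Again, since at least $n - t$ correct processes do so, every correct process eventually receives $n - t = (2+\epsilon)t + 1$ \textsc{suggest} messages for iteration $k$, so the wait at line~\ref{line:wait_for_suggest_messages_plus} terminates. After the candidate-refinement loop (lines through~\ref{line:candidates_remove_plus}), each correct process reaches line~\ref{line:for_plus} with $x = 1$, which is exactly the start of sub-iteration $(k,1)$. This completes the argument.

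There is essentially no hard part here — the proof is routine and parallels \Cref{lemma:first_sub_iteration} of \comm almost verbatim, with the only care needed being to substitute the correct thresholds ($n - t = (2+\epsilon)t + 1$ in place of $3t + 1$, and $n - 3t = \epsilon t + 1$ for the candidate threshold, though the latter does not affect liveness). One minor point worth a sentence is that the candidate-refinement loop at lines through~\ref{line:candidates_remove_plus} is a purely local computation over already-received messages and therefore introduces no blocking; likewise the adoption and R\&A steps are not part of what must be shown for \emph{starting} sub-iteration $(k,1)$, so they need not be analyzed in this proposition. If desired, I would phrase the proof in three short sentences: (i) \textsc{stored} messages arrive because $\geq n-t$ correct processes send them; (ii) \textsc{suggest} messages arrive for the same reason; (iii) the loop at line~\ref{line:for_plus} is then reached, which by definition starts $(k,1)$.
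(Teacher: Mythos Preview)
Your proposal is correct and takes essentially the same approach as the paper's proof: both argue that because at least $n-t=(2+\epsilon)t+1$ correct processes broadcast \textsc{stored} and then \textsc{suggest} messages, the waits at lines~\ref{line:wait_for_stored_messages_plus} and~\ref{line:wait_for_suggest_messages_plus} eventually clear, so every correct process reaches line~\ref{line:for_plus}. Your additional remarks about local computations not blocking are fine but not needed; the paper's version is simply the terse three-sentence form you describe.
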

\begin{proof}
As all correct processes start iteration $k$ and there are at least $n - t = (2 + \epsilon)t + 1$ correct processes, all correct processes eventually receive $n - t =  (2 + \epsilon)t + 1$ \textsc{stored} messages (line~\ref{line:wait_for_stored_messages_plus}) and broadcast a \textsc{suggest} message (line~\ref{line:broadcast_suggest_plus}).
Hence, all correct processes eventually receive $n - t =  (2 + \epsilon)t + 1$ \textsc{suggest} messages (line~\ref{line:wait_for_suggest_messages_plus}) and start sub-iteration $(k, 1)$ at line~\ref{line:for_plus}.
\end{proof}

We now prove that if all correct processes start any sub-iteration $(k, x)$, then all correct processes eventually complete sub-iteration $(k, x)$.

\begin{proposition} \label{lemma:sub_iteration_completion_plus}
Let $(k \in \mathbb{N}, x \in [1, C])$ be any sub-iteration such that all correct processes start sub-iteration $(k, x)$.
Then, all correct processes eventually complete sub-iteration $(k, x)$.
\end{proposition}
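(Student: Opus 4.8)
The plan is to prove \Cref{lemma:sub_iteration_completion_plus} by tracing through a single sub-iteration $(k, x)$ step by step, showing that each blocking point (a \textbf{wait for} or a call to an external primitive) is eventually unblocked given that all correct processes have started the sub-iteration. This mirrors the structure of \Cref{lemma:sub_iteration_completion} for \comm, but is simpler because \commplus replaces the SMBA invocation with a purely local adoption procedure based on $\mathsf{Noise}()$ and the random oracle.

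First I would observe that since all correct processes start sub-iteration $(k, x)$, each correct process queries the $\mathsf{Noise}()$ common coin (line~\ref{line:random_noise_plus}); by the guarantees of the common coin, all correct processes eventually obtain the same $\kappa$-bit value $\phi$. Next, each correct process locally computes $\mathit{adopted\_digest}_i$ either by setting it to $\mathit{default}$ (line~\ref{line:acc_proposal_bot_plus}) or by hashing its committed digests, sorting, and picking the minimizer (lines~\ref{line:start_hash_finding}-\ref{line:update_acc_proposal_2_plus}); this is a finite, non-blocking local computation, so every correct process reaches line~\ref{line:reconstruct_broadcast_reducer_plus} and broadcasts a \textsc{reconstruct} message. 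Since there are at least $n - t = (2+\epsilon)t + 1$ correct processes, each correct process eventually receives $n - t$ \textsc{reconstruct} messages for sub-iteration $(k, x)$ (line~\ref{line:wait_for_reconstruct_messages_plus}) and thus unblocks. The subsequent step — computing $S_i$ and deciding whether $|S_i| \geq \epsilon t + 1$ to set $r_i$ via $\mathsf{decode}$ or to $r_i \gets v_i$ — is again purely local and non-blocking, so every correct process reaches line~\ref{line:lmba_plus} and proposes some $r_i$ to $\mathcal{MBA}[k][x]$.

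The remaining step is to invoke the termination property of $\mathcal{MBA}[k][x]$: since every correct process proposes to $\mathcal{MBA}[k][x]$, and $\mbasimple$ satisfies termination (as proven earlier in the excerpt, \Cref{section:regular_mba}), all correct processes eventually decide from $\mathcal{MBA}[k][x]$. Upon this decision they execute line~\ref{line:quasi_decide_plus} (conditionally appending to $\mathit{quasi\_decisions}_i$) and thereby complete sub-iteration $(k, x)$, which concludes the proof.

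I do not anticipate a genuine obstacle here — the lemma is a straightforward liveness-chaining argument, and the only subtlety worth flagging is that the adoption procedure (lines~\ref{line:acc_proposal_bot_plus}-\ref{line:update_acc_proposal_2_plus}) is entirely local and cannot stall: it involves only finitely many random-oracle queries on the (finitely many) committed digests plus a sort, so it always terminates regardless of adversarial behavior. The one external assumption I rely on is that $\mathsf{Noise}()$, like the other common coins, is guaranteed to return to all correct processes once at least $t+1$ processes query it; since all $n - t \geq t + 1$ correct processes do query it in this sub-iteration, this precondition is met. Hence I would write the proof as: "all correct processes obtain $\phi$ from $\mathsf{Noise}()$; each then locally computes $\mathit{adopted\_digest}_i$ and broadcasts a \textsc{reconstruct} message (line~\ref{line:reconstruct_broadcast_reducer_plus}); since there are at least $n - t$ correct processes, every correct process receives $n - t$ such messages (line~\ref{line:wait_for_reconstruct_messages_plus}), locally computes $r_i$, and proposes to $\mathcal{MBA}[k][x]$ (line~\ref{line:lmba_plus}); by the termination property of $\mathcal{MBA}[k][x]$, all correct processes decide from it and complete sub-iteration $(k, x)$."
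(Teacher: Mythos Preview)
Your proposal is correct and follows essentially the same approach as the paper's proof: both argue that all correct processes broadcast a \textsc{reconstruct} message, hence each receives $n-t$ such messages, proposes to $\mathcal{MBA}[k][x]$, and completes by the termination property of $\mathcal{MBA}[k][x]$. Your version is more explicit about the $\mathsf{Noise}()$ call and the locality of the adoption procedure, but the paper's proof simply elides these non-blocking steps; the substance is the same.
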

\begin{proof} 
Given that all correct processes start sub-iteration $(k, x)$, each correct process broadcasts a \textsc{reconstruct} message (line~\ref{line:reconstruct_broadcast_reducer_plus}), thus ensuring that every correct process eventually receives $n - t = (2 + \epsilon)t + 1$ \textsc{reconstruct} messages (line~\ref{line:wait_for_reconstruct_messages_plus}).
Hence, all correct processes eventually propose to $\mathcal{MBA}[k][x]$ (line~\ref{line:lmba_plus}).
The termination property of $\mathcal{MBA}[k][x]$ then ensures that all correct processes complete sub-iteration $(k, x)$.
\end{proof}

The following proposition proves that every sub-iteration is eventually started and completed by all correct processes.

\begin{proposition} \label{lemma:every_sub_iteration_plus}
Every sub-iteration is eventually started and completed by all correct processes.
\end{proposition}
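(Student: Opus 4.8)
The plan is to prove Proposition~\ref{lemma:every_sub_iteration_plus} by a routine induction over sub-iterations, in direct analogy with \Cref{lemma:every_sub_iteration} for \comm, chaining together the three facts already in hand: \Cref{lemma:everyone_completes_dispersion_plus} (every correct process completes the dissemination phase), \Cref{lemma:first_sub_iteration_plus} (if all correct processes start iteration $k$ then all start sub-iteration $(k,1)$), and \Cref{lemma:sub_iteration_completion_plus} (if all correct processes start sub-iteration $(k,x)$ then all complete it). I would first dispatch the base case: by \Cref{lemma:everyone_completes_dispersion_plus} every correct process eventually reaches line~\ref{line:dissemination_complete_plus} and thus enters the outer loop at line~\ref{line:iteration_start_plus} with $k = 1$, so by \Cref{lemma:first_sub_iteration_plus} all correct processes start sub-iteration $(1,1)$.

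For the inductive step I would assume all correct processes start some sub-iteration $(k,x)$, $k \in \mathbb{N}$, $x \in [1,C]$. By \Cref{lemma:sub_iteration_completion_plus} all correct processes complete $(k,x)$. If $x < C$, completing $(k,x)$ simply means each correct process re-enters the inner loop at line~\ref{line:for_plus} with index $x+1$, so all start $(k,x+1)$. If $x = C$, each correct process leaves the inner loop, passes through the quasi-decision block (lines~\ref{line:check_quasi_decisions_plus}--\ref{line:reducer_decide_plus}), and re-enters the outer loop with iteration $k+1$; \Cref{lemma:first_sub_iteration_plus} then gives that all correct processes start $(k+1,1)$. Chaining these implications over all $(k,x)$ shows that every sub-iteration is started by all correct processes, and a final appeal to \Cref{lemma:sub_iteration_completion_plus} shows that each is also completed.

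The only delicate point — the main obstacle — is to certify that the quasi-decision block is non-blocking, i.e., that the $\mathsf{Index}()$ invocation at line~\ref{line:random_index_plus} (taken only when $\mathit{quasi\_decisions}_i$ is non-empty and $p_i$ has not previously decided) always returns. I would argue as follows. By the agreement property of the MBA instances, all correct processes that complete sub-iteration $(k,x)$ obtain the same decision, so in each sub-iteration either all correct processes append to $\mathit{quasi\_decisions}_i$ or none do; since $\mathit{quasi\_decisions}_i$ starts empty (line~\ref{line:reducer_plus_last}) and is never reset, at line~\ref{line:check_quasi_decisions_plus} of any iteration $k$ all correct processes hold $\mathit{quasi\_decisions}_i$ lists of the same length (using the inductive hypothesis that all of them reached and completed the earlier sub-iterations). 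Let $k^\ast$ be the first iteration in which this common length is positive. For $k < k^\ast$ the check at line~\ref{line:check_quasi_decisions_plus} fails for every correct process, so no one invokes $\mathsf{Index}()$ and the outer loop advances. In iteration $k^\ast$ no correct process has decided before (a decision requires a non-empty list), so all $\ge n - t = (2+\epsilon)t + 1 \ge t + 1$ correct processes invoke $\mathsf{Index}()$ concurrently, the common coin therefore reveals, and all of them proceed. For $k > k^\ast$ every correct process has already decided, so the check fails again and $\mathsf{Index}()$ is not invoked. In all cases the quasi-decision block terminates, which closes the induction and establishes the proposition. Everything else is bookkeeping over the preceding propositions.
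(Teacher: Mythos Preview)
Your proof is correct and follows essentially the same inductive chaining of \Cref{lemma:everyone_completes_dispersion_plus}, \Cref{lemma:first_sub_iteration_plus}, and \Cref{lemma:sub_iteration_completion_plus} as the paper's own (very brief) proof. Your additional argument that the $\mathsf{Index}()$ invocation at line~\ref{line:random_index_plus} cannot stall the transition from iteration $k$ to iteration $k+1$ is a point the paper leaves implicit; your treatment of this detail is strictly more careful than the paper's.
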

\begin{proof}
By \Cref{lemma:everyone_completes_dispersion_plus}, all correct processes start iteration $1$.
Therefore, \Cref{lemma:first_sub_iteration_plus} proves that all correct processes start sub-iteration $(1, 1)$.
By inductively applying \cref{lemma:sub_iteration_completion_plus,lemma:first_sub_iteration_plus}, we prove that every sub-iteration is eventually started and completed by all correct processes.
\end{proof}

To not pollute the presentation, we might not explicitly rely on \Cref{lemma:every_sub_iteration_plus} in the rest of the proof.
As in the proof of \comm's correctness, we say that a correct process $p_i$ \emph{suggests} a digest $z$ in an iteration $k \in \mathbb{N}$ if and only if $p_i$ broadcasts a \textsc{suggest} message with digest $z$ in iteration $k$ (line~\ref{line:broadcast_suggest_plus}).
Let $\mathsf{suggested}_i(k)$ denote the set of digests suggested by any correct process $p_i$ in any iteration $k \in \mathbb{N}$.
The following proposition proves that each correct process suggests at most $\lceil 1 + \frac{2}{\epsilon} \rceil$ digests in any iteration.

\begin{proposition}\label{lemma:proposed_bound_plus}
For every correct process $p_i$ and every iteration $k$, the following holds: 
\[|\mathsf{suggested}_i(k)| \leq \lceil 1 + \frac{2}{\epsilon} \rceil.\]
\end{proposition}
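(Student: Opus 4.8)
The plan is to proceed exactly as in the proof of Proposition~\ref{lemma:proposed_bound} (the \comm analogue), with the old thresholds $t+1$ and $3t+1$ replaced by $n-3t = \epsilon t + 1$ and $n-t = (2+\epsilon)t+1$. First I would observe that $\mathsf{suggested}_i(k)$ is precisely the set of digests that $p_i$ appends to $\mathit{candidates}_i$ at line~\ref{line:reducer_candidates_append_plus} before broadcasting its \textsc{suggest} message at line~\ref{line:broadcast_suggest_plus}: by the loop guard at line~\ref{line:check_stored_plus}, any such digest $z$ must be included in at least $n-3t = \epsilon t + 1$ of the \textsc{stored} messages that $p_i$ collects at line~\ref{line:wait_for_stored_messages_plus}, and $p_i$ collects exactly $n-t = (2+\epsilon)t+1$ such messages (from distinct processes) before broadcasting \textsc{suggest}.

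Next I would run the straightforward double-counting argument. Each \textsc{stored} message carries a single digest (namely $\mathit{symbols}_i[\mathsf{leader}(k)].\mathsf{digest()}$), so summing the per-digest support counts over all $z \in \mathsf{suggested}_i(k)$ is at most the total number $n-t$ of received \textsc{stored} messages. Since each such $z$ is supported by at least $\epsilon t + 1$ of them, this yields
\[
|\mathsf{suggested}_i(k)| \cdot (\epsilon t + 1) \;\le\; (2+\epsilon)t+1 ,
\qquad\text{i.e.}\qquad
|\mathsf{suggested}_i(k)| \;\le\; \frac{(2+\epsilon)t+1}{\epsilon t + 1} .
\]

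The only step needing a little care is the closing arithmetic $\dfrac{(2+\epsilon)t+1}{\epsilon t+1} \le \big\lceil \tfrac{3}{\epsilon}\big\rceil$. I would rewrite the left-hand side as $1 + \dfrac{2t}{\epsilon t + 1} < 1 + \dfrac{2}{\epsilon}$; for $\epsilon \le 1$ this is already at most $\tfrac{3}{\epsilon} \le \big\lceil\tfrac{3}{\epsilon}\big\rceil$, while for $\epsilon > 1$ I would use that $|\mathsf{suggested}_i(k)|$ is an integer strictly below $1+\tfrac{2}{\epsilon}$ and that $1 + \tfrac{2}{\epsilon} < \big\lceil\tfrac{3}{\epsilon}\big\rceil + 1$ (since $\tfrac{2}{\epsilon} < \tfrac{3}{\epsilon} \le \big\lceil\tfrac{3}{\epsilon}\big\rceil$), hence $|\mathsf{suggested}_i(k)| \le \big\lceil\tfrac{3}{\epsilon}\big\rceil$. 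This mild case analysis — making the bound go through for all $\epsilon > 0$ rather than just $\epsilon < 1$ — is the main (and quite minor) obstacle; the rest is a verbatim adaptation of the \comm proof.
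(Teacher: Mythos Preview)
Your proof is correct and follows essentially the same approach as the paper: the counting argument yielding $|\mathsf{suggested}_i(k)| \le \frac{(2+\epsilon)t+1}{\epsilon t+1}$ is identical, and the closing arithmetic differs only cosmetically (the paper bounds the fraction by $\frac{3}{\epsilon}+1$ via $t\ge 1$, while you bound it by $1+\frac{2}{\epsilon}$). Your case split on $\epsilon\le 1$ versus $\epsilon>1$ is unnecessary, since your second case already works for all $\epsilon>0$, but this does not affect correctness.
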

\begin{proof}
For every digest $z \in \mathsf{suggested}_i(k)$, process $p_i$ receives (at least) $n - 3t = \epsilon t + 1$ \textsc{stored} messages in iteration $k$ (line~\ref{line:check_stored_plus}).
As $p_i$ receives $n - t = (2 + \epsilon) t + 1$ \textsc{stored} messages (line~\ref{line:wait_for_stored_messages_plus}) before broadcasting its \textsc{suggest} message, there can be at most $\frac{(2 + \epsilon)t + 1}{\epsilon t + 1}$ suggested digests.
We now prove that
\[
\frac{(2 + \epsilon)t + 1}{\epsilon t + 1} \leq 1 + \frac{2}{\epsilon}.
\]
Indeed, we have
\begin{align*}
(2 + \epsilon)t + 1 
&\leq \Bigl(1 + \frac{2}{\epsilon}\Bigr)(\epsilon t + 1) \\
&= (\epsilon t + 1) + \frac{2}{\epsilon}(\epsilon t + 1) \\
&= \epsilon t + 1 + 2t + \frac{2}{\epsilon} \\
&= (2 + \epsilon)t + 1 + \frac{2}{\epsilon}.
\end{align*}
Clearly, we have
\[
(2 + \epsilon)t + 1 \leq (2 + \epsilon)t + 1 + \frac{2}{\epsilon},
\]
which concludes the proof.
\end{proof}
We say that a correct process \emph{commits} a digest $z$ in an iteration $k \in \mathbb{N}$ if and only if $z \in \mathit{candidates}_i$ when process $p_i$ reaches line~\ref{line:for_plus} in iteration $k$.
Let $\mathsf{committed}_i(k)$ denote the set of digests committed by correct process $p_i$ in iteration $k \in \mathbb{N}$.
We now prove that, for every correct process $p_i$ and every iteration $k \in \mathbb{N}$, $\mathsf{committed}_i(k) \subseteq \mathsf{suggested}_i(k)$.

\begin{proposition}\label{lemma:committed_precommited_plus}
For every correct process $p_i$ and every iteration $k \in \mathbb{N}$, the following holds: (1) $\mathsf{committed}_i(k) \subseteq \mathsf{suggested}_i(k)$, and (2) $|\mathsf{committed}_i(k)| \leq \lceil 1 + \frac{2}{\epsilon} \rceil$.
\end{proposition}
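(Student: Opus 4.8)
The plan is to prove both parts of \Cref{lemma:committed_precommited_plus} directly from the pseudocode of \commplus (\Cref{algorithm:reducer_plus}), mirroring the structure of the analogous arguments for \comm (in particular \Cref{lemma:precommitted_proposed,lemma:proposed_bound}).

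First I would establish part (1), namely $\mathsf{committed}_i(k) \subseteq \mathsf{suggested}_i(k)$. Fix a correct process $p_i$ and an iteration $k$. By definition, a digest $z$ is committed by $p_i$ in iteration $k$ if and only if $z \in \mathit{candidates}_i$ when $p_i$ reaches line~\ref{line:for_plus}. The only place a digest is ever appended to $\mathit{candidates}_i$ during iteration $k$ is line~\ref{line:reducer_candidates_append_plus}, which populates $\mathit{candidates}_i$ exactly with those digests received in $n-3t = \epsilon t + 1$ \textsc{stored} messages; this happens strictly before $p_i$ broadcasts its \textsc{suggest} message at line~\ref{line:broadcast_suggest_plus}. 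Since $p_i$ suggests in iteration $k$ precisely the contents of $\mathit{candidates}_i$ at line~\ref{line:broadcast_suggest_plus}, every digest that is in $\mathit{candidates}_i$ at line~\ref{line:for_plus} was in $\mathit{candidates}_i$ at line~\ref{line:broadcast_suggest_plus} (the suggestion step at lines~\ref{line:rule_precommit_plus}-\ref{line:candidates_remove_plus} only \emph{removes} elements, never adds them, and unlike \comm there is no fallback to $\mathit{default}$ here — \commplus handles the empty case only later at line~\ref{line:acc_proposal_bot_plus}). Hence $\mathsf{committed}_i(k) \subseteq \mathsf{suggested}_i(k)$, which also immediately shows $|\mathsf{committed}_i(k)| \le |\mathsf{suggested}_i(k)|$.

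Part (2) then follows by combining part (1) with \Cref{lemma:proposed_bound_plus}: $|\mathsf{committed}_i(k)| \le |\mathsf{suggested}_i(k)| \le \lceil \tfrac{3}{\epsilon} \rceil$. I would write this in one or two lines. (One should double-check the strictness bookkeeping: \Cref{lemma:proposed_bound_plus} actually derives $|\mathsf{suggested}_i(k)| < \lceil 3/\epsilon \rceil + 1$, i.e.\ $|\mathsf{suggested}_i(k)| \le \lceil 3/\epsilon \rceil$ since the quantity is an integer, so the $\le$ form is exactly what we need.)

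I do not expect a serious obstacle here; this is a bookkeeping lemma. The one subtlety to be careful about — and the ``hard part'' in the sense of getting the statement exactly right — is the interaction with the $\mathit{default}$ digest: in \comm, \Cref{lemma:precommitted_proposed} has to allow $\mathit{default}$ as an exception because line~\ref{line:reducer_set_default} injects it, whereas in \commplus no such injection happens during the candidate-determination phase (the default is only adopted transiently inside a trial at line~\ref{line:acc_proposal_bot_plus} and is never placed into $\mathit{candidates}_i$), so the clean containment $\mathsf{committed}_i(k) \subseteq \mathsf{suggested}_i(k)$ holds without exception. I would make sure the proof explicitly notes that lines~\ref{line:rule_precommit_plus}-\ref{line:candidates_remove_plus} are purely subtractive, since that is the crux of the containment.
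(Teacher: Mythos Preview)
Your proposal is correct and follows essentially the same approach as the paper's proof: part (1) holds because after the \textsc{suggest} broadcast the $\mathit{candidates}_i$ list is only pruned (line~\ref{line:candidates_remove_plus}), and part (2) follows by combining (1) with \Cref{lemma:proposed_bound_plus}. Your added remark contrasting with \comm's $\mathit{default}$ injection is accurate but not needed for the argument itself.
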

\begin{proof}
We have that $\mathsf{committed}_i(k) \subseteq \mathsf{suggested}_i(k)$ directly from the fact that correct process $p_i$ only removes digests from its $\mathit{candidates}_i$ list after broadcasting its \textsc{suggest} message in iteration $k$ (line~\ref{line:candidates_remove_plus}).
Moreover, as $|\mathsf{suggested}_i(k)| \leq \lceil 1 + \frac{2}{\epsilon} \rceil$ (by \Cref{lemma:proposed_bound_plus}), $|\mathsf{committed}_i(k)| \leq \lceil 1 + \frac{2}{\epsilon} \rceil$.
\end{proof}

To prove \commplus's termination, we rely on the notion of good iterations.
Recall that, by \Cref{definition:good_iteration}, an iteration $k \in \mathbb{N}$ is said to be good if and only if $\mathsf{leader}(k) \in \dfirst$.
Moreover, recall that, for every good iteration $k$, (1) $v^{\star}(k)$ denotes the valid proposal of $\mathsf{leader}(k)$, and (2) $z^{\star}(k)$ denotes the digest of $v^{\star}(k)$.
The following proposition proves that every correct process suggests $z^{\star}(k)$ in any good iteration $k \in \mathbb{N}$.

\begin{proposition}\label{lemma:good_iteration_all_correct_suggest_plus}
Let $k \in \mathbb{N}$ be any good iteration.
Then, every correct process suggests $z^{\star}(k)$ in iteration $k$.
\end{proposition}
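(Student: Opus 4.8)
The plan is to mirror the argument of \Cref{lemma:good_iteration_all_correct_suggest} for \comm, adjusting the quorum arithmetic to $n = (3 + \epsilon)t + 1$. First I would unwind the definition of a good iteration: since $k$ is good, $\mathsf{leader}(k) \in \dfirst$, so $\first$ received a \textsc{done} message from $\mathsf{leader}(k)$ while $\mathsf{leader}(k)$ was still so-far-uncorrupted. Consequently $\mathsf{leader}(k)$ had, at that point, behaved according to the protocol during the dissemination phase: it encoded its valid proposal $v^{\star}(k)$ (treated as a polynomial of degree $\epsilon t$), computed the digest $z^{\star}(k)$, disseminated the corresponding \textsc{init} messages, and collected $n - t = (2 + \epsilon)t + 1$ \textsc{ack}s. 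Of those $n - t$ acknowledging processes, at most $t$ can ever be corrupted, so at least $(n - t) - t = (1 + \epsilon)t + 1$ of them are correct; each such correct process stored $[m, z^{\star}(k), w]$ for $\mathsf{leader}(k)$ upon receiving the \textsc{init} message and, since a correct process processes only one \textsc{init} per sender and never overwrites, still has $z^{\star}(k)$ recorded for $\mathsf{leader}(k)$ when it reaches iteration $k$.

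Next I would observe that each of these $(1 + \epsilon)t + 1$ correct processes, upon reaching line~\ref{line:broadcast_stored_plus} in iteration $k$, broadcasts a \textsc{stored} message carrying exactly $z^{\star}(k)$. Then I would apply quorum intersection: any correct process $p_i$ waits for $n - t = (2 + \epsilon)t + 1$ \textsc{stored} messages for iteration $k$ (line~\ref{line:wait_for_stored_messages_plus}), and whatever set of $n - t$ senders it hears from, the overlap with the $(1 + \epsilon)t + 1$ correct holders of $z^{\star}(k)$ has size at least
\[
    \big((1 + \epsilon)t + 1\big) + \big((2 + \epsilon)t + 1\big) - \big((3 + \epsilon)t + 1\big) = \epsilon t + 1 .
\]
Hence $p_i$ receives $z^{\star}(k)$ in at least $\epsilon t + 1$ \textsc{stored} messages, which is precisely the candidate threshold $n - 3t = \epsilon t + 1$ checked at line~\ref{line:check_stored_plus}. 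Therefore $p_i$ appends $z^{\star}(k)$ to $\mathit{candidates}_i$ (line~\ref{line:reducer_candidates_append_plus}) and, when it broadcasts $\langle \textsc{suggest}, k, \mathit{candidates}_i \rangle$ at line~\ref{line:broadcast_suggest_plus}, includes $z^{\star}(k)$; that is, $p_i$ suggests $z^{\star}(k)$ in iteration $k$. That every correct process actually reaches and passes line~\ref{line:wait_for_stored_messages_plus} in iteration $k$ (so the statement is non-vacuous) follows from \Cref{lemma:every_sub_iteration_plus}.

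The routine part is the arithmetic above. The one place to be careful — more a matter of bookkeeping than of real difficulty — is the distinction between \emph{correct} (never corrupted) and \emph{so-far-uncorrupted} processes in the presence of after-the-fact message removal: I must argue that the $(1 + \epsilon)t + 1$ processes whose stored symbols I rely on are never-corrupted, so that their \textsc{ack}s and their later \textsc{stored} messages genuinely reflect $z^{\star}(k)$ and cannot be retracted, and that $v^{\star}(k)$ and $z^{\star}(k)$ are well-defined from $\mathsf{leader}(k)$'s honest behavior during dissemination even though $\mathsf{leader}(k)$ itself may be corrupted afterwards. This is exactly the reasoning already used in the dissemination-phase discussion of \Cref{section:technical_overview} and in the proof of \Cref{lemma:good_iteration_all_correct_suggest}, so I would reuse it here.
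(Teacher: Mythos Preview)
Your proposal is correct and follows essentially the same approach as the paper: use $\mathsf{leader}(k)\in\dfirst$ to conclude that at least $(1+\epsilon)t+1$ correct processes stored $z^{\star}(k)$, then apply quorum intersection with the $n-t$ \textsc{stored} messages to obtain at least $\epsilon t+1$ copies of $z^{\star}(k)$, meeting the candidate threshold. Your write-up is in fact more careful than the paper's about the adaptive-adversary bookkeeping (so-far-uncorrupted vs.\ correct, after-the-fact removal), but the underlying argument is identical.
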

\begin{proof}
As $\mathsf{leader}(k) \in \dfirst$, $\mathsf{leader}(k)$ stores $z^{\star}(k)$ at $(2 + \epsilon)t + 1$ processes in the dissemination phase, out of which at most $t$ can be faulty.
Thus, $\mathsf{leader}(k)$ stores $z^{\star}(k)$ at $\geq (1 + \epsilon) t + 1$ correct processes.
This implies that each correct process receives $z^{\star}(k)$ in \textsc{stored} messages from at least $\epsilon t + 1$ processes, which further means that each correct process broadcasts a \textsc{suggest} message with $z^{\star}(k)$ and thus suggests $z^{\star}(k)$ in iteration $k$.
\end{proof}

Next, we show that every correct process commits $z^{\star}(k)$ in any good iteration $k$.

\begin{proposition}\label{lemma:good_iteration_all_correct_commit_plus}
Let $k \in \mathbb{N}$ be any good iteration.
Then, every correct process commits $z^{\star}(k)$ in iteration $k$.
\end{proposition}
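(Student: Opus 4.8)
The plan is to mirror the structure of the corresponding proposition for \comm, namely \Cref{lemma:good_iteration_all_correct_commit}. We already know from \Cref{lemma:good_iteration_all_correct_suggest_plus} that every correct process suggests $z^{\star}(k)$ in the good iteration $k$; that is, when a correct process $p_i$ broadcasts its \textsc{suggest} message at line~\ref{line:broadcast_suggest_plus}, the digest $z^{\star}(k)$ belongs to its $\mathit{candidates}_i$ list. The remaining content of the proposition is to show that $z^{\star}(k)$ is never removed from $\mathit{candidates}_i$ at line~\ref{line:candidates_remove_plus}, so that it is still present when $p_i$ reaches line~\ref{line:for_plus}.

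The key quantitative step is a counting argument at the \textsc{suggest} threshold. Since $\mathsf{leader}(k) \in \dfirst$, the leader stored $z^{\star}(k)$ at $n - t = (2+\epsilon)t + 1$ processes during dissemination, of which at least $(1+\epsilon)t + 1$ are correct; consequently, as argued in \Cref{lemma:good_iteration_all_correct_suggest_plus}, every one of the (at least) $n - t$ correct processes suggests $z^{\star}(k)$. Now fix a correct process $p_i$: it waits for $n - t = (2+\epsilon)t + 1$ \textsc{suggest} messages at line~\ref{line:wait_for_suggest_messages_plus}. Among those $n - t$ senders, at most $t$ are faulty, so at least $(n - t) - t = (1+\epsilon)t + 1 = n - 2t$ of them are correct, and every correct sender included $z^{\star}(k)$ in its \textsc{suggest} message. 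Hence $p_i$ receives $z^{\star}(k)$ in at least $n - 2t = (1+\epsilon)t + 1$ received \textsc{suggest} messages, which is exactly the threshold checked at line~\ref{line:rule_precommit_plus}. Therefore the condition ``$z^{\star}(k)$ is not included in $n - 2t$ received \textsc{suggest} messages'' fails, and $p_i$ does not execute line~\ref{line:candidates_remove_plus} for $z^{\star}(k)$. Thus $z^{\star}(k) \in \mathit{candidates}_i$ when $p_i$ reaches line~\ref{line:for_plus}, i.e., $p_i$ commits $z^{\star}(k)$ in iteration $k$. Since $p_i$ was an arbitrary correct process, this establishes the claim.

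I do not expect any genuine obstacle here; the only care needed is the bookkeeping that the quorum-intersection bound $(n-t) - t = n - 2t$ matches the hard-coded threshold $n - 2t = (1+\epsilon)t + 1$ at line~\ref{line:rule_precommit_plus}, which it does by the choice $n = (3+\epsilon)t + 1$. One should also note, as in the \comm proof, that this argument uses only the lower bound on the number of correct senders, not the exact fault count, so it is robust regardless of how many processes are actually faulty. This proposition, together with the candidate-count bound from \Cref{lemma:committed_precommited_plus} and the analogous statement about suggestions by $\Omega(\epsilon t)$ correct processes, will then feed into the subsequent bound $|\mathsf{committed}(k)| \leq C$ and ultimately into the termination argument for \commplus.
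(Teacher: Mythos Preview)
Your proof is correct and takes essentially the same approach as the paper: invoke \Cref{lemma:good_iteration_all_correct_suggest_plus} to conclude all correct processes suggest $z^{\star}(k)$, then use the quorum-intersection count $(n-t)-t = n-2t = (1+\epsilon)t+1$ to show each correct process receives enough \textsc{suggest} messages for $z^{\star}(k)$ to survive the threshold at line~\ref{line:rule_precommit_plus}. Your version is simply more explicit about the counting than the paper's terse two-sentence proof.
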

\begin{proof}
By \Cref{lemma:good_iteration_all_correct_suggest_plus}, all correct processes suggest $z^{\star}(k)$ in iteration $k$.
Therefore, each correct process receives a \textsc{suggest} message with $z^{\star}(k)$ from at least $n - 2t = (1 + \epsilon)t + 1$ processes, which means that each correct process commits $z^{\star}(k)$.
\end{proof}

For every iteration $k \in \mathbb{N}$, we define the $\mathsf{commited}(k)$ set:
\begin{equation*}
    \mathsf{committed}(k) = \{ z \,|\, \text{$z$ is committed by a correct process in iteration $k$} \}.
\end{equation*}
The following proposition proves that $|\mathsf{committed}(k)| \leq C$ in any good iteration $k$.
Recall that $C = \lceil 1 + 2 / \epsilon \rceil^2$.

\begin{proposition}\label{lemma:good_iteration_all_precommitted_bound_commit_plus}
For every good iteration $k \in \mathbb{N}$, $|\mathsf{committed}(k)| \leq C$.
\end{proposition}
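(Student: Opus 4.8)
The goal is to bound $|\mathsf{committed}(k)| \leq C = \lceil \frac{12}{\epsilon^2} \rceil + \lceil \frac{7}{\epsilon} \rceil$ in a good iteration $k$. The plan is to combine two ingredients, mirroring the corresponding argument in \comm (\Cref{lemma:iteration_all_committed_bound}): (i)~each committed digest that is \emph{not} the default digest is suggested by ``many'' correct processes, and (ii)~each correct process suggests at most $\lceil \frac{3}{\epsilon} \rceil$ digests (already established in \Cref{lemma:proposed_bound_plus}). A simple counting (double-counting ``correct suggestions'') then caps the number of committed non-default digests, and adding the single default digest gives the constant $C$.

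\textbf{Key steps.} First I would establish the analog of \Cref{lemma:good_iteration_committed_suggested}: in a good iteration $k$, if a correct process commits a digest $z$, then at least $(n-t)-t-t = \epsilon t + 1$ correct processes suggest $z$ in iteration $k$. The reasoning: a committed digest must survive the check at line~\ref{line:rule_precommit_plus}, so it appears in at least $n-2t = (1+\epsilon)t+1$ received \textsc{suggest} messages, of which at least $(1+\epsilon)t+1 - t = \epsilon t + 1$ come from correct processes. (This also covers the default digest trivially in the sense that it need not be counted; unlike \comm, \commplus does not appear to inject a forced default commit outside the candidate mechanism — if it does via line~\ref{line:acc_proposal_bot_plus}, note that adoption of \emph{default} happens at trial time, not commit time, so $\mathsf{committed}(k)$ still only contains digests that survived the suggest step. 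I would double-check this against the pseudocode and, if needed, add ``$+1$ for the default digest'' at the end, which is already baked into the slack in $C$.) Second, I invoke \Cref{lemma:good_iteration_all_correct_suggest_plus}: every correct process suggests $z^{\star}(k)$. Third, I set up the count: let $f \leq t$ be the actual number of faulty processes; there are at least $n - f$ correct processes, each contributing at most $\lceil \frac{3}{\epsilon} \rceil$ suggestions, so the total number of ``correct suggestions'' is at most $(n-f)\lceil \frac{3}{\epsilon}\rceil$. Every correct process uses one of its suggestion slots on $z^\star(k)$, so the number of correct suggestions available for non-$z^\star(k)$ digests is at most $(n-f)\lceil\frac{3}{\epsilon}\rceil - (n-f) = (n-f)(\lceil\frac{3}{\epsilon}\rceil - 1)$. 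Since each committed non-$z^\star(k)$ digest consumes $\geq \epsilon t + 1$ such suggestions, the number $X$ of committed non-$z^\star(k)$ digests satisfies $X \leq \frac{(n-f)(\lceil 3/\epsilon\rceil - 1)}{\epsilon t + 1}$. Fourth, I bound this using $n = (3+\epsilon)t+1$, $f \leq t$ (so $n - f \leq (3+\epsilon)t+1 - 0$; actually the worst case for the bound is $f = 0$ giving $n-f = (3+\epsilon)t+1$, so I use $n - f \le (3+\epsilon)t+1$), and $t \geq 1$, and show $\frac{((3+\epsilon)t+1)\lceil 3/\epsilon\rceil}{\epsilon t + 1} \leq C - 1$, hence $|\mathsf{committed}(k)| \leq X + 1 \leq C$. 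The arithmetic here is routine: $\frac{(3+\epsilon)t+1}{\epsilon t+1} < \frac{(3+\epsilon)t + t}{\epsilon t} = \frac{4+\epsilon}{\epsilon} = \frac{4}{\epsilon}+1 \le \lceil\frac4\epsilon\rceil+1$, and multiplying by $\lceil\frac3\epsilon\rceil \le \frac3\epsilon + 1$ gives something comfortably below $\lceil\frac{12}{\epsilon^2}\rceil + \lceil\frac7\epsilon\rceil$; I would verify the precise constants line up, loosening intermediate estimates as needed since $C$ has generous slack.

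\textbf{Main obstacle.} The delicate point is getting the suggestion-slot accounting exactly right — specifically, whether to count ``correct processes'' as $n - f$ with $f$ the true fault count or as the worst case $n - t$, and whether every correct process truly ``wastes'' one of its $\lceil 3/\epsilon\rceil$ slots on $z^\star(k)$ (it does, by \Cref{lemma:good_iteration_all_correct_suggest_plus}, but one must be careful that a process with fewer than $\lceil 3/\epsilon\rceil$ suggestions still only helps the bound). A secondary subtlety is pinning down whether the default digest can ever be in $\mathsf{committed}(k)$ in a good iteration; by \Cref{lemma:good_iteration_all_correct_commit_plus} every correct process commits $z^\star(k) \neq \mathit{default}$, and the default is only \emph{adopted} (line~\ref{line:acc_proposal_bot_plus}) when $\mathit{candidates}_i$ is empty — which cannot happen in a good iteration — so in fact $\mathit{default} \notin \mathsf{committed}(k)$ and the ``$+1$'' is pure slack. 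I would phrase the final inequality chain to absorb all these cases into the coarse constant $C$ rather than optimizing it.
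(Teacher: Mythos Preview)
Your proposal is correct and follows essentially the same double-counting argument as the paper: each committed digest is suggested by at least $\epsilon t + 1$ correct processes (from the threshold at line~\ref{line:rule_precommit_plus}), each correct process suggests at most $\lceil 3/\epsilon\rceil$ digests (\Cref{lemma:proposed_bound_plus}), and the arithmetic $\frac{((3+\epsilon)t+1)\lceil 3/\epsilon\rceil}{\epsilon t+1} < \frac{(4+\epsilon)\lceil 3/\epsilon\rceil}{\epsilon} \leq \frac{12}{\epsilon^2}+\frac{7}{\epsilon}+1$ yields $|\mathsf{committed}(k)| \leq C$. The paper's proof is marginally simpler in that it does not separate $z^{\star}(k)$ from the other digests (your third step) and does not discuss the default digest at all---as you correctly observe, in \commplus the default digest is only \emph{adopted}, never \emph{committed}, so it cannot enter $\mathsf{committed}(k)$ and the goodness hypothesis is in fact not used in the paper's argument either.
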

\begin{proof}
For every digest $z \in \mathsf{committed}(k)$, at least $(1 + \epsilon)t + 1 - f$ correct processes suggest $z$ in iteration $k$, where $f$ denotes the number of faulty processes.
Moreover, by \Cref{lemma:proposed_bound_plus}, each correct process suggests at most $\lceil 1 + \frac{2}{\epsilon} \rceil$ digests.
Hence, we consider
\[
E(x) = \frac{x \cdot \lceil 1 + \frac{2}{\epsilon} \rceil}{(1 + \epsilon)t + 1 - (n - x)},
\]
where $x$ denotes the number of correct processes and $x \in [(2 + \epsilon)t + 1, n]$.
Simplifying the denominator, we get
\[
(1+\epsilon)t + 1 - (n-x) = (1+\epsilon)t + 1 - ((3+\epsilon)t + 1 - x) = x - 2t,
\]
so that
\[
E(x) = \lceil 1 + 2/\epsilon \rceil \cdot \frac{x}{x-2t}.
\]
Taking the derivative with respect to $x$, we have
\[
E'(x) = \lceil 1 + 2/\epsilon \rceil \cdot \frac{d}{dx}\left(\frac{x}{x-2t}\right)
= \lceil 1 + 2/\epsilon \rceil \cdot \frac{-2t}{(x-2t)^2} < 0 \quad \text{for all relevant } x.
\]
Thus, $E(x)$ is strictly decreasing over $x \in [(2+\epsilon)t + 1, n]$, and its maximum occurs at the left endpoint $x_{\min} = (2+\epsilon)t + 1$:
\[
E(x) \leq E(x_{\min}) = \lceil 1 + 2/\epsilon \rceil \cdot \frac{(2+\epsilon)t + 1}{(2+\epsilon)t + 1 - 2t}
= \lceil 1 + 2/\epsilon \rceil \cdot \frac{(2+\epsilon)t + 1}{\epsilon t + 1}.
\]
Finally, since
\[
\frac{(2+\epsilon)t + 1}{\epsilon t + 1} = 1 + \frac{2t}{\epsilon t + 1} < 1 + \frac{2}{\epsilon},
\]
we obtain
\[
E(x) \leq \lceil 1 + 2/\epsilon \rceil \cdot \left(1 + \frac{2}{\epsilon}\right) \leq \lceil 1 + 2/\epsilon \rceil^2,
\]
which concludes the proof.
\end{proof}

Next, we prove a crucial proposition about the $\mathcal{MBA}[k][x]$ instance employed in any sub-iteration $(k, x)$.

\begin{proposition}\label{lemma:weird_lemma_mba}
Consider any sub-iteration $(k \in \mathbb{N}, x \in [1, C])$.
Suppose a correct process $p_i$ decides some value $v'$ from $\mathcal{MBA}[k][x]$.
Moreover, suppose all correct processes that proposed to $\mathcal{MBA}[k][x]$ before $p_i$ decides $v'$ do so with the same value $v$.
Then, $v' = v$ (except with negligible probability).
\end{proposition}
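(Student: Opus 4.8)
The plan is to reduce the statement to the strong unanimity property of the $\mbasimple$ algorithm, but the subtlety is that strong unanimity only speaks about the case where \emph{all} correct processes propose the same value, whereas here we only know that the correct processes who proposed \emph{before $p_i$ decides} did so with the same value $v$; a slow correct process (or one that is later corrupted, exploiting after-the-fact removal) might propose something different afterwards. So I first want to argue that, once $p_i$ decides $v'$ from $\mathcal{MBA}[k][x]$, the outcome is already ``locked in'' by the internal structure of $\mbasimple$ (see \Cref{algorithm:mbasimple}): recall that $\mbasimple$ runs a graded-consensus instance $\mathcal{GC}$ followed by a binary agreement instance $\mathcal{BA}$, and decides $\mathit{adopted\_value}$ if $\mathcal{BA}$ outputs $1$, and $\botmba$ otherwise. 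The key observation is that if $p_i$ decides a non-$\botmba$ value $v'$, then $\mathcal{BA}$ decided $1$, so by strong unanimity of $\mathcal{BA}$ some correct process proposed $1$, meaning that correct process decided grade $1$ from $\mathcal{GC}$; by the consistency property of $\mathcal{GC}$, every correct process that decides from $\mathcal{GC}$ decides the same value $v'$, and by justification $v'$ is the proposal of a correct process. Combined with the fact that the only correct processes that have proposed to $\mathcal{MBA}[k][x]$ at this point all proposed $v$, this forces $v' = v$.

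Concretely, the steps I would carry out are: (1)~If $v' = \botmba$, there is nothing to prove since $\botmba$ is invalid and — wait, the statement claims $v' = v$, and $v$ is a genuine value, so I actually need to rule this case out. Here I would invoke strong unanimity of $\mathcal{MBA}[k][x]$ applied in a slightly careful way: actually the cleaner route is to note that the decision event of $p_i$ only depends on messages sent by processes \emph{before} that event in the asynchronous execution, and all correct processes that acted (proposed) before then used input $v$; so I want a lemma of the form ``if a correct process decides $v'$ from an $\mbasimple$ instance, and every correct process that proposed before that decision proposed the same value $v$, then $v' = v$''. (2)~Prove that lemma by unrolling $\mbasimple$: trace $p_i$'s decision back through $\mathcal{BA}$ and $\mathcal{GC}$. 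If $\mathcal{BA}$ outputs $1$ at $p_i$, then (strong unanimity of $\mathcal{BA}$, contrapositive) not all correct processes proposed $0$; but correct processes propose to $\mathcal{BA}$ the grade they obtained from $\mathcal{GC}$, and $\mathcal{GC}$ is \emph{deterministic}, so its behavior up to $p_i$'s decision depends only on the $v$-inputs, hence every correct process that finished $\mathcal{GC}$ before $p_i$'s $\mathcal{BA}$-decision decided $(v,1)$ by strong unanimity of $\mathcal{GC}$ (which, crucially, I can apply because the only proposals ``seen'' so far are all $v$). Then $p_i$'s $\mathit{adopted\_value}$ is $v$, so it decides $v$. (3)~If $\mathcal{BA}$ outputs $0$, $p_i$ decides $\botmba$; I must rule this out. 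But if every correct process that proposed to $\mathcal{GC}$ before did so with $v$, then by strong unanimity of $\mathcal{GC}$ each obtained grade $1$, hence proposed $1$ to $\mathcal{BA}$, hence $\mathcal{BA}$ cannot output $0$ (strong unanimity of $\mathcal{BA}$) — contradiction.

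The ``except with negligible probability'' qualifier is there because $\mbasimple$ uses a collision-resistant hash function inside $\mathcal{GC}$, so its properties (in particular strong unanimity and consistency) hold only against computationally bounded adversaries and only with all-but-negligible probability; I would simply carry that caveat through, noting that conditioning on the (overwhelmingly likely) event that $\mathcal{GC}$ and $\mathcal{BA}$ satisfy their specifications, the deterministic argument above goes through exactly.

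The main obstacle I anticipate is making the ``depends only on messages sent before the decision'' / ``asynchronous causality'' argument airtight without re-deriving the internals of $\mbasimple$ and its sub-primitives from scratch. The cleanest way around it is probably to observe that $\mathcal{GC}$ is deterministic and its \emph{strong unanimity} is stated as: ``if all correct processes propose the same value $v$, and a correct process decides $(v', g')$, then $v' = v$ and $g' = 1$'' — and then argue that, for the purposes of $p_i$'s decision, the execution is indistinguishable from one in which \emph{all} correct processes propose $v$ (since the ones that proposed differently did so only after $p_i$ had already decided, and hence after $p_i$ had already obtained its $\mathcal{GC}$-output and its $\mathcal{BA}$-output, which — by totality/validity of the sub-protocols — are determined by the prefix of the execution). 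This indistinguishability step is the delicate part; I would phrase it as: extend the actual execution by pausing all correct processes that have not yet proposed and scheduling them to propose $v$, which yields a valid execution in which all correct processes propose $v$ and in which $p_i$ still decides $v'$; strong unanimity of $\mathcal{MBA}[k][x]$ (i.e., of $\mbasimple$) then directly gives $v' = v$. That sidesteps the need to open up $\mathcal{GC}$ and $\mathcal{BA}$ at all.
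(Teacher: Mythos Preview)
Your final approach---pausing the correct processes that have not yet proposed, scheduling them to propose $v$, and then invoking strong unanimity of $\mathcal{MBA}[k][x]$ in the resulting execution where all correct processes propose $v$ while $p_i$ has already decided $v'$---is exactly the paper's proof: the paper phrases it as constructing a continuation $\mathcal{E}'$ of the execution $\mathcal{E}$ and deriving a contradiction with strong unanimity (carrying the ``non-negligible probability'' through). Your detour into unrolling $\mathcal{GC}$ and $\mathcal{BA}$ is unnecessary (as you yourself realized), and in fact the intermediate claims there would require precisely the same extension argument to be made rigorous; the clean black-box extension argument you land on at the end is both correct and the intended one.
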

\begin{proof}
By contradiction, suppose $v \neq v'$ with non-negligible probability.
Let us denote by $\mathcal{E}$ this execution of $\mathcal{MBA}[k][x]$ that ends with process $p_i$ deciding $v'$.
We can construct a continuation $\mathcal{E}'$ of $\mathcal{E}$ in which all correct processes propose the same value $v$.
Therefore, $\mathcal{MBA}[k][x]$ violates the strong unanimity property in $\mathcal{E}'$ and the probability measure of execution $\mathcal{E}'$ is non-negligible (given that the probability measure of $\mathcal{E}$ is non-negligible).
Thus, we reach a contradiction with the fact that $\mathcal{MBA}[k][x]$ satisfies strong unanimity with all but negligible probability, which concludes the proof of the proposition.
\end{proof}

We say that a sub-iteration $(k \in \mathbb{N}, x \in [1, C])$ \emph{starts} at the moment when the first correct process invokes the $\mathsf{Noise}()$ request in sub-iteration $(k, x)$ (line~\ref{line:random_noise_plus}).
Similarly, we say that a sub-iteration $(k \in \mathbb{N}, x \in [1, C])$ \emph{ends} at the moment when the first correct process decides from $\mathcal{MBA}[k][x]$ in sub-iteration $(k, x)$ (line~\ref{line:lmba_plus}).
For any sub-iteration $(k \in \mathbb{N}, x \in [1, C])$, we define the $\mathsf{start}(k, x)$ set:
\begin{equation*}
    \mathsf{start}(k, x) = \{ z \,|\, \text{$z$ is committed by a correct process before sub-iteration $(k, x)$ starts} \}.
\end{equation*}
Next, for every sub-iteration $(k \in \mathbb{N}, x \in [1, C])$, we define the $\mathsf{end}(k, x)$ set:
\begin{equation*}
    \mathsf{end}(k, x) = \{ z  \,|\, \text{$z$ is committed by a correct process before sub-iteration $(k, x)$ ends} \}.
\end{equation*}
Note that the following holds:
\begin{compactitem}
    \item For every iteration $k \in \mathbb{N}$ and every $x \in [1, C]$, $\mathsf{start}(k, x) \subseteq \mathsf{committed}(k)$.

    \item For every iteration $k \in \mathbb{N}$ and every $x \in [1, C]$, $\mathsf{end}(k, x) \subseteq \mathsf{committed}(k)$.

    \item For every iteration $k \in \mathbb{N}$, $\mathsf{start}(k, 1) \subseteq \mathsf{end}(k, 1) \subseteq \mathsf{start}(k, 2) \subseteq \mathsf{end}(k, 2) \subseteq ... \subseteq \mathsf{start}(k, C) \subseteq \mathsf{end}(k, C) \subseteq \mathsf{committed}(k)$.
    This is true as (1) each sub-iteration $(k, \cdot)$ starts before it ends, and (2) each sub-iteration $(k, x)$ ends before sub-iteration $(k, x + 1)$ starts.
\end{compactitem}
The following proposition proves that $z^{\star}(k) \in \mathsf{start}(k, x)$, for any sub-iteration $(k, x)$ of a good iteration $k$.

\begin{proposition}\label{lemma:every_round_commit_good}
Let $k \in \mathbb{N}$ be any good iteration.
Then, for every sub-iteration $(k, x \in [1, C])$, $z^{\star}(k) \in \mathsf{start}(k, x)$.
\end{proposition}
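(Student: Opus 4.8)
The plan is to reduce the statement, via the monotonicity of the $\mathsf{start}$ and $\mathsf{end}$ sets noted just above the proposition, to the single case $x=1$: since $\mathsf{start}(k,1) \subseteq \mathsf{end}(k,1) \subseteq \mathsf{start}(k,2) \subseteq \dots \subseteq \mathsf{start}(k,C)$, it suffices to show $z^{\star}(k) \in \mathsf{start}(k,1)$, and then $z^{\star}(k) \in \mathsf{start}(k,x)$ for every $x \in [1,C]$ follows immediately.

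For the case $x=1$, I would first invoke \Cref{lemma:good_iteration_all_correct_commit_plus}, which guarantees that every correct process commits $z^{\star}(k)$ in iteration $k$. Next I would locate, in the pseudocode, the precise moment at which a correct process commits a digest in iteration $k$: by definition it is the instant the process reaches line~\ref{line:for_plus} of \Cref{algorithm:reducer_plus} in iteration $k$, i.e., after it has received $n-t = (2+\epsilon)t+1$ \textsc{suggest} messages and performed the removal step at lines~\ref{line:rule_precommit_plus}--\ref{line:candidates_remove_plus}. Crucially, the $\mathsf{Noise}()$ invocation at line~\ref{line:random_noise_plus} lies \emph{inside} the loop opened at line~\ref{line:for_plus}, so any correct process that invokes $\mathsf{Noise}()$ in sub-iteration $(k,1)$ has necessarily already reached line~\ref{line:for_plus} in iteration $k$ at an earlier (global) time, and hence has already committed $z^{\star}(k)$ by that point.

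It then remains to connect this to the definition of ``sub-iteration $(k,1)$ starts''. Let $p_j$ be the first correct process to invoke $\mathsf{Noise}()$ in sub-iteration $(k,1)$; this invocation marks the start of sub-iteration $(k,1)$. By the previous paragraph, $p_j$ committed $z^{\star}(k)$ strictly before this invocation, so $z^{\star}(k)$ is a digest committed by a correct process before sub-iteration $(k,1)$ starts, which is exactly $z^{\star}(k) \in \mathsf{start}(k,1)$. Combined with the monotonicity reduction, this yields the claim for all $x$.

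I do not anticipate a serious obstacle here: the argument is essentially a careful reading of the control flow of \Cref{algorithm:reducer_plus} together with \Cref{lemma:good_iteration_all_correct_commit_plus}. The only point that warrants care is making the temporal ordering explicit — namely that ``committing'' happens at line~\ref{line:for_plus}, which is before \emph{any} $\mathsf{Noise}()$ call of iteration $k$, and a fortiori before sub-iteration $(k,1)$ (and hence every sub-iteration $(k,x)$) starts in global time. So the main ``work'' is just stating this ordering unambiguously rather than any nontrivial combinatorial or probabilistic estimate.
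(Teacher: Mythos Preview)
Your proposal is correct and follows essentially the same approach as the paper: invoke \Cref{lemma:good_iteration_all_correct_commit_plus}, observe that the first correct process to call $\mathsf{Noise}()$ in sub-iteration $(k,1)$ has already committed $z^{\star}(k)$ (since committing happens at line~\ref{line:for_plus}, before line~\ref{line:random_noise_plus}), conclude $z^{\star}(k)\in\mathsf{start}(k,1)$, and then use the monotonicity $\mathsf{start}(k,1)\subseteq\mathsf{start}(k,x)$. The paper's proof is terser but structurally identical.
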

\begin{proof}
By \Cref{lemma:good_iteration_all_correct_commit_plus}, all correct processes commit $z^{\star}(k)$ in iteration $k$.
Hence, the correct process that ``starts'' sub-iteration $(k, 1)$ (i.e., invokes the first $\mathsf{Noise}()$ request) commits $z^{\star}(k)$ in sub-iteration $k$.
Thus, $z^{\star}(k) \in \mathsf{start}(k, 1)$.
As $\mathsf{start}(k, 1) \subseteq \mathsf{start}(k, x)$, for every $x \in [2, C]$, the proposition holds.
\end{proof}

We say that a correct process $p_i$ \emph{adopts} a digest $z$ in a sub-iteration $(k \in \mathbb{N}, x \in [1, C])$ if and only if $\mathit{adopted\_digest}_i = z$ when process $p_i$ reaches line~\ref{line:reconstruct_broadcast_reducer_plus} in sub-iteration $(k, x)$.
Next, we prove that there exists a constant probability that all correct processes that propose to $\mathcal{MBA}[k][x]$ before sub-iteration $(k, x)$ ends propose $v^{\star}(k)$ given that (1) $k$ is a good iteration, and (2) $\mathsf{start}(k, x) = \mathsf{end}(k, x)$.

\begin{proposition}\label{lemma:start_end_probability}
Let $k \in \mathbb{N}$ be any good iteration.
Let $(k, x \in [1, C])$ be any sub-iteration such that $\mathsf{start}(k, x) = \mathsf{end}(k, x)$.
Then, there is at least $\frac{1}{C}$ probability that all correct processes that propose to $\mathcal{MBA}[k][x]$ before sub-iteration $(k, x)$ ends propose $v^{\star}(k)$.
\end{proposition}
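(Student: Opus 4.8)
The plan is to leverage the random-oracle modeling of $\mathsf{hash}(\cdot)$ together with the fact that $\mathsf{Noise}()$ delivers a fresh, uniformly random $\kappa$-bit value $\phi$ that cannot be anticipated before sub-iteration $(k,x)$ starts. First I would observe that, since $\mathsf{start}(k,x) = \mathsf{end}(k,x)$, every correct process that participates in $\mathcal{MBA}[k][x]$ before the sub-iteration ends has its $\mathit{candidates}_i$ list satisfying $\mathit{candidates}_i \subseteq \mathsf{end}(k,x) = \mathsf{start}(k,x)$; moreover, by \Cref{lemma:every_round_commit_good}, $z^{\star}(k) \in \mathsf{start}(k,x)$, so every such correct process has committed $z^{\star}(k)$ and hence $z^{\star}(k) \in \mathit{candidates}_i$ for each of them. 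Let $\mathcal{H}(k,x) = \mathsf{start}(k,x)$ denote the (fixed, before $\phi$ is revealed) set of all digests that any relevant correct process could adopt; note $|\mathcal{H}(k,x)| \leq |\mathsf{committed}(k)| \leq C$ by \Cref{lemma:good_iteration_all_precommitted_bound_commit_plus}.

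Next I would argue that, because $\phi$ is revealed only after the set $\mathsf{start}(k,x)$ is already determined (the adversary cannot inject a new committed digest ``after'' seeing $\phi$ without that digest landing in $\mathsf{end}(k,x) \supsetneq \mathsf{start}(k,x)$, contradicting the hypothesis), each value $\mathsf{hash}(z, \phi)$ for $z \in \mathsf{start}(k,x)$ is, in the random-oracle model, independent and uniformly distributed over $\{0,1\}^\kappa$, except with negligible probability (the only way this fails is a hash collision among the polynomially many oracle queries, or the adversary having queried $\mathsf{hash}(z,\phi)$ for the right $\phi$ in advance, which happens with probability at most $\mathrm{poly}(\kappa)/2^\kappa$). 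Conditioned on no collisions, the digest with the lexicographically smallest hash is a uniformly random element of $\mathsf{start}(k,x)$. Hence the probability that $z^{\star}(k)$ produces the smallest hash is exactly $1/|\mathsf{start}(k,x)| \geq 1/C$. When that event occurs, every correct process $p_i$ (among those proposing before the sub-iteration ends) computes $\mathit{hashed\_candidates}_i$ over a subset of $\mathsf{start}(k,x)$ containing $z^{\star}(k)$, and since $\mathsf{hash}(z^\star(k),\phi)$ is the global minimum over all of $\mathsf{start}(k,x)$, it is a fortiori the minimum over $p_i$'s own candidates, so $\mathit{adopted\_digest}_i = z^{\star}(k)$ at line~\ref{line:update_acc_proposal_2_plus}.

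Then I would close the argument: when all such correct processes adopt $z^{\star}(k)$, each of them broadcasts \textsc{reconstruct} messages and, since $k$ is good, $\mathsf{leader}(k)$ disseminated correctly-encoded RS symbols of $v^{\star}(k)$ to at least $n - 2t = (1+\epsilon)t + 1$ correct processes; after receiving $n - t = (2+\epsilon)t + 1$ \textsc{reconstruct} messages, each gets at least $(n-t) + (n-2t) - n = n - 3t = \epsilon t + 1$ RS symbols with valid witnesses for $z^{\star}(k) = \mathit{adopted\_digest}_i$, so $|S_i| \geq \epsilon t + 1$ and $p_i$ decodes $r_i = v^{\star}(k)$ at line~\ref{line:reducer_plus_decode} and proposes $v^{\star}(k)$ to $\mathcal{MBA}[k][x]$ at line~\ref{line:lmba_plus}. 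Thus all correct processes proposing to $\mathcal{MBA}[k][x]$ before the sub-iteration ends propose $v^{\star}(k)$, which is exactly the claimed event, occurring with probability at least $\frac{1}{C}$ (up to the negligible random-oracle error, which I would fold into the statement as is standard). The main obstacle I anticipate is making the ``$\phi$ is revealed after $\mathsf{start}(k,x)$ is fixed, so the hashes are genuinely uniform and unriggable'' step fully rigorous against an adaptive adversary — one must carefully invoke the common-coin's unpredictability (the value $\phi$ is disclosed only after $t+1$ processes query it, hence after at least one correct process has committed its digests and ``entered'' the sub-iteration) and the random-oracle property to rule out the adversary having pre-queried $\mathsf{hash}(\cdot, \phi)$; everything downstream (the reconstruction counting and the $\mathcal{MBA}$ proposal) is routine quorum arithmetic.
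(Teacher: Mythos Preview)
Your proposal is correct and follows essentially the same approach as the paper's proof: both establish (i) that the random-oracle hash values $\mathsf{hash}(z,\phi)$ for $z\in\mathsf{start}(k,x)$ are uniform and independent because $\phi$ is fresh and only polynomially many oracle queries are made, giving probability $\geq 1/|\mathsf{start}(k,x)|\geq 1/C$ that $z^{\star}(k)$ wins, and (ii) that when every relevant correct process adopts $z^{\star}(k)$, quorum intersection yields $\epsilon t+1$ valid RS symbols so each decodes and proposes $v^{\star}(k)$. The paper presents these two steps in the opposite order but the content is the same; your explicit remark that $\mathsf{start}(k,x)=\mathsf{end}(k,x)$ is what prevents post-$\phi$ digest injection is exactly the point the paper leaves implicit.
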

\begin{proof}
We prove the proposition through the following steps.

\smallskip
\noindent \underline{Step 1:} \emph{If a correct process $p_i$ adopts $z^{\star}(k)$ in sub-iteration $(k, x)$ and proposes $v$ to $\mathcal{MBA}[k][x]$, then $v = v^{\star}(k)$. 
}
\\ \noindent As $k$ is a good iteration, correctly encoded RS symbols (that correspond to value $v^{\star}(k)$) are stored at (at least) $n - t - t = (1 + \epsilon)t + 1$ correct processes.
Therefore, process $p_i$ receives RS symbols with correct witnesses for $z^{\star}(k)$ via \textsc{reconstruct} messages from at least $\epsilon t + 1$ processes and decodes $v^{\star}(k)$ (line~\ref{line:reducer_plus_decode}).
This means that process $p_i$ indeed proposes $v^{\star}(k)$ to $\mathcal{MBA}[k][x]$.

\smallskip
\noindent \underline{Step 2:} \emph{There is at least $\frac{1}{C}$ probability that all correct processes that propose to $\mathcal{MBA}[k][x]$ before sub-iteration $(k, x)$ ends do adopt $z^{\star}(k)$ in sub-iteration $(k, x)$.}
\\ \noindent Recall that \Cref{lemma:good_iteration_all_correct_commit_plus} proves that every correct process commits $z^{\star}(k)$ in iteration $k$.
Hence, no correct process $p_i$ updates its $\mathit{adopted\_digest}_i$ variable at line~\ref{line:acc_proposal_bot_plus}.
Therefore, every correct process $p_i$ updates its $\mathit{adopted\_digest}_i$ variable at line~\ref{line:update_acc_proposal_2_plus}.

Let $\phi$ denote the output of the $\mathsf{Noise()}$ request in sub-iteration $(k, x)$.
Moreover, for every $z \in \mathsf{start}(k, x)$, let $h(z) = \mathsf{hash}(z, \phi)$.
By \Cref{lemma:every_round_commit_good}, $z^{\star}(k) \in \mathsf{start}(k, x)$.
Let $H = \{h(z) \,|\, z \in \mathsf{start}(k, x) \}$.
To prove the statement, we show that the probability $h\big( z^{\star}(k) \big)$ is the lexicographically smallest hash value among $H$ is at least $\frac{1}{C}$.
Indeed, if $h\big( z^{\star}(k) \big)$ is the lexicographically smallest hash value among $H$, then all correct processes that propose to $\mathcal{MBA}[k][x]$ before sub-iteration $(k, x)$ ends do adopt $z^{\star}(k)$ at line~\ref{line:update_acc_proposal_2_plus}.

Since processes (including the faulty ones) make only polynomially many random oracle queries, the probability that the random oracle model is queried on some $(z \in \mathsf{start}(k, x), \phi)$ is negligible.
Therefore, each hash value among $H$ is drawn independently uniformly at random (except with negligible probability).
This implies that each hash value $h(z) \in H$ has an equal chance of being the smallest hash value among $H$; that chance is $\frac{1}{|\mathsf{start}(k, x)|} \geq \frac{1}{C}$ as $|\mathsf{start}(k, x)| \leq |\mathsf{committed}(k)| \leq C$ (by \Cref{lemma:good_iteration_all_precommitted_bound_commit_plus}).
Hence, there is (at least) $\frac{1}{C}$ probability that $h\big( z^{\star}(k) \big)$ is the smallest hash value among $H$, which proves the statement.

\smallskip
\noindent \underline{Epilogue:} By the statement of the second step, there is at least $\frac{1}{C}$ probability that all correct processes that propose to $\mathcal{MBA}[k][x]$ before sub-iteration $(k, x)$ ends adopt digest $z^{\star}(k)$ in sub-iteration $(k, x)$.
Therefore, by the statement of the first step, there is at least $\frac{1}{C}$ probability that all correct processes that propose to $\mathcal{MBA}[k][x]$ before sub-iteration $(k, x)$ ends do propose value $v^{\star}(k)$.
\end{proof}

Next, we prove that there exists a sub-iteration $(k, x)$ within any good iteration $k$ such that $\mathsf{start}(k, x) = \mathsf{end}(k, x)$.

\begin{proposition}\label{lemma:exists_round_start_end}
Let $k \in \mathbb{N}$ be any good iteration.
Then, there exists a sub-iteration $(k, x \in [1, C])$ such that $\mathsf{start}(k, x) = \mathsf{end}(k, x)$.
\end{proposition}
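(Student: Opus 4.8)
The plan is to argue by a pigeonhole/counting argument over the sizes of the nested sets $\mathsf{start}(k,1) \subseteq \mathsf{end}(k,1) \subseteq \mathsf{start}(k,2) \subseteq \dots \subseteq \mathsf{end}(k,C) \subseteq \mathsf{committed}(k)$. First I would recall the chain of inclusions established just above the statement, together with \Cref{lemma:good_iteration_all_precommitted_bound_commit_plus}, which gives $|\mathsf{committed}(k)| \leq C$, and \Cref{lemma:every_round_commit_good}, which gives $z^\star(k) \in \mathsf{start}(k,1)$, so that $|\mathsf{start}(k,1)| \geq 1$. Thus along the chain
\begin{equation*}
    1 \leq |\mathsf{start}(k,1)| \leq |\mathsf{end}(k,1)| \leq |\mathsf{start}(k,2)| \leq \dots \leq |\mathsf{start}(k,C)| \leq |\mathsf{end}(k,C)| \leq C,
\end{equation*}
we have $2C$ integer values (the cardinalities $|\mathsf{start}(k,x)|$ and $|\mathsf{end}(k,x)|$ for $x = 1, \dots, C$), all lying in the range $\{1, 2, \dots, C\}$, which has only $C$ distinct values.

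The key step is then to observe that if $\mathsf{start}(k,x) \neq \mathsf{end}(k,x)$ for every $x \in [1,C]$, then since $\mathsf{start}(k,x) \subseteq \mathsf{end}(k,x)$, a strict inclusion means $|\mathsf{start}(k,x)| < |\mathsf{end}(k,x)|$, i.e. $|\mathsf{end}(k,x)| \geq |\mathsf{start}(k,x)| + 1$ for all $x$. Combined with $|\mathsf{end}(k,x)| \leq |\mathsf{start}(k,x+1)|$ for $x = 1, \dots, C-1$, this forces
\begin{equation*}
    |\mathsf{end}(k,C)| \geq |\mathsf{start}(k,1)| + C \geq 1 + C > C,
\end{equation*}
contradicting $|\mathsf{end}(k,C)| \leq |\mathsf{committed}(k)| \leq C$. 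Hence there must exist some $x \in [1,C]$ with $|\mathsf{start}(k,x)| = |\mathsf{end}(k,x)|$, and since $\mathsf{start}(k,x) \subseteq \mathsf{end}(k,x)$ with equal (finite) cardinality, the two sets coincide: $\mathsf{start}(k,x) = \mathsf{end}(k,x)$.

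I do not anticipate a serious obstacle here; the argument is a clean telescoping/pigeonhole count. The only point requiring a little care is making sure that the set inclusions and the interleaving inequalities ($\mathsf{end}(k,x) \subseteq \mathsf{start}(k,x+1)$) are correctly sourced — these follow from the definitions of $\mathsf{start}(\cdot,\cdot)$ and $\mathsf{end}(\cdot,\cdot)$ and the fact that, by the algorithm's sequential structure, every correct process finishes sub-iteration $(k,x)$ (decides from $\mathcal{MBA}[k][x]$) before any correct process begins sub-iteration $(k,x+1)$ (invokes $\mathsf{Noise}()$), so no new digest can be committed between the end of one sub-iteration and the start of the next in a way that escapes the chain. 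I would state this chain explicitly (as already done in the bulleted remark preceding the proposition) and then present the two-line contradiction computation above.
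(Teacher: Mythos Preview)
Your proposal is correct and follows essentially the same contradiction/counting argument as the paper: assume all inclusions $\mathsf{start}(k,x) \subseteq \mathsf{end}(k,x)$ are strict, telescope along the chain to force $|\mathsf{end}(k,C)| > C$, and contradict \Cref{lemma:good_iteration_all_precommitted_bound_commit_plus}. One small inaccuracy in your explanatory aside: it is \emph{not} true that every correct process finishes sub-iteration $(k,x)$ before any correct process begins $(k,x+1)$; what holds (and what the paper's bulleted remark uses) is that the \emph{first} correct process to decide from $\mathcal{MBA}[k][x]$ does so before the \emph{first} correct process invokes $\mathsf{Noise}()$ in $(k,x+1)$, which is exactly what $\mathsf{end}(k,x) \subseteq \mathsf{start}(k,x+1)$ needs.
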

\begin{proof}
Recall that $|\mathsf{committed}(k)| \leq C$ (by \Cref{lemma:good_iteration_all_precommitted_bound_commit_plus}).
Moreover, \Cref{lemma:every_round_commit_good} proves that $z^{\star}(k) \in \mathsf{start}(k, x)$, for every sub-iteration $(k, x \in [1, C])$.
Lastly, note that the sub-iteration $(k, x)$ ends before sub-iteration $(k, x + 1)$ starts.

By contradiction, suppose $\mathsf{start}(k, x) \neq \mathsf{end}(k, x)$, for every sub-iteration $(k, x \in [1, C])$.
As we have that $\mathsf{start}(k, x) \subseteq \mathsf{end}(k, x)$, for every sub-iteration $(k, x \in [1, C])$, we have that $\mathsf{start}(k, x) \subset \mathsf{end}(k, x)$.
Moreover, $\mathsf{start}(k, x) \subset \mathsf{start}(k, x + 1)$, for every $x \in [1, C - 1]$.
Thus, $|\mathsf{start}(k, C)| \geq C$, which then implies that $|\mathsf{end}(k, C)| > C$.
This contradicts $|\mathsf{end}(k, C)| \leq C$, which completes the proof.
\end{proof}

The following proposition proves that if all correct processes that propose to $\mathcal{MBA}[k][x]$ before sub-iteration $(k, x)$ (of a good iteration $k$) ends do propose $v^{\star}(k)$, then all correct processes quasi-decide $v^{\star}(k)$ in sub-iteration $(k, x)$.

\begin{proposition}\label{lemma:good_iteration_and_agreement_plus}
Let $k \in \mathbb{N}$ be any good iteration. 
Moreover, let $(k \in \mathbb{N}, x \in [1, C])$ be any sub-iteration of iteration $k$.
Suppose all correct processes that propose to $\mathcal{MBA}[k][x]$ before sub-iteration $(k, x)$ ends do propose $v^{\star}(k)$.
Then, all correct processes quasi-decide $v^{\star}(k)$ in sub-iteration $(k, x)$.
\end{proposition}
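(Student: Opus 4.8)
The plan is to reduce the statement to the strong‑unanimity‑style guarantee of \Cref{lemma:weird_lemma_mba} together with the agreement and termination of $\mathcal{MBA}[k][x]$, and then to close with the validity of $v^{\star}(k)$. First I would fix the moment at which sub-iteration $(k, x)$ ends: by definition this is when the first correct process, say $p_{\mathrm{fast}}$, decides from $\mathcal{MBA}[k][x]$ at line~\ref{line:lmba_plus}. By the hypothesis of the proposition, every correct process that proposed to $\mathcal{MBA}[k][x]$ strictly before this moment did so with the value $v^{\star}(k)$. Hence \Cref{lemma:weird_lemma_mba}, applied with $v = v^{\star}(k)$ and $v' = $ the value decided by $p_{\mathrm{fast}}$, yields that $p_{\mathrm{fast}}$ decides $v^{\star}(k)$ from $\mathcal{MBA}[k][x]$, except with negligible probability.

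Next I would propagate this to all correct processes. By \Cref{lemma:every_sub_iteration_plus} (equivalently, by \Cref{lemma:sub_iteration_completion_plus} applied to sub-iteration $(k, x)$), every correct process eventually completes sub-iteration $(k, x)$, and therefore decides from $\mathcal{MBA}[k][x]$ at line~\ref{line:lmba_plus}. The agreement property of $\mathcal{MBA}[k][x]$ then forces every correct process to decide the same value as $p_{\mathrm{fast}}$, namely $v^{\star}(k)$ (again up to negligible probability). Finally, since $k$ is a good iteration, $v^{\star}(k)$ is by definition the valid proposal of $\mathsf{leader}(k)$, so $\mathsf{valid}(v^{\star}(k)) = \mathit{true}$; hence the check at line~\ref{line:quasi_decide_plus} passes at every correct process, and every correct process appends $v^{\star}(k)$ to its $\mathit{quasi\_decisions}$ list, i.e., quasi-decides $v^{\star}(k)$ in sub-iteration $(k, x)$. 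That completes the argument.

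The step I expect to be the main obstacle — and the one I would write most carefully — is the bookkeeping around \Cref{lemma:weird_lemma_mba}: I must make sure the ordering is airtight, in particular that the hypothesis ``all correct processes that propose to $\mathcal{MBA}[k][x]$ before sub-iteration $(k, x)$ ends do propose $v^{\star}(k)$'' indeed covers \emph{all} proposals occurring before $p_{\mathrm{fast}}$'s decision (which is exactly the event defining ``$(k, x)$ ends''), so that no correct process can have contributed a different value to the execution prefix that led to that first decision. The secondary care point is merely keeping track of the ``except with negligible probability'' qualifier inherited from the random-oracle model and from $\mathcal{MBA}$'s probabilistic strong unanimity, and noting that after the first decision the remaining correct processes' proposals are irrelevant because agreement already pins the decided value.
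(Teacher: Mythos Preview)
Your proposal is correct and follows essentially the same approach as the paper: invoke \Cref{lemma:weird_lemma_mba} to conclude that the first correct process to decide from $\mathcal{MBA}[k][x]$ decides $v^{\star}(k)$, then use agreement and termination of $\mathcal{MBA}[k][x]$ to propagate this to all correct processes. Your write-up is in fact more explicit than the paper's (which omits the validity-check step at line~\ref{line:quasi_decide_plus} and the reference to \Cref{lemma:every_sub_iteration_plus}), but the underlying argument is the same.
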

\begin{proof}
By \Cref{lemma:weird_lemma_mba}, the first correct process that decides from $\mathcal{MBA}[k][x]$ does decide $v^{\star}(k)$.
As $\mathcal{MBA}[k][x]$ ensures agreement and termination, all correct processes eventually decide $v^{\star}(k)$ from $\mathcal{MBA}[k][x]$, which proves the proposition.
\end{proof}

Finally, we are ready to prove the termination property of \commplus.

\begin{lemma} [\commplus satisfies termination]\label{theorem:termination_plus}
Given $n = (3 + \epsilon)t + 1$, for any fixed constant $\epsilon > 0$, and the existence of a hash function modeled as a random oracle, \commplus (see \Cref{algorithm:reducer_plus}) satisfies termination in the presence of a computationally bounded adversary.
Precisely, every correct process decides within $O(C)$ iterations in expectation.
\end{lemma}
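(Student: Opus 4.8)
The plan is to combine two facts already established: (i) in any good iteration $k$, there exists a sub-iteration $\mathcal{T} = (k, x)$ with $\mathsf{start}(\mathcal{T}) = \mathsf{end}(\mathcal{T})$ (\Cref{lemma:exists_round_start_end}), and for that sub-iteration there is at least $\frac{1}{C}$ probability that all correct processes proposing to $\mathcal{MBA}[k][x]$ before it ends propose $v^{\star}(k)$ (\Cref{lemma:start_end_probability}), which in turn forces all correct processes to quasi-decide $v^{\star}(k)$ in that sub-iteration (\Cref{lemma:good_iteration_and_agreement_plus}); and (ii) each iteration is good with probability at least $\frac{(1+\epsilon)t+1}{(3+\epsilon)t+1}$, since $\mathsf{leader}(k)$ is chosen uniformly at random via the common coin and $|\dfirst| \geq n - 2t = (1+\epsilon)t+1$. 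The coins across iterations are independent (per \Cref{section:preliminaries}), so iterations are good independently.

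First I would observe that, by \Cref{lemma:every_sub_iteration_plus}, every sub-iteration is eventually started and completed by all correct processes, so the protocol never stalls inside an iteration; the only question is whether \emph{some} iteration yields quasi-decisions. Next, I would define, for each iteration $k$, the event $G_k$ that $k$ is good and the event $Q_k$ that all correct processes quasi-decide in iteration $k$. Conditioned on $G_k$, \Cref{lemma:exists_round_start_end,lemma:start_end_probability,lemma:good_iteration_and_agreement_plus} give $\Pr[Q_k \mid G_k] \geq \frac{1}{C}$ (up to the negligible random-oracle error, which I would absorb by noting the bound holds with all but negligible probability). Since $\Pr[G_k] \geq \frac{(1+\epsilon)t+1}{(3+\epsilon)t+1} = \Theta(1)$ and $G_k$ is independent of the history of earlier (failed) iterations' randomness in the relevant sense — the leader coin of iteration $k$ is fresh — we get $\Pr[Q_k] \geq \frac{1}{C} \cdot \frac{(1+\epsilon)t+1}{(3+\epsilon)t+1} =: P \in \Theta(1/C)$, uniformly over $k$ and over the past. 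Then a standard geometric-tail argument: letting $K$ be the first iteration in which all correct processes quasi-decide, $\Pr[K > k] \leq (1-P)^k$, so $\mathbb{E}[K] \leq 1/P = O(C)$.

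Finally I would close the loop from "quasi-decide" to "decide": once a correct process $p_i$ has quasi-decided (i.e. $\mathit{quasi\_decisions}_i$ is non-empty) and has not previously decided, it invokes $\mathsf{Index}()$ at line~\ref{line:random_index_plus} and triggers $\mathsf{decide}$ at line~\ref{line:reducer_decide_plus}; since all correct processes quasi-decide in the same iteration $K$ (they all complete all $C$ sub-iterations of iteration $K$ by \Cref{lemma:every_sub_iteration_plus}, and $Q_K$ makes the quasi-decision non-empty for each), every correct process that has not already decided will decide in iteration $K$. Hence every correct process decides, within $O(C)$ iterations in expectation.

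\textbf{Main obstacle.} The subtle point — and the one I would be most careful about — is the independence / conditioning argument needed to turn the per-iteration success probability $P$ into a clean geometric bound on $\mathbb{E}[K]$. One must argue that, even after the adversary has observed the outcomes (and coin values) of iterations $1, \dots, k-1$ and adaptively corrupted processes accordingly, the probability that iteration $k$ is good and succeeds is still at least $P$. Goodness of iteration $k$ depends only on the fresh $\mathsf{Election}()$ coin for iteration $k$ (independent of the past) and on $\dfirst$ (fixed once the dissemination phase completes, hence determined before iteration $1$), so $\Pr[G_k \mid \text{past}] \geq \frac{(1+\epsilon)t+1}{(3+\epsilon)t+1}$ regardless of history; and conditioned on $G_k$, the $\frac{1}{C}$ bound from \Cref{lemma:start_end_probability} is itself proved to hold against the adaptive adversary (it rests on the random-oracle outputs of the fresh $\mathsf{Noise}()$ coins in iteration $k$ being uniform except with negligible probability, and on $|\mathsf{committed}(k)| \leq C$, which holds deterministically in a good iteration by \Cref{lemma:good_iteration_all_precommitted_bound_commit_plus}). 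I would state this conditioning explicitly rather than just invoking "independence", since the adversary's adaptivity is exactly what makes MVBA proofs delicate.
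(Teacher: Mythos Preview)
Your proposal is correct and follows essentially the same approach as the paper: combine \Cref{lemma:exists_round_start_end}, \Cref{lemma:start_end_probability}, and \Cref{lemma:good_iteration_and_agreement_plus} to obtain a per-iteration success probability $P \approx \frac{1}{3C}$, then apply a geometric-tail argument to conclude $\mathbb{E}[K] \in O(C)$. In fact you are more careful than the paper on two points---the paper simply asserts ``due to independent randomness for each iteration'' where you explicitly unpack the conditioning against the adaptive adversary, and you explicitly close the loop from quasi-decision to decision---so your write-up would only strengthen the presentation.
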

\begin{proof}
\Cref{lemma:good_iteration_and_agreement_plus} proves that all correct processes quasi-decide the valid value $v^{\star}(k)$ in a sub-iteration $(k, x)$ if (1) $k$ is a good iteration, and (2) all correct processes that propose to $\mathcal{MBA}[k][x]$ before sub-iteration $(k, x)$ ends do propose $v^{\star}(k)$. 
Given a good iteration $k$ and its sub-iteration $(k, x \in [1, C])$ with $\mathsf{start}(k, x) = \mathsf{end}(k, x)$, \Cref{lemma:start_end_probability} proves that all correct processes that propose to $\mathcal{MBA}[k][x]$ before sub-iteration $(k, x)$ ends do propose $v^{\star}(k)$ with probability (at least) $\frac{1}{C}$.
Moreover, \Cref{lemma:exists_round_start_end} proves that there exists a sub-iteration $(k, x)$ within any good iteration $k$ such that $\mathsf{start}(k, x) = \mathsf{end}(k, x)$.
Hence, the probability that correct processes terminate in any iteration $k$ is (at least) $P = \frac{|\dfirst|}{n} \cdot \frac{1}{C} \geq \frac{(1 + \epsilon) t + 1}{(3 + \epsilon) t + 1} \cdot \frac{1}{C} \approx \frac{1}{3C}$.
Moreover, let $E_k$ denote the event that \commplus has not terminated by the end of the $k$-th iteration.
Due to independent randomness for each iteration, $\text{Pr}[E_k] \leq (1 - P)^k \to 0$ as $k \to \infty$.
Let $K$ be the random variable that denotes the number of iterations required for \commplus to terminate.
Then, $\mathbb{E}[K] = 1 / P \approx 3C$, which proves that \commplus terminates in $O(C)$ iterations in expectation.
\end{proof}

\smallskip
\noindent \textbf{Quality.}
To conclude the proof of \commplus's correctness, we prove that \commplus satisfies the quality property.

\begin{lemma} [\commplus satisfies quality]
Given $n = (3 + \epsilon)t + 1$, for any fixed constant $\epsilon > 0$, and the existence of a hash function modeled as a random oracle, \commplus (see \Cref{algorithm:reducer_plus}) satisfies quality in the presence of a computationally bounded adversary.
\end{lemma}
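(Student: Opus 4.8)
The plan is to follow exactly the template used for \comm's quality proof, adapting the probabilities to \commplus's structure. The key observation is that quality follows by exhibiting one ``good path'' of constant probability that leads to deciding a non-adversarial value, and then arguing that the decided value is adversarial with probability strictly below $1$.

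First I would recall that, by \Cref{lemma:good_iteration_and_agreement_plus}, if iteration $1$ is good and there is a sub-iteration $(1, x)$ with $\mathsf{start}(1, x) = \mathsf{end}(1, x)$, then---conditioned on the $\mathsf{Noise}()$ output of that sub-iteration making $z^{\star}(1)$ the lexicographically smallest hash---all correct processes quasi-decide $v^{\star}(1)$ in sub-iteration $(1, x)$, where $v^{\star}(1)$ is by definition the (valid, non-adversarial) proposal of $\mathsf{leader}(1)$. By \Cref{lemma:exists_round_start_end} such a sub-iteration exists in any good iteration, and by \Cref{lemma:start_end_probability} the relevant hash event has probability at least $\frac{1}{C}$. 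Conditioned on $v^{\star}(1)$ being quasi-decided, it sits in $\mathit{quasi\_decisions}_i$, whose size is at most $C$, so the $\mathsf{Index}()$ request at line~\ref{line:random_index_plus} selects $v^{\star}(1)$ as the final decision with probability at least $\frac{1}{C}$. Multiplying the three independent factors---iteration $1$ good with probability $\frac{|\dfirst|}{n} \geq \frac{(1 + \epsilon)t + 1}{(3 + \epsilon)t + 1} \approx \frac{1}{3}$, the fair-trial hash event $\geq \frac{1}{C}$, and the correct $\mathsf{Index}()$ selection $\geq \frac{1}{C}$---gives that \commplus decides a non-adversarial value with probability at least $P \geq \frac{(1 + \epsilon)t + 1}{(3 + \epsilon)t + 1} \cdot \frac{1}{C} \cdot \frac{1}{C} \approx \frac{1}{3C^2}$, so the probability of deciding an adversarial value is at most $1 - P \leq 1 - \frac{1}{3C^2} < 1$, establishing quality with $q = 1 - \frac{1}{3C^2}$ (or any valid upper bound strictly below $1$).

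The one subtlety, and the main thing to be careful about, is the independence of the three events and the fact that the randomness sources ($\mathsf{Election}$, $\mathsf{Noise}$, $\mathsf{Index}$) are all independent coins (as stated in \Cref{section:preliminaries}); in particular the $\mathsf{Noise}()$ fair-trial argument already internalizes the adversary's ability to react to revealed randomness (that is exactly what \Cref{lemma:exists_round_start_end} and \Cref{lemma:start_end_probability} handle), so one must invoke those lemmas rather than re-deriving the probability. I would also note that conditioning on iteration $1$ being good does not disturb the $\frac{1}{C}$ lower bound from \Cref{lemma:start_end_probability}, since that lemma is stated for an arbitrary good iteration. With these points in place the proof is a short chain of inequalities analogous to the \comm case, and I would write:

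\begin{proof}
If (1) the first iteration is good, (2) there is a sub-iteration $(1, x)$ with $\mathsf{start}(1, x) = \mathsf{end}(1, x)$ in which all correct processes adopt $z^{\star}(1)$, and (3) the $\mathsf{Index}()$ request (line~\ref{line:random_index_plus}) selects $v^{\star}(1)$, then a non-adversarial value is decided from \commplus.
By \Cref{lemma:exists_round_start_end}, a sub-iteration $(1, x)$ with $\mathsf{start}(1, x) = \mathsf{end}(1, x)$ exists whenever iteration $1$ is good, and by \Cref{lemma:start_end_probability,lemma:good_iteration_and_agreement_plus} all correct processes then quasi-decide $v^{\star}(1)$ with probability at least $\frac{1}{C}$; note that $v^{\star}(1)$ is non-adversarial, being the proposal of $\mathsf{leader}(1)$.
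Conditioned on this, $v^{\star}(1) \in \mathit{quasi\_decisions}_i$ and, since $|\mathit{quasi\_decisions}_i| \leq C$, the $\mathsf{Index}()$ request selects $v^{\star}(1)$ with probability at least $\frac{1}{C}$.
Since the $\mathsf{Election}$, $\mathsf{Noise}$, and $\mathsf{Index}$ coins are independent, the probability that a non-adversarial value is decided is at least
\begin{equation*}
    P \geq \frac{(1 + \epsilon)t + 1}{(3 + \epsilon)t + 1} \cdot \frac{1}{C} \cdot \frac{1}{C} \approx \frac{1}{3C^2}.
\end{equation*}
Hence the probability that an adversarial value is decided is $1 - P \leq 1 - \frac{1}{3C^2} < 1$, so \commplus satisfies quality.
\end{proof}
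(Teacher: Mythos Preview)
Your proof is correct and follows essentially the same approach as the paper: invoke \Cref{lemma:exists_round_start_end,lemma:start_end_probability,lemma:good_iteration_and_agreement_plus} to get the $\frac{1}{C}$ quasi-decision probability in a good first iteration, then multiply by the $\frac{1}{C}$ probability that $\mathsf{Index}()$ selects $v^{\star}(1)$, yielding $P \approx \frac{1}{3C^2}$ and hence $q \leq 1 - \frac{1}{3C^2} < 1$. Your version is slightly more explicit about coin independence and the bound $|\mathit{quasi\_decisions}_i| \leq C$, but otherwise matches the paper's argument.
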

\begin{proof}
\Cref{lemma:start_end_probability,lemma:exists_round_start_end,lemma:good_iteration_and_agreement_plus} prove that all correct processes quasi-decide value $v^{\star}(k)$ in a good iteration $k$ with (at least) $\frac{1}{C}$ probability; recall that $v^{\star}(k)$ is a non-adversarial value.
Hence, if the first iteration of \commplus is good, all correct processes quasi-decide $v^{\star}(1)$ in iteration $1$ with probability $\frac{1}{C}$.
In that case, for \commplus to decide $v^{\star}(1)$, it is required that the $\mathsf{Index}()$ request selects $v^{\star}(1)$.
The probability of that happening is at least $\frac{1}{C}$ as there can be at most $C$ quasi-decided values.
Thus, the probability that \commplus decides a non-adversarial value is (at least) $\frac{(1 + \epsilon)t + 1}{(3 + \epsilon) t + 1} \cdot \frac{1}{C} \cdot \frac{1}{C} \approx \frac{1}{3C^2}$.
Therefore, quality is ensured as the probability that an adversarial value is decided is at most $1 - \frac{1}{3C^2} < 1$.
\end{proof}

\subsection{Proof of Complexity}

This subsection proves the complexity of \commplus, i.e., it formally proves \Cref{theorem:reducer_plus_complexity}.

\smallskip
\noindent \textbf{Message complexity.}
We start by proving \commplus's message complexity.

\begin{lemma} [\commplus's expected message complexity]
Given $n = (3 + \epsilon)t + 1$, for any fixed constant $\epsilon > 0$, and the existence of a hash function modeled as a random oracle, the expected message complexity of \commplus (see \Cref{algorithm:reducer_plus}) is $O(n^2 C^2)$ in the presence of a computationally bounded adversary.
\end{lemma}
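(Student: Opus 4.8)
The plan is to account for all messages sent by correct processes across the two structural components of \commplus: the dissemination phase and the iterations. First I would observe that the dissemination phase of \commplus is identical to that of \comm, and it was already established (in the proof of \comm's message complexity and in \Cref{lemma:dispersion_cost}'s companion argument) that the dissemination phase exchanges $O(n^2)$ messages in the worst case, since each correct process sends $O(n)$ \textsc{init} messages plus $O(n)$ messages each of types \textsc{ack}, \textsc{done}, and \textsc{finish}. This contribution is dominated by $O(n^2 C^2)$, so it can be set aside quickly.

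Next I would bound the cost of a single iteration $k$. Within one iteration, each correct process sends $O(n)$ \textsc{stored} messages and $O(n)$ \textsc{suggest} messages, contributing $O(n^2)$ messages. Then the iteration runs $C$ sequential sub-iterations; in each sub-iteration $(k,x)$, each correct process sends $O(n)$ \textsc{reconstruct} messages and participates in one instance $\mathcal{MBA}[k][x]$ of $\mbasimple$, which exchanges $O(n^2)$ messages in expectation. Summing over the $C$ sub-iterations gives $O(n^2 C)$ expected messages per iteration. Combined with the $O(n^2)$ for \textsc{stored}/\textsc{suggest}, each iteration costs $O(n^2 C)$ messages in expectation. (Here I would also note that the common-coin queries $\mathsf{Election}()$, $\mathsf{Noise}()$, and $\mathsf{Index}()$ are, per the conventions stated in the excerpt, not counted toward message complexity.)

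Finally I would combine the per-iteration bound with the expected number of iterations. By \Cref{theorem:termination_plus}, \commplus terminates within $O(C)$ iterations in expectation, and in fact the number of iterations $K$ is a geometric random variable with success parameter $P \geq \frac{(1+\epsilon)t+1}{(3+\epsilon)t+1}\cdot\frac{1}{C} \approx \frac{1}{3C}$. Letting $M$ denote the total number of messages sent by correct processes, we have $M \in O(n^2) + K \cdot O(n^2 C)$, so $\mathbb{E}[M] \in O(n^2) + O(n^2 C)\cdot\mathbb{E}[K] \subseteq O(n^2) + O(n^2 C)\cdot O(C) = O(n^2 C^2)$, as claimed. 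The one point requiring a little care is that the per-iteration message count is itself a random quantity (because each $\mathcal{MBA}$ instance's message count is random), so strictly speaking one should argue that the randomness of the $\mathcal{MBA}$ instances is independent of (or at least can be decoupled from) the event $\{K \geq k\}$, or invoke linearity over the sub-iteration-indexed message counts together with Wald-type reasoning; I expect this bookkeeping — ensuring the expectation of a random sum of random per-iteration costs factors correctly — to be the only mildly delicate step, and it is handled exactly as in the corresponding proof for \comm.

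\begin{proof}
The dissemination phase of \commplus is identical to that of \comm; as shown in the analysis of \comm, correct processes send $O(n^2)$ messages in the dissemination phase, since each correct process sends $O(n)$ \textsc{init} messages and $O(n)$ messages of each of the types \textsc{ack}, \textsc{done}, and \textsc{finish}.

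Consider now any iteration $k \in \mathbb{N}$. Each correct process broadcasts one \textsc{stored} message (line~\ref{line:broadcast_stored_plus}) and one \textsc{suggest} message (line~\ref{line:broadcast_suggest_plus}), contributing $O(n^2)$ messages across all correct processes. Iteration $k$ consists of $C$ sub-iterations $(k, 1), \ldots, (k, C)$. In each sub-iteration $(k, x)$, each correct process broadcasts one \textsc{reconstruct} message (line~\ref{line:reconstruct_broadcast_reducer_plus}), contributing $O(n^2)$ messages, and participates in the instance $\mathcal{MBA}[k][x]$ of $\mbasimple$, which exchanges $O(n^2)$ messages in expectation. Hence, each sub-iteration exchanges $O(n^2)$ messages in expectation, and iteration $k$ exchanges
\begin{equation*}
    \underbrace{O(n^2)}_{\text{\textsc{stored} \& \textsc{suggest}}} + C \cdot \underbrace{O(n^2)}_{\text{each sub-iteration}} \subseteq O(n^2 C) \text{ messages in expectation.}
\end{equation*}

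Let $M$ denote the random variable counting the messages sent by correct processes throughout \commplus, and let $K$ denote the random variable counting the iterations until \commplus terminates. By \Cref{theorem:termination_plus}, $\mathbb{E}[K] \in O(C)$. Combining the bounds above,
\begin{equation*}
    \mathbb{E}[M] \in O(n^2) + O(n^2 C) \cdot \mathbb{E}[K] \subseteq O(n^2) + O(n^2 C) \cdot O(C) \subseteq O(n^2 C^2),
\end{equation*}
which concludes the proof.
\end{proof}
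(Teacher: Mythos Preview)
Your proof is correct and follows essentially the same approach as the paper's own proof, which simply notes that (1) the dissemination phase exchanges $O(n^2)$ messages, (2) each iteration exchanges $O(n^2 C)$ messages in expectation, and (3) there are $O(C)$ iterations in expectation by \Cref{theorem:termination_plus}. Your version is more detailed in breaking down the per-iteration cost, but the structure and reasoning are the same.
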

\begin{proof}
The lemma holds as (1) the dissemination phase exchanges $O(n^2)$ messages in expectation, (2) each iteration exchanges $O(n^2 C)$ messages in expectation, and (3) there are $O(C)$ iterations in expectation (by \Cref{theorem:termination_plus}).
\end{proof}

\smallskip
\noindent \textbf{Bit complexity.}
We start by proving that correct processes send $O( n\ell + n^2 \kappa \log n)$ bits in the dissemination phase.

\begin{proposition} \label{lemma:dispersion_cost_plus}
Correct processes send $O( n \ell + n^2 \kappa \log n)$ bits in the dissemination phase.
\end{proposition}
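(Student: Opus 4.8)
The plan is to mirror the proof of \Cref{lemma:dispersion_cost} almost verbatim, observing that \commplus's dissemination phase differs from \comm's only in that each proposal is encoded as a polynomial of degree $\epsilon t$ rather than degree $t$ (line~\ref{line:dissemination_phase_plus}), and that this change affects the per-symbol size only by a multiplicative constant. So the overall shape of the argument is: (i) bound the size of a single RS symbol, (ii) bound the bits one correct process sends, (iii) multiply by the number of correct processes.

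First I would recall that encoding an $\ell$-bit value as a polynomial of degree $\epsilon t$ chunks it into $\epsilon t + 1$ coefficients, so each of the $n$ resulting RS symbols has bit-size $O\big(\frac{\ell}{\epsilon t} + \log n\big)$; since $\epsilon > 0$ is a fixed constant and $t = \Theta(n)$, this is $O\big(\frac{\ell}{n} + \log n\big)$, exactly as in \comm. In particular, the ``degree drop'' enlarges a symbol by only about a factor $1/\epsilon$, which the $O(\cdot)$ notation absorbs.

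Next I would bound the bits sent by an arbitrary correct process $p_i$. Process $p_i$ sends one \textsc{init} message to each of the $n$ processes; each such message carries an RS symbol ($O\big(\frac{\ell}{n} + \log n\big)$ bits), a digest (a Merkle root, of size $O(\kappa)$ bits), and a witness (a Merkle proof, of size $O(\kappa \log n)$ bits). Hence the \textsc{init} messages cost $n \cdot O\big(\frac{\ell}{n} + \log n + \kappa + \kappa \log n\big) \subseteq O(\ell + n\kappa \log n)$ bits. The remaining dissemination messages --- \textsc{ack}, \textsc{done}, and \textsc{finish} --- carry only $O(1)$ bits each, and $p_i$ sends $O(n)$ of them, contributing $O(n)$ bits. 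Summing, $p_i$ sends $O(\ell + n\kappa \log n)$ bits during the dissemination phase, and summing over the at most $n$ correct processes yields $O(n\ell + n^2 \kappa \log n)$, which is the claim.

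There is no real obstacle here; the single point worth flagging is precisely the constant-factor blow-up in symbol size mentioned above, which is harmless because $\epsilon$ is a fixed constant independent of $n$ (unlike the constant $C$, which is kept explicit in \commplus's later complexity bounds, the $\epsilon$-dependent constants internal to the dissemination phase are simply folded into the asymptotics).
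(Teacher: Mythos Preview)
Your proposal is correct and follows essentially the same approach as the paper's proof: bound the per-symbol RS size, tally the \textsc{init} message cost per process, add the constant-size \textsc{ack}/\textsc{done}/\textsc{finish} messages, and multiply by $n$. The paper is slightly terser---it simply invokes the $O(\frac{\ell}{n}+\log n)$ symbol-size bound without discussing the $\epsilon$-dependence---whereas you explicitly justify why the degree change from $t$ to $\epsilon t$ only affects constants; this extra care is fine but not a different route.
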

\begin{proof}
Recall that each RS symbol is of size $O(\frac{\ell}{n} + \log n)$ bits.
Let $p_i$ be any correct process.
Process $p_i$ sends $n \cdot  O(\frac{\ell}{n} + \log n + \kappa + \kappa \log n) \subseteq O(\ell + n \kappa \log n)$ bits via \textsc{init} messages.
Moreover, process $p_i$ sends $O(n)$ bits via \textsc{ack}, \textsc{done} and \textsc{finish} messages.
Therefore, process $p_i$ sends
\begin{equation*}
    \begin{aligned}
        & \underbrace{O(\ell + n \kappa \log n)}_\text{\textsc{init}} 
        {}+{}
        \underbrace{O(n)}_\text{\textsc{ack}, \textsc{done} \& \textsc{finish}}
    \\
    & \subseteq O(\ell + n \kappa \log n) \text{ bits in the dissemination phase.}
    \end{aligned}
\end{equation*}
This implies that correct processes send $O(n\ell + n^2 \kappa \log n)$ bits in the dissemination phase.
\end{proof}

Next, we prove that correct processes send $O\big( C (n \ell + n^2 \kappa \log n + n^2 \log k + n^2 \log C) \big)$ bits in any iteration $k$.

\begin{proposition} \label{lemma:iteration_cost_plus}
Correct processes send $O\big( C (n \ell + n^2 \kappa \log n + n^2 \log k + n^2 \log C) \big)$ bits in any iteration $k \in \mathbb{N}$.
\end{proposition}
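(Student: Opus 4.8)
The plan is to decompose the per-iteration bit cost into the same categories that we tracked for \comm in \Cref{lemma:iteration_cost}, but now accounting for the fact that each iteration of \commplus runs $C$ sub-iterations, each of which includes its own \textsc{reconstruct} round, a \textsf{Noise} coin invocation, and an $\mbasimple$ instance. First I would bound the pre-sub-iteration communication: the \textsc{stored} and \textsc{suggest} messages. Each process sends one \textsc{stored} message carrying a single $O(\kappa)$-bit digest plus an $O(\log k)$-bit iteration tag, and one \textsc{suggest} message carrying at most $\lceil 3/\epsilon\rceil \in O(C)$ digests (by \Cref{lemma:proposed_bound_plus}) plus the tag; summed over the $O(n^2)$ ordered pairs this contributes $O(n^2 \kappa C + n^2 \log k)$ bits.

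Next I would fix a sub-iteration $(k,x)$ with $x \in [1,C]$ and account for its three components. The \textsf{Noise} coin is free by convention (its cost is omitted, like all common-coin costs). The \textsc{reconstruct} round: each process broadcasts one RS symbol together with its digest and witness; an RS symbol is $O(\ell/n + \log n)$ bits, a digest is $O(\kappa)$ bits, a Merkle witness is $O(\kappa \log n)$ bits, and the $(k,x)$ tag is $O(\log k + \log C)$ bits, so over $O(n^2)$ pairs this is $O(n\ell + n^2 \kappa \log n + n^2 \log k + n^2 \log C)$ bits. The $\mathcal{MBA}[k][x]$ instance (our $\mbasimple$ from \Cref{section:regular_mba}) exchanges $O(n\ell + n^2 \kappa \log n)$ bits in expectation and $O(n^2)$ messages in expectation, and each of its messages must carry the $(k,x)$ tag of $O(\log k + \log C)$ bits, adding $O(n^2 \log k + n^2 \log C)$; thus each sub-iteration costs $O(n\ell + n^2 \kappa \log n + n^2 \log k + n^2 \log C)$ bits in expectation. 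Multiplying by the $C$ sub-iterations and adding the \textsc{stored}/\textsc{suggest} term yields the claimed per-iteration bound $O\big( C(n\ell + n^2 \kappa \log n + n^2 \log k + n^2 \log C) \big)$.

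The only mildly delicate point — and the step I expect to need the most care — is bookkeeping the tag overhead: because there are $k$ iterations and $C$ sub-iterations per iteration, every message of every sub-protocol must be disambiguated by a label of size $O(\log k + \log C)$, and one has to be careful to distribute this $\log k + \log C$ additive-per-message overhead correctly across the $O(n^2)$ messages of the (expected) message complexity of each inner protocol rather than double-counting it. Everything else is a routine aggregation of the sizes of \textsc{stored}, \textsc{suggest}, \textsc{reconstruct}, and $\mathcal{MBA}$ traffic, exactly mirroring the argument for \comm, and the proof should close with a short displayed computation summing the four contributions. No new ideas beyond \Cref{lemma:proposed_bound_plus} (for the candidate-list size), the RS-symbol and Merkle-witness size bounds from \Cref{section:preliminaries}, and the stated complexity of $\mbasimple$ are required.

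\begin{proof}
Fix an iteration $k \in \mathbb{N}$. We bound the bits sent by correct processes in iteration $k$ by category.

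\emph{\textsc{stored} and \textsc{suggest} messages.} Each correct process broadcasts one \textsc{stored} message containing a single $O(\kappa)$-bit digest and an $O(\log k)$-bit iteration tag, and one \textsc{suggest} message containing its $\mathit{candidates}_i$ list. By \Cref{lemma:committed_precommited_plus} and \Cref{lemma:proposed_bound_plus}, $|\mathit{candidates}_i| \leq \lceil \tfrac{3}{\epsilon} \rceil \in O(C)$, so each \textsc{suggest} message carries $O(C\kappa)$ bits plus an $O(\log k)$-bit tag. Summing over the $O(n^2)$ ordered pairs of processes, the \textsc{stored} and \textsc{suggest} messages account for $O(n^2 C \kappa + n^2 \log k)$ bits.

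\emph{Sub-iterations.} Consider any sub-iteration $(k, x \in [1, C])$. The $\mathsf{Noise}()$ coin incurs no cost by convention. Each correct process broadcasts one \textsc{reconstruct} message containing an RS symbol (of size $O(\tfrac{\ell}{n} + \log n)$ bits), a digest (of size $O(\kappa)$ bits), a Merkle witness (of size $O(\kappa \log n)$ bits), and a $(k, x)$ tag of size $O(\log k + \log C)$ bits; over $O(n^2)$ pairs this is $O(n\ell + n^2 \kappa \log n + n^2 \log k + n^2 \log C)$ bits. The instance $\mathcal{MBA}[k][x]$ (implemented by $\mbasimple$) exchanges $O(n\ell + n^2 \kappa \log n)$ bits and $O(n^2)$ messages in expectation, and each of its messages must be tagged with ``$k, x$'', adding $O(n^2 \log k + n^2 \log C)$ bits. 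Hence correct processes send
\begin{equation*}
    \begin{aligned}
        & \underbrace{O(n\ell + n^2 \kappa \log n + n^2 \log k + n^2 \log C)}_\text{\textsc{reconstruct}}
        {}+{}
        \underbrace{O(n\ell + n^2 \kappa \log n + n^2 \log k + n^2 \log C)}_\text{$\mathcal{MBA}[k][x]$}
        \\
        & \subseteq O(n\ell + n^2 \kappa \log n + n^2 \log k + n^2 \log C) \text{ bits in expectation in sub-iteration $(k, x)$.}
    \end{aligned}
\end{equation*}

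\emph{Aggregation.} Iteration $k$ consists of $C$ sub-iterations, so correct processes send
\begin{equation*}
    \begin{aligned}
        & \underbrace{O(n^2 C \kappa + n^2 \log k)}_\text{\textsc{stored} \& \textsc{suggest}}
        {}+{}
        \underbrace{O\big( C (n\ell + n^2 \kappa \log n + n^2 \log k + n^2 \log C) \big)}_\text{sub-iterations}
        \\
        & \subseteq O\big( C (n\ell + n^2 \kappa \log n + n^2 \log k + n^2 \log C) \big) \text{ bits in expectation in iteration $k$.}
    \end{aligned}
\end{equation*}
Thus the proposition holds.
\end{proof}
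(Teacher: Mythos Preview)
Your proof is correct and follows essentially the same approach as the paper's: decompose into \textsc{stored}/\textsc{suggest} traffic plus $C$ sub-iterations, bound each sub-iteration's \textsc{reconstruct} and $\mathcal{MBA}$ cost (including the $(k,x)$-tag overhead), and aggregate. The only cosmetic difference is that you bound the \textsc{suggest} payload by $O(C\kappa)$ via $\lceil 3/\epsilon\rceil \in O(C)$, whereas the paper keeps the tighter $\lceil 3/\epsilon\rceil$ explicitly; both are absorbed into the final bound.
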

\begin{proof}
Correct processes send $O(n^2 \kappa \cdot \lceil 1 + \frac{2}{\epsilon} \rceil + n^2\log k)$ bits via \textsc{stored} and \textsc{suggest} messages.
Recall that \textsc{stored} messages include a single digest and \textsc{suggest} messages include at most $\lceil 1 + \frac{2}{\epsilon} \rceil$ digests due to \Cref{lemma:proposed_bound_plus}. \textsc{stored} and \textsc{suggest} messages also include the number of the iteration to which they refer.

Now, consider any sub-iteration $(k, x \in [1, C])$.
Correct processes send $O(n \ell + n^2 \kappa \log n + n^2 \log k + n^2 \log C)$ bits via \textsc{reconstruct} messages.
Moreover, correct processes send $O(n \ell + n^2 \kappa \log n + n^2 \log k + n^2 \log C)$ bits in expectation while executing $\mathcal{MBA}[k][x]$.
The $O(n^2\log k + n^2 \log C)$ term exists as (1) each message of $\mathcal{MBA}[k][x]$ needs to be tagged with ``$k, x \in [1, C]$'', and (2) $\mathcal{MBA}[k][x]$ exchanges $O(n^2)$ messages in expectation.
Thus, correct processes send
\begin{equation*}
    \begin{aligned}
        & \underbrace{O(n\ell + n^2 \kappa \log n + n^2 \log k + n^2 \log C)}_\text{\textsc{reconstruct}} 
        {}+{}
        \underbrace{O(n\ell + n^2 \kappa \log n + n^2 \log k + n^2 \log C)}_\text{$\mathcal{MBA}[k][x]$} 
        \\
        & \subseteq O(n \ell + n^2 \kappa \log n + n^2 \log k + n^2 \log C) \text{ bits in sub-iteration $(k, x)$.}
    \end{aligned}
\end{equation*}
Given that iteration $k$ has $C$ sub-iterations, correct processes collectively send
\begin{equation*}
    \begin{aligned}
        & \underbrace{O(n^2 \kappa \cdot \lceil 1 + \frac{2}{\epsilon} \rceil + n^2 \log k)}_\text{\textsc{stored} \& \textsc{suggest}} 
        {}+{}
        C \cdot \underbrace{O(n \ell + n^2 \kappa \log n + n^2 \log k + n^2 \log C)}_\text{sub-iterations} 
        \\
        & \subseteq O\big( C (n \ell + n^2 \kappa \log n + n^2 \log k + n^2 \log C) \big) \text{ bits in iteration $k$.}
    \end{aligned}
\end{equation*}
Thus, the proposition holds.
\end{proof}

Finally, we are ready to prove \commplus's bit complexity.

\begin{lemma} [\commplus's expected bit complexity]
Given $n = (3 + \epsilon)t + 1$, for any fixed constant $\epsilon > 0$, and the existence of a hash function modeled as a random oracle, the expected bit complexity of \commplus (see \Cref{algorithm:reducer_plus}) is $$O\big( C^2(n\ell + n^2 \kappa \log n + n^2C) \big)$$ in the presence of a computationally bounded adversary.
\end{lemma}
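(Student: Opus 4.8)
The plan is to bound the expected number of bits transmitted by correct processes by splitting the total into the dissemination phase and the iterations, then applying linearity of expectation together with the moment bounds on the (geometric) number of iterations $K$ established in the termination analysis. First I would invoke \Cref{lemma:dispersion_cost_plus} to account for the $O(n\ell + n^2 \kappa \log n)$ bits sent in the dissemination phase, which is a deterministic (worst-case) bound. Then I would use \Cref{lemma:iteration_cost_plus} to conclude that in iteration $k$ correct processes send $O\big(C(n\ell + n^2 \kappa \log n + n^2\log k + n^2 \log C)\big)$ bits in expectation; since $C$ is a constant (depending only on $\epsilon$), the $n^2 \log C$ term is absorbed into $n^2$, and $\log k \le k$, so the per-iteration cost is $O\big(C(n\ell + n^2 \kappa \log n) + Cn^2 k\big)$.

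Next I would let $B$ be the random variable counting all bits sent by correct processes and write, analogously to the \comm bit-complexity proof,
\begin{equation*}
    B \in O\Big( n\ell + n^2\kappa\log n + \sum_{k=1}^{K}\big( C(n\ell + n^2\kappa\log n) + Cn^2 k \big) \Big)
    \subseteq O\Big( CK(n\ell + n^2\kappa\log n) + Cn^2 K^2 \Big),
\end{equation*}
using $\sum_{k=1}^{K} k = \tfrac12 K(K+1) \in O(K^2)$. Taking expectations and applying linearity gives $\mathbb{E}[B] \in O(C(n\ell+n^2\kappa\log n))\cdot \mathbb{E}[K] + O(Cn^2)\cdot \mathbb{E}[K^2]$. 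Here $K$ is geometric with success probability $P \gtrsim \tfrac{1}{3C}$ (from the termination analysis, \Cref{theorem:termination_plus}), so $\mathbb{E}[K] = 1/P \in O(C)$ and $\mathbb{E}[K^2] = (2-P)/P^2 \in O(C^2)$. Substituting yields $\mathbb{E}[B] \in O(C^2(n\ell + n^2\kappa\log n)) + O(C^3 n^2)$. Since $C^3 n^2 \subseteq C^2 \cdot (n^2 C)$, this is $O\big(C^2(n\ell + n^2\kappa\log n + n^2 C)\big)$, matching the statement.

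The main obstacle — really the only non-routine point — is making sure the $\mathbb{E}[K^2]$ contribution is handled correctly: because the per-iteration bit cost grows like $n^2 k$ in the $k$-th iteration (due to the $O(\log k)$, hence $O(k)$, tagging overhead that must be carried in every sub-protocol message), the naive product $\mathbb{E}[K]\cdot(\text{cost of one iteration})$ is not valid, and one genuinely needs the second moment of the geometric variable $K$. I would therefore state explicitly that $K$ is geometric with parameter $P = \Theta(1/C)$, cite the closed forms $\mathbb{E}[K]=1/P$ and $\mathbb{E}[K^2]=(2-P)/P^2$, and note that since $P$ is bounded below by a constant multiple of $1/C$ we get $\mathbb{E}[K]\in O(C)$ and $\mathbb{E}[K^2]\in O(C^2)$. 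Everything else is bookkeeping: absorbing constant ($\epsilon$-dependent) factors, bounding $\log k \le k$ and $\log C \le C$, and collecting terms into the stated form. I would close the proof by restating $\mathbb{E}[B]\in O\big(C^2(n\ell + n^2\kappa\log n + n^2 C)\big)$, which completes the bit-complexity claim of \Cref{theorem:reducer_plus_complexity}.

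\begin{proof}
Let $B$ denote the number of bits sent by correct processes in \commplus, and let $K$ denote the (random) number of iterations required for \commplus to terminate. By \Cref{lemma:dispersion_cost_plus}, correct processes send $O(n\ell + n^2\kappa\log n)$ bits in the dissemination phase. By \Cref{lemma:iteration_cost_plus}, correct processes send $O\big( C(n\ell + n^2\kappa\log n + n^2\log k + n^2\log C) \big)$ bits in iteration $k$; since $C$ is a constant depending only on $\epsilon$ and $\log k \le k$, this is contained in $O\big( C(n\ell + n^2\kappa\log n) + Cn^2 k \big)$. Hence
\begin{equation*}
    B \in O\Big( n\ell + n^2\kappa\log n + \sum_{k=1}^{K}\big( C(n\ell+n^2\kappa\log n) + Cn^2 k\big) \Big) \subseteq O\Big( CK(n\ell+n^2\kappa\log n) + Cn^2 K^2 \Big),
\end{equation*}
where we used $\sum_{k=1}^{K} k = \tfrac12 K(K+1) \in O(K^2)$. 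Taking expectations and applying linearity,
\begin{equation*}
    \mathbb{E}[B] \in O\big( C(n\ell+n^2\kappa\log n) \big)\cdot \mathbb{E}[K] + O(Cn^2)\cdot \mathbb{E}[K^2].
\end{equation*}
By \Cref{theorem:termination_plus}, $K$ is a geometric random variable with success probability $P \geq \frac{(1+\epsilon)t+1}{(3+\epsilon)t+1}\cdot\frac{1}{C} \approx \frac{1}{3C}$, so $\mathbb{E}[K] = 1/P \in O(C)$ and $\mathbb{E}[K^2] = \frac{2-P}{P^2} \in O(C^2)$. Substituting,
\begin{equation*}
    \mathbb{E}[B] \in O\big( C^2(n\ell + n^2\kappa\log n) \big) + O(C^3 n^2) \subseteq O\big( C^2(n\ell + n^2\kappa\log n + n^2 C) \big),
\end{equation*}
which concludes the proof.
\end{proof}
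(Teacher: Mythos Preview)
Your proof is correct and follows essentially the same approach as the paper: split into dissemination plus a sum over iterations via \Cref{lemma:dispersion_cost_plus} and \Cref{lemma:iteration_cost_plus}, bound $\log k \le k$, and use the first and second moments of the geometric variable $K$ (with $P \approx \tfrac{1}{3C}$) to obtain $O(C^2(n\ell + n^2\kappa\log n + n^2C))$. The only cosmetic difference is that you drop the $n^2\log C$ term earlier by appealing to ``$C$ is a constant,'' which is slightly inconsistent with otherwise tracking $C$ explicitly; the paper instead keeps $n^2\log C$ through the calculation and absorbs it into $n^2 C$ at the very end, but both routes yield the same bound.
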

\begin{proof}
Let $K$ be the random variable that denotes the number of iterations \commplus takes to terminate.
By \Cref{theorem:termination_plus}, $\mathbb{E}[K] \in O(C)$.
Let $B$ be the random variable that denotes the number of bits sent in \commplus.
By \Cref{lemma:dispersion_cost_plus}, correct processes send $O(n \ell + n^2 \kappa \log n)$ bits in the dissemination phase.
Similarly, in each iteration $k \in \mathbb{N}$, correct processes send $O\big( C ( n \ell + n^2 \kappa \log n + n^2 \log k + n^2 \log C) \big) \subseteq O\big( C ( n \ell + n^2 \kappa \log n + n^2 k + n^2 \log C) \big)$ bits (by \Cref{lemma:iteration_cost_plus}).
Thus, we have the following:
\begin{equation*}
    \begin{aligned}
        B & \in O\Big(n \ell + n^2 \kappa \log n + \sum\limits_{k = 1}^{K} C (n \ell + n^2 \kappa \log n + n^2  k + n^2 \log C) \Big)
        \\
        & \subseteq O\Big( n\ell + n^2 \kappa \log n + K \cdot C (n \ell + n^2 \kappa \log n + n^2 \log C) + \sum\limits_{k = 1}^K C n^2 k \Big)
        \\
        & \subseteq O\Big( n\ell + n^2 \kappa \log n + K \cdot C (n \ell + n^2 \kappa \log n + n^2 \log C) + \frac{1}{2} \cdot Cn^2 \cdot K(K + 1) \Big)
        \\
        & \subseteq O\Big( n \ell + n^2 \kappa \log n + K \cdot C (n \ell + n^2 \kappa \log n + n^2 \log C) + Cn^2(K^2 + K) \Big). 
    \end{aligned}
\end{equation*}
Hence, we compute $\mathbb{E}[B]$ in the following way, using that $\mathbb{E}[K] \in O(C)$:
\begin{equation*}
    \begin{aligned}
        \mathbb{E}[B] & \in O(n \ell + n^2 \kappa \log n) + O\big( C (n \ell + n^2 \kappa \log n + n^2 \log C) \big) \cdot \mathbb{E}[K] + O(Cn^2) \cdot (\mathbb{E}[K^2] + \mathbb{E}[K])
        \\
        & \subseteq O(n \ell + n^2 \kappa \log n) + O \big( C^2 (n \ell + n^2 \kappa \log n + n^2 \log C) \big) + O(Cn^2) \cdot (\mathbb{E}[K^2] + \mathbb{E}[K])
        \\
        & \subseteq O\big( C^2 (n \ell + n^2 \kappa \log n + n^2 \log C) \big) + O(C n^2) \cdot (\mathbb{E}[K^2] + \mathbb{E}[K]).
    \end{aligned}
\end{equation*}
Given that $K$ is a geometric random variable, $\mathbb{E}[K^2] = \frac{2 - P}{P^2} \in O(C^2)$, where $P \approx \frac{1}{3C}$ is the probability that \commplus terminates in any specific iteration.
Thus, we have:
\begin{equation*}
    \begin{aligned}
        \mathbb{E}[B] & \in O\big( C^2 (n\ell + n^2 \kappa \log n + n^2 \log C) \big) + O(Cn^2) \cdot (O(C^2) + O(C))
        \\
        & \subseteq O\big( C^2 (n\ell + n^2 \kappa \log n + n^2 \log C) \big) + O(Cn^2) \cdot O(C^2)
        \\
        & \subseteq O\big( C^2 (n\ell + n^2 \kappa \log n + n^2 \log C) \big) + O(C^3n^2) 
        \\
        & \subseteq O\big(C^2 (n\ell + n^2 \kappa \log n + n^2 \log C + n^2 C) \big)
        \\
        & \subseteq O\big(C^2 (n\ell + n^2 \kappa \log n + n^2 C) \big).
    \end{aligned}
\end{equation*}
Thus, the lemma holds.
\end{proof}

\smallskip
\noindent \textbf{Time complexity.}
Lastly, we prove \commplus's time complexity.

\begin{lemma} [\commplus's expected time complexity]
Given $n = (3 + \epsilon)t + 1$, for any fixed constant $\epsilon > 0$, and the existence of a hash function modeled as a random oracle, the expected time complexity of \commplus (see \Cref{algorithm:reducer_plus}) is $O(C^2)$ in the presence of a computationally bounded adversary.
\end{lemma}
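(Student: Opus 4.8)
The plan is to mirror the time-complexity argument used for \comm, scaled up by the factor of $C$ introduced by the $C$ sequential trials per iteration. First I would establish that the dissemination phase runs in $O(1)$ asynchronous rounds: it consists of a fixed number of all-to-all message rounds (\textsc{init}/\textsc{ack}, then \textsc{done}, then \textsc{finish}), each completing within a constant number of rounds once the requisite quorums of correct processes have participated, exactly as in the argument underlying \Cref{lemma:everyone_completes_dispersion_plus}.

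Next I would bound the expected duration of a single iteration by $O(C)$. An iteration begins with a leader election via the $\mathsf{Election}()$ common coin ($O(1)$), followed by the \textsc{stored} and \textsc{suggest} rounds ($O(1)$ each, since the relevant quorums of correct processes eventually respond), and then $C$ sub-iterations. For each sub-iteration $(k,x)$ I would argue $O(1)$ expected time: a $\mathsf{Noise}()$ coin query is $O(1)$; the local adoption step (hashing each committed digest and sorting $\mathit{hashed\_candidates}$) is purely local; the \textsc{reconstruct} round completes in $O(1)$; and the single invocation of $\mathcal{MBA}[k][x]$, instantiated with $\mbasimple$, terminates in $O(1)$ expected time. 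Summing over the $C$ sub-iterations yields $O(C)$ expected time per iteration.

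Then I would combine these with the termination bound: by \Cref{theorem:termination_plus}, the number $K$ of iterations until decision satisfies $\mathbb{E}[K] \in O(C)$. Letting $T_k$ denote the (random) duration of iteration $k$, the total time after dissemination is $\sum_{k=1}^{K} T_k$. Since each $T_k$ has conditional expectation $O(C)$ given that iterations $1,\dots,k-1$ did not produce a decision, a Wald-type argument gives
\begin{equation*}
    \mathbb{E}\Big[ \sum_{k=1}^{K} T_k \Big] = \sum_{k \geq 1} \mathbb{E}\big[ T_k \cdot \mathbf{1}\{K \geq k\} \big] \leq O(C) \cdot \sum_{k \geq 1} \Pr[K \geq k] = O(C) \cdot \mathbb{E}[K] \subseteq O(C^2).
\end{equation*}
Adding the $O(1)$ cost of the dissemination phase gives the claimed $O(C^2)$ expected time complexity.

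The main obstacle I anticipate is exactly this last step: one cannot naively multiply $\mathbb{E}[K]$ by the expected per-iteration time, because the per-iteration time is itself controlled only in expectation (through the expected-time guarantee of $\mathcal{MBA}$), and $K$ is a randomized stopping time correlated with the execution rather than a fixed number. The remedy is the conditioning/Wald's-identity bookkeeping sketched above — essentially the same device already used for the bit-complexity bound, where $\mathbb{E}[K]$ (and $\mathbb{E}[K^2]$) were handled via the geometric distribution of $K$ with per-iteration success probability $P \approx \frac{1}{3C}$; here only $\mathbb{E}[K] \in O(C)$ is required.
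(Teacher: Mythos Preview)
Your proposal is correct and follows essentially the same decomposition as the paper: $O(1)$ for dissemination, $O(C)$ expected time per iteration (via $C$ sub-iterations each dominated by an $O(1)$-expected-time $\mathcal{MBA}$ call), and $\mathbb{E}[K]\in O(C)$ iterations from \Cref{theorem:termination_plus}, yielding $O(C^2)$. The only minor difference is the bookkeeping for $\mathbb{E}[\sum_{k=1}^{K} T_k]$: the paper applies the law of total expectation conditioning on $K$ and writes the inner conditional expectation directly as $K\cdot O(C)$, whereas you use the indicator/Wald-style decomposition $\sum_{k\geq 1}\mathbb{E}[T_k\cdot\mathbf{1}\{K\geq k\}]\leq O(C)\cdot\mathbb{E}[K]$; your version is arguably more careful about the correlation between $K$ and the $T_k$'s, but both reach the same bound.
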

\begin{proof}
Let $K$ be the random variable that denotes the number of iterations \commplus takes to terminate.
By \Cref{theorem:termination_plus}, $\mathbb{E}[K] \in O(C)$.
Moreover, let $T$ be the random variable that denotes the time required for \commplus to terminate.
Moreover, let $T(k)$ be the random variable that denotes the time required for $k$-th iteration to complete.
We have that $\mathbb{E}[T(k)] \in O(C)$, for every iteration $k$, as every iteration has $C$ sub-iterations, each of which takes $O(1)$ time in expectation.
We can express $T$ in the following way:
\begin{equation*}
    T = \sum\limits_{k = 1}^{K} T(k).
\end{equation*}
Using the law of total expectation, we have:
\begin{equation*}
    \mathbb{E}[T] = \mathbb{E}\Big[ \sum\limits_{k = 1}^{K} T(k) \Big] = \mathbb{E}\Big[ \mathbb{E}\big[  \sum\limits_{k = 1}^{K} T(k) | K \big] \Big].
\end{equation*}
Furthermore, we have:
\begin{equation*}
    \mathbb{E}\big[ \sum\limits_{k = 1}^{K} T(k) | K \big] = \mathbb{E}[ T(1) ] + ... + \mathbb{E}[ T(K) ]  = K \cdot O(C).
\end{equation*}
Finally, we can compute $\mathbb{E}[T]$:
\begin{equation*}
    \mathbb{E}[T] = \mathbb{E}[K \cdot O(C)] = O(C) \cdot \mathbb{E}[K] \in O(C^2).
\end{equation*}
Thus, \commplus terminates in expected $O(C^2)$ time.
\end{proof}

\end{document}